\useunder{\uline}{\ul}{}
\newlist{steps}{enumerate}{1}
\setlist[steps, 1]{label = Step \arabic*.}
\definecolor{gryffindor}{RGB}{220,0,1}
\definecolor{slytherin}{RGB}{26,121,42}
\definecolor{hufflepuff}{RGB}{236,185,57}
\definecolor{ravenclaw}{RGB}{14,26,164}
\newcommand{\Q}{\mathbb{Q}}
\newcommand{\RR}{\mathbb{R}}
\newcommand{\pp}{\mathfrak{p}}
\newcommand{\Z}{\mathbb{Z}}
\newcommand{\N}{\mathbb{N}}
\newcommand{\KK}{\mathbb{K}}
\newcommand{\LL}{\mathbb{L}}
\renewcommand{\O}{\mathcal{O}}
\renewcommand{\poly}{\mathsf{poly}}
\newcommand{\CC}{\mathbb{C}}
\newcommand{\LRS}[1]{#1} %so far, it does nothing intentionally
\newcommand{\Alg}{\smash{\overline{\Q}}}
\newcommand{\orbit}{\mathcal{B}}
\newcommand{\zero}{\mathcal{Z}}
\newcommand{\orbprob}{\textnormal{\textsc{Orbit}}}
\newcommand{\skolem}{\textnormal{\textsc{Skolem}}}
\newcommand{\simskol}{\textnormal{\textsc{SimSkolem}}}
\newclass{\EqSLP}{EqSLP}
\DeclareMathOperator{\supp}{supp} 
\DeclareMathOperator{\spn}{span} 
\numberwithin{equation}{section}
\title{On the Subspace Orbit Problem and the Simultaneous Skolem Problem}
\author{Piotr {Bacik}}{University of Oxford, UK \and Max Planck Institute for Software Systems, Saarland Informatics Campus, Germany}{piotr.bacik@stcatz.ox.ac.uk}{https://orcid.org/0009-0006-0248-3204}{Supported by EPSRC grant EP/X033813/1, ERC grant DynAMiCs (101167561) and DFG grant 389792660 as part
of TRR 248.}
\author{Anton Varonka}{TU Wien, Austria}{anton.varonka@tuwien.ac.at}{https://orcid.org/0000-0001-5758-0657}{Supported by the ERC consolidator grant ARTIST 101002685.\\
	\includegraphics[height=1em]{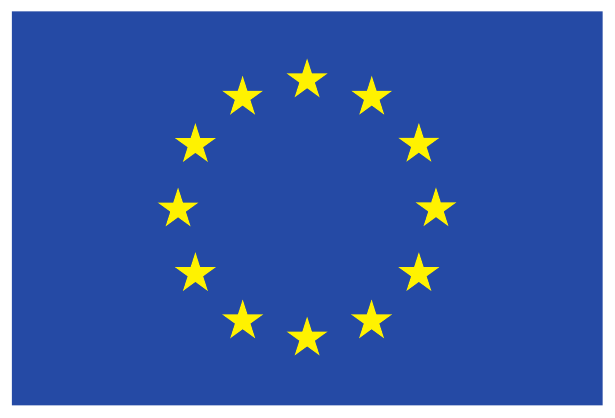} This paper is part of a project that has received funding from the European Research Council (ERC) under the European Union's Horizon 2020 research and innovation program (grant agreement No.~10103444).}
\authorrunning{P.\ Bacik and A.\ Varonka}
\keywords{Orbit Problem, Skolem Problem, Simultaneous Skolem Problem, Linear Recurrence Sequences, Verification}
\begin{document}
	\maketitle

\begin{abstract}
	The Orbit Problem asks whether the orbit of a point under a matrix reaches a given target set.
	When the target is a single point, the problem was shown to be decidable in polynomial time
	by Kannan and Lipton. 
	This decidability result was later extended by Chonev et al.\ to targets of dimension~3 (in arbitrary ambient dimension), but decidability remains open for subspaces of dimension~4.  
	At the other extreme, the special case of the Orbit Problem in which the
	target set is a hyperplane of co-dimension~1 is equivalent to the Skolem Problem for linear recurrence sequences,
	whose decidability has been open for many decades.
	
	In this paper, we show that %the Orbit Problem in ambient dimension~$d$ is decidable for subspace
	%targets of dimension~$t$ under the assumption that $\sqrt{3}^t \leq d$.
	%In other words, we show that
	the Orbit Problem is decidable if the target subspace has dimension logarithmic in the dimension of the orbit.
	Over the rationals, we moreover obtain a complexity bound $\NP^{\RP}$ in this case, when the target space dimension is bounded. 
	On the other hand, we show that %for any fixed $\varepsilon>0$, 
	the version of the
	Orbit Problem %under the condition $t \leq \varepsilon d$ (the target dimension is linear in the ambient dimension)
	where the dimension of the target subspace is linear in the dimension of the orbit is as hard as the Skolem Problem. 
\end{abstract}

\section{Introduction}
\label{sec:intro}
A \emph{linear dynamical system} (LDS) consists of a matrix $A \in \KK^{d\times d}$ and initial vector $\bm x \in \KK^d$, where $\KK$ is a field. An LDS $(A,\bm x)$ defines an \emph{orbit}  $\orbit(A,\bm x) = \{\bm x, A \bm x, A^2 \bm{x}, \dots \}$ given by the image of $\bm x$ under repeated application of $A$. The Orbit Problem, introduced by Harrison~\cite{harrison_book1969} in the context of the reachability problem for linear sequential machines, asks whether a given LDS $(A,\bm x) \in \Q^{d \times d}\times \Q^d$ reaches a given point $\bm y \in \Q^d$, that is, whether $\bm y \in \orbit(A,\bm x)$. 
A decade later, this problem was shown to be decidable in polynomial time by Kannan and Lipton~\cite{KL_decid1980,KL_poly1986}. 

A natural generalisation to the problem where the target point~$\bm y$ is replaced by a target subspace $S \subseteq \Q^d$ was discussed in the journal version of their work~\cite{KL_poly1986}. 
%\begin{problem}[Subspace Orbit Problem]
%Given an LDS $(A,\bm x) \in \KK^{d\times d} \times \KK^d$ and a subspace $S \subseteq \KK^d$, decide whether $\orbit(A,\bm x) \cap S \neq \varnothing$.
%\end{problem}
%
%\vspace{1mm}
\begin{mdframed}
	{\bf \textsc{Subspace Orbit Problem}.}\\
	Given an LDS $(A,\bm x) \in \KK^{d\times d} \times \KK^d$ and a subspace $S \subseteq \KK^d$, decide whether $\orbit(A,\bm x) \cap S \neq \varnothing$.
\end{mdframed}\vspace{1mm}
Around 30 years later, a significant step forward was achieved by Chonev, Ouaknine and Worrell~\cite{chonev_orbit2013,chonev_complex2016}, who showed the Subspace Orbit Problem is in $\NP^\RP$ when the dimension of the target subspace is at most~3, regardless of the dimension~$d$ of the ambient space.
To date, this is the state of the art for the Subspace Orbit Problem, with decidability for higher-dimensional target subspaces remaining a major open problem.

Other domains for targets have been studied in~\cite{chonev_polyhedra2015, almagor_2019, karimov_2020}, among which one finds polyhedra and more generally semi-algebraic sets. Cumulatively, we call these problems the \emph{orbit problems}. A plethora of further generalisations and variants of orbit problems may be found in the literature \cite{baier_2021, Almagor_2021,Karimov_2022,dcosta_2021,dcosta_2022}.
%TODO add stuff about termination of linear programs
An important application of the orbit problems comes from program analysis:
note that LDS can be viewed as linear while loops 
\begin{equation*}
	\bm{v} \leftarrow \bm{x}; \enspace \textbf{while}\ \textsc{cond}(v)\ \textbf{do} \enspace \bm{v}\leftarrow A \cdot \bm{v},
\end{equation*}
where reachability of a target~$S$ corresponds to the halting problem with $\textsc{cond}(\bm{v}) = \bm{v}\not\in S$; that is, where the looping condition is expressed as a finite disjunction of linear disequalities.
We refer the reader to~\cite{Karimov_2022_lds,ouaknine_termination_2015,karimov_multiple_2025} for further discussions of LDS and loops.

Taking one step back, a major obstruction on the way to deciding orbit problems is the Skolem Problem. 
A \emph{linear recurrence relation} of order~$d$ over a field~$\KK$ is an equation
\begin{equation}\label{eq:recrel}
	u_n = c_1u_{n-1} + \dots + c_du_{n-d},
\end{equation}
where $c_i \in \KK$ for $1 \leq i \leq d$ and $c_d \neq 0$.
%Whenever we refer to the order of LRS, then it's the minimal order of a recrel that it satisfies. But we try not to use "order of LRS", in the def of SSP, it's the order of the recrel the sequences share
A linear recurrence sequence (LRS) $u = \langle u_n\rangle_{n\in\N}$ is a sequence of elements in~$\KK$ satisfying~\eqref{eq:recrel}.
In the case that $\KK$ has characteristic 0, the set~$\zero(u) = \{n \in \N : u_n = 0\}$ of zero terms of an LRS
is described by the theorem of Skolem, Mahler, and Lech \cite{Skolem_SML,Mahler_SML,lech_note_1953}:
\begin{theorem}[Skolem--Mahler--Lech] \label{thm:SML}
The zero set $\zero(u)$ is a union of finitely many arithmetic progressions and a finite set.
\end{theorem}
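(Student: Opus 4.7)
The plan is to follow Skolem's classical $p$-adic approach. First I would put the recurrence into closed form. Since $\KK$ has characteristic~$0$, I may enlarge to a number field $K$ containing every root of the characteristic polynomial $x^d - c_1 x^{d-1} - \cdots - c_d$. Writing the distinct roots as $\alpha_1, \ldots, \alpha_k$ with multiplicities $m_1, \ldots, m_k$, the general solution of~\eqref{eq:recrel} takes the form
\[
u_n = \sum_{i=1}^k P_i(n)\,\alpha_i^n,
\]
where each $P_i \in K[x]$ has degree less than $m_i$. The assumption $c_d \neq 0$ guarantees that $0$ is not a root, so every $\alpha_i$ is nonzero.

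Next I would transport the problem into the $p$-adic world. For all but finitely many rational primes $p$ there is an embedding of $K$ into a finite extension $L$ of $\Q_p$ under which each $\alpha_i$ becomes a unit in the ring of integers $\O_L$; I fix such a $p$. Choosing $M$ large enough that $\alpha_i^M \equiv 1$ in the residue field and, more delicately, that $\log \alpha_i^M$ lies inside the disc of convergence of $\exp$ for every $i$, we obtain for each residue $r \in \{0, 1, \ldots, M-1\}$ the $p$-adic analytic function
\[
f_r(z) \;=\; \sum_{i=1}^k P_i(Mz + r)\,\alpha_i^r \exp\!\bigl(z\,\log\alpha_i^M\bigr)
\]
on $\Z_p$, whose restriction to $\N$ coincides with $n \mapsto u_{Mn+r}$.

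Then I would invoke Strassmann's theorem: any analytic function on $\Z_p$ that is not identically zero has only finitely many zeros. So, for each residue class $r$, either $f_r \equiv 0$---contributing the whole arithmetic progression $\{Mn+r : n \in \N\}$ to $\zero(u)$---or $f_r$ contributes only finitely many indices. Taking the union over the $M$ classes yields the claimed decomposition of $\zero(u)$ into a finite union of arithmetic progressions and a finite set.

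The main obstacle, and the step that requires genuine care, is the $p$-adic reduction producing the functions $f_r$: one must tune the prime $p$ and the modulus $M$ so that every $\alpha_i^M$ lies in the residue disc of $1$ where $\log$ converges, and so that the resulting element $\log \alpha_i^M$ lies in turn inside the disc $\{y : |y|_p < p^{-1/(p-1)}\}$ where $\exp$ converges. Only then does the substitution $z \mapsto \exp(z \log \alpha_i^M)$ yield an honest convergent power series in $z$ rather than a merely formal expression. This manoeuvre also pinpoints where the hypothesis $\mathrm{char}(\KK)=0$ is essential: in positive characteristic the $p$-adic logarithm has no analogue and the conclusion of the theorem genuinely fails.
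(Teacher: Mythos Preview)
The paper does not actually prove \cref{thm:SML}; it is quoted as a classical result with citations to Skolem, Mahler, and Lech, so there is no in-paper argument to compare against.

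Your outline is the standard Skolem $p$-adic method and is essentially correct for LRS whose data lie in a number field (hence in particular over $\Alg$, which is all the paper ever uses). The sequence of moves---closed form, choice of a prime $p$ at which every $\alpha_i$ is a $p$-adic unit, a modulus $M$ forcing each $\alpha_i^M$ into the domain of convergence of $\log$ and then $\exp$, and finally Strassmann's theorem on each residue class---is exactly the classical proof, and you correctly isolate the convergence of $\exp$ as the point requiring care.

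There is, however, one genuine gap relative to the theorem \emph{as stated}. The paper asserts the result for an arbitrary field $\KK$ of characteristic~$0$, but your first step, ``I may enlarge to a number field $K$ containing every root of the characteristic polynomial'', silently assumes the recurrence coefficients (and initial values) are algebraic. If $\KK = \Q(t)$ with $t$ transcendental, the characteristic roots need not lie in any number field and the $p$-adic machinery is unavailable as written. Bridging this is precisely Lech's contribution: one observes that the data generate a finitely generated $\Z$-algebra and applies a specialisation homomorphism into $\Alg$ that keeps all the finitely many relevant nonzero quantities (the $\alpha_i$, their differences, the nonvanishing $u_n$ up to some bound, etc.) nonzero, thereby reducing to the algebraic case you have handled. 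Without that reduction your argument proves Skolem--Mahler but not the full Skolem--Mahler--Lech statement.
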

The proof is unfortunately ineffective --- though we can find the arithmetic progressions, there is no known algorithm to compute the exceptional finite set. Computing this finite set is equivalent to solving the \emph{Skolem Problem}.
%\begin{problem}[Skolem Problem]
%\vspace{3mm}
\begin{mdframed}
	{\bf \textsc{Skolem Problem}.}\\
	Given an LRS~$u$, decide whether 
	$\zero(u)$	is non-empty.
	%implicitly here: u is of order at most d; because it satisfies a relation of order d, c_d \neq 0
\end{mdframed}
\vspace{1mm}
%Determine whether $\zero(u) \neq \varnothing$.
%\end{problem}
We denote the Skolem Problem on $\KK$-LRS that satisfy a recurrence relation of order~$d$ as $\skolem_\KK(d)$; when $\KK = \Alg$ we simply write $\skolem(d)$. It is straightforward to see that deciding the Skolem Problem is equivalent to deciding the Subspace Orbit Problem with a hyperplane target, that is, a linear subspace of $\KK^d$ of dimension~$d-1$. However, despite being posed close to a century ago, decidability of the Skolem Problem has stubbornly remained open; decidability is known for $\skolem(4)$ \cite{tijdeman_mst1984,vereshchagin_1985,bacik_completing_2025}, and furthermore, $\skolem_\Q(4)$ lies in $\coRP$ \cite{bacik_complexity_2026}, but $\skolem_\Q(5)$ remains open \cite{Bilu_2023}. In the orbit problems terminology, the decidability of reaching a 4-dimensional subspace of $\Q^5$ is open.

When considering \emph{non-degenerate} sequences, the statement of \cref{thm:SML} can be further refined. Define the \emph{characteristic polynomial} of $u$ satisfying \cref{eq:recrel} to be
\begin{align*}
    g(x) = x^d - c_1 x^{d-1} - \dots - c_d \, .
\end{align*}
We say $u$ is non-degenerate if no ratio of distinct roots of its characteristic polynomial is a root of unity.
The Skolem--Mahler--Lech Theorem shows that a non-degenerate LRS is either 
identically zero or has finitely many zero terms.
Crucially, for every LRS~$\langle u_n \rangle_{n\in\N}$ there exists an integer~$L$ such that
%TODO this L can be computed effectively
each subsequence $\langle u_{Ln + r} \rangle_{n\in \N}$ with $0 \leq r \leq L-1$
is %either identically zero or 
non-degenerate.
%if LRS satisfies recrel of order d, then so do the subsequences
%
\subsection*{The Reduced Subspace Orbit Problem}
 %and the Simultaneous Skolem Problem}
The Subspace Orbit Problem with target space $S \subseteq \Q^5$ of dimension~4 is equivalent to $\skolem_\Q(5)$, the decidability of which appears to be beyond the reach of current methods. A natural question is to ask whether the problem with target dimension~4 becomes tractable when the dimension of the LDS is sufficiently large. However, for this we need a sufficiently robust notion of the dimension of an LDS. For example, if $(A,\bm x) \in \KK^{d \times d} \times \KK^d$, then one might naïvely take the dimension of $(A,\bm x)$ to be the ambient dimension~$d$. We argue against it, as any LDS may trivially be viewed as being inside a space of arbitrarily large dimension by adding new components to $A,\bm{x}$ that are zero everywhere. 
Instead, we introduce the notion of \emph{reduced} LDS, and of the \emph{inherent} dimension of an LDS.

Call a LDS $(A,\bm x)$ \emph{non-degenerate} if no ratio of distinct eigenvalues of the matrix $A$ is a root of unity. Define the \emph{Krylov dimension} of an orbit to be the smallest dimension of any subspace that contains it. We say an LDS $(A, \bm x) \in \Alg^{d \times d} \times \Alg^d$ is \emph{reduced} if it is non-degenerate and the Krylov dimension of any tail of its orbit is $d$; that is, $\dim \spn\{A^n \bm x : n \geq m \} = d$ for all $m \geq 0$.

% Secondly, let the Krylov dimension $\dim(A,\bm x)$ of an LDS $(A,\bm x)$ be the smallest dimension of any subspace containing the full orbit $\{\bm x, A \bm x, A^2 \bm x, \dots \}$. 
% We say an LDS $(A, \bm x) \in \Alg^{d \times d} \times \Alg^d$ is \emph{full-dimensional} if the Krylov dimension equals the \emph{ambient} dimension~$d$. Thirdly, we say $(A,\bm x)$ has \emph{stable dimension} if $\dim(A,\bm x) = \dim(A, A^n \bm x)$ for all $n \in \N$.

% An LDS is \emph{reduced} if it is non-degenerate, full-dimensional, and has stable dimension. 

By removing finitely many initial points, taking sub-orbits and restricting to smaller subspaces (see \cref{sec:prelim}), the orbit of any LDS may be rewritten as a union of finitely many points, and finitely many orbits of reduced LDS with equal Krylov dimensions. We define the \emph{inherent dimension} of an LDS to be the Krylov dimension of its associated reduced LDS.\footnote{Reduced LDS have the property that their Krylov dimension remains the same upon removing finitely many points or taking sub-orbits, so the notion of inherent dimension is well defined.} The inherent dimension may be more intuitively characterised by the following property: if an LDS has inherent dimension $d$ then its orbit cannot be contained within a finite union of dimension $d-1$ subspaces.

It turns out that the notion of inherent dimension is the right one for investigating the problem at higher dimensions --- although deciding the Subspace Orbit Problem for LDS of inherent dimension~$d$ and target subspaces of dimension~$t$ is $\skolem(5)$-hard when $(d,t) = (5,4)$, 
the problem becomes tractable already when $(d,t) = (6,4)$. This motivates the following question, which this paper investigates: {\bf for each fixed target dimension, what is the smallest inherent dimension of LDS for which we can decide the Subspace Orbit Problem?}
% {\bf for each fixed target dimension, 
% what is the smallest ambient dimension for which we can decide the reduced Subspace Orbit Problem?}

The Subspace Orbit Problem on arbitrary LDS can be solved by solving the Subspace Orbit Problem on each associated reduced LDS. %Therefore, in the rest of the paper we focus on studying the problem for reduced LDS, and we assume unless stated otherwise that \textbf{any LDS $(A, \bm x)$ hereon is reduced}.
Therefore, in the rest of the paper we assume that \textbf{any LDS $(A, \bm x)$ is reduced}, unless stated otherwise. 
We denote the Subspace Orbit Problem restricted to reduced LDS $(A,\bm x) \in \KK^{d \times d} \times \KK^d$ and subspaces $S \subseteq \KK^d$ of dimension $t$ by $\orbprob_\KK(d,t)$. 

We investigate the central question via the Simultaneous Skolem Problem, also discussed in~\cite{bacik_p-adic_2025}. \\

%\begin{problem}[Simultaneous Skolem Problem]
%\vspace{1mm}
		\begin{mdframed}
			{\bf \textsc{Simultaneous Skolem Problem}.}\\
			Given a finite set of LRS $u^{(1)}, \dots, u^{(k)}$, decide whether 
			\[\zero(u^{(1)}, \dots, u^{(k)}) = \{n \in \N : u^{(i)}_n = 0 \text{ for all } 1 \leq i \leq k\}\]
			is non-empty.
		\end{mdframed}\vspace{1mm}
%\end{problem}
Moreover, we are interested in computing the set $\zero(u^{(1)}, \dots, u^{(k)})$. While the Skolem Problem corresponds to the Subspace Orbit Problem where targets are hyperplanes, the Simultaneous Skolem Problem can express any subspace target. %In order to make the connection with the Subspace Orbit Problem explicit, we introduce some notation.
%\begin{definition}

We further denote the Simultaneous Skolem Problem on $k$ linearly independent\footnote{Say $u^{(1)}, \dots, u^{(k)}$ are linearly dependent if there exist $b_1, \dots, b_k \in \Alg$ such that $b_1u^{(1)}_n + \dots + b_k u^{(k)}_n = 0$, for all $n \in \N$.} non-degenerate $\KK$-LRS satisfying the same recurrence relation of order $d$ by $\simskol_\KK(d,k)$.
%\end{definition}

Just like how the Subspace Orbit Problem for general LDS reduces to considering reduced LDS, the Simultaneous Skolem Problem for general LRS reduces to linearly independent and non-degenerate LRS by passing to non-degenerate subsequences and removing linearly dependent LRS. 
%%{\color{blue}Note by earlier considerations that the Subspace Orbit Problem for LDS of inherent dimension $d$ and subspaces of dimension $t$ reduces to $\orbprob(d',t')$ for $d'\leq d, \, t' \leq t$. }
% We denote the Simultaneous Skolem Problem on $k$ non-degenerate and linearly independent\footnote{Say $u^{(1)}, \dots, u^{(k)}$ are linearly dependent if there exist $b_1, \dots b_k \in \Alg$ such that $b_1u^{(1)}_n + \dots + b_k u^{(k)}_n = 0$, for all $n \in \N$.} $\Alg$-LRS that satisfy the same recurrence relation\footnote{Any finite set of LRS satisfy a common recurrence relation whose characteristic polynomial is $\prod (x-\lambda_i)^{m_i}$, where the product is taken over all unique characteristic roots $\lambda_i$ of the LRS and where $m_i$ is the multiplicity of $\lambda_i$ as a root of the corresponding characteristic polynomial.} of order $d$ by $\simskol(d,k)$. Note that the Simultaneous Skolem Problem on \emph{any} finite set of LRS reduces to $\simskol(d,k)$ for some $d,k$, by passing to non-degenerate subsequences and taking maximal linearly independent subsets. 
% It asks, given a set of non-degenerate LRS, whether they have a common zero.
% Here, we assume the input consists of~$k$ linearly independent LRS $u^{(1)}, \dots, u^{(k)}$ over~$\Alg$ that all satisfy the same recurrence relation of order~$d$.
% We further refer to the thus-defined Simultaneous Skolem Problem as the \emph{$(d,k)$-SSP}.
% While the Skolem Problem corresponds to the Subspace Orbit Problem where targets are hyperplanes,
% the Simultaneous Skolem Problem can express any subspace target.
We shall see that while known algorithms only yield~$\zero(u)$ for~$u$ of order at most~4, considering $k\geq 2$ LRS simultaneously renders decidable problems for LRS of higher orders.
% In doing so, we push the boundary of decidability for the Subspace Orbit Problem
% and identify cases as hard as the Skolem Problem itself.
\subsection*{Overview of the results}

When $\KK = \Alg$, we omit the subscript, i.e.\ $\orbprob_{\Alg}(d,t) = \orbprob(d,t)$ and $\simskol_{\Alg}(d,k) = \simskol(d,k)$.

%TODO start talking $\Alg$ from here; also emphasize how we extend it painlessly: and that's what Kannan, Lipton really wanted
%Quote from \cite{KL_poly1986}:
%We only consider the casein which F = Q, the field of rational numbers. <...>
%Although it would be nice to extend this to any computable field,the rational
%case provides us with so much structure that the extension does not seem automatic.

We first show that deciding $\orbprob(d,t)$ is equivalent to deciding $\simskol(d,d-t)$.\footnote{Note that in conjunction with the simple observation that $\simskol(d,k)$ reduces to $\simskol(d-1,k-1)$ (see \cref{lem:growambient}) and the fact that $\skolem(4)$ is decidable~\cite{bacik_completing_2025}, this already extends the decidability results of Chonev et.\ al.\ \cite{chonev_complex2016} from vector spaces over~$\Q$ to vector spaces over~$\Alg$.} 
% In order to achieve our results, we show that the non-degenerate full-dimensional Subspace Orbit Problem with ambient dimension~$d$ and target dimension~$t$ is equivalent to the $(d,d-t)$-SSP, allowing us to focus on solving the Simultaneous Skolem Problem.\footnote{Note that this alone in conjunction with the fact that the $(4,1)$-SSP is decidable \cite{bacik_completing_2025} already extends the decidability results of Chonev et.\ al.\ \cite{chonev_complex2016} from vector spaces over~$\Q$ to vector spaces over~$\Alg$.} 
% In the language of the $(d,k)$-SSP, the previous discussion says that though the $(5,1)$-SSP is not yet known to be decidable, we can decide the $(6,2)$-SSP  (\cref{thm:62}). For each $k \geq 2$, we study the smallest $d \geq k$ for which we can decide the $(d,k)$-problem, and provide new upper bounds.
We then study $\simskol(d,k)$, in which the key observation is that all simultaneous zeros of $u^{(1)}, \dots, u^{(k)}$ are zeros of an arbitrary linear combination~$w = \sum_{i=1}^k \alpha_i u^{(i)}$ of these LRS\@.
$\simskol$ may thus be solved by solving the Skolem Problem for $\langle w_n\rangle_{n\in\N}$.
More precisely, provided that $u^{(1)}, \dots, u^{(k)}$ are non-degenerate and linearly independent,
any non-trivial linear combination~$w$ of these LRS is a non-zero non-degenerate LRS. 
By \cref{thm:SML}, the zero set~$\zero(w)$ is finite; if we can compute it exactly, all that is left to do in order to compute~$\zero(u^{(1)}, \dots, u^{(k)})$ is to check whether $u_{n}^{(i)} = 0$ for each $n \in \zero(w)$ and all~$1\leq i \leq k$. The freedom to choose the linear combination~$w$ plays a crucial role in achieving our decidability results.
Using this, we show decidability of
\begin{itemize}
    \item $\simskol(6,2)$ and $\orbprob(6,4)$,
    \item $\simskol(9,4)$ and $\orbprob(9,5)$,
    \item $\simskol(12,6)$ and $\orbprob(12,6)$.
\end{itemize}
We provide examples that show that these results are the best possible with the techniques we use, even when restricting to $\simskol_\Q(d,k)$ and $\orbprob_\Q(d,t)$. 

In addition to these low-dimensional results, we show that 
{\bf for each target dimension, the Subspace Orbit Problem is decidable whenever the inherent dimension of the LDS is sufficiently large.}
In particular, we show the following.
\begin{restatable}{theorem}{mainresult} \label{thm:main_result}
$\simskol(d,k)$ is decidable for all $(d,k)$ satisfying $d-k \leq 2\log_3 d$. Equivalently, $\orbprob(d,t)$ is decidable whenever $t \leq 2\log_3 d$.
\end{restatable}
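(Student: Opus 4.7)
The plan is to apply the linear combination reduction advertised just before the theorem statement: find a non-trivial $\beta$ in the coefficient space $V$ of the input LRS so that the combination $w_\beta$ has few characteristic roots, solve the ordinary Skolem problem for $w_\beta$, and then sieve its finite zero set for indices at which every $u^{(i)}$ vanishes.

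After passing to non-degenerate arithmetic subsequences, I may assume the given LRS satisfy a common recurrence of order $d$ with simple characteristic roots $\lambda_1, \dots, \lambda_d \in \Alg$, so that $u^{(i)}_n = \sum_{j=1}^d c_{ij} \lambda_j^n$. The rows of the matrix $(c_{ij})$ span a $k$-dimensional subspace $V \subseteq \Alg^d$, and any non-zero $\beta \in V$ yields a non-degenerate combination $w_\beta(n) = \sum_j \beta_j \lambda_j^n$ whose effective order equals $|\supp(\beta)|$. A standard dimension count---intersecting $V$ with coordinate subspaces of codimension $k-1$---produces $\beta \in V$ with $|\supp(\beta)| \leq d - k + 1 \leq 2\log_3 d + 1$. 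By Skolem--Mahler--Lech the zero set $\zero(w_\beta)$ is finite and contains $\zero(u^{(1)}, \dots, u^{(k)})$.

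The technical heart of the proof is then to compute $\zero(w_\beta)$ effectively when $w_\beta$ is this sparse, with support of size $O(\log d)$. My plan combines the Mignotte--Shorey--Tijdeman--Vereshchagin (MSTV) algorithm, which decides Skolem for LRS with at most three dominant characteristic roots, with the residual freedom inside $V$ after a candidate support has been fixed. Concretely, group the $\lambda_j$ with $j \in \supp(\beta)$ by archimedean absolute value: if the dominant group already has size at most $3$, MSTV applies directly; otherwise, pass to a subspace of $V$ that annihilates the excess dominant roots, trading codimension for better dominance structure. The base-$3$ logarithm in the hypothesis reflects the MSTV threshold of three, and the factor of two accommodates the overhead of carrying both dominant and subdominant structure through the recursion.

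The main obstacle, which I expect to occupy the bulk of the proof, is showing that this iterative refinement is well defined---that the admissible subspace of $V$ stays non-zero---and that it terminates at a genuine MSTV instance; this will require careful bookkeeping on how the partition of characteristic roots by archimedean absolute value interacts with the codimension budget $d - k \leq 2\log_3 d$. Once $\zero(w_\beta)$ has been computed, the algorithm concludes by evaluating $u^{(i)}_n$ for every $n \in \zero(w_\beta)$ and $1 \leq i \leq k$, and returning those $n$ at which all $k$ sequences vanish simultaneously, which settles $\simskol(d,k)$ and, via the equivalence with $\orbprob(d,d-k)$, the corresponding Subspace Orbit instance.
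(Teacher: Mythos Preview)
There is a genuine gap. Your plan hinges on grouping the characteristic roots by \emph{Archimedean} absolute value and then using the MSTV threshold of three, but this is exactly the situation in which the argument breaks. As the paper's \cref{ex:gaussian} shows, one can arrange all $d$ characteristic roots to have the same modulus under \emph{every} Archimedean absolute value on the splitting field. For such instances your ``dominant group'' is all of $\supp(\beta)$, and passing to a proper subspace of $V$ does not help: any non-zero vector in a generic $k$-dimensional $V\subseteq\Alg^d$ has support of size at least $d-k+1$, so you can never push the number of Archimedean-dominant roots below $d-k+1$, which already exceeds~$3$ once $d-k\ge 3$. Relatedly, your sequencing---first fix a single sparse $\beta$, then look for ``residual freedom inside $V$''---is backwards: once $|\supp(\beta)|=d-k+1$, the set of vectors of $V$ supported on that same coordinate set is generically the line $\Alg\beta$, so there is no residual freedom to exploit.

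The ingredient you are missing is \cref{thm:dominant}: among any collection of equal-modulus, non-degenerate roots there is always \emph{some} absolute value (possibly non-Archimedean) under which at most half of them are dominant. The paper never picks a sparse $\beta$ up front. Instead it applies \cref{cor:dom_roots} to all $d$ roots to obtain an absolute value $|\cdot|_v$ with $\ell\le \tfrac{2d}{3}$ dominant terms, then uses Gaussian elimination on $V$ (\cref{lem:echelon}) to either produce a vector with at most two $|\cdot|_v$-dominant terms (MSTV directly) or to kill all $\ell$ dominant terms, yielding $k-r$ independent vectors that live in a $\simskol(d-\ell,k-r)$ instance with $d-\ell\ge d/3$ and target dimension $(d-\ell)-(k-r)\le (d-k)-2$. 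This is the actual source of the $2\log_3 d$ bound: each recursive step shrinks $d$ by at most a factor of~$3$ and shrinks $d-k$ by at least~$2$, so after $(d-k)/2$ steps one lands in the base cases $t\le 6$ handled by \cref{thm:62,thm:94} and the $(12,6)$ theorem. Your explanation that ``the base-$3$ logarithm reflects the MSTV threshold of three'' is not the right mechanism.
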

We provide a careful complexity analysis which shows that, for any fixed $T$, the problem $\orbprob_\Q(d,t)$ for arbitrary $d$ and $t \leq T$ satisfying $t \leq 2\log_3 d$ is in $\NP^\RP$, and if $d$ is fixed then the problem lies in $\coRP$.

Finally, we provide an improved reduction of the Skolem Problem to the Subspace Orbit Problem. Specifically, we show that if there existed a constant $C \in (0,1)$ such that there was an algorithm solving $\orbprob(d,t)$ for all $t \leq Cd$, then such an algorithm could be used to decide the Skolem Problem.

Our contributions are visually summarised in~\cref{table}.
\begin{figure*}[h!]
	\setlength{\arrayrulewidth}{0.2pt}
	\small
	\captionsetup{font=small}
	\resizebox{\textwidth}{!}{%
	\begin{tabular}{|c|c|c|c|c|c|c|c|c|c|c|c|c|c|}
		\hline
		& $d=1$ & $d=2$ & $d=3$ & $d=4$ & $d=5$ & $d=6$
		& $d=7$ & $d=8$ & $d=9$ & $d=10$ & $d=11$ & $d=12$ & $\cdots$ \\
		\hline
		
		$t=1$ &
		\cellcolor{gray!30} & \multicolumn{12}{c|}{\cellcolor{slytherin!30}}\\
		\hline
		
		$t=2$ &
		\cellcolor{gray!30} & \cellcolor{gray!30} & \multicolumn{11}{c|}{\cellcolor{slytherin!30}\centering decid.\ by Chonev et al.\ (over $\mathbb{Q}$)} \\
		\hline
		
		$t=3$ &
		\cellcolor{gray!30} & \cellcolor{gray!30} & \cellcolor{gray!30} & \multicolumn{10}{c|}{\cellcolor{slytherin!30}}\\
		\hline
		
		%\multirow{3}{*}{$t=1$} & \multicolumn{13}{c|}{\cellcolor{slytherin!30}} \\
		%\cline{2-14}
		%\multirow{2}{*}{$t=2$} & \multicolumn{13}{c|}{\cellcolor{slytherin!30}\centering \textbf{by Chonev et al.}} \\
		%\cline{2-14}
		%$t=3$ & \multicolumn{13}{c|}{\cellcolor{slytherin!30}} \\
		%\hline

		$t=4$ &
		\cellcolor{gray!30} & \cellcolor{gray!30} & \cellcolor{gray!30} & \cellcolor{gray!30} & \cellcolor{gryffindor!20}$\skolem(5)$ & \cellcolor{ravenclaw!30}decid.& \cellcolor{ravenclaw!30} $\rightarrow$& \cellcolor{ravenclaw!30}&\cellcolor{ravenclaw!30} &\cellcolor{ravenclaw!30} & \cellcolor{ravenclaw!30}& \cellcolor{ravenclaw!30}&\cellcolor{ravenclaw!30} \dots \\
		\hline
		
		$t=5$ &
		\cellcolor{gray!30} & \cellcolor{gray!30} & \cellcolor{gray!30} & \cellcolor{gray!30} &
		\cellcolor{gray!30} &\cellcolor{gryffindor!30}$\skolem(6)$ & \cellcolor{yellow!50} & \cellcolor{yellow!50} &\cellcolor{ravenclaw!30}decid. &\cellcolor{ravenclaw!30} $\rightarrow$& \cellcolor{ravenclaw!30}&\cellcolor{ravenclaw!30} & \cellcolor{ravenclaw!30} \dots \\
		\hline
		
		$t=6$ &
		\cellcolor{gray!30} & \cellcolor{gray!30} & \cellcolor{gray!30} & \cellcolor{gray!30} &
		\cellcolor{gray!30} & \cellcolor{gray!30} & \cellcolor{gryffindor!40}$\skolem(7)$ & \cellcolor{gryffindor!20}$\skolem_\Q(5)$ & \cellcolor{yellow!50} & \cellcolor{yellow!50} & \cellcolor{yellow!50} &\cellcolor{ravenclaw!30} decid. & \cellcolor{ravenclaw!30} $\rightarrow$\\
		\hline
		
		$\vdots$ &
		\cellcolor{gray!30} & 	\cellcolor{gray!30} & 	\cellcolor{gray!30} & 	\cellcolor{gray!30} & 	\cellcolor{gray!30} & 	\cellcolor{gray!30} & 	\cellcolor{gray!30}
		& \cellcolor{gryffindor!50}$\ddots$ & \vdots & \vdots & \vdots & \vdots & $\ddots$ \\
		\hline
	\end{tabular}
}
\caption{A table depicting the state of the art, and our contributions for $\orbprob_{\Alg}(d,t)$. \newline
	\textcolor{slytherin}{Green}: shown decidable over $\Q$ by \cite{chonev_complex2016}, and extended in the present article to $\Alg$. \newline
	\textcolor{ravenclaw}{Blue}: instances newly shown decidable in the present paper over~$\Alg$; we show $\orbprob_\Q(d,t)$ for them to be in $\NP^\RP$, and in $\coRP$ when~$d$ is fixed. \newline
	\textcolor{yellow}{Yellow}: instances which are provably not MSTV-reducible (cf.~\cref{def:MSTV-reducible}), as shown by~\cref{ex:notmstv}.\newline
	%to the left of every blue cell problems are provably not MSTV-reducible
	\textcolor{gryffindor}{Red}: instances which are $\skolem_\Q(k)$-hard or $\skolem(k)$-hard for some~$k$ (see Appendix \ref{sec:table_hard}).
}
	\label{table}
\end{figure*}
\section{Preliminaries}\label{sec:prelim}
%TODO
% \textcolor{gray}{
% Add preliminaries on LRS,
% exponential polynomials (it helps later with defining the vector of coefficients) etc. Linearly independent LRS. Skolem-Mahler-Lech (also in intro?)}
\subsection*{Reducing LDS}
The \emph{Krylov subspace} $V \subseteq \Alg^d$ associated to an LDS $(A, \bm x) \in \Alg^{d \times d} \times \Alg^d$ is the $\Alg$-vector space spanned by the orbit of~$\bm{x}$ under the powers of~$A \in \Alg^{d\times d}$:
\[V\coloneq\spn(\bm{x}, A\bm{x}, A^2\bm{x}, \dots).\]
Let $V_k \coloneq \spn(\bm{x}, A\bm{x}, A^2\bm{x}, \dots, A^{k-1}\bm{x})$ be a vector subspace spanned by the first~$k$ vectors of the orbit. 
By definition, let $V_0 \coloneqq \{\bm{0}\}$.
Clearly, \begin{equation}\label{eq:subspace-chain}
	V_0 \subseteq V_1 \subseteq \dots \subseteq V_k \subseteq \dots \subseteq V.
\end{equation}
Let~$\mu \leq d$ be the minimum power such that $A^\mu\bm{x} \in V_\mu$.
Then the increasing chain~\eqref{eq:subspace-chain} stabilises at~$k = \mu$, that is, $V_k = V_{k+1} = \dots = V$.
Moreover, $\mu$ is the dimension of the Krylov subspace~$V$ which we also call the \emph{Krylov dimension} of~$V$. 
The use of the eponym is a tribute to the seminal paper~\cite{krylov_1931} by Krylov;
for similar usage see e.g.~\cite[Part~VI]{trefethen_numerical_1997}. 
We write $\dim(A, \bm x) \coloneq \mu$. If $\mu = d$ we say $(A, \bm x)$ is \emph{full-dimensional}.
We say $(A, \bm x)$ has \emph{stable dimension} if $\dim(A, \bm x) = \dim(A, A^n \bm x)$ for all $n \in \N$. Observe that $(A,\bm x)$ certainly has stable dimension if $A$ is invertible. The following lemma shows how we can reduce to considering invertible $A$, and that when taking sub-orbits, each sub-orbit has equal Krylov dimension. See also \cite[Section 6.1]{karimov_thesis} for similar discussion.

\begin{lemma}\label{lem:stable}
%	Let $(A,\bm x) \in \Alg^{d \times d} \times \Alg^d$ be an LDS. Then for each~$L \in \N$, there is a $\mu \in \N$ such that $(A^L,A^{d+r} \bm x)$ has stable dimension $\mu$, for all $0 \leq r \leq L-1$. Furthermore, for each $0 \leq r \leq L-1$ we have $\orbit(A^L,A^{d+r} \bm x) = \orbit(M,A^{d+r} \bm x)$ for invertible matrix~$M$. 
%	
	Let $(A,\bm x) \in \Alg^{d \times d} \times \Alg^d$ be an LDS and let $L$ be an integer. 
	Then there exists an integer~$\mu$ and an invertible matrix~$M \in \Alg^{d\times d}$ 
	(both depending only on $A$, $\bm{x}$ and $L$) 
	such that for all $ 0 \leq r \leq L-1$, the LDS $(A^L, A^{d+r}\bm x)$ has stable dimension $\mu$ and $\orbit(A^L, A^{d+r}\bm{x}) = \orbit(M, A^{d+r}\bm{x})$.
\end{lemma}
\begin{proof}
First, note that if $A$ is nilpotent then $A^n = 0$ for all $n \geq d$, and therefore the conclusion of the lemma follows trivially, with $\mu = 0$ and $M = I_d$, the $d \times d$ identity matrix. Henceforth assume $A$ is not nilpotent, and so has some non-zero eigenvalue. Let $J = PAP^{-1}$ be the Jordan normal form with block-diagonal such that $J = \mathrm{diag}(J_1,J_2)$ with $J_1 \in \Alg^{d_1 \times d_1}$ invertible and $J_2 \in \Alg^{d_2 \times d_2}$ nilpotent, with $d_1 > 0$. 
Let $\bm s = P \bm x = (\tilde{\bm{s}}_1, \tilde{\bm{s}}_2)^\top$, for $\tilde{\bm{s}}_1 \in \Alg^{d_1}$, $\tilde{\bm{s}}_2 \in \Alg^{d_2}$. Then, since $J_2^n = 0$ for all $n \geq d$, we have $J^n \bm s = (J_1^n \tilde{\bm{s}}_1,\bm{0})^\top$ for all $n \geq d$. 

Define $\tilde J = \mathrm{diag}(J_1,I_{d_2})$. 
Since $J^n \bm s = (J_1^n \tilde{\bm{s}}_1,\bm{0})^\top$ for all $n \geq d$, we have $J^n J^{d} \bm s = \tilde J^{n} J^{d} \bm s$ for all $n \geq 0$. 
That is, $\orbit(\tilde{J},J^{d}\bm{s}) = \orbit(J,J^{d}\bm{s})$.

Fix an integer~$L$. Since $J_1$ is invertible, $\dim(J_1^L, J_1^d\tilde{\bm{s}}_1)$ is stable; define~$\mu\coloneq\dim(J_1^L, J_1^d\tilde{\bm{s}}_1)$.
For each $0 \leq r \leq L-1$, we have \[\tilde J^{r} \orbit(\tilde J^{L}, J^d \bm s) = \orbit(\tilde J^{L} , J^{d+r} \bm s).\]
The Krylov subspaces of $(\tilde J^{L}, J^{d+r} \bm s)$, $0\leq r\leq L-1$, are related by multiplication by an invertible matrix. It follows that \[\dim({\tilde J}^{L}, J^d\bm{s}) = \dim({\tilde J}^{L}, J^{d+1}\bm{s}) = \dots = \dim({\tilde J}^{L}, J^{d+L-1}\bm{s}) =  \mu.\]

Recall that~$P$ is invertible.
By letting $M = P^{-1}\tilde J^{L} P$, it follows that that $(M, A^{d+r}\bm{x})$ has stable dimension~$\mu$ for each $0 \leq r \leq L-1$. Note also that $M$ is invertible, and that $\orbit(A^L,A^{d+r} \bm x) = \orbit(M, A^{d+r} \bm x)$, proving what was desired.
\end{proof}
\begin{remark} \label{rmk:mu_jordan_0}
From the proof, we see that $\mu \leq d-d_2$, where $d_2$ is the multiplicity of~$0$ as a root of the characteristic polynomial of~$A$.
\end{remark}
%We say that the orbit is \emph{full-dimensional} if its Krylov dimension is $d$.

Recall that an LDS $(A, \bm x)$ is \emph{reduced} if it is non-degenerate, full-dimensional and has stable dimension. Note that if $(A,\bm x)$ is reduced then $A$ is invertible. Indeed, if not then $0$ is an eigenvalue of $A$ and by \cref{rmk:mu_jordan_0} we have $\mu < d$. Recall also that $\orbprob(d,t)$ is the Subspace Orbit Problem on \emph{reduced} LDS $(A , \bm x) \subseteq \Alg^{d \times d} \times \Alg^d$ and subspace $S \subseteq \Alg^d$ of dimension $t$. We now formalise a reduction from an arbitrary instance $(A,\bm{x}, S)$ of the Subspace Orbit Problem for a non-reduced LDS $(A,\bm{x}) \in \Alg^{d\times d} \times \Alg^d$ to $\orbprob$. 

First, check whether $A^i\bm{x} \in S$ for some $i < d$. 
If not the case, find an integer~$L$ such that $A^L$ is non-degenerate.\footnote{We may take $L$ to be the lowest common multiple of the orders of all ratios of eigenvalues $\lambda_i/\lambda_j$ that are roots of unity. This is at most exponential in the degree of the number field containing all eigenvalues \cite[Section A.5]{chonev_complex2016}.} 
We apply~\cref{lem:stable} to find an invertible matrix~$M$ such that $\orbit(A^L,A^{d+r}\bm{x}) = \orbit(M, A^{d+r}\bm{x})$ for all $0 \leq r \leq L-1$. 
Now decompose the orbit $\orbit(A,A^d\bm{x})$ into~$L$ sub-orbits $\orbit_r = \orbit(M,A^{d+r}\bm{x})$, where $0 \leq r \leq L-1$. 
Moreover, for each~$r$, define $V_r \coloneq \spn(\orbit_r)$. 
These subspaces are all of dimension~$\mu \leq d$ by~\cref{lem:stable}, 
and so can be identified with $\Alg^\mu$. Recall that we call $\mu$ the \emph{inherent dimension} of $(A,\bm x)$. 
The new targets $S_r \coloneq S \cap V_r$ are linear subspaces of~$V_r$, respectively. 

The original orbit $\orbit(A,\bm x)$ reaches subspace~$S$ of dimension~$t$ if and only if at least one of the following holds: 
1) $A^i\bm{x} \in S$ for some $i < d$; 
2) for some~$r$, the orbit~$\orbit_r$ of the reduced LDS (in ambient dimension~$\mu \leq d$) reaches the subspace~$S_r \subseteq \Alg^\mu$ of dimension~$t_r \leq t$. Now, 1) can be solved by a finite check, while 2) consists of instances of $\orbprob(\mu,t_r)$ for each $0 \leq r \leq L-1$. We summarise this in the following statement.
\begin{proposition}
Solving the Subspace Orbit Problem on arbitrary LDS $(A,\bm x) \in \Alg^{d \times d} \times \Alg^d$ of inherent dimension $\mu$ and target subspace $S \subseteq \Alg^d$ of dimension $t$ reduces to solving finitely many instances of $\orbprob(\mu,t')$ for $t' \leq t$.
\end{proposition}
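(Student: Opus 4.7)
The plan is to formalise the informal reduction described just before the statement, breaking the orbit of $(A,\bm{x})$ into a finite prefix plus finitely many sub-orbits each of which constitutes a reduced-LDS instance. First I would dispose of the prefix: the orbit points $\bm{x}, A\bm{x}, \dots, A^{d-1}\bm{x}$ lie in the Krylov subspace which has dimension at most $d$, so membership of any of these finitely many points in the given target $S$ is a decidable finite check. This isolates the ``tail'' $\orbit(A, A^d\bm{x})$ as the only interesting part.

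Next I would produce the non-degenerate step. Using that the ratios of eigenvalues of $A$ which are roots of unity have bounded order (cf.\ the footnote referencing \cite{chonev_complex2016}), I compute an integer $L \in \N$ such that $A^L$ is non-degenerate. Applying \cref{lem:stable} to $A$ and this $L$ yields an invertible matrix $M \in \Alg^{d\times d}$ and a common dimension $\mu \leq d$ such that for every $0 \leq r \leq L-1$ one has $\orbit(A^L, A^{d+r}\bm{x}) = \orbit(M, A^{d+r}\bm{x})$ and the LDS $(M, A^{d+r}\bm{x})$ has stable Krylov dimension $\mu$. Non-degeneracy of $M$ follows because its eigenvalues are $L$-th powers of those of $A$, which are non-degenerate by construction of $L$.

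Now I would slice the tail into arithmetic sub-orbits: $\orbit(A, A^d\bm{x}) = \bigcup_{r=0}^{L-1} \orbit_r$, where $\orbit_r \coloneq \orbit(M, A^{d+r}\bm{x})$. Let $V_r \coloneq \spn(\orbit_r)$; by construction $\dim V_r = \mu$, so we may identify $V_r$ with $\Alg^\mu$ via a choice of basis, and under this identification $(M|_{V_r}, A^{d+r}\bm{x})$ becomes a reduced LDS in $\Alg^{\mu\times\mu}\times\Alg^\mu$ (non-degeneracy is inherited from $M$, full-dimensionality is exactly $\dim V_r = \mu$, and stable dimension is precisely what \cref{lem:stable} guarantees). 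The new target is $S_r \coloneq S \cap V_r$, a linear subspace of $V_r$ of dimension $t_r \leq \min(t,\mu)$. Clearly $\orbit_r \cap S = \orbit_r \cap S_r$, so checking whether the sub-orbit meets $S$ is precisely the instance $\orbprob(\mu, t_r)$ under the identification of $V_r$ with $\Alg^\mu$.

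Combining the two parts, $\orbit(A,\bm{x}) \cap S \neq \varnothing$ if and only if either some $A^i\bm{x}$ with $i<d$ lies in $S$, or at least one of the $L$ reduced instances $\orbprob(\mu, t_r)$ returns YES. Since $L$ is computable from the eigenvalue data of $A$ and the finite check is straightforward, this yields the claimed reduction. The only substantive obstacle is the careful bookkeeping to ensure that the constructed instances are genuinely reduced in the technical sense of the paper; this is exactly what \cref{lem:stable} (together with the choice of $L$) delivers, so no further work is needed beyond assembling the pieces.
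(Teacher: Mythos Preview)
Your proposal is correct and mirrors the paper's own argument (the discussion immediately preceding the proposition) essentially step for step: dispose of the prefix $A^i\bm{x}$ for $i<d$, choose $L$ so that $A^L$ is non-degenerate, invoke \cref{lem:stable} to obtain the invertible $M$ and the common stable dimension $\mu$, and slice the tail into the $L$ reduced instances $(M|_{V_r}, A^{d+r}\bm{x})$ with targets $S_r = S\cap V_r$. One minor slip worth flagging: the matrix $M$ from \cref{lem:stable} replaces the nilpotent Jordan block by the identity, so its eigenvalues are not literally the $L$-th powers of those of $A$ and $M$ itself need not be non-degenerate; what you actually need (and what holds) is that the restriction $M|_{V_r}$ sees only the $L$-th powers of the \emph{nonzero} eigenvalues of $A$, and those are non-degenerate by the choice of $L$.
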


% When $\mu < d$, 
% we can restrict our attention to an orbit within a $\mu$-dimensional ambient space. 
% Consider the basis $\{\bm{v}, M\bm{v}, \dots, M^{\mu-1}\bm{v}\}$
% of the Krylov subspace~$V$ of $\bm{v}$ under~$M$.
% Let $\bm{w} \in \Alg^\mu$ be the vector of coordinates of~$M^\mu\bm{v}$ written in the new basis.
% Then, under a change of basis $P = \begin{pmatrix}
% 	\bm{v} \mid M\bm{v} \mid \dots \mid M^{\mu-1}\bm{v}
% \end{pmatrix}$, where $P \in \Alg^{d \times \mu}$,
% the linear transformation~$M$ restricted to~$V$ can be represented by a matrix
% \[B = \begin{pmatrix}
% 	\bm{e}_2 \mid \bm{e}_3 \mid \dots \mid \bm{e}_\mu \mid \bm{w}
% \end{pmatrix}.\]
% Here, $\bm{e}_i$ of size~$\mu$ is the $i$th standard unit vector.
% %Clearly, the Krylov subspace is invariant under~$A$, i.e., $AV \subseteq V$ holds.
% %Consider the basis $\{\bm{v}, A\bm{v}, \dots, A^{\mu-1}\bm{v}\}$ of~$V$, and let $P = \begin{pmatrix}
% %	\bm{v} \mid A\bm{v} \mid \dots \mid A^{\mu-1}\bm{v}
% %\end{pmatrix}$
% %be a $d \times \mu$ change-of-basis matrix.
% %Let $\bm{w} \in \Alg^\mu$ be the vector of coordinates of~$A^\mu\bm{v}$ written in the new basis.
% %The restriction of~$A$ to the Krylov subspace~$V$ is the linear transformation~$B \in \Alg^{\mu\times\mu}$, so 
% We have $M^\mu\bm{v} = P \bm{w} = PB^\mu\bm{e}_1$
% and further, $M^n\bm{v} = PB^n\bm{e}_1$ for all~$n\in \N$. 
% We can thus identify an orbit with Krylov dimension~$\mu\leq d$ with an orbit 
% in~$\Alg^\mu$ which is not contained in any proper subspace thereof. 
% If $M$ is invertible, then so is~$B$.
\subsection*{Algebraic Number Theory}
We recall some algebraic number theory. More details can be found in \cite{neukirch_algebraic_1999,Waldschmidt_book}.
Let $\KK$ be a finite extension of $\Q$, i.e, a number field. We define the notions of Archimedean and non-Archimedean absolute values on $\KK$, which are central to the decidability results we prove. 

When $p \in \Z$ is a prime, we define the $p$-adic valuation $v_p : \Q \to \Z \cup \{\infty\}$ by $v_p(0) = \infty$ and $v_p\left(p^r \frac{a}{b} \right) = r$, where $a,b,p \in \Z$ are pairwise coprime. We define the $p$-adic absolute value by $|x|_p = p^{-v_p(x)}$. We require a generalisation of this to number fields, via a notion of valuation with respect to a prime ideal. 

Let $\O_\KK$ denote the ring of algebraic integers in $\KK$. Define a \emph{fractional ideal} of $\KK$ to be a non-zero finitely generated $\O_\KK$-submodule of $\KK$. Equivalently $I$ is a fractional ideal if and only if there is $c \in \KK$ such that $cI \subseteq \O_\KK$ is an ideal of $\O_\KK$. The fractional ideals of $\KK$ form a group under multiplication. Any fractional ideal $I$ of $\KK$ has a unique decomposition \cite[p. 22]{neukirch_algebraic_1999}
\begin{align} \label{eqn:ideal_fac}
I = \prod_{i=1}^t \pp_i^{n_i}
\end{align}
where $t \in \N$, each $\pp_i \subseteq \O_\KK$ is a prime ideal and $n_i \in \Z$. 

For a prime ideal $\pp \subseteq \O_\KK$ we define the \emph{valuation} $v_\pp : \KK \to \Z \cup \{\infty\}$ by $v_\pp(0) = \infty$ and for non-zero $a \in \KK$, we define $v_\pp(a)$ to be the exponent of $\pp$ in the prime decomposition of the fractional ideal $a\O_\KK$ given by \eqref{eqn:ideal_fac}. There is exactly one prime $p \in \Z$ such that $v_\pp(p) > 0$; we say $\pp$ \emph{lies above} $p$, and define the ramification index $e_\pp = v_\pp(p)$. We define the $\pp$-adic absolute value $|x|_\pp = p^{-v_\pp(x)/e_\pp}$.

By Ostrowski's theorem, any (non-trivial) absolute value on $\Q$ is equivalent to the modulus $|\cdot|$ or a $p$-adic absolute value $|\cdot|_p$ for some prime $p \in \Z$. 
Furthermore, if $|\cdot|_v$ is a (non-trivial) absolute value on $\KK$, then its restriction to $\Q$ is either equivalent to the modulus $|\cdot|$ (in which case $|\cdot|_v$ is \emph{Archimedean} and we write $v \mid \infty$) or some $p$-adic absolute value $|\cdot|_p$ (in which case $|\cdot|_v$ is \emph{non-Archimedean} and we write $v \mid p$, or $v\nmid \infty$). We normalise $|\cdot|_v$ such that, on $\Q$, $|\cdot|_v$ coincides with the modulus $|\cdot|$ or $|\cdot|_p$ for some $p$; in the Archimedean case we have $|x|_v = x$ for all $x \in \Q_{\geq 0}$ and in the non-Archimedean case $|p|_v = p^{-1}$. Denote the set of non-trivial absolute values on $\KK$, normalised as above, by $M_\KK$.

If $v \in M_\KK$ is Archimedean, it either corresponds to a real embedding $\KK \overset{\sigma}\hookrightarrow \mathbb R$ or a pair of complex embeddings $\KK \overset{\sigma,\overline \sigma}\hookrightarrow \mathbb C$. In both cases, under the normalisation above we have $|x|_v = |x|_\sigma = |\sigma(x)|$. Define the \emph{local degree} $D_v$ of $v$ as $D_v = 1$ or $D_v = 2$ if $v$ corresponds to a real embedding or a pair of complex conjugate embeddings respectively.

If $v \in M_\KK$ is non-Archimedean, and $v \mid p$, then $v$ corresponds to a prime ideal $\pp \subseteq \O_\KK$ dividing $p$. Under the normalisation above, we have $|x|_v = |x|_\pp = p^{-v_\pp(x)/e_\pp}$. Define $D_v = [\KK_v:\Q_p]$ where $\KK_v,\Q_p$ are the completions of $\KK, \Q$ with respect to $|\cdot|_v, |\cdot|_p$ respectively.
Note also that any non-Archimedean $v\in M_\KK$ satisfies the ultrametric inequality: for any $x,y \in \KK$ we have $|x+y|_v \leq \max\{|x|_v,|y|_v\}$.

In our complexity analysis, we require a robust notion of the size of an algebraic number. This is the content of the next definition.
\begin{definition}
Define $\log^+(x) = \log \max\{1,x\}$. If $x \in \KK$ and $[\KK:\Q] = D$, we define the \emph{absolute logarithmic height} of $x$ (or just \emph{height} for short) as
\begin{align*}
    h(x) := \frac{1}{D} \sum_{v \in M_\KK} D_v \log^+ |x|_v \, .
\end{align*}
\end{definition}

It is known that the height depends only on $x$, not the choice of number field $\KK$. The height satisfies the following properties, found in \cite[Sections 3.2, 3.5]{Waldschmidt_book}.
\begin{proposition} \label{prop:h_props}
For any algebraic numbers $\alpha_1,\alpha_2, \dots, \alpha_m$ with $\alpha_1 \neq 0$, for any $n \in \Z$ and any absolute value $|\cdot|_v$ on $\Q(\alpha_1)$ we have
\begin{enumerate}
    \item $h(\alpha_1 \alpha_2) \leq h(\alpha_1) + h(\alpha_2)$, \label{enum:h_props_mult}
    \item $h(\alpha_1 + \dots + \alpha_m) \leq \log m + h(\alpha_1) + \dots + h(\alpha_m)$, \label{enum:h_props_add}
    \item $h(\alpha_1^n) = |n|h(\alpha_1)$, \label{enum:h_props_pow}
    \item $-[\Q(\alpha_1):\Q]h(\alpha_1) \leq \log |\alpha_1|_v \leq [\Q(\alpha_1):\Q]h(\alpha_1)$ \label{enum:h_props_log}
\end{enumerate}
\end{proposition}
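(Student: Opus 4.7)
The plan is to prove each inequality place by place, starting from the multiplicative/additive behaviour of each absolute value and then summing up weighted by the local degrees $D_v$. The only non-elementary input I need is the product formula
\[\sum_{v\in M_\KK} D_v \log |\alpha|_v = 0 \quad \text{for every nonzero } \alpha \in \KK,\]
which is the standard global consequence of the normalisation of absolute values used in the paper. Throughout, I work in a number field $\KK$ containing all the $\alpha_i$, so that $D = [\KK:\Q]$ divides suitable things; one first checks (as is classical) that the definition $h(\alpha) = \frac{1}{D}\sum_v D_v \log^+|\alpha|_v$ is independent of the choice of such $\KK$, by verifying the compatibility $\sum_{w\mid v} D_w = [\KK:\KK']D_v$ when passing to an extension.

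For \textbf{(\ref{enum:h_props_mult})}, I use $|\alpha_1\alpha_2|_v = |\alpha_1|_v|\alpha_2|_v$ together with the elementary inequality $\max\{1,ab\} \leq \max\{1,a\}\max\{1,b\}$ for $a,b\geq 0$, yielding $\log^+|\alpha_1\alpha_2|_v \leq \log^+|\alpha_1|_v + \log^+|\alpha_2|_v$ at every place. Summing $D_v$-weighted and dividing by $D$ gives subadditivity. For \textbf{(\ref{enum:h_props_add})} I split by Archimedean vs.\ non-Archimedean. Non-Archimedean places give $\log^+|{\sum \alpha_i}|_v \leq \max_i \log^+|\alpha_i|_v \leq \sum_i \log^+|\alpha_i|_v$ via the ultrametric inequality. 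Archimedean places give $\log^+|{\sum \alpha_i}|_v \leq \log m + \sum_i \log^+|\alpha_i|_v$. Summing with weights $D_v/D$ and using $\sum_{v\mid\infty} D_v = D$ (the sum of local degrees of the infinite places equals the global degree), the Archimedean contribution $\log m$ integrates to exactly $\log m$.

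For \textbf{(\ref{enum:h_props_pow})} the case $n \geq 0$ is immediate from $|\alpha^n|_v = |\alpha|_v^n$ and $\log^+(x^n) = n\log^+ x$. The case $n<0$ reduces, by applying the nonnegative case to $\alpha^{-1}$, to the identity $h(\alpha^{-1}) = h(\alpha)$. This is where the product formula is essential: using the pointwise identity $\log^+(1/x) - \log^+(x) = -\log x$ for $x > 0$ (checked in the two cases $x \geq 1$ and $x\leq 1$), one computes $D(h(\alpha^{-1}) - h(\alpha)) = -\sum_v D_v\log|\alpha|_v = 0$.

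For \textbf{(\ref{enum:h_props_log})}, working in $\KK = \Q(\alpha_1)$ with $D = [\KK:\Q]$, the upper bound is immediate: for the chosen place $v$,
\[\log|\alpha_1|_v \leq \log^+|\alpha_1|_v \leq \tfrac{D}{D_v}\bigl(\tfrac{1}{D}\sum_w D_w \log^+|\alpha_1|_w\bigr) \leq D\, h(\alpha_1),\]
since $D_v \geq 1$ and the dropped terms are nonnegative. The lower bound is obtained from the product formula: $-D_v\log|\alpha_1|_v = \sum_{w\neq v} D_w \log|\alpha_1|_w \leq \sum_w D_w\log^+|\alpha_1|_w = D\, h(\alpha_1)$, whence $\log|\alpha_1|_v \geq -\tfrac{D}{D_v} h(\alpha_1) \geq -D\, h(\alpha_1)$. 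The only conceptual step is the product formula; everything else is inequality juggling, and I do not expect a real obstacle, only the mild bookkeeping of separating Archimedean and non-Archimedean places in (\ref{enum:h_props_add}).
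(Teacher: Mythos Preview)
Your proof is correct. The paper itself does not give a proof of this proposition at all: it simply cites \cite[Sections 3.2, 3.5]{Waldschmidt_book} for these standard properties of the absolute logarithmic height. Your argument is the classical one found in that reference---local inequalities at each place, weighted summation, the product formula for the inversion identity and the lower bound in item~(\ref{enum:h_props_log}), and the degree formula $\sum_{v\mid\infty} D_v = D$ to control the Archimedean contribution in item~(\ref{enum:h_props_add})---so there is nothing to compare beyond noting that you have supplied what the paper outsources.
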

% We also make use of the trace of an algebraic number in \cref{sec:complexity}. Given a finite field extension $\LL/\KK$ of number fields, we may define a linear map $U_x : \LL \to \LL$ by $U_x(y) = xy$. This is a $\KK$-linear map, so it has a trace $\mathrm{Tr}(U_x) \in \KK$. We thus define the \emph{trace of $x$ relative to the extension $\LL/\KK$} as $\mathrm{Tr}_{\LL/\KK}(x) = \mathrm{Tr}(U_x)$. If $\sigma_1, \dots, \sigma_n$ are all the distinct $\KK$-embeddings of $\LL$ into the algebraic closure of $\KK$, we have %the formula
% \begin{align*}
%     \mathrm{Tr}_{\LL / \KK}(x) = \sum_i \sigma_i(x) \, .
% \end{align*}
% The map $(x,y) \mapsto \mathrm{Tr}_{\LL/\KK}(xy)$ defines a non-degenerate bilinear form on the $\KK$-vector space $\LL$ \cite[Prop. 2.8]{neukirch_algebraic_1999}, meaning if $x \in \LL$ has the property that $Tr_{\LL/\KK}(xy) = 0$ for all $y \in \LL$ then $x = 0$.
\subsection*{Linear Recurrence Sequences}
Let $u$ be a $\Alg$-LRS satisfying recurrence relation \eqref{eq:recrel} of order $d$. Say $u$ has order $d$ if it does not satisfy any recurrence relation of order $d-1$. Note then $\skolem(d)$ refers to the Skolem problem for all LRS of order $\leq d$.

A $\Alg$-LRS $u$ satisfying \eqref{eq:recrel} has an exponential polynomial form
\begin{align} \label{eq:exp-poly}
    u_n = \sum_{i=1}^s P_i(n) \lambda_i^n = \sum_{i=1}^s \sum_{j=0}^{m_i-1} c_{i,j} n^j\lambda_i^n
\end{align}
where each $P_i(n) = \sum_{j=0}^{m_i-1} c_{i,j} n^j \in \Alg[n]$ is a polynomial, and $\lambda_1, \dots, \lambda_s$ are the roots of the \emph{characteristic polynomial}
$g(X) = X^d - c_1 X^{d-1} - \dots - c_d$, with multiplicities $m_1, \dots, m_s$. 
We call $\lambda_1, \dots, \lambda_s$ the \emph{characteristic roots} of $u$. If $\lambda_i/\lambda_j$ is a root of unity for any $i \neq j$ then we say $u$ is degenerate. For $|\cdot|_v$ an absolute value, call $\lambda_i$ \emph{dominant with respect to} $|\cdot|_v$ (or $|\cdot|_v$-dominant), if $|\lambda_i|_v \geq |\lambda_j|_v$ for all $j \neq i$. 
We say an LRS is \emph{simple} if $\deg P_i = 0$ for all~$i \in \{1, \dots, s\}$.

% We expand the exponential polynomial form as a sum
% \begin{equation}\label{eq:exp-poly}
% 	u_n =  
% 	%\sum_{i,j} c_{i,j} n^j\lambda_i^n
% 	\sum_{i=1}^s \sum_{j=0}^{m_i} c_{i,j} n^j\lambda_i^n
% \end{equation} 
% of ~$d = \sum_{i=1}^{s} (\deg P_i + 1)$ terms. 
We make some further definitions to discuss different parts of the LRS. Precisely, a \emph{term} is a product of an exponential monomial $n^j\lambda_i^n$ and a non-zero constant~$c_{i,j}$. If $\lambda_i$ is dominant with respect to $|\cdot|_v$, then we call $c_{i,j}n^j \lambda_i^n$ a \emph{dominant term}. 
Fix a linear order on the set of exponential monomials in~\eqref{eq:exp-poly} and associate with~$u$ a vector $\bm{u} \in \Alg^d$ of constants $c_{i,j}$ with respect to this order.\footnote{Note that it is possible for the $c_{i,j}$ here to be zero, e.g. if the order of $u$ is $< d$.}
We refer to the set of non-zero components of a vector~$\bm{u}$ as its \emph{support}~$\supp(\bm{u})$.

Given~$k$ LRS $u^{(1)}, \dots, u^{(k)}$ satisfying \eqref{eq:recrel},
consider their associated vectors $\bm{u}^{(1)}, \dots, \bm{u}^{(k)} \in \Alg^d$.
We say that $u^{(1)}, \dots, u^{(k)}$ are \emph{linearly independent} over~$\Alg$ if 
the subspace~$C \subseteq \Alg^d$ spanned by $\bm{u}^{(1)}, \dots, \bm{u}^{(k)}$ has dimension~$k$.

\section{Decidability}\label{sec:decid}
We formalise the connection between $\orbprob$ and $\simskol$ in the following proposition.
\begin{proposition} \label{prop:reduction}
	%Let $x \in \overline{\Q}^d, \, M \in \overline{\Q}^{d \times d}$, and let $S$ be a subspace of $\overline{\Q}^d$ of dimension $d-k$.
	%Assume that the orbit $\{M^n x : n\in \N\}$ is not contained in any proper subspace of $\overline{\Q}^d$. 
	%Then deciding whether $M^n x \in S$ for some $n \in \N$ 
	%\textcolor{blue}{is equivalent to} deciding the Simultaneous Skolem Problem for $k$ linearly independent LRS that satisfy the same recurrence relation
	%of order~$d$.
	%
	%Furthermore, if no ratio of two eigenvalues of $M$ is a root of unity, we may reduce to the case of non-degenerate LRS.
	%
	%The Subspace Orbit Problem for targets of dimension~$t = d-k$ in ambient dimension~$d$ under the assumption that 
	%the Krylov dimension of the orbit~$\orbit = \{M^n\bm{x} : n \in \N\}$ is~$d$
	%and the Simultaneous Skolem Problem for~$k$ LRS of order~$d$ are equivalent.
	%
	%The non-degenerate Subspace Orbit Problem for targets of dimension~$t = d-k$ in ambient dimension~$d$ under the assumption that 
	%the Krylov dimension of the orbit~$\orbit = \{M^n\bm{x} : n \in \N\}$ is~$d$
	%\textbf{is equivalent to} the non-degenerate Simultaneous Skolem Problem for~$k$ LRS of order~$d$.\anton{or we say that the equivalence preserves non-degeneracy}
	%
	%The full-dimensional Subspace Orbit Problem for targets of dimension~$t = d-k$ in ambient dimension~$d$ is equivalent to the Simultaneous Skolem Problem for~$k$ LRS of order~$d$. 
	%This equivalence preserves non-degeneracy.
	%
	For any $k\leq d$, the problems
	$\orbprob_{\KK}(d,d-k)$ and $\simskol_\KK(d,k)$
	are equivalent.
\end{proposition}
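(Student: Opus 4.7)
The plan is to establish the equivalence as two mutual reductions organised around the identity $u_n = \bm w^T A^n \bm x$, which converts a dimension-$d$ LDS together with a row vector into an LRS of order at most~$d$, and around the observation that $A^n \bm x \in S$ if and only if $(\bm w^{(i)})^T A^n \bm x = 0$ for $i = 1, \dots, k$, whenever $S = \bigcap_{i=1}^{k} \ker (\bm w^{(i)})^T$. In both directions the target subspace $S$ of dimension $d-k$ is represented by $k$ row vectors spanning its annihilator, and the orbit meets $S$ exactly when the associated $k$ sequences vanish simultaneously.

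For $\orbprob_\KK(d, d-k) \to \simskol_\KK(d, k)$, given a reduced $(A, \bm x)$ and target $S$ of dimension $d-k$, I pick any basis $\bm w^{(1)}, \dots, \bm w^{(k)}$ of the annihilator of $S$ and set $u^{(i)}_n := (\bm w^{(i)})^T A^n \bm x$. By Cayley--Hamilton each $u^{(i)}$ satisfies the recurrence given by the characteristic polynomial of $A$, which has order~$d$. Non-degeneracy of the LRS follows from non-degeneracy of $A$, since the characteristic roots of the recurrence are the eigenvalues of~$A$. For linear independence: if $\sum_i \alpha_i u^{(i)} \equiv 0$, then $(\sum_i \alpha_i \bm w^{(i)})^T A^n \bm x = 0$ for every $n$; the full-dimensional part of the reduced hypothesis forces $\{A^n \bm x\}_{n \geq 0}$ to span $\KK^d$, so $\sum_i \alpha_i \bm w^{(i)} = 0$, and then $\alpha_i = 0$ by independence of the~$\bm w^{(i)}$.

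For the reverse direction $\simskol_\KK(d, k) \to \orbprob_\KK(d, d-k)$, I take $A \in \KK^{d \times d}$ to be the companion matrix of the common recurrence and $\bm x := \bm e_1$. Then $A^j \bm x = \bm e_{j+1}$ for $0 \leq j \leq d-1$ shows $(A, \bm x)$ is full-dimensional; $A$ is invertible because $c_d \neq 0$, which yields stable Krylov dimension; and non-degeneracy of the LRS recurrence translates directly to non-degeneracy of~$A$. Using the basis $\{A^j \bm x\}_{j=0}^{d-1}$ of $\KK^d$, there is a unique $\bm w^{(i)} \in \KK^d$ satisfying $(\bm w^{(i)})^T A^j \bm x = u^{(i)}_j$ for $0 \leq j \leq d-1$, and Cayley--Hamilton propagates this identity to all~$n$. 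Linear independence of the LRS transfers to linear independence of the $\bm w^{(i)}$ by the same ``evaluate on a spanning set'' argument, so $S := \bigcap_i \ker (\bm w^{(i)})^T$ is a subspace of $\KK^d$ of dimension exactly $d-k$, and simultaneous zeros of $u^{(1)}, \dots, u^{(k)}$ correspond precisely to visits of $\orbit(A, \bm x)$ to~$S$.

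The only subtle point is that each component of the \emph{reduced LDS} conditions must match a component of the \emph{linearly independent non-degenerate LRS} conditions, and vice versa. Non-degeneracy matches non-degeneracy through the characteristic polynomial; ``satisfies a common recurrence of order $d$'' matches invertibility plus the correct ambient dimension; and linear independence of sequences matches linear independence of row vectors via the full-dimensional hypothesis on $(A, \bm x)$. I expect the bookkeeping for this last correspondence to be the main thing that needs to be spelled out carefully; everything else is a direct translation.
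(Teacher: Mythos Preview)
Your proposal is correct and follows essentially the same route as the paper: both directions hinge on the identity $u_n^{(i)} = (\bm w^{(i)})^\top A^n \bm x$, with the companion matrix and $\bm e_1$ used for the reverse direction, and the reduced/linear-independence/non-degeneracy conditions matched up exactly as you describe. Your observation that $A^j\bm e_1 = \bm e_{j+1}$ for $0\le j\le d-1$ is in fact a cleaner justification of full-dimensionality than the paper's own phrasing; the only small point you leave implicit is that the characteristic polynomial of a reduced $A$ has $c_d\neq 0$ (since $A$ is invertible), which is what makes the recurrence genuinely of order~$d$.
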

\begin{proof}
	Let $(M,\bm{x}, S)$ be an instance of~$\orbprob_\KK(d,d-k)$ and let $S \subseteq \KK^d$ be a subspace spanned by~$\{\bm{v}_1, \dots, \bm{v}_{d-k}\}$. 
	Consider a basis $\{\bm{u}_1, \dots, \bm{u}_k \}$ of the orthogonal complement~$S^\perp$. 
	Then $M^n \bm{x} \in S$ iff $\bm{u}_i^\top M^n \bm{x} =0$ for all $i$. 
	Now, $\bm{u}_1^\top M^n \bm x, \dots , \bm{u}_k^\top M^n \bm{x}$ are all LRS that satisfy the same linear recurrence relation of order~$d$. Indeed, let
	\begin{align*}
		g_M(X) = X^d - c_1X^{d-1} - \dots - c_{d-1}X - c_d
	\end{align*}
	be the characteristic polynomial of $M$, 
	where $c_d \neq 0$ as $M$ is invertible\footnote{See \cref{rmk:mu_jordan_0} and subsequent discussion.}. By the Cayley--Hamilton theorem, $g_M(M) = 0$, and hence, for all~$n$ and $1\leq i\leq k$,
	\begin{align*}
		\bm{u}_i^\top g_M(M)M^n \bm{x} = \bm{u}_i^\top M^{n+d} \bm{x} - \dots - c_d \bm{u}_i^\top M^n \bm{x} = 0\, .
	\end{align*}
	The LRS are linearly independent;
	suppose that $\sum_{i=1}^k \alpha_i \bm{u}_i^\top M^n \bm{x} = 0$ holds for all $n$ for some $(\alpha_1, \dots, \alpha_k) \in \KK^k \setminus \{\bm{0}\}$. Then 
	\begin{align*}
		\left(\sum_{i=1}^k \alpha_i \bm{u}_i \right)^\top M^n \bm{x} = 0
	\end{align*}
	for all $n$. Since the vectors $\bm{u}_i$, $i = 1, \dots, k$, are linearly independent, we have $\sum_{i=1}^k \alpha_i \bm{u}_i \neq 0$ so the orbit $\orbit(M,\bm{x})$ is contained in a proper subspace of $\KK^d$ which contradicts our original assumption. Finally, the characteristic roots of the LRS are exactly the eigenvalues of~$M$, so the non-degeneracy of $M$ implies the non-degeneracy of the LRS.
	
	For the other direction, consider $k$ linearly independent, non-degenerate LRS $u^{(1)}, \dots, u^{(k)}$ that satisfy the same recurrence relation \[
	u_n^{(i)} = c_1 u_{n-1}^{(i)}+ \dots + c_d u_{n-d}^{(i)},
	\]
	where $c_d \neq 0$.
	Take $M$ to be the companion matrix of the characteristic polynomial, so
	$M = \begin{pmatrix}
		\bm{e}_2 \mid \dots \mid \bm{e}_{d} \mid \bm{w}
	\end{pmatrix}$, 
	where $\bm{w} = (c_d, \dots, c_1)^\top$.
	For each $i \in \{1, \dots, k\}$, let $\bm{u}_{\mathrm{init}}^{(i)} \in \KK^d$
	denote the row vector 
	$\bm{u}_{\mathrm{init}}^{(i)} = \left(u^{(i)}_0, \dots, u^{(i)}_{d-1}\right)$ 
	of initial values of the sequence $u^{(i)}_n$.
	These~$k$ vectors $\left\{\bm{u}_{\mathrm{init}}^{(1)}, \dots, \bm{u}_{\mathrm{init}}^{(k)}\right\}$ are linearly independent over~$\KK$.
	Now, the instance of the Simultaneous Skolem Problem is positive if and only if there exists~$n\in\N$ such that
	\begin{align*}
		&n \in \zero(u^{(1)}, \dots, u^{(k)})
		\Leftrightarrow
		u^{(1)}_n = \dots = u^{(k)}_n = 0
		\Leftrightarrow \\
		&\bm{u}_{\mathrm{init}}^{(1)}M^n\bm{e}_1 = \dots = \bm{u}_{\mathrm{init}}^{(k)}M^n\bm{e}_1 = 0.
	\end{align*}
	%there exists~$n \in \N$
	%such that $u^{(1)}_n = \dots = u^{(k)}_n = 0$ if and only if
	%$\bm{u}^{(1)}M^n\bm{e}_1 = \dots = \bm{u}^{(k)}M^n\bm{e}_1 = 0$.
	This is equivalent to
	$M^n\bm{e}_1 \in S = \left\{\bm{u}_{\mathrm{init}}^{(1)}, \dots, \bm{u}_{\mathrm{init}}^{(k)}\right\}^\perp$ for some~$n$, where $S$ is
	a $d-k$-dimensional subspace of~$\KK^d$. 
	Notice that $\dim(M,\bm{e}_1) = d$,
	since $M^{j-1} \bm{e}_1 = \bm{e}_{j}$ for $1 \leq j \leq d$, and $\bm{e}_1, \dots, \bm{e}_d$ span $\KK^d$. 
	Moreover, $(M,\bm e_1)$ has stable dimension as $M$ is invertible, and $M$ is non-degenerate since the $u^{(i)}$ are. Thus $(M,\bm e_1)$ is reduced and $(M,\bm e_1,S)$ forms an instance of $\orbprob(d,d-k)$.
\end{proof}

\subsection{Solving \simskol}
%so far it's the only subsection of this section
\begin{definition}[\cite{bacik_completing_2025}] \label{def:MSTV_class}
	The MSTV\footnote{Stands for Mignotte, Shorey, Tijdeman, and Vereshchagin.} class consists of algebraic LRS that have at most 3 dominant roots with respect to some Archimedean absolute value or at most 2 dominant roots with respect to some non-Archimedean absolute value.
\end{definition}
The term was coined in~\cite{lipton_skolemPC_2022}
and named after the authors of the seminal papers~\cite{tijdeman_mst1984,vereshchagin_1985}, who proved decidability of the Skolem Problem for all non-degenerate LRS in the MSTV class.
In contrast to the definition in~\cite{lipton_skolemPC_2022},
the LRS we consider are over~$\Alg$;
since there are several embeddings of an arbitrary number field into $\CC$, we have multiple Archimedean absolute values we may use, which will generally have different numbers of dominant roots, while for $\Q$-LRS only the modulus is used.

% we also allow LRS with at most 3 dominant roots with respect to \emph{some} Archimedean absolute value, not necessarily the modulus.\piotr{When the LRS is defined over $\Q$, any Galois automorphism permutes the characteristic roots so there are $r$ dominant roots wrt the modulus iff there are $r$ dominant roots wrt all Archimedean absolute values. So our MSTV definition is only to handle $\Alg$-LRS}

%
%We denote the \textcolor{red}{non-degenerate} Simultaneous Skolem Problem for $k$ LRS of order $d$ as the \emph{$(d,k)$-SSP}.
%We refer to the order of the linear recurrence relation as the order of an SSP instance.
%
The idea we employ to decide $\simskol$ for $k$ LRS $u^{(1)}, \dots, u^{(k)}$ is to take a linear combination
\begin{align*}
	w_n = \sum_{j=1}^{k} \beta_j u_n^{(j)}
\end{align*}
where we choose the $\beta_j \in \Alg$ to eliminate sufficiently many dominant roots and force $w$ to fall into the MSTV class. 
Then all simultaneous zeros of $u^{(1)}, \dots, u^{(k)}$ are zeros of a non-zero non-degenerate $w$, so the Simultaneous Skolem Problem may be solved by finding the finite zero set~$\zero(w)$ 
and checking whether for some $n\in \zero(w)$ we have $u_{n}^{(i)} = 0$ for each $1\leq i \leq k$.

Unfortunately, this method has a complication. Sometimes, the coefficients in front of the dominant roots in the exponential-polynomial representation of the LRS will be particularly pathological, and there is no way to choose a linear combination that leaves at most 3 dominant roots without eliminating all of them (at which point there are ``new'' dominant roots of second-largest absolute value, which complicates application of the MSTV class). Let us illustrate this vexing situation with an example.
\begin{example}
	Choose $\lambda_1, \dots, \lambda_d \in \Alg$ such that $|\lambda_1| = \dots = |\lambda_4| > |\lambda_5| \geq \dots \geq |\lambda_d|$ and 
	consider two LRS $u^{(1)}, u^{(2)}$ given by
	\begin{align*}
		u_n^{(i)} = \alpha_1^{(i)}\lambda_1^n + \alpha_2^{(i)} \lambda_2^n + \alpha_3^{(i)} \lambda_3^n + \alpha_4^{(i)} \lambda_4^n + \sum_{j=5}^d \alpha_j^{(i)} \lambda_j^n
	\end{align*}
	with linearly independent vectors $(\alpha_1^{(i)},\alpha_2^{(i)},\alpha_3^{(i)},\alpha_4^{(i)}, \dots, \alpha_d^{(i)})$ for $i=1,2$. %Suppose also that $|\lambda_1| = \dots = |\lambda_4| > |\lambda_5| \geq \dots \geq |\lambda_d|$. 
	Suppose first that the coefficients of the $|\cdot|$-dominant roots are linearly independent, that is, 
	\[
	\left( \alpha_1^{(1)}, \alpha_2^{(1)}, \alpha_3^{(1)}, \alpha_4^{(1)} \right) \neq t \cdot \left( \alpha_1^{(2)}, \alpha_2^{(2)}, \alpha_3^{(2)}, \alpha_4^{(2)} \right)
	\]
	for all $t \in \overline \Q$. 
	Then it is easy to see that we can choose a linear combination $v_n = \beta_1 u_n^{(1)} + \beta_2 u_n^{(2)}$ that eliminates precisely between 1 and 3 of the dominant roots $\lambda_1, \dots, \lambda_4$, so $ \LRS{v}$ is in the MSTV class. 
	However, if the coefficients of the dominant roots are linearly dependent so that
	\[
	\left( \alpha_1^{(1)}, \alpha_2^{(1)}, \alpha_3^{(1)}, \alpha_4^{(1)} \right) = t \cdot \left( \alpha_1^{(2)}, \alpha_2^{(2)}, \alpha_3^{(2)}, \alpha_4^{(2)}\right)
	\]
	for some $t \in \overline \Q$, then every linear combination $v_n = \beta_1 u_n^{(1)} + \beta_2 u_n^{(2)}$ has exactly 0 or 4 of the dominant roots $\lambda_1 , \dots , \lambda_4$ remaining. 
	In particular, if $|\lambda_5| = \dots = |\lambda_d|$ and $d \geq 8$, it is not possible to take a linear combination that has at most~3 $|\cdot|$-dominant roots. 
	%with respect to $|\cdot|$. 
\end{example}
We note one saving grace in this situation: if the dominant terms are linearly dependent, we can eliminate more roots than expected. 
In particular, if $d=8$, we can take a linear combination with at most~4 characteristic roots, which does lie in the MSTV class by the argument of \cite{bacik_completing_2025}. 
This is the behaviour we shall exploit in our main results. 
Pathological behaviour aside, it is key to our approach to understand how many $|\cdot|_v$-dominant terms LRS may have, for each absolute value $|\cdot|_v$. This is the content of the following theorem, which is a generalisation of \cite[Lemma 4.3]{bacik_completing_2025}.
\begin{theorem} \label{thm:dominant}
	%Given $r$ algebraic numbers $\lambda_1, \dots, \lambda_r$ such that no quotient $\lambda_i/\lambda_j$ for any $i \neq j$ is a root of unity and such that
	%\begin{align*}
	%    |\lambda_1| = |\lambda_2| = \dots = |\lambda_r|
	%\end{align*}
	%there exists an absolute value $|\cdot|_v$ for which 
	%at most $\left \lfloor \frac{r}{2} \right \rfloor$ of the $\lambda_i$ are dominant with respect to $|\cdot|_v$.
	Let~$\lambda_1, \dots, \lambda_r$ be $r > 1$ algebraic numbers such that no quotient $\lambda_i/\lambda_j$ for any $i \neq j$ is a root of unity and such that
	\begin{align*}
		|\lambda_1| = |\lambda_2| = \dots = |\lambda_r|.
	\end{align*}
	Consider an exponential polynomial 
	%\[
	%	f(\lambda_1, \dots, \lambda_r)  = (c_{1,0} + c_{1,1}n + \dots + c_{1,d_1}n^{d_1})\lambda_1^n + \dots + (c_{r,0} + c_{r,1}n + \dots + c_{r,d_r}n^{d_r})\lambda_r^n
	%\]
	\begin{equation}\label{eq:exppoly-f}
		f(\lambda_1, \dots, \lambda_r) = \sum_{i=1}^rP_i(n)\lambda_i^n = 
		%\sum_{i,j} c_{i,j} n^j\lambda_i^n
		\sum_{i=1}^r \sum_{j=0}^{\deg P_i} c_{i,j} n^j\lambda_i^n
	\end{equation} with
	$t$ terms.
	There exists an absolute value $|\cdot|_v$ with respect to which
	%at most $\left \lfloor \frac{r}{2} \right \rfloor$ of the $\lambda_i$ are dominant with respect to $|\cdot|_v$.
	at most $\left \lfloor \frac{t}{2} \right \rfloor$ terms of~\eqref{eq:exppoly-f}
	%correspond to the 
	are dominant.
\end{theorem}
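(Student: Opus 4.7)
I would generalise the argument of~\cite[Lemma~4.3]{bacik_completing_2025} (which handles the simple case $t=r$) by incorporating the polynomial weights $w_i \coloneq \deg P_i + 1$. The main tools are the product formula and Kronecker's theorem. After sorting the weights so that $w_1 \geq w_2 \geq \dots \geq w_r \geq 1$ with $\sum_i w_i = t$, I would split into two cases according to the size of the heaviest weight $w_1$.

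\emph{Case A} ($w_1 > \lfloor t/2\rfloor$). Here one root carries strictly more than half of the total weight, so it suffices to find a place $v$ at which $\lambda_1$ is not $|\cdot|_v$-dominant: then $S_v \subseteq \{2,\dots,r\}$ and the dominant weight is at most $t - w_1 \leq \lfloor t/2 \rfloor$. Existence of such $v$ follows from the product formula applied to $\lambda_1/\lambda_2$, which is not a root of unity by assumption and thus has positive height; since $\sum_v D_v \log|\lambda_1/\lambda_2|_v = 0$ while the summands are not all zero, some $v$ satisfies $|\lambda_1|_v < |\lambda_2|_v$.

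\emph{Case B} ($w_1 \leq \lfloor t/2\rfloor$). Every singleton is now of acceptable weight, so it suffices to exhibit a place $v$ with $|S_v|=1$. I would set up in the free quotient $G = \langle \lambda_1, \dots, \lambda_r\rangle / \mathrm{torsion} \cong \mathbb{Z}^\rho$: the no-root-of-unity-ratios assumption makes the images $\bm{a}_1,\dots,\bm{a}_r$ pairwise distinct lattice points; each place $v$ gives a functional $\bm{l}_v \in \mathbb{R}^\rho$ with $\log|\lambda_i|_v = \langle \bm{a}_i,\bm{l}_v\rangle$; the product formula reads $\sum_v D_v \bm{l}_v = \bm{0}$; and at least one $\bm{l}_v$ is nonzero by Kronecker's theorem. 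In this language, $S_v$ is the set of indices $i$ such that $\bm{a}_i$ lies on the face of $P = \operatorname{conv}\{\bm{a}_i\}$ exposed by $\bm{l}_v$, and $|S_v|=1$ exactly when $\bm{l}_v$ lies in the interior of a top-dimensional normal cone of $P$. I would prove existence of such $v$ by induction on $\rho$. The base case $\rho = 1$ is immediate: writing $\lambda_i = \zeta_i \mu^{a_i}$ with the $a_i$ pairwise distinct, the product formula applied to $\mu$ yields both a $v$ with $|\mu|_v > 1$ (so $S_v$ is the singleton $\{\arg\max a_i\}$) and a $v'$ with $|\mu|_{v'} < 1$ (so $S_{v'}$ is the singleton $\{\arg\min a_i\}$). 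The inductive step picks a nontrivial $\bm{l}_v$ and restricts attention to the face it exposes, whose indices form a sub-instance of strictly smaller multiplicative rank, and applies the inductive hypothesis.

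The main obstacle is the inductive step of Case B: a priori the places of the underlying number field might conspire to yield functionals $\bm{l}_v$ lying only on the lower-dimensional skeleton of the normal fan of $P$, so that no $S_v$ is a singleton. Ruling this out relies on the product-formula balance $\sum_v D_v \bm{l}_v = \bm{0}$, which prevents all $\bm{l}_v$ from lying in a common half-space and forces nontrivial directions at several places; together with the fact that the no-root-of-unity-ratios hypothesis is preserved upon restriction to any face (since it is a property of pairs of roots), this keeps the inductive hypothesis applicable so that the induction terminates with a place whose $\bm{l}_v$ exposes a single vertex.
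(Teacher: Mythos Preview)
Your Case~A is fine, but Case~B is a genuine gap: the claim you try to establish there---that some place~$v$ has $|S_v|=1$---is \emph{false} in general. The paper's own \cref{ex:gaussian} is a direct counterexample. Taking $s\ge 3$ Gaussian primes $p_1,\dots,p_s$ and setting $\lambda_i=\overline p_i\prod_{j\ne i}p_j$, the $2s$ numbers $\lambda_1,\dots,\lambda_s,\overline\lambda_1,\dots,\overline\lambda_s$ all have equal modulus, no ratio is a root of unity, and at \emph{every} place at least $s$ of them are dominant. With all $w_i=1$ this sits squarely in your Case~B ($w_1=1\le \lfloor 2s/2\rfloor$), yet no place exposes a single vertex of the polytope~$P$. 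Your hand-wave that the product-formula balance ``forces nontrivial directions at several places'' does not rule this out: the functionals $\bm l_v$ can perfectly well all land on the codimension-one skeleton of the normal fan and still sum to zero.

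The inductive step is also structurally broken even before one reaches the false target. Restricting to a face gives a sub-instance of the $\lambda_i$'s, and the inductive hypothesis produces a place~$v'$ at which one root among \emph{those} is uniquely dominant; but $v'$ says nothing about the roots \emph{outside} the face, which may well be $|\cdot|_{v'}$-dominant. You cannot ``combine'' $\bm l_v$ and $\bm l_{v'}$ into a single place: the places of a number field are a discrete set, not a cone you can perturb into.

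What you are missing is the one structural consequence of the hypothesis $|\lambda_1|=\dots=|\lambda_r|$ that the paper actually exploits. Writing $\mu_i=\lambda_i/\lambda_1$, this hypothesis gives $\mu_i\overline{\mu_i}=1$, and applying any Galois automorphism $\sigma$ yields $\sigma(\mu_i)\,\sigma(\overline{\mu_i})=1$. So the two absolute values $|x|_\sigma=|\sigma(x)|$ and $|x|_\tau=|\sigma(\overline{x})|$ are \emph{complementary}: a term with base $\mu_i$ satisfying $|\mu_i|_\sigma>1$ automatically has $|\mu_i|_\tau<1$, hence is non-dominant for $|\cdot|_\tau$. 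If more than $\lfloor t/2\rfloor$ terms are $|\cdot|_\sigma$-dominant, then more than $\lfloor t/2\rfloor$ are $|\cdot|_\tau$-non-dominant, so at most $\lfloor t/2\rfloor$ are $|\cdot|_\tau$-dominant. (If all $|\sigma(\mu_i)|=1$, Kronecker forces a prime $\pp$ with $v_\pp(\mu_2)\ne 0$, and the same pairing works with $|\cdot|_\pp$ and $|\overline{\,\cdot\,}|_\pp$.) This conjugation pairing is the key idea; your polytope/normal-fan picture never uses it and cannot reach the conclusion without it.
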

\begin{proof}
	Let $\KK$ be a Galois number field containing $\lambda_1,\dots,\lambda_r$, with Galois group $G$. Consider $\mu_i = \frac{\lambda_i}{\lambda_1}$. 
	Note that dividing each $\lambda_i$ by a constant in this way does not change the number of terms that are dominant with respect to any absolute value. 
	Moreover, $f(\lambda_1, \lambda_2, \dots, \lambda_r) = \lambda_1^n \cdot f(1, \mu_2, \dots, \mu_r)$.
	So it suffices to prove that there is a subset of at most~$\left \lfloor \frac{t}{2} \right \rfloor$ of the terms in $f(1, \mu_2, \dots, \mu_r)$ that are dominant with respect to some absolute value. 
	For a term $cn^k\mu^n$ we refer to the algebraic number~$\mu$ as its base. 
	%	We denote the number of $|cdot|_v$-dominant terms by $t_v$.

	Now we have
	\begin{align*}
		1 = \mu_2 \overline{\mu}_2 = \dots = \mu_r \overline{\mu}_r 
	\end{align*}
	and by applying $\sigma \in G$ to the equation we get
	\begin{align} \label{eqn:sigma_mu}
		1 = \sigma(\mu_2)\sigma(\overline{\mu}_2) = \dots = \sigma(\mu_r)\sigma(\overline{\mu}_r)
	\end{align}
	for all $\sigma \in G$.

	\emph{Case 1: $|\sigma(\mu_i)| \neq 1$ for some $\sigma,i$}; either $|\sigma(\mu_i)| > 1$ or $|\sigma(\overline{\mu}_i)| > 1$. 
	
	Let $|\cdot|_\sigma$ be the absolute value defined by $|x|_\sigma = |\sigma(x)|$ and let $|\cdot|_\tau$ be the absolute value defined by $|x|_\tau = |\sigma(\overline{x})|$.%
	\footnote{The latter is an absolute value as $\tau(x) = \sigma(\overline{x})$ is a composition of Galois automorphisms so is a Galois automorphism itself.} 
	We argue that at most $\lfloor \frac{t}{2} \rfloor$ terms are $|\cdot|_\sigma$-dominant or at most $\lfloor \frac{t}{2} \rfloor$ terms are $|\cdot|_\tau$-dominant.
	
	%We argue that at most $\left \lfloor \frac{r}{2} \right \rfloor$ of the algebraic numbers $1, \mu_2 , \dots, \mu_r$ are dominant with respect to $|\cdot|_\sigma$ or at most $\left \lfloor \frac{r}{2} \right \rfloor$ are dominant with respect to $|\cdot|_v$ defined by $|x|_v = |\sigma(\overline{x})|$.\footnote{This is an absolute value as $\tau(x) = \sigma(\overline{x})$ is a composition of Galois automorphisms so is a Galois automorphism itself.} 
	We discuss the case $|\sigma(\mu_i)| > 1$. Suppose at least $1 + \left \lfloor \frac{t}{2} \right \rfloor$ terms are dominant with respect to $|\cdot|_\sigma$, then
	there are at least $1 + \left \lfloor \frac{t}{2} \right \rfloor$ terms whose base~$\mu_j$ satisfies $|\sigma(\mu_j)| > 1$.
	%the number of terms with absolute value $|\cdot|_\sigma > 1$ is at least ...
	%{\color{red}we must have $|\sigma(\mu_j)| > 1$ for at least $1 + \left \lfloor \frac{r}{2} \right \rfloor$ values of $j$. }
	Therefore, by \eqref{eqn:sigma_mu} there are at least $1 + \left \lfloor \frac{t}{2} \right \rfloor$ terms such that $|\mu_j|_\tau = |\sigma(\overline{\mu}_j)| < 1 = |1|_\tau$ for their bases~$\mu_j$.
	Each of these terms is non-dominant with respect to $|\cdot|_\tau$. So there are at most 
	\begin{align*}
		t - \left(1 + \left \lfloor \frac{t}{2} \right \rfloor \right)  \leq \left \lfloor \frac{t}{2} \right \rfloor
	\end{align*}
	dominant terms with respect to $|\cdot|_\tau$. Note that the case where we instead assume $|\sigma(\overline{\mu}_i)| > 1$ for some~$i$ is completely symmetrical.
	
	\emph{Case 2: $|\sigma(\mu_i)| = 1$ for all $\sigma,i$}.
	If $v_\pp(\mu_2) = 0$ for all prime ideals $\pp \subseteq \O_\KK$, then $\mu_2 \in \O_\KK$; 
	in fact, $\mu_2$ is a unit of that ring.
	All Galois conjugates of~$\mu_2$ lie in the unit circle, so by Kronecker's theorem $\mu_2$ is a root of unity. This contradicts our assumption that $\lambda_2/\lambda_1$ is not a root of unity.
	Therefore, we have $v_\pp(\mu_2) \neq 0$ for some prime ideal $\pp \subseteq \O_\KK$. Then by the exact same argument as in Case 1, at most $\left\lfloor \frac{t}{2} \right \rfloor$ of the terms %$1,\mu_2, \dots, \mu_r$ 
	are dominant with respect to either $|\cdot|_\pp$ or the non-Archimedean absolute value $|\cdot|_v$ defined by $|x|_v = |\overline x|_\pp$.
\end{proof}
We give an example to show that this bound is optimal---one cannot always guarantee having fewer dominant terms. 
In fact, this bound is optimal even when one assumes extra structure from the $\lambda_i$ arising as characteristic roots of a simple \emph{integer}-valued LRS.
\begin{example} \label{ex:gaussian}
	We work with the ring of Gaussian integers $\Z[i]$. Recall that $\Z[i]$ is the ring of algebraic integers of $\Q(i)$ and is a PID. The Gaussian primes are exactly those of the form $a+bi$ with $a^2 + b^2$ equal to an integer prime, or $a = 0$ and $|b| \equiv 3 \bmod 4$, or $b = 0$ and $|a| \equiv 3 \bmod 4$. %Given $s$, let $p_1,\dots, p_s \in \Z[i]\setminus \Z$ be Gaussian primes that are not rational. 
	
	% We choose~$s \geq 3$ Gaussian primes $p_1, \dots, p_s \in \Z[i]$ satisfying the following conditions:
	% \begin{enumerate}
		% 	\item \label{cond1} $p_i$ and $p_j$ are not associates\footnote{Recall that~$a, b\in R$ are associates if $a = bu$ for some invertible~$u \in R$ .} 
		% 	for any~$i \neq j$,
		% 	\item \label{cond2} $p_i$ and $\overline{p}_j$ are not associates for any~$i,j$.
		% \end{enumerate}
	%\anton{with this condition I make sure that i) $p_i \not\in\Q$, ii) $\lambda_i$'s are pairwise distinct, and iii)}
	
	We choose~$s \geq 3$ Gaussian primes $p_1, \dots, p_s \in \Z[i]$ such that there are no pairs of associates\footnote{Recall that~$a, b\in R$ are associates if $a = bu$ for some invertible~$u \in R$.} in the set $\{p_1, \dots,p_s,\overline p_1, \dots, \overline p_s\}$.
	This can be achieved by choosing Gaussian integers of the form $a+bi$ with $|a|,|b| > 1$, as this implies $p_i, \overline p_i$ are not associates. 
	% $a^2 + b^2$ an integer prime of the form $4k+1, k \in \N$, as this implies that $p_i$ and $\overline{p}_i$ are not associates.
	Moreover, to guarantee that $p_i$ and $p_j$, as well as $p_i$ and $\overline{p}_j$, are not associates when $i \neq j$, we select $p_1, \dots, p_s$ so that 
	$|p_i| > |p_{i+1}|$ for all $1 \leq i \leq s-1$.
	
	%If $r = 2s$, then consider $\lambda_i$ given by
	We consider $\lambda_1, \dots, \lambda_s \in \Z[i]$ %\overline{\lambda}_1, \dots, \overline{\lambda}_s$ 
	given by
	\[
	\lambda_i = \overline{p}_i \prod_{j \neq i} p_j \, .
	\]
	% along with their complex conjugates
	% \[
	%     \overline{\lambda}_i = p_i \prod_{j \neq i} \overline{p}_j .%\, .
	% \]
	Observe that $\lambda_1, \dots, \lambda_s, \overline{\lambda}_1, \dots, \overline{\lambda}_s$ are pairwise distinct 
	and no two numbers in the set~$\Lambda_{2s}\coloneq\{\lambda_1, \dots, \lambda_s, \overline{\lambda}_1, \dots, \overline{\lambda}_s\}$ are associates, as they have distinct prime factorisations. 
	%Without loss of generality, consider $\lambda_1$. Every $\lambda_i$, $2 \leq i \leq s$, has~$p_1$ among its prime factors, while $\lambda_1$ does not.
	%Moreover, from~$s \geq 3$ it follows that every $\overline{\lambda}_i$, $i \in \{1, \dots, s\}$, has at least two of $\{\overline{p}_1, \dots, \overline{p}_s\}$ among its prime factors, whereas $\lambda_1$ has exactly one of them.
	%We deduce from this that $\lambda_1$ has no associates in~$\Lambda_{2s} \setminus \{\lambda_1\}$.
	%Similarly, consider $\overline{\lambda}_1$. 
	%Every $\overline{\lambda}_i$, $2 \leq i \leq s$, has~$\overline{p}_1$ among its prime factors, while $\overline{\lambda}_1$ does not.
	%Finally, it follows from an earlier argument that no $\lambda_i$ is an associate of~$\overline{\lambda}_1$, and so $\overline{\lambda}_1$ has no associates in~$\Lambda_{2s} \setminus \{\overline{\lambda}_1\}$.
	% Indeed, by analysing the unique factorizations of any two elements in~$\Lambda_{2s}$,
	% one shows that 
	% they have distinct sets of prime factors.
	
	Clearly we have $|\lambda_1| = |\overline{\lambda}_1| = \dots = |\lambda_s| = |\overline{\lambda}_s|$.
	Moreover, they all have equal absolute value with respect to every Archimedean absolute value as the only other Archimedean absolute value is given by $|x|_v = |\overline x|$. 
	Now, $|\cdot|_{p_i},|\cdot|_{\overline{p}_i}$ for $i = 1 , \dots, s$ are the only non-Archimedean absolute values for which not all elements of~$\Lambda_{2s}$ are dominant.
	It is easy to see that for each~$i$ exactly~$s$ elements of~$\Lambda_{2s}$ are dominant in $|\cdot|_{p_i}$, and similarly for~$|\cdot|_{\overline{p}_i}$.
	
	Now, to extend our example to sets of algebraic numbers of cardinality~$r = 2s+1 \geq 7$, let 
	\[
	\lambda_{s+1} = \prod_{i} p_i\overline{p}_i \in \Z \, .
	\]
	Then all numbers in the set $\Lambda_{2s+1} = \{\lambda_1^2 , \smash{\overline{\lambda}}_1^2, \dots ,\lambda_s^2, \smash{\overline{\lambda}}_s^2,\lambda_{s+1}\}$ are dominant with respect to every Archimedean absolute value. For a non-Archimedean absolute value $|\cdot|_{p_i}$ (resp.\ $|\cdot|_{\overline{p}_i}$), exactly $s$ of the numbers are dominant, and all are dominant with respect to every other non-Archimedean absolute value. 
	
	For every $r \geq 6$, we have constructed a set of roots~$\Lambda_r$ such that with respect to any absolute value at least $\lfloor \frac{r}{2} \rfloor$ of the roots are dominant. The cases $r=4,5$ may be handled by taking $\Lambda_4 \coloneq \{\lambda_1, \lambda_2, \smash{\overline{\lambda}}_1, \smash{\overline{\lambda}}_2\} \subseteq \Lambda_6$, and $\Lambda_5 \coloneq \{\lambda_1^2, \lambda_2^2, \smash{\overline{\lambda}}_1^2, \smash{\overline{\lambda}}_2^2,\lambda_4\} \subset \Lambda_7$. These have the desired property that all roots are dominant with respect to all Archimedean absolute values, and for each non-Archimedean absolute value there are at least~2 dominant roots.
	
	% Consider $\Lambda_6$ in more detail. 
	% Notice that its subset $\{\lambda_1, \lambda_2, \smash{\overline{\lambda}}_1, \smash{\overline{\lambda}}_2\}$ has the desired property: all roots are dominant with respect to Archimedean absolute values, and for each non-Archimedean absolute value there are at least~2 dominant roots.
	% We refer to this set as~$\Lambda_4$. 
	% Similarly, we can take $\Lambda_5 \coloneq \{\lambda_1^2, \lambda_2^2, \smash{\overline{\lambda}}_1^2, \smash{\overline{\lambda}}_2^2,\lambda_4\} \subset \Lambda_7$. 
	%Note that in both cases we considered $(r \in \{2s, 2s+1\}$), the set of~$r$ algebraic numbers consists of integers or pairs of conjugate complex Gaussian integers so these sets arise as sets of characteristic roots of certain integer-valued LRS.
	All constructed sets $\Lambda_r$, $r \geq 4$,
	%Note that for every~$r \geq 4$ the set~$\Lambda_r$ we constructed arises as the set of characteristic roots of an integer-valued LRS. 
	arise as sets of characteristic roots of simple integer-valued LRS
	because each $\Lambda_r \subset \Z[i]$ is closed under complex conjugates.
\end{example}

\subsection{MSTV-reducibility}
%Let $u_n^{(1)}, \dots, u_n^{(k)}$, where $u_n^{(i)} = \bm{u}_i^\top M^n\bm{x}$, be linearly independent LRS that satisfy the same recurrence relation.
%We denote the Simultaneous Skolem Problem for $k$ non-degenerate linearly independent LRS of order $d$ that satisfy the same recurrence relation as the \emph{$(d,k)$-SSP}.
We now show how to use \cref{thm:dominant} to decide cases of $\simskol$ (and hence of $\orbprob$). 
Let $u_n^{(1)}, \dots, u_n^{(k)}$ be non-degenerate linearly independent LRS satisfying the same recurrence relation. 
Recall that if we can find a linear combination~$w_n$ that lies in the MSTV class, 
then we can compute the zero set~$\zero(u^{(1)}, \dots, u^{(k)})$.

\begin{definition} \label{def:MSTV-reducible}
	An instance $u_n^{(1)},\dots,u_n^{(k)}$ of $\simskol(d,k)$ is \emph{MSTV-reducible} if there exists a 
	non-zero linear combination 
	$w_n = \sum_{i=1}^k \beta_i u_n^{(i)}$ with $\beta_i \in \Alg$ which lies in the MSTV class. 
	Similarly, an instance of $\orbprob(d,t)$ is MSTV-reducible if the instance of $\simskol(d,d-t)$ to which it reduces is MSTV-reducible.
	
	We refer to problems $\orbprob(d,t)$ and $\simskol(d,k)$ as MSTV-reducible (hence decidable) if all their instances are MSTV-reducible.
	
	%$\simskol(d,k)$ (respectively $\orbprob(d,t)$) is MSTV-reducible, and hence decidable, if every instance is MSTV-reducible.
\end{definition}

%The following observation justifies the notion of MSTV-reducibility. 
%Recall that all simultaneous zeros of $u_n^{(1)}, \dots, u_n^{(k)}$ are zeros of any linear combination~$w_n$.
%We emphasize that all these sequences are non-degenerate.
%Therefore, it suffices to find a linear combination~$w_n$ that lies in the MSTV class, as we can then compute its (finite) zero set~$\zero(w)$.
%As a result, we can decide whether the zero set~$\zero(u^{(1)}, \dots, u^{(k)})$ is non-empty, and even compute it exactly, if necessary.

Recall the representation of the LRS as vectors $\bm u^{(1)}, \dots, \bm u^{(k)}$ of coefficients; let $\pi_v, \pi_{\neg v}$ be respectively the projections onto the dominant and non-dominant components with respect to the absolute value $|\cdot|_v$.
We now discuss the images of the associated subspace~$C \subseteq \Alg^d$ under projections. 
Notice that an instance of~$\simskol(d,k)$ is MSTV-reducible 
if there exists an absolute value $|\cdot|_v$ and a vector $\bm{w} \in C$ such that the projection $\pi_v(\bm{w})$ is non-zero 
and has support of size at most~3 (for Archimedean $|\cdot|_v$) or at most~2 (for non-Archimedean $|\cdot|_v$).\footnote{Note that this actually corresponds to $\bm w$ having at most 3 dominant terms with respect to an Archimedean absolute value or at most 2 dominant terms with respect to a non-Archimedean absolute value, which is slightly stronger than Definition \ref{def:MSTV_class}.}

The following is the lemma that guides our 
search for linear combinations that have few dominant terms.
\begin{lemma}\label{lem:echelon}
	Given an instance $\{u_n^{(i)}\}_{i=1}^k$ of $\simskol(d,k)$, 
	let~$C$ be the subspace spanned by the~$k$ associated vectors of these sequences.
	Let $\ell$ be the number of vector components that correspond to dominant terms with respect to a fixed absolute value~$|\cdot|_v$. Then, for $r:= \dim \pi_v (C)$,
	\begin{enumerate}
		\item there exist $k - r$ linearly independent vectors $\bm{w}_1, \dots, \bm{w}_{k-r} \in C$ with $\pi_v(\bm{w}_1) = \dots = \pi_v(\bm{w}_{k-r}) = \bm{0}$;
		
		\item there exists $\bm{w} \in C$ for which $\pi_v(\bm{w})$ has the support of size at least~1 and at most~$\ell - r + 1$.
	\end{enumerate}
	%	
	%	Then
	%	-  linearly independent sequences among $\{u_n^{(i)}\}_{i=1}^k$ --- zero rows
	%	- a sequence with support 
	%	
	%	- ... linearly independent vectors in C with zero projections
	%	- a vector in C with projection that has a support of ... 
	%	
\end{lemma}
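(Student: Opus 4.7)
The plan is to treat this as pure linear algebra applied to the linear map $\pi_v$ restricted to $C$. Part (1) is immediate from rank-nullity, and part (2) follows by bringing a basis of the image $\pi_v(C)$ to reduced row echelon form and lifting an appropriate row back to $C$.

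First I would establish part (1). By hypothesis $\dim C = k$ and $\dim \pi_v(C) = r$, so applying the rank-nullity theorem to the linear map $\pi_v\restriction_C \colon C \to \Alg^{\ell}$ yields $\dim \ker(\pi_v\restriction_C) = k - r$. Picking any basis of this kernel furnishes the required $k-r$ linearly independent vectors $\bm{w}_1, \dots, \bm{w}_{k-r} \in C$ with $\pi_v(\bm{w}_i) = \bm{0}$.

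For part (2), the key observation is that $\pi_v(C)$ is an $r$-dimensional subspace of $\Alg^\ell$, and any such subspace contains a non-zero vector whose support has size at most $\ell - r + 1$. To produce one, I would take an arbitrary basis of $\pi_v(C)$, stack these vectors as the rows of an $r \times \ell$ matrix, and reduce it to reduced row echelon form over $\Alg$. The row space is preserved under row operations, so every row of the reduced matrix still lies in $\pi_v(C)$. Each such row has a leading $1$ in one of the $r$ pivot columns and zeros in the other $r-1$ pivot columns; hence its non-zero entries are confined to its own pivot column and (possibly) the $\ell - r$ non-pivot columns, giving support of size at least $1$ and at most $\ell - r + 1$. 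Call this row $\bm{v} \in \pi_v(C)$, and let $\bm{w} \in C$ be any preimage, which exists by definition of the image. Then $\pi_v(\bm{w}) = \bm{v}$ has the required support bounds.

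I do not expect a real obstacle here; the statement is a clean linear-algebraic fact and the only thing to be attentive to is distinguishing vectors in $C$ from their projections in $\pi_v(C)$, and remembering that the lift in part (2) is free because we only constrain $\bm{w}$ through its projection. If desired, one can also phrase part (2) without echelon form by noting that each coordinate hyperplane $\{x_j = 0\}$ of $\Alg^\ell$ cuts $\pi_v(C)$ in dimension at least $r - 1$, so intersecting with $\ell - r$ generically chosen coordinate hyperplanes leaves a subspace of dimension at least $1$, inside which any non-zero vector has support of size at most $\ell - (\ell - r) = r$ on the complementary coordinates; combined with at most one further coordinate, this gives the same bound $\ell - r + 1$.
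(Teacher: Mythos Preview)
Your main argument is correct and is essentially the paper's own proof: the paper forms the $k\times d$ matrix $(A\mid B)$ of coefficient vectors, row-reduces so that the $k\times\ell$ block $A$ is in echelon form, and reads off part~(1) from the $k-r$ zero rows of $A$ and part~(2) from the last non-zero row of $A$ (which has at least $r-1$ zeros). You do the same in two steps, invoking rank--nullity for~(1) and echelon form on a basis of $\pi_v(C)$ for~(2); these are cosmetically different presentations of the same computation.

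One small correction: your optional alternative at the end is miscounted. Intersecting an $r$-dimensional subspace of $\Alg^\ell$ with $\ell-r$ coordinate hyperplanes only guarantees dimension $\geq r-(\ell-r)=2r-\ell$, which need not be positive. The correct version is to intersect with $r-1$ coordinate hyperplanes, chosen greedily so that each intersection drops the dimension by exactly one; a non-zero vector in the resulting line then has support contained in the remaining $\ell-(r-1)=\ell-r+1$ coordinates. Since this is only offered as an aside, it does not affect the validity of your proof.
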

\begin{proof}
	We present a constructive proof whence an algorithm computing $\bm{w}, \bm{w}_1, \dots, \bm{w}_{k-r}$ can be deduced.
	Let $\begin{pmatrix} A \mid B \end{pmatrix}$ be the matrix whose $i$-th row comprises the coefficients of the sequence~$u_n^{(i)}$.
	The columns of $A \in \Alg^{k \times \ell}$ correspond to the $|\cdot|_v$-dominant terms, 
	while the columns of $B \in \Alg^{k \times (d - \ell)}$ are the coefficients of the remaining terms.
	We perform elementary row operations on $\begin{pmatrix} A \mid B \end{pmatrix}$ in order to put $A$ into row echelon form.
	Observe that the rows of $\begin{pmatrix} A \mid B \end{pmatrix}$ are linearly independent,
	and so are the rows of the transformed $k \times d$ matrix at any step of this procedure.
	The subspace~$\pi_v(C)$ is spanned by the rows of~$A$, and hence the rank of~$A$ is~$r$.
	
	The rest follows from two observations. First, the row echelon form of a $k \times \ell$ matrix with rank~$r > 0$ has $k - r$ zero rows. 
	Second, the last non-zero row has at least $r-1$ zeros.
\end{proof}

\begin{lemma}\label{lem:growambient}
	If $\ \simskol(d,k)$ is MSTV-reducible, then 
	so is $\simskol(d+1,k+1)$.
\end{lemma}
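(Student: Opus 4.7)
The plan is to reduce an arbitrary instance of $\simskol(d+1,k+1)$ to an instance of $\simskol(d,k)$ by eliminating one exponential monomial of the shared recurrence, and then to invoke the hypothesis. Given $k+1$ linearly independent non-degenerate LRS $u^{(1)},\dots,u^{(k+1)}$ with a common order-$(d+1)$ recurrence of characteristic polynomial $g(X)$, their associated coefficient vectors $\bm u^{(1)},\dots,\bm u^{(k+1)} \in \Alg^{d+1}$ span a subspace $C \subseteq \Alg^{d+1}$ of dimension $k+1$.

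Fix any characteristic root $\lambda$ of $g$, of multiplicity $m \geq 1$, and let $H \subseteq \Alg^{d+1}$ be the hyperplane of coefficient vectors with zero entry at the \emph{top} exponential monomial $n^{m-1}\lambda^n$. Since $H$ has codimension $1$, one has $\dim(C \cap H) \geq k$. Choose any $k$ linearly independent vectors $\bm v^{(1)},\dots,\bm v^{(k)} \in C \cap H$. The vanishing entry at $n^{m-1}\lambda^n$ guarantees that the associated LRS $v^{(1)},\dots,v^{(k)}$ all satisfy the order-$d$ recurrence with characteristic polynomial $g(X)/(X-\lambda)$; its characteristic roots form a subset of the non-degenerate roots of $g$, so each $v^{(j)}$ is non-degenerate, and they are linearly independent by construction. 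Thus $v^{(1)},\dots,v^{(k)}$ form a valid instance of $\simskol(d,k)$.

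Applying the MSTV-reducibility hypothesis for $\simskol(d,k)$, there exist $\alpha_1,\dots,\alpha_k \in \Alg$ such that $w := \sum_{j=1}^k \alpha_j v^{(j)}$ is a non-zero LRS in the MSTV class. Since each $\bm v^{(j)} \in C$, the vector $\bm w$ also lies in $C$, so $w$ is an $\Alg$-linear combination of the original $u^{(i)}$, which is non-zero by the linear independence of the $\bm u^{(i)}$. This combination witnesses the MSTV-reducibility of the original instance of $\simskol(d+1,k+1)$.

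The main point to handle carefully is that $v^{(1)},\dots,v^{(k)}$ must form a valid instance of $\simskol(d,k)$, i.e.\ share a common recurrence of order~$d$. This is why it is essential to eliminate a \emph{top} monomial $n^{m-1}\lambda^n$ rather than an intermediate one $n^{j}\lambda^n$ with $j < m-1$: eliminating an intermediate monomial would leave a ``gap'' in the exponential basis that does not correspond to the characteristic polynomial of any order-$d$ recurrence, so the resulting $v^{(j)}$ would only be guaranteed to satisfy the original order-$(d+1)$ recurrence.
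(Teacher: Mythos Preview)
Your proof is correct and follows essentially the same approach as the paper: eliminate one exponential monomial associated to a chosen characteristic root so that the remaining $k$ linearly independent combinations satisfy an order-$d$ recurrence, then invoke the hypothesis on $\simskol(d,k)$. The paper carries this out via explicit Gaussian elimination on the highest nonzero $n^j\lambda_1^n$ component, whereas you use the cleaner hyperplane-intersection bound $\dim(C\cap H)\ge k$ and always target the top monomial $n^{m-1}\lambda^n$; the underlying idea is identical.
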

\begin{proof}
	%By Lemma \ref{lem:reduction} the Orbit Problem with inherent ambient dimension $d'$ and target dimension $d-k$ reduces to the $(d',d'-d+k)$-SSP.
	%	We can take linear combinations of our $d'-d+k$ LRS to get $k$ linearly independent LRS $\LRS{v}^{(1)},\dots, \LRS{v}^{(k)}$ of order at most $d$. Thus we have reduced to an instance of the $(d,k)$-SSP. \piotr{Maybe needs some more explanation? I was a bit lazy.}
	Let $\{u^{(i)}\}_{i=1}^{k+1}$ be an instance of~$\simskol(d+1,k+1)$.
	Denote the coefficient vector of the $i$-th sequence by $\bm{u}^{(i)}$ and let~$C := \spn(\bm{u}^{(1)}, \dots, \bm{u}^{(k+1)}) \subseteq \Alg^{d+1}$.	
	
	First, we restrict our attention to those vector components that correspond to the characteristic root~$\lambda_1$. 
	We select the non-zero component of~$n^j\lambda_1^n$
	with the greatest~$j$ 
	across all $\bm{u}^{(1)}, \dots, \bm{u}^{(k+1)}$.
	%We select the component of these vectors that corresponds to a term with~$n^i\lambda_1^n$,
	%where the components of $n^j\lambda_1^n, j > i$ are all zero.
	Relabel the vector components so that the selected component is the first one.
	Without loss of generality, the (new) first component of~$\bm{u}^{(1)}$ is non-zero.
	By performing one round of Gaussian elimination, it is easy to see 
	that $C$ contains~$k$ linearly independent vectors~$\bm{v}^{(1)}, \dots, \bm{v}^{(k)}$ such that the first component of each vector is zero.

	Now consider the vectors $\bm{w}^{(1)}, \dots, \bm{w}^{(k)} \in \Alg^d$ obtained from $\bm{v}^{(1)}, \dots, \bm{v}^{(k)}$ by eliminating the first component. %\piotr{If the first component corresponds to $\lambda^n$ and there is a $n \lambda^n$ term, technically each $w^{(i)}$ will not be order $d$.}
	By assumption there exists a linear combination $w_n = \sum_{i=1}^k \beta_i w_n^{(i)}$ of their associated sequences, an LRS of order at most~$d$, which lies in the MSTV class.
	However, then the linear combination $v_n := \sum_{i=1}^k \beta_i v_n^{(i)}=w_n$ of the associated sequences of $\bm v^{(1)}, \dots, \bm v^{(k)}$ -- a linear combination of the LRS $u^{(1)}, \dots, u^{(k+1)}$ -- lies in the MSTV class.	
	% 	\begin{align*}
		% 		&\zero(u^{(1)}, \dots, u^{(k+1)}) \subseteq \zero(v^{(1)}, \dots, v^{(k)}) =\\
		% 		&\zero(w^{(1)}, \dots, w^{(k)}) \subseteq \zero(w) = \zero(v)
		% 	\end{align*}
	% %\[
	% %	%\zero(v^{(1)}, \dots, v^{(k)}) = \zero(w^{(1)}, \dots, w^{(k)})
	% %	\zero(u^{(1)}, \dots, u^{(k+1)}) \subseteq \zero(v^{(1)}, \dots, v^{(k)}) =
	% %	 \zero(w^{(1)}, \dots, w^{(k)}) \subseteq \zero(w) = \zero(v)
	% %\] 
	% and we conclude the proof by noting that the sequence~$\langle v_n\rangle_{n\in \N}$ with associated vector $\bm{v} \in C$ lies in the MSTV class.
\end{proof}
The corollary below follows from applying~\cref{lem:growambient} to the fact that~$\simskol(4,1) = \skolem(4)$ is MSTV-reducible~\cite{bacik_completing_2025}.
\begin{corollary}\label{cor:manyseq}
	$\simskol(d,k)$ is MSTV-reducible for any $d, k$ such that $0 \leq d-k \leq 3$.
\end{corollary}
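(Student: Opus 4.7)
The plan is a direct induction driven entirely by \cref{lem:growambient}, which preserves the difference $d - k$. So it is natural to fix $\Delta := d - k \in \{0, 1, 2, 3\}$ and identify, for each value of $\Delta$, a base case at $k = 1$ from which every larger instance with the same $\Delta$ is reached by iterating the lemma.

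The four base cases are the single-sequence instances $\simskol(d_0, 1) = \skolem(d_0)$ for $d_0 \in \{1, 2, 3, 4\}$, one per value of $\Delta = d_0 - 1$. For $d_0 \leq 3$ I would argue that the single LRS $u^{(1)}$ has at most three characteristic roots, so with respect to any Archimedean absolute value at most three of its terms are dominant; hence $u^{(1)}$ itself lies in the MSTV class, and the trivial linear combination ($\beta_1 = 1$) witnesses MSTV-reducibility. For $d_0 = 4$, MSTV-reducibility of $\skolem(4)$ is the main content of \cite{bacik_completing_2025}, which I would invoke as a black box.

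Given these base cases, I would apply \cref{lem:growambient} repeatedly: starting from $\simskol(d_0, 1)$ being MSTV-reducible, one obtains $\simskol(d_0 + m, 1 + m)$ for every $m \geq 0$. Letting $d_0$ range over $\{1, 2, 3, 4\}$ then yields MSTV-reducibility of every $\simskol(d, k)$ with $d - k \in \{0, 1, 2, 3\}$, which is exactly the range claimed.

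There is essentially no genuine obstacle: the real work has already been done, on the one hand in \cref{lem:growambient} (whose constructive proof via Gaussian elimination on the coefficient matrix does the inductive heavy lifting) and on the other hand in \cite{bacik_completing_2025} (which resolves the only non-trivial base case $d_0 = 4$). The corollary is the clean assembly of these two ingredients.
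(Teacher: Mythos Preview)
Your proposal is correct and follows essentially the same route as the paper: iterate \cref{lem:growambient} from the single-sequence base cases, invoking \cite{bacik_completing_2025} for the only non-trivial one, $\skolem(4)$. The paper's one-line justification packages the four base cases into the single statement that $\skolem(4)$ is MSTV-reducible (which, since $\skolem(4)$ covers all orders $\leq 4$, already subsumes your explicit treatment of $d_0 \leq 3$), but the underlying argument is identical.
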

%{\color{gray}\subsubsection*{Intermezzo: Subspace Orbit Problem}
	%By~\cref{lem:reduction}, the decidability of $\simskol(d,k)$ implies
	%the decidability of %the non-degenerate full-dimensional Subspace Orbit Problem in ambient dimension~$d$ for targets of dimension~$d-k$.
	%$\orbprob(d,d-k)$. 
	%The results above allow us to make another explicit connection to the Subspace Orbit Problem.
	%It follows immediately from~\cref{lem:growambient} that solving the non-degenerate full-dimensional Subspace Orbit Problem %with inherent\footnote{By inherent ambient dimension we again refer to the Krylov dimension of the orbit.}
	%for targets of dimension~$d-k$ in ambient dimension~$d$
	%further solves the problem for the same targets in every ambient dimension~$d' \geq d$. 
	%%with ambient dimension~$d$ and target dimension~$d-k$ also solves the non-degenerate full-dimensional Subspace Orbit Problem with ambient dimensions~$d' \geq d$ and target dimension~$d-k$.
	%Therefore, \cref{cor:manyseq} implies the decidability of the full-dimensional Subspace Orbit Problem for targets with dimension at most~3 in \emph{any} ambient space~$\Alg^d$.
	%
	%From now on, our goal is to find for a fixed parameter~$d-k$ the decidable $(d,k)$-SSP with the least~$d$. In the rest of the section, we will use the Simultaneous Skolem Problem vocabulary.}

By~\cref{cor:manyseq} we moreover have that $\orbprob(d,d-k)$ is decidable for $d-k \leq 3$ and any~$d$. 
We may also interpret~\cref{lem:growambient} in terms of the reduced Subspace Orbit Problem: if~$\orbprob(d,t)$ is MSTV-reducible,
then $\orbprob(d+1,t)$ is MSTV-reducible as well. 
From now on, our goal is to find the decidable $\simskol(d,k)$ with the least~$d$
for each fixed~$d-k$. 
%\subsubsection*{New MSTV-reducible cases}

Before moving to the decidability results for~$\simskol$, we note a useful corollary of~\cref{thm:dominant} which we will use repeatedly.
\begin{corollary} \label{cor:dom_roots}
	For any instance of $\simskol(d,k)$ with characteristic roots $\lambda_1, \dots, \lambda_s$, and for any $0 < t \leq d$ such that $d < 2t+2 - \left\lfloor \frac{t+1}{2} \right\rfloor$ either there are at most $t$ dominant terms with respect to some Archimedean absolute value, or there are at most $\left\lfloor \frac{t+1}{2} \right\rfloor + d -(t+1)$ dominant terms with respect to some non-Archimedean absolute value. 
\end{corollary}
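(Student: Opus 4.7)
The plan is to apply \cref{thm:dominant} to the sub-family of characteristic roots that achieve the maximum under the standard Archimedean absolute value, and then case-split on the nature of the absolute value produced by that theorem.

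First I would let $|\cdot|$ denote the standard modulus on $\Alg$, let $\lambda_1, \dots, \lambda_r$ be the characteristic roots of maximum $|\cdot|$-modulus, and let $N$ denote the total number of LRS terms $c_{i,j} n^j \lambda_i^n$ whose base $\lambda_i$ lies in this subset (counted with multiplicities coming from the degrees of the associated polynomial parts). If $N \leq t$, then the standard modulus itself already witnesses at most $t$ dominant terms, so I would focus on the case $N \geq t+1$. Assuming $r \geq 2$, I would apply \cref{thm:dominant} to $\lambda_1, \dots, \lambda_r$ together with their polynomials $P_1, \dots, P_r$, obtaining an absolute value $|\cdot|_v$ under which at most $\lfloor N/2 \rfloor$ of these $N$ terms are $|\cdot|_v$-dominant within the subset.

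I would then lift the bound to the full LRS. The $|\cdot|_v$-dominant terms of the full LRS are contained in the union of (i) the at most $\lfloor N/2 \rfloor$ subset-terms that remain $|\cdot|_v$-dominant, and (ii) the $d - N$ terms whose base $\lambda_j$ lies outside the subset. The total is therefore at most $\lfloor N/2 \rfloor + (d - N) = d - \lceil N/2 \rceil$, regardless of whether the global $|\cdot|_v$-maximum over $\lambda_1, \dots, \lambda_s$ is attained inside or outside the subset (in particular, when the maximum lies strictly outside, no subset-term contributes, and the bound $d - N \leq d - \lceil N/2 \rceil$ still holds).

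Finally, since $N \geq t+1$ I would have $\lceil N/2 \rceil \geq \lceil (t+1)/2 \rceil$. If $|\cdot|_v$ is non-Archimedean, then $d - \lceil N/2 \rceil \leq d - \lceil (t+1)/2 \rceil = \lfloor (t+1)/2 \rfloor + d - (t+1)$, delivering the second disjunct. If $|\cdot|_v$ is Archimedean, the hypothesis $d < 2t + 2 - \lfloor (t+1)/2 \rfloor$ rearranges to $d - t \leq \lceil (t+1)/2 \rceil \leq \lceil N/2 \rceil$, hence $d - \lceil N/2 \rceil \leq t$, which is the first disjunct. The main subtleties will be the bookkeeping in the lifting step — confirming the uniform bound $d - \lceil N/2 \rceil$ across all cases of where the global $|\cdot|_v$-maximum sits — and dealing with the boundary case $r = 1$, where only one root is standard-dominant but with multiplicity $N \geq t+1$; this falls outside the scope of \cref{thm:dominant} and would need a separate treatment through Galois conjugates of the unique standard-dominant root, switching to a different Archimedean absolute value that decreases the number of dominant terms.
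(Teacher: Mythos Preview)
Your argument is correct and matches the paper's: both apply \cref{thm:dominant} to the roots of maximal standard modulus and bound the total $|\cdot|_v$-dominant count by $\lfloor T/2\rfloor + (d-T)$, which is maximised at $T=t+1$. The only difference is organisational---the paper assumes every Archimedean absolute value has more than $t$ dominant terms and then observes that the resulting $|\cdot|_v$ must be non-Archimedean (since its dominant count is already $\leq t$), whereas you case-split on the nature of $|\cdot|_v$ explicitly; your worry about the boundary case $r=1$ is legitimate and is equally glossed over in the paper's proof.
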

\begin{proof}
	Suppose at least $t+1$ terms are dominant with respect to every Archimedean absolute value. 
	%	Let~$\lambda_1, \dots, \lambda_r$ be all exponential monomials bases that have the largest modulus.
	Let~$r$ be the number of distinct characteristic roots of the largest modulus, i.e.\
	\begin{align*}
		|\lambda_1| = \dots = |\lambda_r| > |\lambda_{r+1}| \geq \dots \geq |\lambda_s|.
	\end{align*}
	
	Then there are at least~$t+1$ terms with bases $\lambda_1, \dots, \lambda_r$. We denote the number of such terms by~$T$.
	By~\cref{thm:dominant} there is an absolute value $|\cdot|_v$ with respect to which at most $\left \lfloor \frac{T}{2} \right \rfloor$ terms with bases among $\lambda_1 , \dots, \lambda_r$ are dominant for~$|\cdot|_v$. 
	Therefore, the number of all dominant terms
	is bounded from above by $\left\lfloor \frac{T}{2} \right\rfloor + d - T$.
	This expression attains its maximum (provided $T \geq t+1$) at $T = t+1$, 
	thus there are at most $\left \lfloor \frac{t+1}{2} \right \rfloor + d - (t+1)$ terms dominant with respect to~$|\cdot|_v$.
	
	The assumption $d< 2t+2 - \left\lfloor \frac{t+1}{2} \right\rfloor$ is equivalent to 
	$\left\lfloor \frac{t+1}{2} \right\rfloor + d - (t+1) < t+1$, and so there are at most~$t$ dominant terms with respect to $|\cdot|_v$.
	We conclude from this that $|\cdot|_v$ is non-Archimedean.
\end{proof}

\begin{theorem}\label{thm:62}
	$\simskol(6,2)$ is MSTV-reducible.
\end{theorem}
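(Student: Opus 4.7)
The plan is to combine \cref{cor:dom_roots} with \cref{lem:echelon} to extract, from the two-dimensional space $C := \spn(\bm{u}^{(1)}, \bm{u}^{(2)}) \subseteq \Alg^6$, a non-zero linear combination $w$ that lies in the MSTV class. I would first apply \cref{cor:dom_roots} with $t = 4$: the hypothesis $d < 2t + 2 - \lfloor (t+1)/2 \rfloor$ reads $6 < 8$ and is satisfied, so there is a distinguished absolute value $|\cdot|_v$ with respect to which either at most $\ell = 4$ terms are Archimedean-dominant or at most $\ell = 3$ terms are non-Archimedean-dominant. Let $r := \dim \pi_v(C) \in \{0,1,2\}$.

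If $r = 2$, \cref{lem:echelon} supplies a non-zero $\bm{w} \in C$ with $1 \leq |\supp \pi_v(\bm{w})| \leq \ell - r + 1 = \ell - 1$: this is at most $3$ in the Archimedean subcase and at most $2$ in the non-Archimedean subcase, so the associated LRS $w$ satisfies the MSTV threshold in both. If $r \leq 1$, \cref{lem:echelon} instead yields a non-zero $\bm{w} \in C$ with $\pi_v(\bm{w}) = \bm{0}$; the entire support of $\bm{w}$ then lies among the $d - \ell$ non-dominant positions, so $w$ has at most $d - \ell \leq 3$ exponential terms in total. Any LRS with at most $3$ terms has at most $3$ dominant terms with respect to every Archimedean absolute value, hence is in the MSTV class.

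In every case $w$ is non-zero (the vectors produced by \cref{lem:echelon} are linearly independent, and hence non-zero) and inherits non-degeneracy from $u^{(1)}, u^{(2)}$, because its characteristic roots form a subset of the shared non-degenerate set $\{\lambda_1, \dots, \lambda_s\}$. This exhibits the desired MSTV-reducing combination and establishes the theorem.

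The main obstacle I expect is bookkeeping at the boundary: the bounds from \cref{cor:dom_roots} (namely $4$ and $3$) meet the MSTV thresholds (namely $3$ and $2$) only after exactly the single unit of reduction afforded by $k - 1 = 1$ extra LRS, and in the kernel branch the bound $d - \ell \leq 3$ in the non-Archimedean subcase fits inside the Archimedean MSTV threshold only with no slack to spare. Any looser accounting in either step would leave a subcase uncovered, which is presumably why $d = 6$ is claimed to be the smallest dimension at which $\simskol(d,2)$ is MSTV-reducible.
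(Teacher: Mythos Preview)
Your proposal is correct and takes essentially the same approach as the paper's proof: invoke \cref{cor:dom_roots} with $t=4$, then case-split on $r=\dim\pi_v(C)$ and apply the two parts of \cref{lem:echelon}. One minor bookkeeping point (which the paper's own proof also glosses over with its ``order at most~3'' remark): in the $r\leq 1$ branch your inequality ``$d-\ell\leq 3$'' tacitly assumes $\ell\geq 3$, whereas \cref{cor:dom_roots} only bounds the actual number of dominant terms from above; the residual case $\ell\leq 2$ with $r=0$ yields a combination of order at most~$5$, and is dispatched by observing that the instance then lies in $\simskol(\leq 5,2)$, which is MSTV-reducible by \cref{cor:manyseq}.
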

\begin{proof}
	Let $\{u^{(1)},u^{(2)}\}$ be an instance of $\simskol(6,2)$. 
	From~\cref{cor:dom_roots} there is an absolute value $|\cdot|_v$ such that one of the following holds:
	\begin{enumerate}
		\item $|\cdot|_v$ is Archimedean and at most 4 terms are $|\cdot|_v$-dominant,
		\item $|\cdot|_v$ is non-Archimedean and at most 3 terms are $|\cdot|_v$-dominant.
	\end{enumerate}
	In fact, we may assume both $u^{(1)},u^{(2)}$ have exactly 4 dominant terms if~(1) is the case, and exactly 3 dominant terms if~(2) is the case; otherwise one of $u^{(1)},u^{(2)}$ would be in the MSTV class, which immediately yields MSTV-reducibility of the instance.
	
	\textbf{Case (1)}: $|\cdot|_v$ is Archimedean with exactly~4 dominant terms in $u^{(1)},u^{(2)}$. 
	
	The associated vectors $\bm{u}^{(1)}$ and $\bm{u}^{(2)}$ span a subspace~$C\subset \Alg^6$ where $\dim C = 2$.
	By~\cref{lem:echelon} there exists $\bm{w}\in C$ such that either $\pi_v(\bm{w}) = \bm{0}$ or $1 \leq \#\supp(\pi_v(\bm{w})) \leq 3$.
	If $\pi_v(\bm{w}) = \bm{0}$, the sequence~$\LRS{w}$ associated with~$\bm{w}$ has order at most~2.
	If $1 \leq \#\supp(\pi_v(\bm{w})) \leq 3$, 
	then $\LRS{w}$ has at most $3$ dominant terms with respect to the Archimedean $|\cdot|_v$.
	% Then we can take a linear combination $\LRS{w}$ of the LRS to eliminate one of the dominant roots. Then $\LRS{w}$ either has between 1 and 3 of the original 4 dominant roots as characteristic roots of its own, or 0, meaning $\LRS{w}$ is order at most 2. 
	In both cases $\LRS{w}$ is in the MSTV class. 
	
	\textbf{Case (2)}: $|\cdot|_v$ is non-Archimedean with exactly~3 dominant terms in $u^{(1)},u^{(2)}$.
	
	Applying~\cref{lem:echelon} again, we find
	$\bm{w}\in C$ such that either $\pi_v(\bm{w}) = \bm{0}$ or
	$1 \leq \#\supp(\pi_v(\bm{w})) \leq 2$.
	In the former case, the sequence~$\LRS{w}$ associated with~$\bm{w}$ has order at most~3.
	In the latter case, $\LRS{w}$ has at most $2$ dominant terms with respect to the non-Archimedean~$|\cdot|_v$.
	% Then we can take a linear combination $\LRS{w}$ of the LRS to eliminate one of the dominant roots. Then $\LRS{w}$ either has between 1 and 3 of the original 4 dominant roots as characteristic roots of its own, or 0, meaning $\LRS{w}$ is order at most 2. 
	In both cases $\LRS{w}$ is in the MSTV class. 
	%Similarly to the first part of this proof, we can take a linear combination LRS~$\LRS{w}$ to eliminate one of the dominant roots. If the coefficients of $\lambda_1, \lambda_2, \lambda_3$ in~$\LRS{u}^{(1)}$ and $\LRS{u}^{(2)}$ were linearly independent, this succeeds---and~$\LRS{w}$ has 1 or 2 dominant roots with respect to~$|\cdot|_v$. 
	%This means that~$\LRS{w}$ is in the MSTV class.
	%Otherwise, the constructed sequence~$\LRS{w}$ has at most 3 roots (which were not dominant for $|\cdot|_v$ before). Then, $\LRS{w}$ has at most~3 dominant roots with respect to~$|\cdot|$ and thus belongs to the MSTV class.
	%Thus any instance of the $(6,2)$-SSP with at most 4 dominant roots with respect to some Archimedean absolute value or at most 3 dominant roots with respect to some non-Archimedean absolute value is MSTV-reducible. But Corollary \ref{cor:dom_roots} every instance of the $(6,2)$-SSP satisfies this, so we are done. \piotr{Probably cleaner ways to write this, later maybe we should think about some way to systematise this kind of reasoning}
\end{proof}
\begin{theorem}\label{thm:94}
	$\simskol(9,4)$ is MSTV-reducible.
\end{theorem}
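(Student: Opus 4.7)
The plan is to exhibit a non-trivial $\overline{\Q}$-linear combination of the four input LRS that belongs to the MSTV class, mirroring the strategy used in the proof of \cref{thm:62}. First I would apply \cref{cor:dom_roots} with $t=6$, which is valid since $9 < 2\cdot 6 + 2 - \lfloor 7/2 \rfloor = 11$: this yields an absolute value $|\cdot|_v$ such that either (A) $|\cdot|_v$ is Archimedean and there are at most $\ell \leq 6$ dominant terms, or (B) $|\cdot|_v$ is non-Archimedean and there are at most $\ell \leq 5$ dominant terms. Let $C = \spn(\bm{u}^{(1)}, \dots, \bm{u}^{(4)}) \subseteq \Alg^9$ and $r = \dim \pi_v(C)$, where $\pi_v$ denotes projection onto the dominant coordinates.

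In both cases the analysis will be driven by $r$. When $r$ is large, \cref{lem:echelon}(2) provides $\bm{w}\in C$ with $\pi_v(\bm{w})$ having support of size between $1$ and $\ell - r + 1$, so the associated LRS has at most $\ell - r + 1$ terms dominant with respect to $|\cdot|_v$. This lies in the MSTV class once the support is of size at most $3$ (Case A) or at most $2$ (Case B). When $r$ is small, \cref{lem:echelon}(1) gives $4 - r$ linearly independent vectors in $C$ supported on the $9 - \ell$ non-dominant positions, forming a $\simskol(9 - \ell, 4 - r)$ sub-instance whose excess is $d' - k' = 5 - \ell + r$, MSTV-reducible via \cref{cor:manyseq} as soon as $5 - \ell + r \leq 3$. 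The direct MSTV thresholds ($r \geq \ell - 2$ in Case A, $r \geq \ell - 1$ in Case B) and the sub-instance threshold ($r \leq \ell - 2$) together exhaust all $r \geq 1$, yielding an MSTV linear combination in every such scenario.

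The hard part will be the boundary case $r = 0, \ell = 1$: here \cref{lem:echelon}(2) is vacuous (there is no $\bm w \in C$ with $\pi_v(\bm w)$ non-zero), and the sub-instance is $\simskol(8, 4)$ with $d' - k' = 4$, just outside the scope of \cref{cor:manyseq}. I would handle this edge case by noting that in this scenario the four sequences have vanishing coefficient on the unique $|\cdot|_v$-dominant position, and so satisfy a common recurrence of order at most $8$; then I would re-apply the entire argument to this $\simskol(8,4)$ instance. For $d=8$, \cref{cor:dom_roots} is valid at $t = 5$ (since $8 < 9$), providing an absolute value with at most $5$ dominant terms whether Archimedean or non-Archimedean, and the resulting sub-instance excess becomes $d'' - k'' = 4 - \ell' + r' \leq 3$ whenever the sub-instance regime applies, bringing everything inside \cref{cor:manyseq}. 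Thus the recursive cascade bottoms out after at most one additional step, completing the proof that $\simskol(9,4)$ is MSTV-reducible.
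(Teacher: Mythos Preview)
Your proof is correct and follows the same approach as the paper: apply \cref{cor:dom_roots} with $t=6$, split on $r=\dim\pi_v(C)$ via \cref{lem:echelon}, and land either directly in the MSTV class or in a sub-instance covered by \cref{cor:manyseq}. You are in fact more careful than the paper about the boundary $r=0,\ell=1$ (the paper dispatches small $\ell$ with the remark that ``$u^{(1)}$ is in the MSTV class otherwise'', tacitly assuming some $u^{(i)}$ has a non-zero dominant coefficient); your recursion into $\simskol(8,4)$ handles this correctly, though a quicker fix is to assume at the outset that the common order-$d$ recurrence is minimal, which forces $r\geq 1$ and eliminates the edge case entirely.
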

\begin{proof}
	Let $\left\{\LRS{u}^{(1)},\LRS{u}^{(2)},\LRS{u}^{(3)},\LRS{u}^{(4)} \right\}$ be an instance of $\simskol(9,4)$. 
	From~\cref{cor:dom_roots} there exists either an Archimedean absolute value with at most~6 dominant terms or a non-Archimedean absolute value with at most~5 dominant terms appearing in the LRS $u^{(1)}, \dots, u^{(4)}$. 
	
	\textbf{Case (1)}: Assume first that $|\cdot|_v$ is an Archimedean absolute value with $\ell \leq 6$ dominant terms.
	The associated vectors $\bm{u}^{(i)}$, $1 \leq i \leq 4$, span a four-dimensional subspace~$C$ of $\Alg^9$,
	and its projection $\pi_v(C)$ has dimension~$r$.
	By~\cref{lem:echelon},
	if $\ell - r + 1 \leq 3$, then there exists $\bm{w}\in C$ such that $1 \leq \#\supp(\pi_v(\bm{w})) \leq 3$.
	Then the proof is concluded as~$\LRS{w}$ lies in the MSTV class.
	We discuss the situation when $\ell - r + 1 \geq 4$. 
	Since $\ell \leq 6$, we have $r \leq 3$.
	Note also that we only consider $\ell \geq 4$, as $\LRS{u}^{(1)}$ is in the MSTV class otherwise.
	Recall also from~\cref{lem:echelon} that there exist $4-r$ linearly independent vectors~$\bm{w}_1, \dots, \bm{w}_{4-r}$ such that $\bm{w}_i\in C$ with $\pi_v(\bm{w}_i) = \bm{0}$.
	The problem thus reduces to an instance of a $\simskol(9-\ell, 4-r)$.
	By carefully analysing the possible values of $\ell$ and $r$,
	it is easy to see that all these problems are in the MSTV class.
	
	%If at most 6 are dominant with respect to an Archimedean absolute value $|\cdot|_v$ then we can take a linear combination $\LRS{w}$ of LRS which eliminates at least 3 of these roots, meaning either there are between 1 and 3 dominant roots with respect to this $|\cdot|_v$ remaining, or there are 0 remaining so $\LRS{w}$ has at most 3 characteristic roots.
	
	%In both cases $\LRS{w}$ is in the MSTV class.
	\textbf{Case (2)}: Now we can assume that there exists a non-Archimedean absolute value~$|\cdot|_v$ with $\ell \leq 5$ dominant terms.
	Then, similarly to the Archimedean case,
	we first note that $\ell - r +1 \leq 2$ implies the existence of~$\bm{w} \in C$ with $1 \leq \#\supp(\pi_v(\bm{w})) \leq 2$ whose associated sequence~$\LRS{w}$ lies in the MSTV class.
	
	We thus assume $\ell - r + 1 \geq 3$, as well as $\ell \geq 3$.
	As before, we deduce $r \leq 3$ and again by \cref{lem:echelon} the problem can always be reduced to a $\simskol(9-\ell, 4-r)$ instance. 
	All these instances are MSTV-reducible since $9-\ell - (4-r) = 6 - (\ell - r + 1) \leq 3$ and we can apply~\cref{cor:manyseq}.
\end{proof}
\begin{theorem}
	$\simskol(12,6)$ is MSTV-reducible.
\end{theorem}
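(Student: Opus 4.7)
The plan is to follow the structure of Theorems~\ref{thm:62} and~\ref{thm:94}, combining Corollary~\ref{cor:dom_roots}, Lemma~\ref{lem:echelon}, and the previously established MSTV-reducibility results. First I would apply Corollary~\ref{cor:dom_roots} with $t = 10$, obtaining the dichotomy that either some Archimedean absolute value admits at most $10$ dominant terms, or some non-Archimedean absolute value admits at most $6$ dominant terms. Throughout, let $\ell$ denote the number of dominant terms for the chosen absolute value and let $r = \dim \pi_v(C)$, where $C \subseteq \Alg^{12}$ is the $6$-dimensional span of the coefficient vectors $\bm{u}^{(1)}, \dots, \bm{u}^{(6)}$.

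In the non-Archimedean case ($\ell \leq 6$), I would mirror the analysis of Theorem~\ref{thm:94}: if $\ell - r + 1 \leq 2$, then Lemma~\ref{lem:echelon}(2) produces $\bm{w} \in C$ with $\#\supp(\pi_v(\bm{w})) \leq 2$, placing $w$ in the MSTV class; otherwise $\ell - r \geq 2$, and Lemma~\ref{lem:echelon}(1) reduces to $\simskol(12-\ell,\,6-r)$ with $d' - k' = 6 - (\ell - r) \leq 4$. Reductions with $d' - k' \leq 3$ are MSTV-reducible by Corollary~\ref{cor:manyseq}, while those with $d' - k' = 4$ satisfy $d' = 12 - \ell \geq 6$ and are MSTV-reducible by iterating Lemma~\ref{lem:growambient} on top of Theorem~\ref{thm:62}. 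In the Archimedean case ($\ell \leq 10$) the analogous split applies: if $\ell - r + 1 \leq 3$, then Lemma~\ref{lem:echelon}(2) gives MSTV directly; if $\ell - r \geq 3$ and $r \leq 5$, reduce to $\simskol(12-\ell,\,6-r)$ with $d' - k' \leq 3$, handled by Corollary~\ref{cor:manyseq}.

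The hard part will be the residual subcase $r = k = 6$ in the Archimedean case, which arises only for $\ell \in \{9,10\}$. Here the injectivity of $\pi_\sigma$ on $C$ eliminates the kernel-based reduction of Lemma~\ref{lem:echelon}(1), while the support bound $\ell - r + 1 \in \{4,5\}$ from Lemma~\ref{lem:echelon}(2) exceeds the Archimedean MSTV threshold of $3$. To handle this I would apply Theorem~\ref{thm:dominant} a second time to the $\ell$ same-$|\cdot|_\sigma$-modulus dominant roots, obtaining a further absolute value $|\cdot|_{v'}$ with at most $\lfloor \ell/2 \rfloor + (12 - \ell) \leq 7$ dominants overall. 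Since we may also assume that Corollary~\ref{cor:dom_roots} applied with $t = 8$ has been ruled out (otherwise Case~(1) with $\ell \leq 8$ resolves every subcase, including $r = 6$, via $\ell - r \leq 2$), every Archimedean absolute value has at least $9$ dominants; combined with the failure of the non-Archimedean branch at $t = 10$ (every non-Archimedean absolute value has at least $7$ dominants), $|\cdot|_{v'}$ is forced to be non-Archimedean with exactly $7$ dominant terms, and a final pass through Lemma~\ref{lem:echelon}, Corollary~\ref{cor:manyseq}, Theorem~\ref{thm:62}, and Lemma~\ref{lem:growambient} closes the argument.
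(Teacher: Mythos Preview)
Your overall architecture mirrors the paper's closely, and the non-Archimedean branch with $\ell \le 6$ as well as the Archimedean branch for $r \le 5$ are handled correctly. You also correctly isolate the residual subcase (non-Archimedean with exactly $7$ dominants) and get there by essentially the same route as the paper's Case~(2b), just entering via $t=10$ rather than $t=8$.

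The gap is in your ``final pass'' on the $7$-dominant non-Archimedean absolute value $|\cdot|_{v'}$. With $\ell'=7$ and $r'=\dim\pi_{v'}(C)$, the case $r'=6$ is fine (support $\le 2$), and $r'\le 4$ reduces to $\simskol(5,6-r')$ with $d'-k'\le 3$, which Corollary~\ref{cor:manyseq} handles. But $r'=5$ reduces to $\simskol(5,1)=\skolem(5)$, which is \emph{not} MSTV-reducible, and neither Corollary~\ref{cor:manyseq} nor Theorem~\ref{thm:62} with Lemma~\ref{lem:growambient} (which needs $d'\ge 6$) applies. Your listed tools do not close this case.

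What you are missing is the structural observation the paper exploits in Case~(2b): since $|\cdot|_{v'}$ has \emph{exactly} $7$ dominants and at most $\lfloor \ell/2\rfloor$ of these come from the $|\cdot|_\sigma$-dominant bases, all $12-\ell$ non-$|\cdot|_\sigma$-dominant terms must be $|\cdot|_{v'}$-dominant. Hence the single kernel vector $\bm w$ (with $\pi_{v'}(\bm w)=\bm 0$) has its at most $5$ nonzero terms supported entirely on bases of equal $|\cdot|_\sigma$-modulus. A \emph{third} application of Theorem~\ref{thm:dominant} to these bases then produces an absolute value with at most $\lfloor 5/2\rfloor=2$ dominant terms, placing $w$ in the MSTV class. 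Without this step the argument does not go through.
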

\begin{proof}
	Let $\left\{\LRS{u}^{(1)},\LRS{u}^{(2)},\ldots,\LRS{u}^{(6)} \right\}$ be an instance of $\simskol(12,6)$ with characteristic roots $\lambda_1, \dots, \lambda_s$, where $s \leq 12$.
	%From~\cref{cor:dom_roots} we either have an Archimedean absolute value $|\cdot|_v$ with at most~8 dominant terms or a non-Archimedean absolute value with at most~7 dominant terms.
	
	\textbf{Case (1)}: Assume first that $|\cdot|_v$ is an Archimedean absolute value with $\ell \leq 8$ dominant terms appearing in $u^{(1)}, \dots, u^{(6)}$. 
	We denote by~$\pi_v(C)$ the projection to the $|\cdot|_v$-dominant terms of the subspace~$C \subseteq \Alg^{12}$ spanned by the associated vectors of~6 sequences. It has dimension~$r \leq \min\{\ell, 6\}$.
	Similar to the proofs of~\cref{thm:62,thm:94},
	the existence of a linear combination of $\left\{\LRS{u}^{(i)} : 1 \leq i \leq 6\right\}$ that lies in the MSTV class
	is guaranteed in each of the following cases:
	\begin{itemize}
		\item $\ell-r+1 \leq 3$: then, by~\cref{lem:echelon}, there exists $\bm{w} \in C$ such that $1 \leq \#\supp(\pi_v(\bm{w})) \leq 3$. This implies that the associated sequence is in the MSTV class.
		\item $\ell-r+1 \geq 4$: now, by~\cref{lem:echelon} we can find $6-r$ linearly independent vectors~$\bm{w}_1, \dots, \bm{w}_{6-r}$ such that $\bm{w}_i\in C$ with $\pi_v(\bm{w}_i) = \bm{0}$.
		By considering $\bm w_1, \dots, \bm w_{6-r}$, the problem reduces to an instance of $\simskol(12-\ell, 6-r)$.
		Now, by our assumption, $(12 - \ell) - (6 -r) = 6 - \ell + r = -(\ell - r + 1) + 7 \leq 3$.
		% We observe that $12-\ell \geq 4$.
		Therefore, each of these~$\simskol(12-\ell, 6-r)$ instances is MSTV-reducible by~\cref{cor:manyseq}.
	\end{itemize}
	
	\textbf{Case (2a)}: If $\ell \leq 6$ terms are dominant with respect to a non-Archimedean absolute value~$|\cdot|_v$, then we can proceed similarly:
	\begin{itemize}
		\item $\ell-r+1 \leq 2$: then, by~\cref{lem:echelon}, there exists $\bm{w} \in C$ such that $1 \leq \#\supp(\pi_v(\bm{w})) \leq 2$. This implies that the associated sequence is in the MSTV class.
		\item $\ell-r+1 \geq 3$: by~\cref{lem:echelon} we can find $6-r$ linearly independent vectors~$\bm{w}_1, \dots, \bm{w}_{6-r}$ such that $\bm{w}_i\in C$ with $\pi_v(\bm{w}_i) = \bm{0}$.
		By eliminating the $|\cdot|_v$-dominant terms, the problem reduces to an instance of a $\simskol(12-\ell, 6-r)$.
		Moreover, $(12 - \ell) - (6 -r) = 6 - \ell + r = -(\ell - r + 1) + 7 \leq 4$.
		The new case is when $12 - \ell - (6-r) =4$. Due to $12 - \ell \geq 6$, these instances are MSTV-reducible from~\cref{thm:62} and \cref{lem:growambient}.
		All other $\simskol(12-\ell, 6-r)$ instances we obtain are MSTV-reducible by~\cref{cor:manyseq}.
	\end{itemize}
	
	However, we now come to the first case that requires further analysis. 
	Assume now that there are at least 9 dominant terms for every Archimedean absolute value. %(We refer to this as \emph{The Assumption} later in the proof.)
	By~\cref{cor:dom_roots} (setting $t:=8$) we are guaranteed that there are at most~7 dominant terms with respect to some non-Archimedean absolute value.
	
	\textbf{Case (2b)}: We have already considered the case of at most~6 dominant terms;
	hence we proceed under the assumption that $|\cdot|_\pp$ is the non-Archimedean absolute value with exactly~7 dominant terms constructed as in~\cref{cor:dom_roots}.
	
	Let~$m$ denote the number of distinct characteristic roots with the largest modulus, without loss of generality,
	\[
	|\lambda_1| = \dots = |\lambda_m| > |\lambda_{m+1}| \geq \dots \geq |\lambda_s| \, .
	\]
	Let~$T \in \{9,10,11,12\}$ be the number of terms associated with~$\lambda_1, \dots, \lambda_m$.
	Since there are exactly~7 dominant terms with respect to~$|\cdot|_\pp$,
	and since there are at most $\lfloor \frac{T}{2}\rfloor$ of them associated with~$\lambda_1, \dots, \lambda_m$\footnote{Recall that in \cref{cor:dom_roots}, the dominant terms are constructed by applying \cref{thm:dominant} to $\lambda_1, \dots, \lambda_m$.}, it is easy to see that only two cases are possible.
	Either $T = 9$ and exactly~3 terms with bases among $\lambda_{m+1}, \dots, \lambda_s$ are $|\cdot|_\pp$-dominant. 
	Or $T = 10$ and exactly~2 terms with bases among $\lambda_{m+1}, \dots, \lambda_s$ are $|\cdot|_\pp$-dominant. 
	Crucially, in both cases all terms with bases among $\lambda_{m+1}, \dots, \lambda_s$ are $|\cdot|_\pp$-dominant.
	%%The proof of Corollary \ref{cor:dom_roots} shows that we can choose (up to relabelling) 
	%Recall that $|\cdot|_\pp$ constructed as in~\cref{cor:dom_roots} has at most $\lfloor \frac{T}{2}\rfloor$ dominant terms with bases of largest modulus.
	%Hence, up to relabelling, we can choose
	%\begin{align*}
	%    |\mu_6|_\pp = |\mu_7|_\pp = \dots = |\mu_{12}|_\pp  \, .
	%\end{align*}
	%to be the $|\cdot|$-dominant roots. 
	
	We now employ~\cref{lem:echelon} for~$|\cdot|_\pp$.
	Let the projection $\pi_\pp(C)$ to the $|\cdot|_\pp$-dominant terms have dimension~$r$.
	If $r=6$, then we have $7 - r + 1 = 2$, and so there exists $\bm{w}\in C$ such that $1 \leq \#\supp(\pi_\pp(\bm{w})) \leq 2$.
	Then, the associated sequence is in the MSTV class.
	Otherwise, if $r \leq 5$, we can find $\bm{w}\in C$ such that $ \#\supp(\pi_\pp(\bm{w})) = 0$.
	Recall from the previous paragraph that $\supp(\pi_\pp(\bm{w}))$ 
	contains no components that correspond to terms with bases $\lambda_{m+1}, \dots, \lambda_s$.
	In other words, the associated sequence of~$\bm{w}$ has at most~5 non-zero terms, all of them having bases of the same modulus~$|\lambda_1|$. 
	By~\cref{thm:dominant}, there exists an absolute value for which at most~2 of these terms are dominant,
	% with $\mu_1,\dots,\mu_5$
	%  that there exist $6-r$ linearly independent vectors~$\bm{w}_1, \dots, \bm{w}_{6-r}$ such that $\bm{w}_i\in C$ with $\pi_v(\bm{w}_i) = \bm{0}$.
	% The problem thus reduces to an instance of a $(5, 6-r)$-SSP.
	% By carefully analysing the possible values of $\ell$ and $r$,
	% it is easy to see that all these problems are in the MSTV class.
	%which satisfy
	%\begin{align*}
	%|\sigma(\mu_1)| = \dots = |\sigma(\mu_5)| = 1
	%\end{align*}
	%for all $\sigma$. 
	%Therefore by Kronecker's theorem since non of the $\mu_i$ are roots of unity there must exist a prime ideal $\mathfrak{q}$ such that (WLOG after relabelling) $v_{\mathfrak{q}}(\mu_1) \neq 0$. 
	%Consider $\mu_2$ and recall that it is not a root of unity. Hence, by Kronecker's theorem there might exists a a prime ideal $\mathfrak{q} \subset \O_\KK$ such that $v_{\mathfrak{q}}(\mu_2) \neq 0$.
	%Now since we have
	%\begin{align*}
	%    \mu_1\overline{\mu}_1 = \dots = \mu_5 \overline{\mu}_5 = 1
	%\end{align*}
	%we must have
	%\begin{align*}
	%    |\mu_1|_\q |\overline{\mu}_1|_\q = \dots = |\mu_5|_\q |\overline{\mu}_5|_\q = 1
	%\end{align*}
	%so we can choose a non-Archimedean absolute value $|\cdot|_v$ which is either defined by $|x|_v = |x|_\q$ or by $|x|_v = |\overline x|_\q$ for which at most 2 of the $\mu_1 \dots, \mu_5$ are dominant, 
	placing this sequence in the MSTV class.
\end{proof}
These theorems show that $\orbprob(d,t)$ is decidable for $(d,t) = (6,4)$, $(9,5)$, and $(12,6)$. 
Furthermore, those are the smallest~$d$ for which $\orbprob(d,t)$ with, respectively, $t=4,5,6$ can be decided using the properties of the MSTV class.

By~\cref{cor:dom_roots}, the inequality \begin{equation} \label{eq:when-mstv}
	d \leq 2k + 4 - \left \lfloor \frac{k+3}{2} \right \rfloor
\end{equation} implies 
that either there exists an Archimedean absolute value~$|\cdot|_v$ with at most~$k+2$ dominant terms, 
or there is a prime ideal $\pp \subseteq \O_\KK$ such that
the absolute value~$|\cdot|_\pp$ has at most~$k+1$ dominant terms.

The condition~\eqref{eq:when-mstv} is optimal, as witnessed by the following family of examples. 
In particular, it is easy to see that $\simskol$ with (5,1), (8,3) and (11,5) are not MSTV-reducible.
\begin{example}\label{ex:notmstv}
	Given integers $d,k$ such that \begin{equation}\label{eq:when-not-mstv}
		d > 2k + 4 - \left \lfloor \frac{k+3}{2} \right \rfloor,
	\end{equation} 
	%aka d-s > k+1
	we show instances of $\simskol(d,k)$ that are not MSTV-reducible. 
	
	First, we consider odd~$k$; that is, the case where $k+3$ is even. 
	Let~$s \coloneqq (k+3)/2$ and notice that $d - s > k+1$ follows. 
	Let~$\Lambda_{2s}$ be as defined in~\cref{ex:gaussian}.
	%Let $p_1, \dots, p_s, p_{s+1} , \dots , p_{d-s}$ be Gaussian primes such that they are not associates to each other or their conjugates, 
	%and 
	%%$|p_1| > \dots > |p_s| > |p_{2s+1}| > \dots > |p_d|$
	%$|p_i| > |p_{i+1}|$ for all $1 \leq i \leq d-s-1$.
	%Let $\lambda_1, \dots, \lambda_s$ be chosen as 
	%\begin{align*}
	%    \lambda_i = \overline{p}_i \prod_{1 \leq j \leq s, \, j \neq i} p_j \, ,
	%\end{align*}
	%just like in~\cref{ex:gaussian}. 
	%For~$i \in \{s+1, \dots, d-s\}$ set $\lambda_i = p_i$. 
	Let $\lambda_{s+1}, \dots, \lambda_{d-s}$ be Gaussian primes\footnote{As before, chosen so that they are not associates to each other or their conjugates.} smaller in modulus than any prime factor of~$\lambda_i$ with $1 \leq i \leq s$.
	We choose $d$ characteristic roots
	%\[\{\lambda_1, \dots, \lambda_s, \overline{\lambda}_1, \dots, \overline{\lambda}_s, \lambda_{s+1} , \dots, \lambda_{d-s}\}.\]
	$\Lambda_{2s} \cup \{\lambda_{s+1}, \dots, \lambda_{d-s}\}$. 
	
	Using the argument of~\cref{ex:gaussian}, we see that
	for every Archimedean absolute value
	the dominant roots are $\lambda_1, \dots, \lambda_s, \overline{\lambda}_1, \dots, \overline{\lambda}_s$,
	that is, there are~$2s = k+3$ dominant roots. 
	Among non-Archimedean absolute values,
	we identify the absolute values $|\cdot|_p$ 
	where $p$ is a prime factor of some element in~$\Lambda_{2s}$. 
	For each of them, there are 
	$s + d-2s \geq k+2$ dominant roots.
	Moreover, for the absolute values $|\cdot|_{\lambda_i}$  with $i \geq s+1$, there are $d-1$ dominant terms, and for all other non-Archimedean absolute values all~$d$ roots are dominant.
	
	Now we choose our $k$ LRS as $u_n^{(1)}, \dots, u_n^{(k)}$ with
	\begin{align*}
		u_n^{(i)} = \sum_{j=1}^{s} j^{i-1} \lambda_j^n + \sum_{j=1}^{s} (j+s)^{i-1} \overline{\lambda}_j^n + \sum_{j=s+1}^{d-s} (j+s)^{i-1} \lambda_j^n \, .
	\end{align*}
	Let $|\cdot|_v$ be an arbitrary absolute value and let $\ell$ denote the number of dominant roots with respect to it. 
	Denote by $C \subseteq \Alg^d$ the vector subspace spanned by the associated vectors of $u^{(1)},\dots, u^{(k)}$.
	% Notice that~$\dim\pi_v(C) = \min \{\ell, k\} = k$, hence no non-trivial linear combination eliminates all $|\cdot|_v$-dominant roots.
	We show that~$\#\supp(\pi_v(\bm{w})) \geq \ell-(k-1)$ for any non-zero vector~$\bm{w} \in C$. 
	Indeed, assume that among components that correspond to $|\cdot|_v$-dominant roots there are $k$ components $I = \{i_1, \dots, i_k\}$ such that for some~$\bm{w}\neq\bm{0}$ we have $\bm{w}_i = 0$ for all~$i\in I$.
	Then $\bm{u}^{(1)}, \dots, \bm{u}^{(k)}$ restricted to components in~$I$ are linearly dependent; yet these vectors constitute a $k \times k$ Vandermonde matrix $V(i_1, \dots, i_k)$ whose determinant is non-zero, which is a contradiction.
	
	Consequently, there is no linear combination of the LRS with at most 3 Archimedean dominant roots or at most 2 non-Archimedean dominant roots, so this $(d,k)$-instance is not MSTV-reducible. 
	
	We now prove that some $\simskol(d,k)$ instances that satisfy~\eqref{eq:when-not-mstv} are not MSTV-reducible, 
	\emph{for all even~$k$ as well}.
	Assume that for some even~$k$ there exists~$d$ as in~\eqref{eq:when-not-mstv} such that $\simskol(d,k)$ is MSTV-reducible. 
	From~\cref{eq:when-not-mstv} we have \[d+1 > 2k+4 -  \left\lfloor \frac{k+3}{2}\right\rfloor + 1 = 2(k+1)+4 - \left\lfloor \frac{(k+1)+3}{2}\right\rfloor\, , \]
	and therefore, $\simskol(d+1,k+1)$ is MSTV-reducible by~\cref{lem:growambient}. 
	This contradicts what we have just proved about the odd~$k$ case.
\end{example}

%For target space of dimension 7, the smallest inherent dimension for which Corollary \ref{cor:dom_roots} gives sufficiently few dominant roots that we are able to find linear combinations of LRS leaving at most 2 or 3 dominant roots (with respect to non-Archimedean or Archimedean absolute values respectively) is 15, corresponding to the $(15,8)$-SSP. Indeed, Corollary \ref{cor:dom_roots} gives that there must be an Archimedean absolute value for which at most 10 roots are dominant, or there is a non-Archimedean absolute value for which at most 9 are dominant. However, if any linear combination leaving at most 3 Archimedean roots actually eliminates all of them, this linear combination generally is an instance of the $(5,1)$-SSP, which is open. Similarly at best the non-Archimedean case will reduce to an instance of the $(6,1)$-SSP.

%We see that the situation is now much more complicated. However, 
%\textcolor{red}{There is a way to get a bound $d_{\mathrm{min}}(t)$
	%for each target dimension~$t$ such that the $(d,d-t)$-SSP is MSTV-reducible, and so the Orbit Problem with Krylov dimension~$d$ and target dimension~$t$ is decidable, for all $d \geq d_\mathrm{min}(t)$.}\piotr{Probably need to reword this? What's it trying to say}

We will now show that for every parameter~$d-k$ there exists sufficiently large~$d$ such that $\simskol(d,k)$ is MSTV-reducible.
This is most naturally formulated in terms of the reduced Subspace Orbit Problem:
for every target dimension~$t$ there exists~$d$ such that 
$\orbprob(d,t)$ is MSTV-reducible, and hence is decidable.

%\begin{theorem}
%The $(d,k)$-SSP is MSTV-reducible if $d \geq 3^{\frac{d-k}{2}}$.
%
%The Subspace Orbit Problem with Krylov dimension~$d$ of the orbit and the target space of dimension~$t$
%is decidable if %$t \leq \frac{2\log d}{\log 3}$.
%$t \leq 2\log_3 d$.\anton{I slightly rephrased the text before -- and the theorem itself: we don't define MSTV-reducible for the orbit problem, only for the SSP}
%\end{theorem}
\mainresult*
%TODO somewhat annoyingly this is theorem 1.3 in section 3 -- first class citizen :)
\begin{proof}
	We have shown the statement for all $t \leq 6$. 
	Suppose that $t \geq 7$ and that the theorem has been proven for every target dimension smaller than~$t$. Pick $d,k$ such that $d-k = t$ and $d-k \leq 2 \log_3 d$, and consider an arbitrary instance of $\simskol(d,k)$.
	%Thus, taking $r = 2d/3$, there must be at most $2d/3$ dominant roots with respect to some Archimedean or non-Archimedean absolute value.\piotr{linear algebra justification} 
	Notice that for $d-k\geq7$ we have $d-k < 3^{\frac{d-k}{2} -1}$, and so
	$\frac{d}{3} \geq 3^{\frac{d-k}{2}-1} > d-k$. 
	This implies $k > \frac{2d}{3}$.
	
	%By Corollary \ref{cor:dom_roots}, for any $r$ with $d < 2r + 2 - \left \lfloor \frac{r+1}{2} \right \rfloor$, either there are at most $r$ dominant roots with respect to some Archimedean absolute value or there are at most $\left \lfloor \frac{r+1}{2} \right \rfloor + d - (r+1)$ dominant roots with respect to some non-Archimedean absolute value. 
	We apply~\cref{cor:dom_roots} with a parameter~$\frac{2d}{3}$, from which we argue that there exists an absolute value~$|\cdot|_v$ with $\ell \leq \frac{2d}{3}$ dominant terms, either Archimedean or not.
	%Given $k$ LRS, since $d \geq 3^{\frac{d-k}{2}}$ and $d-k \geq 7$ we have $k \geq \frac{2d}{3}$. 
	We can now apply~\cref{lem:echelon} for the subspace~$C\subseteq\Alg^d$ spanned by $k$ vectors of our sequences.
	Let $r:= \dim \pi_v(C)$. Notice that $r \leq \min\{\ell, k\} = \ell$. 
	If $\ell - r \in \{0,1\}$, then by~\cref{lem:echelon} 
	there exists a sequence with at most~2 $|\cdot|_v$-dominant terms. Therefore, the problem is MSTV-reducible.
	
	%Thus there is a linear combination of LRS with at most 3 Archimedean-dominant roots or at most 2 non-Archimedean-dominant roots. Either there is a linear combination which leaves at least one dominant roots, in which case the problem is MSTV-reducible, or all such linear combinations eliminate all dominant roots.
	
	We now assume that $\ell - r \geq 2$. Since $k - r \geq 1$, we deduce from~\cref{lem:echelon}
	that this instance reduces to $\simskol(d-\ell,k-r)$ by eliminating the $|\cdot|_v$-dominant terms. 
	Notice that $d-\ell - (k-r) \leq d-k-2$ by our latest assumption.
	Moreover, \[\ell \leq \frac{2d}{3} \Leftrightarrow d-\ell \geq \frac{d}{3} \Rightarrow d - \ell \geq 3^{\frac{d-k}{2}-1} = 3^{\frac{d-k-2}{2}} ,\]
	so we have $d-\ell \geq 3^{\frac{d-\ell - (k-r)}{2}}$.
	The problem we reduced to has a strictly smaller target dimension, as $d-\ell - (k-r) < d -k$ follows from $\ell - r \geq 2$.
	Consequently, $\simskol(d-\ell, k-r)$ is MSTV-reducible by induction on target dimension.
	%In the latter case, this gives a reduction to the $(d-r,k-(r-3))$-SSP or $(d-r,k-(r-2))$-SSP when there are $r$ dominant Archimedean or non-Archimedean roots respectively. By induction, these problems are MSTV-reducible if
	%\begin{align*}
	%    d-r \geq 3^{\frac{d-k-3}{2}}
	%\end{align*}
	%or
	%\begin{align*}
	%    d-r \geq 3^{\frac{d-k-2}{2}}
	%\end{align*}
	%but this holds since we may take $r \leq \frac{2d}{3}$ and we assumed $d \geq 3^{\frac{d-k}{2}}$.
\end{proof}
% \begin{remark}
	% The constants in the theorem may be optimised, however we choose not to as this complicates the proof without adding much substance.
	%We pose some question here for future work
	% \end{remark}

\section{Hardness} \label{sec:hardness}
In this section we show that if for some~$C\in (0,1)$ there is an algorithm to decide $\orbprob(d,t)$ for all $t,d$ such that $t \leq Cd$, 
then the Skolem Problem is decidable.
%
%
%
%%TODO explanation: we take many sequences and consider joint recurrence relation, still it's iff Skolem...

Given a $\Alg$-LRS~$u$, pick multiplicatively independent algebraic numbers $\mu_1, \dots, \mu_s \neq 0$ such that we may write
\begin{align*}
	u_n = L(\mu_1^n, \dots, \mu_s^n)\cdot P(n+1,\mu_1^n, \dots, \mu_s^n)
\end{align*}
for some Laurent monomial $L(x_1, \dots, x_s)$ and some polynomial $P(x_0, \dots, x_s)$ with algebraic coefficients (let $x_0$ correspond to $n+1$ and $x_i$ correspond to $\mu_i^n$).\footnote{The $\mu_i$ are not necessarily the characteristic roots of $u$ because one may need to take roots. For example if there is a multiplicative relation $\lambda_1^3 = \lambda_2^2$, one can take $\mu_1 = \lambda_1^{1/2}$, for some definition of $\lambda_1^{1/2}$.} 
Note that monomials of $P$ correspond to terms in the exponential polynomial representation of $u$. Note also that $u_n = 0$ if and only if $P(n+1,\mu_1^n, \dots, \mu_s^n) = 0$. We want to find polynomials $Q_1, \dots, Q_k$ such that the LRS defined by $Q_1P, \dots, Q_kP$ form an instance of~$\simskol(d,k)$ with $d$ small compared to $k$. 

\begin{lemma}\label{ssp-family}
	%order at most~$s$
	Given a non-degenerate LRS~$u$ with associated polynomial $P(x_0, \dots, x_s)$,
	there exists a family of $\simskol$ instances $I_1, I_2, \dots$
	such that each $I_\ell$ consists of $k = \ell^{s+1}$ non-degenerate LRS that satisfy a recurrence relation of order $d \leq \ell^{s+1}+B\ell^s$
	for a constant~$B$ that only depends on~$P$.
	Moreover, the set of simultaneous zeros $\zero(I_\ell)$ is equal to $\zero(u)$.
\end{lemma}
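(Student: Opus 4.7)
First I would realise the polynomials as all $k = \ell^{s+1}$ monomials $Q_{(b_0, \dots, b_s)}(x_0, \dots, x_s) \coloneq x_0^{b_0} x_1^{b_1} \cdots x_s^{b_s}$ with $0 \leq b_j \leq \ell-1$ for each $j \in \{0, 1, \dots, s\}$. For each such monomial~$Q_i$, setting
\[
v^{(i)}_n \coloneq (Q_i \cdot P)(n+1, \mu_1^n, \dots, \mu_s^n)
\]
defines an LRS; I then take $I_\ell \coloneq \{v^{(1)}, \dots, v^{(k)}\}$.

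To verify $\zero(I_\ell) = \zero(u)$, I first note that $L(\mu_1^n, \dots, \mu_s^n)$ is non-zero for every $n$, so $u_n = 0$ if and only if $P(n+1, \mu_1^n, \dots, \mu_s^n) = 0$. The inclusion $\zero(u) \subseteq \zero(I_\ell)$ is then immediate, since every $Q_i \cdot P$ vanishes wherever $P$ does. For the reverse inclusion, the constant monomial $Q \equiv 1$ sits in my list (taking all $b_j = 0$), so the corresponding $v^{(i)}_n$ equals $P(n+1, \mu_1^n, \dots, \mu_s^n)$, and its vanishing forces $u_n = 0$.

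Next I would bound the order of a common recurrence. Let $D_j$ denote the maximal degree of $x_j$ in $P$. Every monomial appearing in some $Q_i \cdot P$ has the form $x_0^{a_0} \cdots x_s^{a_s}$ with $0 \le a_j \le \ell - 1 + D_j$; under the substitution $x_0 = n+1$, $x_j = \mu_j^n$, it contributes the term $(n+1)^{a_0}(\mu_1^{a_1} \cdots \mu_s^{a_s})^n$, whose polynomial factor in $n$ has degree at most $a_0$. Hence every $v^{(i)}$ satisfies the recurrence whose characteristic polynomial is
\[
\prod_{(a_1, \dots, a_s)} \bigl(X - \mu_1^{a_1} \cdots \mu_s^{a_s}\bigr)^{\ell + D_0},
\]
with $(a_1, \dots, a_s)$ ranging over $\prod_{j=1}^s \{0, \dots, \ell - 1 + D_j\}$. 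Its order is at most $(\ell + D_0) \prod_{j=1}^s (\ell + D_j)$, which expands to $\ell^{s+1} + B\ell^s$ after absorbing all lower-order terms into a constant $B$ depending only on $P$.

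The most delicate step is establishing non-degeneracy of each $v^{(i)}$. This will follow from multiplicative independence of $\mu_1, \dots, \mu_s$: if the ratio of two distinct characteristic roots $\mu_1^{a_1}\cdots\mu_s^{a_s}$ and $\mu_1^{a'_1}\cdots\mu_s^{a'_s}$ were a root of unity~$\zeta$ of order~$N$, then raising to the power $N$ gives $\mu_1^{N(a_1 - a'_1)} \cdots \mu_s^{N(a_s - a'_s)} = 1$, and multiplicative independence forces $a_j = a'_j$ for all $j$, contradicting distinctness. Hence each $v^{(i)}$ is non-degenerate, and $I_\ell$ has all the properties claimed.
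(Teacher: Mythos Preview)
Your construction and most of your verification match the paper's proof closely: the same cube of monomials $Q_i$, the same substitution, the same non-degeneracy argument via multiplicative independence, and an equivalent order bound (you use per-variable degrees $D_j$ where the paper uses a uniform $r$, but both give $\ell^{s+1}+B\ell^s$). Your argument for $\zero(I_\ell)=\zero(u)$ via the constant monomial $Q\equiv 1$ is a valid alternative to the paper's observation that each $Q_i(n+1,\mu_1^n,\dots,\mu_s^n)$ is non-zero.

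There is, however, one genuine omission. For $I_\ell$ to be a $\simskol(d,k)$ instance with $k=\ell^{s+1}$, the LRS $v^{(1)},\dots,v^{(k)}$ must be \emph{linearly independent}; otherwise the effective number of sequences is smaller and the $(d,k)$ relationship you need for the hardness reduction collapses. You never address this. The paper handles it by noting that a linear dependence $\sum_i c_i v^{(i)}=0$ among the $v^{(i)}$ forces a linear dependence among the polynomials $Q_iP$ (distinct monomials $x_0^{a_0}\cdots x_s^{a_s}$ give distinct exponential-polynomial terms $(n+1)^{a_0}(\mu_1^{a_1}\cdots\mu_s^{a_s})^n$ by multiplicative independence, so the map from polynomials to LRS is injective), hence $(\sum_i c_i Q_i)P=0$, hence $\sum_i c_i Q_i=0$, which is impossible since the $Q_i$ are distinct monomials. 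This step uses the same multiplicative-independence tool you already invoked for non-degeneracy, so it is not hard to add, but it is not optional.
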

\begin{proof}
	We will identify monomials $x_0^{m_0} x_1^{m_1} \dots x_s^{m_s}$ with points in $\Z^{s+1}$ by
	\begin{align*}
		x_0^{m_0} x_1^{m_1} \dots x_s^{m_s} \mapsto (m_0, m_1, \dots, m_s) \, .
	\end{align*}
	Let $R$ be the set of points $(m_0, m_1, \dots, m_s)$ such that $P$ has a term $x_0^{\nu} x_1^{m_1} \dots x_s^{m_s}$ for $0 \leq m_0 \leq \nu$.\footnote{This different treatment of $x_0$ is necessary due to the difference in how powers of $n$ affect the order of an LRS.} 
	
	We fix an arbitrary integer~$\ell$ and construct the instance~$I_\ell$.
	Let \[C_\ell = \{(m_0,m_1, \dots, m_s) \in \Z^{s+1} : 0 \leq m_i \leq \ell-1\}.\] 
	Let $k = \ell^{s+1} = |C_\ell|$, and let $Q_1, \dots, Q_k$ be the monomials represented by each point in $C_\ell$. Let $T$ be the set of monomials present in $Q_1P, \dots, Q_kP$. 
	Consider the Minkowski sum 
	\begin{align*}
		R + C_\ell = \{\bm{x} + \bm{y} : \bm{x}\in R, \,\bm{y} \in C_\ell\} \, .
	\end{align*}
	We have $T \subseteq R + C_\ell \subseteq C_{\ell + r}$ where $r$ is the largest component of any point of $R$. 
	
	We now view $Q_iP$, $1\leq i \leq k$, as LRS by assigning $x_0 = n+1$ as well as $x_j = \mu_j^n$ for all $1\leq j \leq s$. 
	They constitute $I_\ell$. 
	These~$k$ LRS are linearly independent. Indeed, their linear dependence would imply linear dependence of the LRS $u^{(i)}_n = Q_i(n+1,\mu_1^n, \dots, \mu_s^n)$, which is impossible as each $u^{(i)}_n = (n+1)^{m_0} \mu_1^{m_1 n} \dots \mu_s^{m_s n}$ for distinct tuples $(m_0, \dots, m_s)$. 
    The LRS corresponding to $Q_1P, \dots, Q_kP$ thus form a $\simskol(d,k)$ instance with
	\begin{align*}
		d \leq |T| \leq |R+C_\ell| \leq |C_{\ell + r}| = (\ell+r)^{s+1} \leq \ell^{s+1} + B \ell^s,
	\end{align*}
	where $B$ is a constant depending only on $r$ and $s$. 
	Note that the quotients of distinct characteristic roots of the LRS are not roots of unity since such quotients are always Laurent monomials of the $\mu_i$. 
	Indeed, let $\mu_1^{t_1}\dots\mu_s^{t_s}$ for $t_1, \dots, t_s \in \Z$ be a root of unity of order~$m > 0$. 
	Then $(mt_1, \dots, mt_s)$ needs to be a multiplicative relation of $\mu_1, \dots, \mu_s$ contradicting our assumption.
	
	% \anton{put this discussion before the lemma}. Here to make sure $d \leq |T|$ we really did need to put the extra points $(m_0,m_1,\dots,m_s)$ into $T$ for all such $m_i$ such that $P$ has a term $x_0^{\nu} x_1^{m_1} \dots x_s^{m_s}$. 
	
	%Since $Q_1, \dots, Q_k$ are monomials with corresponding LRS of the form $(n)^{m_0} \mu_1^{m_1 n} \dots \mu_s^{m_s n}$, the set of simultaneous zeros for the LRS corresponding to $Q_1P, \dots,Q_kP$ is exactly the set of~$n$ such that $P(n, \mu_1^n, \dots, \mu_s^n) = 0$, as well as $n=0$ if and only if $m_0 > 0$.
	Since $Q_i(n+1,\mu_1^n, \dots, \mu_s^n) \neq 0$ for all~$1 \leq i\leq k, \, n \in \N$, 
	the set of simultaneous zeros $\zero(I_\ell)$
	is exactly the set of~$n$ such that $P(n+1, \mu_1^n, \dots, \mu_s^n) = 0$.
	Therefore, it is exactly the set~$\zero(u)$.
\end{proof}

So, for $k$ arbitrarily large, the Skolem Problem on $u$ reduces to $\simskol(\tilde{d},k)$ with
% \anton{for each such $k$, there exists $\tilde{d} \leq k + \dots$ which is hard. Then for every $d \geq \tilde{d}$ it's hard. In particular, for $d \geq k + \dots$}
\begin{align*}
	\tilde{d} \leq k + Bk^{\frac{s}{s+1}} \, ,
\end{align*}
for some constant $B$ depending on $u$, by solving an instance given by \cref{ssp-family} to decide if $\zero(u)\neq \varnothing$. Therefore, if $\simskol(d,k)$ is decidable for some $d \geq k + B k^{\frac{s}{s+1}} \geq \tilde{d}$, 
then one can determine whether $\zero(u)$ is non-empty.
%the Skolem Problem can be answered for $u$. 
This is employed in the proof of the hardness result next.

\begin{theorem}
	If there exists $C \in (0,1)$ such that $\orbprob(d,t)$ is decidable for all $(d,t)$ that satisfy~$t \leq Cd$, 
	then the Skolem Problem is decidable.
\end{theorem}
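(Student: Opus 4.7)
The plan is to combine~\cref{ssp-family} with the equivalence between $\simskol$ and $\orbprob$ from~\cref{prop:reduction}. Let $u$ be an arbitrary $\Alg$-LRS whose Skolem Problem we wish to decide. I first pass to the non-degenerate subsequences $\langle u_{Ln+r}\rangle_{n\in\N}$, $0 \leq r \leq L-1$, so as to reduce the task to the case of a non-degenerate $u$ (which is what~\cref{ssp-family} requires).

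Next, let $(s, B)$ be the parameters associated with this non-degenerate $u$ by~\cref{ssp-family}. For each $\ell \in \N$, the lemma produces an instance $I_\ell$ of $\simskol(d, k)$ consisting of $k = \ell^{s+1}$ non-degenerate LRS satisfying a common recurrence of order
\[
    d \ \leq\ \ell^{s+1} + B\ell^s \ =\ k + B\,k^{s/(s+1)},
\]
with $\zero(I_\ell) = \zero(u)$. By~\cref{prop:reduction}, solving the Simultaneous Skolem Problem on $I_\ell$ is equivalent to solving an instance of $\orbprob(d, d-k)$, and both return a positive answer exactly when $\zero(u)$ is non-empty.

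The quantitative heart of the argument is the bound
\[
    \frac{d-k}{d}\ \leq\ \frac{B\,k^{s/(s+1)}}{k}\ =\ B\,k^{-1/(s+1)},
\]
which tends to $0$ as $\ell \to \infty$. Hence, for any fixed $C \in (0,1)$, one can effectively pick $\ell$ large enough that $d - k \leq C d$; the hypothesised algorithm for $\orbprob(d, t)$ with $t \leq C d$ then disposes of the resulting instance and thereby decides Skolem on $u$.

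The only real obstacle is bookkeeping: one must verify that $s$ and $B$, and consequently the threshold $\ell_0(C, s, B)$, are computable from~$u$. This follows directly from their construction in~\cref{ssp-family}, since both quantities are read off from the exponential-polynomial representation of~$u$ (specifically, $s$ is the number of multiplicatively independent bases~$\mu_i$ and $B$ is determined by the support of the polynomial~$P$). No new combinatorial content is required beyond what is already in~\cref{ssp-family,prop:reduction}.
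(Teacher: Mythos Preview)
Your proposal is correct and follows essentially the same approach as the paper's proof: invoke \cref{ssp-family} to produce, for large enough $\ell$, an instance of $\simskol(d,k)$ with $d-k \leq Cd$ and $\zero(I_\ell) = \zero(u)$, then apply \cref{prop:reduction} and the hypothesised oracle for $\orbprob$. Your explicit reduction to non-degenerate subsequences and your remark on the computability of $s$, $B$, and the threshold $\ell_0$ are minor additions, but the core argument and the quantitative estimate $\tfrac{d-k}{d} \le B k^{-1/(s+1)} \to 0$ coincide with the paper's.
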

\begin{proof}
	Under the hypothesis, $\simskol(d,k)$ is decidable for all $d \geq \frac{1}{C}(d-k)$ by~\cref{prop:reduction}. 
	We now use this to decide for any non-degenerate LRS~$u$ whether $\zero(u)\neq\varnothing$. 
	%Note that once this can be done, one can decide Skolem Problem for any LRS.
	Given~$u$ with a polynomial $P(x_0,\dots,x_s)$, 
	we shall find $d \geq k > 0$ such that %both
	\begin{equation}\label{eq:hypo}
		\frac{1}{C}(d-k) \leq d
	\end{equation}
	while some instance of $\simskol(d,k)$ encodes the Skolem Problem for~$u$.
	% {\color{gray}
		% 	and \begin{equation}\label{eq:asymp}
			% 		d \leq c(d-k)^{1+\varepsilon}
			% 	\end{equation}\anton{can't we just set $d = \ell^{s+1}+B\ell^s$?}
		% 	hold, where $c, \varepsilon > 0$ only depend on~$u$.
		% }
	As in~\cref{ssp-family}, consider $k = \ell^{s+1}$ as $\ell$ increases, and recall the constant~$B$ defined ibid.
	We set $d\coloneqq \ell^{s+1} + B \ell^s$.
	In order to satisfy~\eqref{eq:hypo}, we choose $\ell \geq \frac{1-C}{C}B$ 
	as we can take~$\ell$ arbitrary large. 
	%%	We have \[
	%%		d = \ell^{s+1} + B\ell^s > \ell^{s+1} = B^{-\frac{s+1}{s}}(d-k)^{\frac{s+1}{s}} 	
	%%	\]
	% {\color{gray}	Since we can take~$\ell$ arbitrary large, we choose it in order to satisfy~\eqref{eq:hypo}.
		% 	Now set $c \coloneq B^{-\frac{s+1}{s}}$ and $\varepsilon \coloneq \frac{1}{s}$.
		% 	Then, \eqref{eq:asymp} implies $d \geq k + B^{\frac{s}{s+1}}$ 
		% 	which can be proved by contraposition.
		% 	Indeed, take any~$d$ satisfying $d < k+Bk^{\frac{s}{s+1}}$. 
		% 	For any fixed value of $k < d$, the function \[f(d) = \frac{d}{(d-k)^{\frac{s+1}{s}}}\]
		% 	is decreasing in $d$. Therefore, it attains its minimal value when $d$ is as large as possible. So
		% 		\begin{align*}
			% 				f(d) \geq f(k + Bk^{\frac{s}{s+1}}) = \frac{k + Bk^{\frac{s}{s+1}}}{B^{\frac{s+1}{s}}{\color{red}k}} \geq B^{-\frac{s+1}{s}} + B^{-\frac{1}{s}}\cdot k^{-\frac{1}{s+1}} {\color{red} > B^{-\frac{s+1}{s}}}\, ,
			% 			\end{align*}
		% 		and hence $d > B^{-\frac{s+1}{s}}(d-k)^{\frac{s+1}{s}} = c(d-k)^{1+\varepsilon}$.
		% %		contradicting the assumption.
		% %	We achieve this by %choosing~$k = \ell^{s+1}$ large enough
		% %	%and $d \geq \ell^{s+1} + B \ell^{s}$, where $B$ is as in~\cref{ssp-family}.
		% %	considering an instance~$I_\ell$ as in~\cref{ssp-family}
		% %	for~$\ell$ large enough.
		% %	As the corresponding $(d,k)$-SSP has a ``yes'' instance~$I_\ell$,
		% %	by~\cref{ssp-family} solving it would mean answering whether $\zero(u) \neq\varnothing$.
		
		% 	We now have $d \geq k + Bk^{\frac{s}{s+1}}$ }
	
	$\simskol(d,k)$ is decidable by the hypothesis of the theorem.
	Let $\tilde{d}$ be the order of the LRS in the instance $I_\ell$ constructed in~\cref{ssp-family}.
	By construction, $d \geq \tilde{d}$ and so $\simskol(\tilde{d},k)$ is decidable, too.
	We can thus answer whether $\zero(u) \neq\varnothing$.
\end{proof}

% What we have just shown is the following. For any LRS $u$, there exists $c(u), \varepsilon(u) > 0$ such that if there is $d \geq k > 0$ such that $d \leq c(u)(d-k)^{1+\varepsilon(u)}$ and the $(d,k)$-SSP is decidable, then one may decide the Skolem Problem for $u$.

\section{Complexity} \label{sec:complexity}
In this section, we will prove complexity upper bounds on the reduced Subspace Orbit Problem in the MSTV-reducible cases we have identified. 

We emphasise that unlike in previous sections, we consider $\orbprob_\KK$ over the field $\KK=\Q$.
Our input~$I$ will be a matrix $M \in \Q^{d\times d}$ such that no ratio of distinct eigenvalues of~$M$ is a root of unity, 
an initial vector $\bm x \in \Q^d$, 
and a linearly independent set of rational vectors $\{\bm v^{(1)}, \dots, \bm v^{(t)}\}$ spanning a target subspace $S \subseteq \Q^d$.
%{\color{gray}We assume that the orbit defined by this system has Krylov dimension $d$.} 
Let $\|I\|$ measure the bit-size of an input~$I$. 

Recall from~\cref{sec:intro} that an input to $\simskol_\Q(d,k)$ is $k$ linearly independent, non-degenerate LRS $\{u^{(1)}, \dots, u^{(k)}\}$ satisfying the same recurrence relation of order $d$; we assume the input is structured as
a vector of rational coefficients of the recurrence relation, 
along with~$k$ rational vectors of initial values.
We denote the sum of their bit-sizes by $\|(u^{(1)}, \dots, u^{(k)})\|$.

The notation $\poly(x)$ will denote any quantity bounded above by $x^{O(1)}$, and $2^{\poly(x)}$ any quantity bounded above by $2^{x^{O(1)}}$. 

To decide~$\orbprob_\Q(d,t)$ with $t \leq 2\log_3 d$, we proceed as follows:
\begin{steps}
	\item Show that $\orbprob_\Q$ on $(M,\bm x, S)$ reduces to $\simskol_\Q$ on $u^{(1)}, \dots, u^{(k)}$ such that $\|(u^{(1)}, \dots, u^{(k)})\| = \poly(\|(M,\bm{x}, S)\|)$. 
	\item If there is a linear combination of the LRS~$u^{(i)}$ that lies in the MSTV class, then such a linear combination exists, 
	whose coefficients (of the exponential polynomial form~\eqref{eq:exp-poly}) have bounded degree and height.
	\item Apply bounds on the zeros of non-degenerate LRS in the MSTV class to prove upper bounds on the simultaneous zeros of $u^{(1)},\dots, u^{(k)}$. 
\end{steps}
%%Note that an order $d$ LRS $\LRS{u}$ is input as the coefficients of its recurrence relation and its initial values $u_0, \dots, u_{d-1}$. 

%Then, we will show that if there is a linear combination of the $u^{(i)}$ that lies in the MSTV class, then one can prove such a linear combination exists, whose coefficients (of the exponential polynomial form) have bounded degree and height. 

%Then we apply bounds on the zeros of non-degenerate LRS in the MSTV class to prove upper bounds on the simultaneous zeros of $u^{(1)},\dots, u^{(k)}$. 

We note that Steps 1--3\ as listed justify the correctness of~\cref{cor:exp_zero_bound}, 
which provides an effective upper bound on the solutions~$n$ to $\orbprob_\Q$. 
The actual decision procedure amounts to the ``guess and check'' described in the proof of~\cref{thm:complex-end-result}. 
To this end, our complexity results require neither finding the sequences~$u^{(i)}$ (Step~1), nor explicitly computing an MSTV sequence (Step~2).

%%%Step 1
For Step~1, the reduction from~$\orbprob_\Q$ to $\simskol_\Q$, we need to prove that there is a basis of $S^\perp$ which is not too large. We use a version of Siegel's lemma, by Bombieri and Vaaler \cite[Theorem 2]{Bombieri1983}.
\begin{theorem} \label{thm:siegel}
Let $A \in \Z^{m \times n}$ be a matrix with $n>m$, whose rows are linearly independent. Suppose $|a_{ij}| < B$ for all $1 \leq i \leq m$ and $1 \leq j \leq n$. Then there are $n-m$ linearly independent integral solutions $\bm x^{(i)} = (x^{(i)}_1, \dots, x^{(i)}_n)^\top$ to
$A \bm x = 0$
such that 
\begin{align*}
    \prod_{i=1}^{n-m} \max_j |x^{(i)}_j| \leq G^{-1} \sqrt{|\det (A A^\top)|} \, ,
\end{align*}
where $G$ is the greatest common divisor of all the $m \times m$ minors of $A$.
\end{theorem}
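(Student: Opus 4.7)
\bigskip

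\noindent\textbf{Proof proposal for Theorem~\ref{thm:siegel}.}
The plan is to realise the integer solution set
$\Lambda \coloneqq \{\bm{x}\in\Z^n : A\bm{x} = \bm{0}\}$
as a lattice of rank $n-m$ inside~$\Z^n$, compute its covolume exactly, and then apply Minkowski's second theorem (with respect to the sup-norm unit ball) to obtain $n-m$ linearly independent short vectors whose sup-norms multiply to a bound controlled by the covolume.

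First I would reduce to the case where $\Lambda$ is identified as a full-rank lattice inside the $(n-m)$-dimensional kernel subspace $V \coloneqq \ker(A) \otimes \RR \subseteq \RR^n$.
Since the rows of $A$ are linearly independent over~$\Q$, $V$ has dimension exactly $n-m$, and $\Lambda = \Z^n \cap V$ is a lattice of rank $n-m$ in~$V$.
The central computation is the identity
\[
	\mathrm{covol}(\Lambda) \;=\; \frac{\sqrt{\det(AA^T)}}{G},
\]
where the covolume is taken with respect to the Euclidean metric on~$V$ induced by that of~$\RR^n$.
I would prove this by bringing $A$ to Smith normal form: write $A = UDV$ with $U\in\mathrm{GL}_m(\Z)$, $V\in\mathrm{GL}_n(\Z)$, and $D=[\mathrm{diag}(d_1,\dots,d_m)\mid 0]$ where the $d_i$ are the elementary divisors of $A$.
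Changing integer basis by~$V$ preserves both $\Z^n$ and Euclidean covolumes up to an orthogonal correction, while changing~$U$ does not affect the kernel.
From the Smith form the covolume of the kernel lattice is visibly $d_1 \cdots d_m = G$ times a correction term, and a direct computation of $\det(AA^T) = \det(UDD^T U^T) = (\det U)^2 (d_1\cdots d_m)^2$ supplies the remaining factor; this yields the claimed identity after matching with the classical fact that $G$ is the gcd of the $m\times m$ minors of $A$ (invariant under~$V$).

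Next I would invoke Minkowski's second theorem for the lattice $\Lambda$ inside $V$ with respect to the convex symmetric body $K \coloneqq [-1,1]^n \cap V$.
Writing $\lambda_1 \leq \dots \leq \lambda_{n-m}$ for the successive minima with respect to~$K$ (so each~$\lambda_i$ is the smallest~$r$ such that $rK$ contains~$i$ linearly independent lattice vectors), Minkowski's theorem gives
\[
	\lambda_1 \lambda_2 \cdots \lambda_{n-m} \cdot \mathrm{vol}(K) \;\leq\; 2^{n-m} \, \mathrm{covol}(\Lambda).
\]
Because $K$ is a slice of the unit cube through the origin, a standard lower bound (Vaaler's cube-slicing inequality, or the trivial observation $\mathrm{vol}(K)\geq 2^{n-m}/\binom{n}{n-m}^{1/2}$) yields $\mathrm{vol}(K)\geq 2^{n-m}$; this is the place where Bombieri and Vaaler invoke Vaaler's cube-slicing lemma to eliminate the potential loss.
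Combining with the covolume formula, I obtain linearly independent lattice vectors $\bm{x}^{(1)}, \dots, \bm{x}^{(n-m)}$ with $\|\bm{x}^{(i)}\|_\infty \leq \lambda_i$ and
\[
	\prod_{i=1}^{n-m} \max_j |x_j^{(i)}| \;\leq\; \prod_{i=1}^{n-m} \lambda_i \;\leq\; \frac{\sqrt{\det(AA^T)}}{G},
\]
which is the conclusion of the theorem.

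The main obstacle will be the clean passage from $\mathrm{vol}(K)\cdot \prod \lambda_i \leq 2^{n-m}\mathrm{covol}(\Lambda)$ to the sharp bound stated, because the cube-slice volume is only controlled by Vaaler's cube-slicing inequality.
A cleaner but slightly weaker path, which already suffices in every application of the theorem in this paper, is to replace $K$ by the Euclidean unit ball $B \cap V$ and absorb the resulting dimensional factor into the constants; the argument is otherwise unchanged.
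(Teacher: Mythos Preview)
The paper does not prove this theorem: it is quoted as \cite[Theorem~2]{Bombieri1983} and used as a black box. Your outline (kernel lattice $\Lambda=\Z^n\cap\ker A$, covolume identity, Minkowski's second theorem, Vaaler's cube-slicing lemma to get $\mathrm{vol}([-1,1]^n\cap V)\ge 2^{n-m}$) is exactly the Bombieri--Vaaler strategy, so at the level of architecture you are on target.

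There is, however, a genuine gap in your covolume computation. Writing $A=UDV$ in Smith normal form with $V\in\mathrm{GL}_n(\Z)$, you assert $\det(AA^T)=\det(UDD^TU^T)=(d_1\cdots d_m)^2$. This is false: $AA^T=UDVV^TD^TU^T$, and $V$ is unimodular but not orthogonal, so $VV^T\neq I$ in general. For instance with $A=(1\ 1)$ one has $\det(AA^T)=2$ while $d_1=1$. The same issue is hidden in the phrase ``changing integer basis by $V$ preserves \dots\ Euclidean covolumes up to an orthogonal correction'': a $\mathrm{GL}_n(\Z)$ change of basis preserves the full-dimensional volume (since $|\det V|=1$) but \emph{not} the induced Euclidean metric on a proper subspace, and hence not the covolume of a sublattice in its span. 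The covolume identity $\mathrm{covol}(\Lambda)=G^{-1}\sqrt{\det(AA^T)}$ is correct, but its proof requires a different argument: one route is to show that if $\Lambda'\subset\Z^n$ is the primitive sublattice with $\Lambda'\otimes\RR$ equal to the row span of $A$, then $\mathrm{covol}(\Lambda)\cdot\mathrm{covol}(\Lambda')=1$ (orthogonal duality of primitive sublattices of $\Z^n$), and separately $\mathrm{covol}(\Lambda')=G^{-1}\sqrt{\det(AA^T)}$ since the rows of $A$ generate a sublattice of index $G$ in $\Lambda'$ with Gram determinant $\det(AA^T)$. Your Smith-form reduction can be salvaged, but only after separating the two factors this way rather than collapsing $VV^T$ to the identity.
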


\begin{lemma} \label{lem:basis_bound}
	Let $S$ be a subspace of~$\Q^d$ with $\dim S = d-k$. 
There is a basis $\bm w^{(1)}, \dots, \bm w^{(k)} \in \Q^d$ of $S^\perp$ such that $\|\bm w^{(i)}\| = \poly(\|S\|)$. 
\end{lemma}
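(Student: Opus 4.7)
The plan is to reduce the problem to finding small integer solutions of a homogeneous linear system and then invoke Siegel's lemma in the form of \cref{thm:siegel}. First, I would clear denominators in the given basis $\{\bm v^{(1)}, \dots, \bm v^{(d-k)}\}$ of $S$. Multiplying each $\bm v^{(i)}$ by the least common multiple of the denominators of its entries produces integer vectors whose entries have bit-size $\poly(\|S\|)$; let $A \in \Z^{(d-k)\times d}$ be the matrix whose rows are these scaled vectors. Note that $S^\perp$ is exactly the kernel of $A$, so a basis of $S^\perp$ is obtained from $k$ linearly independent integer solutions of $A\bm w = \bm 0$.

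Next, I would apply \cref{thm:siegel} to $A$, obtaining $k$ linearly independent integer vectors $\bm w^{(1)}, \dots, \bm w^{(k)}$ satisfying
\[
	\prod_{i=1}^{k} \max_j |w^{(i)}_j| \;\leq\; G^{-1}\sqrt{|\det(AA^\top)|} \;\leq\; \sqrt{|\det(AA^\top)|}.
\]
The key trick is that each $\bm w^{(i)}$ is a non-zero integer vector, so $\max_j |w^{(i)}_j|\geq 1$ for every $i$. Hence the bound on the product yields the same bound on each individual factor: $\max_j |w^{(i)}_j| \leq \sqrt{|\det(AA^\top)|}$ for all $i$.

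Finally, I would bound $|\det(AA^\top)|$ by Hadamard's inequality as the product of squared Euclidean row norms of $A$. Each row of $A$ has $d$ entries of bit-size $\poly(\|S\|)$, so $\|A_i\|_2^2 \leq 2^{\poly(\|S\|)}$, and taking the product over $d-k$ rows still gives $|\det(AA^\top)| \leq 2^{\poly(\|S\|)}$. Therefore $\max_j|w^{(i)}_j| \leq 2^{\poly(\|S\|)}$, so each entry of $\bm w^{(i)}$ has bit-size $\poly(\|S\|)$, as required.

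I do not foresee a serious obstacle: the only subtle point is converting the product bound from Siegel's lemma into a per-vector bound, which is handled by the observation that each non-zero integer vector has max-entry at least $1$. Everything else is bookkeeping on bit-sizes through clearing denominators and Hadamard's inequality.
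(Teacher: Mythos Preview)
Your proposal is correct and follows essentially the same approach as the paper: clear denominators to get an integer matrix, apply Siegel's lemma (\cref{thm:siegel}) to obtain $k$ linearly independent integer vectors in the kernel, and bound $\sqrt{|\det(AA^\top)|}$ via Hadamard's inequality. You are in fact slightly more explicit than the paper about why the product bound from Siegel's lemma yields a per-vector bound (using $\max_j|w^{(i)}_j|\geq 1$ for non-zero integer vectors), which the paper leaves implicit.
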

\begin{proof}
$S$ is specified by a set of linearly independent rational vectors $\bm v^{(1)}, \dots, \bm v^{(t)} \in \Q^d$, where $t=d-k$. The lowest common multiple $L$ of the denominators of all the entries has modulus at most $2^{dt\|S\|}$. %Then $L \bm v^{(1)}, \dots, L \bm v^{(t)}$ all have integer entries. 
Let~$A$ be an integral $t \times d$ matrix
whose rows are $L \bm v^{(1)}, \dots, L \bm v^{(t)}$.
%\begin{align*}
%    A = \begin{pmatrix}
%        L \bm v^{(1)} \\ \vdots \\ L \bm v^{(t)}
%    \end{pmatrix}
%\end{align*}
%then $A$ is an integral $t \times d$ matrix. 
By \cref{thm:siegel}, there are $d-t$ linearly independent integral solutions $\bm w^{(1)}, \dots, \bm w^{(d-t)}$ to $A \bm x = 0$ that satisfy
\begin{align*}
    \prod_{i=1}^{d-t} \max_j | w^{(i)}_j| \leq \sqrt{|\det (A A^\top)|} \, .
\end{align*}
Then $\bm w^{(1)}, \dots, \bm w^{(d-t)}$ is exactly a basis of $S^\perp$. 
It remains to show \[\sqrt{|\det (A A^\top)|} \leq 2^{\poly(\|S\|)}\] to get the claimed result. By Hadamard's inequality,
\begin{align*}
    |\det (A A^\top)| &\leq (\max_{i,j} |(AA^\top)_{ij}|)^t t^{t/2} \leq (d \max_{i,j} |A_{ij}|^2)^t t^{t/2}  \\
    &\leq (dL^2)^t (\max_{i,j} |v^{(i)}_j|)^{2t} t^{t/2} = 2^{\poly(\|S\|)} \, .
\end{align*}
\end{proof}

\begin{lemma} \label{lem:char_poly_bound}
The characteristic polynomial $g_A$ of a matrix $A \in \Q^{d \times d}$ has coefficients of bit size $\poly(\|A\|)$. 
\end{lemma}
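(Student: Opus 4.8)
The plan is to bound the coefficients of $g_A$ directly via their combinatorial description, using only elementary facts about determinants. Recall that if $A \in \Q^{d\times d}$ and we write each entry as $a_{ij} = p_{ij}/q_{ij}$ in lowest terms, then the bit-size $\|A\|$ controls both $\max_{i,j}|a_{ij}|$ and the least common multiple $L$ of all denominators $q_{ij}$: indeed $\log|a_{ij}| = \poly(\|A\|)$ and $\log L \leq \sum_{i,j}\log q_{ij} = \poly(\|A\|)$. So after clearing denominators, $LA$ has integer entries of bit-size $\poly(\|A\|)$, and it suffices to bound the coefficients of the characteristic polynomial of an integer matrix whose entries have bit-size $\poly(\|A\|)$, since scaling $A \mapsto LA$ only scales the roots (hence rescales the coefficients of $g_A$ by powers of $L$, which are themselves of bit-size $\poly(\|A\|)$).

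The key step is then the standard expansion: write $g_A(X) = X^d + \sum_{i=1}^{d} (-1)^i e_i X^{d-i}$, where $e_i$ is the sum of all $i\times i$ principal minors of $A$. Each such minor is a determinant of an $i\times i$ submatrix, and by the Leibniz formula it is a sum of $i! \leq d!$ products of $i$ entries of $A$. If $H := \max_{i,j}|a_{ij}|$, then each $i\times i$ minor has modulus at most $i!\,H^i \leq d!\,H^d$, and there are $\binom{d}{i} \leq 2^d$ of them contributing to $e_i$; hence $|e_i| \leq 2^d d!\, H^d$. Taking logarithms, $\log|e_i| \leq d + \log(d!) + d\log H = \poly(\|A\|)$, since $d \leq \|A\|$ and $\log H = \poly(\|A\|)$. (One may equally invoke Hadamard's inequality to bound each minor by $i^{i/2}H^i$, with the same conclusion.)

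I do not expect any real obstacle here; the lemma is purely bookkeeping. The only point requiring a little care is the reduction from rational to integral entries — one must check that clearing denominators and then the inverse rescaling of the resulting polynomial both stay within $\poly(\|A\|)$ — but this is immediate once one observes that $\log L$ is bounded by the sum of the bit-sizes of the denominators, which is at most $\|A\|$ itself. Alternatively, one can skip the reduction entirely and run the minor-expansion argument over $\Q$ directly, bounding the denominator of $e_i$ by $L^i$ and the numerator as above; this gives the claim in one step.
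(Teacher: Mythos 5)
Your proposal is correct and follows essentially the same route as the paper's proof: clear denominators to reduce to an integer matrix with entries of bit-size $\poly(\|A\|)$, express each coefficient of the characteristic polynomial as a sum of $\binom{d}{r}$ principal $r\times r$ minors, bound each minor (Leibniz or Hadamard — the paper uses Hadamard, you note both), and observe that rescaling back divides the $r$-th coefficient by an $r$-th power of the common denominator, which stays polynomial in $\|A\|$. The only cosmetic difference is that the paper fixes a single scalar $a$ with $A = B/a$ and works with $g_B$ before dividing, whereas you phrase the argument somewhat interchangeably over $\Q$ and $\Z$ — but your closing remark about tracking denominators of $e_i$ by $L^i$ supplies the needed bookkeeping, so the argument is sound.
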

\begin{proof}
First, write $A = \frac{1}{a}B$, where $B$ has entries in $\Z$, noting we can choose $a \neq 0$ such that $\|a\|,\|B\| = \poly(\|A\|)$. 
Let $g_B(x) = x^d + c_1 x^{d-1} + \dots + c_d$. If $J \subseteq \{1, \dots , d\}$ then let $B_J$ denote the principal submatrix formed of rows and columns with indices in~$J$. Then the coefficients $c_r$ may be expressed as a sum of principal minors of $B$ (see e.g. \cite[p.294]{meyer_matrix_2023}) as
\begin{align*}
    c_r = \sum_{|J| = r} (-1)^r \det(B_J) \, .
\end{align*}
Write $B = (b_{ij})_{1 \leq i,j \leq d}$. There are $\binom{d}{r}$ many $r \times r$ principal minors of $A$ so by Hadamard's inequality applied to $B_J$ we get
\begin{align*}
    |c_r| \leq \binom{d}{r} \left(\max_{1 \leq i , j \leq d} |b_{ij}| \right)^r r^{r/2} \leq d! \|B\|^d d^{d/2}
\end{align*}
and therefore $\|c_r\| = \poly(\|A\|)$. Now note that $g_B(a\cdot A) = 0$ and after dividing by $a^d$, it is easy to see that
$g_A(x)$ is
\begin{align*}
    g_A(x) =x^d + \frac{c_1}{a} x^{d-1} + \dots + \frac{c_d}{a^d} \, ,
\end{align*}
so indeed every coefficient $c_r/a^{r}$, $1 \leq r \leq d$, has bit-size $\poly(\|A\|)$, since $\|a\|,\|c_r\| = \poly(\|A\|)$, for all $1 \leq r \leq d$.
% Note that $g_A(x) = \det(xI-A)$. The bound on the coefficients of $g_A$ follows from Hadamard's inequality on $\det(xI-A)$.\anton{elaborate? each coeff as a sum of minors of $A$}
\end{proof}
The proposition below refines a reduction from~\cref{prop:reduction}.
\begin{proposition} \label{prop:Orbit_to_SSP}
$\orbprob_\Q$ on $(M,\bm x, S)$ reduces to $\simskol_\Q$ on $u^{(1)}, \dots, u^{(k)}$, for $\|(u^{(1)}, \dots, u^{(k)})\| = \poly(\|(M,\bm x, S)\|)$. 
%The reduction preserves non-degeneracy.
\end{proposition}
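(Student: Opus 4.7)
The plan is to instantiate the reduction given in \cref{prop:reduction} and then quantify the bit-size of the output using \cref{lem:basis_bound} and \cref{lem:char_poly_bound}. Recall that the reduction selects a basis $\{\bm u_1, \dots, \bm u_k\}$ of $S^\perp$ and forms the LRS $u^{(i)}_n = \bm u_i^\top M^n \bm x$, which satisfy the recurrence whose characteristic polynomial is $g_M$. To represent an instance of $\simskol_\Q$, one must write down (a) the coefficients of this recurrence and (b) the initial values $u^{(i)}_0, \dots, u^{(i)}_{d-1}$ for each $1 \leq i \leq k$. Non-degeneracy, linear independence of the $u^{(i)}$, and correctness of the reduction all follow verbatim from \cref{prop:reduction}; only bit-size control is new.

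First, I would apply \cref{lem:basis_bound} to produce a basis $\bm u_1, \dots, \bm u_k$ of $S^\perp$ with $\|\bm u_i\| = \poly(\|S\|)$. Then, by \cref{lem:char_poly_bound}, the coefficients $c_1, \dots, c_d$ of the common recurrence relation (that is, the coefficients of $g_M$) have bit-size $\poly(\|M\|)$.

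The main quantitative step is to bound the bit-size of the initial values $u^{(i)}_n = \bm u_i^\top M^n \bm x$ for $0 \leq n \leq d-1$. Writing $M = \tfrac{1}{a}B$ with $B \in \Z^{d \times d}$ and $\|a\|, \|B\| = \poly(\|M\|)$, each entry of $B^n$ is a sum of at most $d^{n-1}$ products of $n$ entries of $B$, so by a Hadamard-style bound the numerator has absolute value at most $(d\cdot \|B\|_\infty)^n$ and thus bit-size at most $n(\log d + \|B\|) = \poly(\|M\|, d)$ for $n \leq d-1$; similarly the common denominator $a^n$ has bit-size $\poly(\|M\|, d)$. Taking the inner product with $\bm u_i$ and $\bm x$ (both of bit-size $\poly(\|(M,\bm x, S)\|)$) increases the bit-size by only $\poly(d, \|M\|, \|\bm u_i\|, \|\bm x\|)$. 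Summing over $0 \leq n \leq d-1$ and $1 \leq i \leq k \leq d$, the total bit-size of all initial values is $\poly(\|(M,\bm x, S)\|)$.

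The only potential obstacle is making the matrix-power bit-size bound precise; however, this is a standard polynomial-time computability argument for $M^n$ with $n \leq d$, and since we only need~$d$ successive matrix multiplications before we can stop (the recurrence takes over thereafter), no subtler bound is required. Combining the three polynomial-size bounds above gives $\|(u^{(1)}, \dots, u^{(k)})\| = \poly(\|(M,\bm x, S)\|)$, as claimed.
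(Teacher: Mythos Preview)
Your proposal is correct and follows essentially the same approach as the paper: invoke \cref{prop:reduction} for correctness, use \cref{lem:basis_bound} for a small basis of $S^\perp$, use \cref{lem:char_poly_bound} for the recurrence coefficients, and bound the bit-size of the $d$ initial values of each $u^{(i)}$. The paper's proof is terser on the last point (it simply asserts $\|u^{(i)}_n\| = \poly(\|(M,\bm x,S)\|)$ without spelling out the $M^n$ bound), whereas you make the matrix-power estimate explicit; this is a harmless elaboration rather than a different route.
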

\begin{proof}
Let $\bm w^{(1)}, \dots, \bm w^{(k)}$ be a basis for $S^\perp$. Then $\orbprob_\Q$ on $(M, \bm x , S)$ reduces to $\simskol_\Q$ on the LRS $u^{(i)}_n = (\bm{w}^{(i)})^\top M^n \bm x$ for $1 \leq i \leq k$ (see~\cref{prop:reduction}). 
By~\cref{lem:basis_bound}, we have a basis of~$S^\perp$ such that $\|\bm w^{(i)}\| = \poly(\|S\|)$ for all $i$, so we may bound the initial values $\|u^{(i)}_n\| = \poly(\|(M,\bm x , S)\|)$ for each $0 \leq n \leq d-1$ and $1 \leq i \leq k$. 
Also, the characteristic polynomial of $M$ is exactly the characteristic polynomial of a recurrence relation~$(c_1, \dots, c_d) \in \Q^d$ satisfied by each $u^{(i)}$, so by \cref{lem:char_poly_bound}, we have $\|u^{(i)}\| = \poly(\|(M,\bm x , S)\|)$. 
\end{proof}

We have completed Step~1. Next, we discuss the exponential polynomial form~\eqref{eq:exp-poly} of the LRS, and so we need to bound the heights and degrees of the coefficients and characteristic roots.
\begin{proposition} \label{prop:LRS_coeff_bounds}
Given an order-$d$ $\Q$-LRS $u = \sum_{i,j} c_{i,j} n^j \lambda_i^n$, we have $h(c_{i,j}), h(\lambda_i) = \poly(\|u\|)$, and $\deg(\lambda_i) \leq d$ and $c_{i,j} \in \Q(\lambda_i)$.
\end{proposition}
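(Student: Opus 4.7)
The plan is to split into three natural pieces: the bound on the roots~$\lambda_i$, the identification $c_{i,j} \in \Q(\lambda_i)$, and the height bound on~$c_{i,j}$.

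The first piece is immediate. The characteristic polynomial $g(X) = X^d - c_1 X^{d-1} - \dots - c_d$ has rational coefficients given directly by the recurrence, so $\|g\| = O(\|u\|)$. Each $\lambda_i$ is a root of~$g$, so $\deg(\lambda_i) \leq d$, and a standard Mahler-measure root bound gives $h(\lambda_i) = O(\log \|g\|_1) = \poly(\|u\|)$.

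For the second piece, I would use Galois invariance. For any $\sigma \in \mathrm{Gal}(\Alg/\Q)$, applying~$\sigma$ to the identity $u_n = \sum_{i,j} c_{i,j} n^j \lambda_i^n$ (whose left-hand side lies in~$\Q$) yields $u_n = \sum_{i,j} \sigma(c_{i,j}) n^j \sigma(\lambda_i)^n$. Since $\sigma$ permutes the characteristic roots, and since the monomials $n \mapsto n^j \lambda_i^n$ are linearly independent over~$\Alg$ (the $\lambda_i$ being distinct and non-zero), the exponential polynomial representation is unique, forcing $\sigma(c_{i,j}) = c_{i',j}$ whenever $\sigma(\lambda_i) = \lambda_{i'}$. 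Any~$\sigma$ that fixes~$\lambda_i$ therefore fixes each~$c_{i,j}$, so $c_{i,j} \in \Q(\lambda_i)$.

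For the third piece, the coefficients $c_{i,j}$ are determined by the initial values via the generalised Vandermonde system $V \bm{c} = \bm{u}_{\mathrm{init}}$, where $V \in \Alg^{d \times d}$ has entries $n^j \lambda_i^n$ for $0 \leq n \leq d-1$ and is invertible by the uniqueness used above. By \cref{prop:h_props}, each entry of~$V$ has height $\poly(\|u\|)$, and the entries of $\bm{u}_{\mathrm{init}}$ have bit-size $O(\|u\|)$. Applying Cramer's rule, each $c_{i,j}$ is a ratio of two $d \times d$ determinants drawn from~$V$ and $\bm{u}_{\mathrm{init}}$. A per-place height analysis of such a determinant---using the ultrametric inequality at non-Archimedean places and the Leibniz formula with its $d!$ factor only at Archimedean places---yields $h(\det) \leq O(d \log d) + d \cdot \max_{ij} h(V_{ij})$, so $h(c_{i,j}) = \poly(\|u\|)$.

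The main obstacle is the height analysis in the third part: a naive application of \cref{prop:h_props}(2) to all $d!$ Leibniz terms would cause an exponential blow-up in~$d$, so one must work place-by-place to exploit ultrametric cancellation at non-Archimedean absolute values; the Archimedean contribution is then absorbed into the single additive $\log(d!) = O(d \log d)$ term.
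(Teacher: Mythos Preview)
Your argument is correct, and its overall shape---root bounds, then membership in $\Q(\lambda_i)$, then height bounds on the $c_{i,j}$---is sound. However, it diverges from the paper's proof in an instructive way.

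The paper handles pieces two and three in one stroke by citing an external result (\cite[Lemma~6]{akshay_2020}): that lemma writes each $c_{i,j}$ explicitly as $q_0 + q_1\lambda_i + \dots + q_{d-1}\lambda_i^{d-1}$ with rational $q_\ell$ of controlled bit-size. From this representation, both $c_{i,j} \in \Q(\lambda_i)$ and $h(c_{i,j}) = \poly(\|u\|)$ fall out immediately via \cref{prop:h_props}. Your approach is instead self-contained: Galois invariance for membership, and a confluent-Vandermonde/Cramer argument for the height. The Galois step is cleaner conceptually; the Cramer step essentially anticipates the determinant-height bound the paper proves later as \cref{lem:h_det_bound}. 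One small slip: your stated bound $h(\det) \leq O(d\log d) + d\cdot\max_{ij} h(V_{ij})$ is too optimistic---the per-place $\max$ does not commute with the sum over places, so you should have $d\cdot\sum_{ij} h(V_{ij})$ (equivalently $d^3\cdot\max$), matching \cref{lem:h_det_bound}. This does not affect the $\poly(\|u\|)$ conclusion.
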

\begin{proof}
Since $\lambda_i$ is a root of the characteristic polynomial
\begin{align*}
    g(x) = a_dx^d + a_{d-1} x^{d-1} + \dots + a_0
\end{align*}
of $u$, where $a_i \in \Z$, we have $\deg \lambda_i \leq d$. Furthermore, let $f(x) = b_{r} x^{r} + \dots + b_0$ be the minimal polynomial of $\lambda$, and let $H(\lambda)$ denote the ``naïve height'' of $\lambda$, defined by $H(\lambda) = \max\{|b_0|,\dots , |b_r|\}$. Then, since $f$ divides $g$, by \cite[Lemma 3.11 and Remark 2, p. 81]{Waldschmidt_book} we have
\begin{align*}
    h(\lambda_i) &\leq \frac{1}{r} \log H(\lambda_i) + \frac{1}{2r} \log(r+1) \\
    &\leq \frac{1}{r} \log \left(2^d\sqrt{d+1} H(g) \right) + \frac{1}{2r} \log(r+1)\\
    &= \poly(\|u\|) \, .
\end{align*}
Now, by \cite[Lemma 6]{akshay_2020}, we have $c_{i,j} = q_0 + q_1 \lambda_i + \dots + q_{d-1} \lambda_i^{d-1}$ for rationals $|q_i| \leq 2^{\poly(\|u\|)}$. Therefore, using~\cref{prop:h_props}, 
\begin{align*}
h(c_{i,j}) \leq \log d + \sum_{s=0}^{d-1} h(q_s) + \sum_{s=0}^{d-1} sh(\lambda_i) = \poly(\|u\|) \,
\end{align*} holds as well as $\deg( c_{i,j}) \leq d$.
\end{proof}

We proceed with Step~2 of our analysis: 
if there is a linear combination of $u^{(1)}$, \dots, $u^{(k)}$ lying in the MSTV class, then such a linear combination exists of small height and degree. 
First, we have a lemma that allows us to restrict ourselves to kernels of dimension~1, which simplifies the proof. 
Essentially, the lemma says that if one has a linear combination in the MSTV class with minimal support amongst linear combinations in the MSTV class, then in fact the support is minimal amongst all linear combinations in general. 
\begin{lemma} \label{lem:min_supp}
Let $C$ be the span over $\Alg$ of $\bm u^{(1)}, \dots, \bm u^{(k)} \in \Alg^d$, the vectors representing the LRS $u^{(1)}, \dots, u^{(k)}$. 
Suppose that $\bm u \in C$ represents an LRS in the MSTV class %, and it has minimal support size among all vectors in $C$ representing an LRS in the MSTV class; 
and has a minimal support among such vectors in~$C$;
that is, if $\bm u' \in C $ represents an LRS in the MSTV class then 
%$\# \supp(\bm v) \leq \# \supp(\bm v')$. 
$\supp(\bm u') \subsetneq \supp(\bm u)$ implies $\bm u' = \bm 0$. 

	Then we have $\dim (\ker \pi_{\neg V} |_C) = 1$ where $V = \supp(\bm u)$
	and $\pi_{\neg V}$ is the projection to the components $\{1,\dots, d\} \setminus V$. 
	That is, any $\bm w \in C$ with $\supp(\bm w) \subseteq \supp(\bm u)$ is of the form $\bm w = \eta \bm u$ for some $\eta \in \Alg$.
\end{lemma}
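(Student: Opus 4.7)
The plan is a contradiction argument that exploits the minimality of $\supp(\bm u)$. First I would observe that $\bm u$ itself lies in $\ker \pi_{\neg V}|_C$, since by definition $\supp(\bm u) = V$, so the kernel contains the line $\Alg \cdot \bm u$ and has dimension at least one. The task is therefore to rule out the existence of any $\bm w \in C$ with $\supp(\bm w) \subseteq V$ that is linearly independent from $\bm u$.

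Assume for contradiction such a $\bm w$ exists; in particular $\bm w \neq \bm 0$. Pick any index $i \in \supp(\bm w)$; since $\supp(\bm w) \subseteq V = \supp(\bm u)$, we also have $\bm u_i \neq 0$. Set $\eta := \bm u_i / \bm w_i \in \Alg$ and form the vector
\[
  \bm u' \;:=\; \bm u - \eta \,\bm w \;\in\; C.
\]
By construction $\bm u'_i = 0$ and $\supp(\bm u') \subseteq V \setminus \{i\} \subsetneq V$, and $\bm u' \neq \bm 0$ because otherwise $\bm u = \eta\,\bm w$ would contradict the assumed linear independence.

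The key remaining step is verifying that $\bm u'$ still represents an LRS in the MSTV class, which would contradict the minimality of $\supp(\bm u)$. Let $|\cdot|_v$ be the absolute value witnessing $\bm u$'s membership in the MSTV class, so the set of $|\cdot|_v$-dominant components of $\bm u$ has size at most $3$ (Archimedean case) or at most $2$ (non-Archimedean case). Since $\supp(\bm u') \subseteq \supp(\bm u)$, the set of $|\cdot|_v$-dominant components of $\bm u'$ is a subset of that of $\bm u$, hence satisfies the same bound with respect to the same $|\cdot|_v$. Non-degeneracy is also inherited: the characteristic roots of $\bm u'$ form a subset of those of $\bm u$, so no ratio of them is a root of unity. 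Therefore $\bm u'$ lies in the MSTV class with strictly smaller support than $\bm u$, contradicting minimality.

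The main obstacle in writing this cleanly is ensuring that MSTV-class membership propagates from $\bm u$ to $\bm u'$; this is why it is essential to re-use the \emph{same} absolute value $|\cdot|_v$ rather than searching for a new one, since under this choice the dominant-component count can only decrease. Everything else is linear algebra: producing $\bm u'$ by a one-step elimination at the coordinate $i$ and observing that strict containment of supports is automatic once $\bm w$ is linearly independent from $\bm u$.
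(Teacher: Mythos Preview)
Your argument has a genuine gap in the step where you claim $\bm u'$ lies in the MSTV class. The MSTV property is defined relative to the characteristic roots of the LRS in question: a root $\lambda$ is dominant for $\bm u'$ if $|\lambda|_v$ is maximal among the roots that actually appear in $\bm u'$, not among the roots of the ambient recurrence. Your elimination $\bm u' = \bm u - \eta\,\bm w$ with an \emph{arbitrary} $i \in \supp(\bm w)$ may zero out \emph{all} of the original $|\cdot|_v$-dominant components of $\bm u$; this happens precisely when $\pi_v(\bm w) = (w_i/u_i)\,\pi_v(\bm u)$, e.g.\ if $\bm w$ is proportional to $\bm u$ on the dominant coordinates and you pick $i$ among those coordinates. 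In that situation the dominant roots of $\bm u'$ come from the formerly sub-dominant part of $V$, and there is no reason these should number at most three (resp.\ two). So the sentence ``the set of $|\cdot|_v$-dominant components of $\bm u'$ is a subset of that of $\bm u$'' is either false (under the correct, LRS-relative reading of dominance) or true but insufficient to place $\bm u'$ in the MSTV class.

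The paper's proof closes exactly this gap by choosing the elimination so that a dominant component is \emph{guaranteed to survive}: it selects a $|\cdot|_v$-dominant index $i \in \supp(\bm u)$ and a second index $j$ with $w_i/u_i \neq w_j/u_j$ (such a pair exists by linear independence of $\bm u,\bm w$), then sets $\bm w' = \bm w - (w_j/u_j)\,\bm u$. This kills component $j$ while forcing $w'_i \neq 0$, so the maximal $|\cdot|_v$ among roots appearing in $\bm w'$ equals that of $\bm u$, and the dominant-root count genuinely transfers. Your approach is repairable along the same lines, but the choice of which coordinate to eliminate cannot be left arbitrary.
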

\begin{proof}
Since $\bm u$ corresponds to an LRS in the MSTV class, it corresponds to an LRS with at most $r$ dominant roots with respect to some absolute value $|\cdot|_v$, where $r \leq 2$ if $|\cdot|_v$ is non-Archimedean and $r \leq 3$ if $|\cdot|_v$ is Archimedean. 
Suppose $\supp(\bm w) \subseteq \supp(\bm u)$. If $\bm w \neq \eta \bm u$ for any $\eta \in \Alg$, then 
%there are indices $i \neq j$ such that $i$ corresponds to a coefficient in front of a $|\cdot|_v$-dominant root, and $\frac{w_i}{v_i} \neq \frac{w_j}{v_j}$. 
we can choose~$i$ such that $i \in \supp(\bm u)$ corresponds to a $|\cdot|_v$-dominant term, as well as choose~$j \neq i$,
so that $\frac{w_i}{u_i} \neq \frac{w_j}{u_j}$. 
Then $\bm w' = \bm w- \frac{w_j}{u_j} \bm u$ has $w'_i \neq 0$ and thus the set of $|\cdot|_v$-dominant terms for~$\bm w'$ is a subset of $|\cdot|_v$-dominant terms for~$\bm w$. 
Therefore, the LRS of $\bm w'$ lies in the MSTV class.
Moreover, $w_j' = 0$ and so $\supp(\bm w') \subsetneq \supp(\bm u)$,
which contradicts the fact that $\supp(\bm u)$ is minimal and concludes the proof.
\end{proof}
This lemma allows us to construct a linear combination in the MSTV class of small height. First, we need an elementary height bound on the determinant of a matrix.
\begin{restatable}{lemma}{hdetbound} \label{lem:h_det_bound}
%Given a number field $\KK$ and a matrix $A = (a_{ij})_{1 \leq i , j \leq m} \in \KK^{m \times m}$, we have
For a matrix $A = (a_{ij})_{1 \leq i , j \leq m} \in \KK^{m \times m}$ over a number field~$\KK$, we have
\begin{align*}
    h(\det A) \leq \frac{m}{2}\log m + m \sum_{1 \leq i , j \leq m} h(a_{ij}) \, .
\end{align*}
\end{restatable}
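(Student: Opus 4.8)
The plan is to argue place by place. Recall that $h(x) = \frac{1}{D}\sum_{v\in M_\KK} D_v\log^+|x|_v$ with $D = [\KK:\Q]$; I would fix a number field $\KK$ containing all the entries $a_{ij}$ and bound $\log^+|\det A|_v$ for each $v\in M_\KK$ separately, then recombine. Writing $\det A = \sum_{\sigma\in S_m}\operatorname{sgn}(\sigma)\prod_{i=1}^m a_{i\sigma(i)}$ and using $|\operatorname{sgn}(\sigma)|_v = 1$, the whole estimate reduces to controlling products and sums of the $a_{ij}$ under a single absolute value $|\cdot|_v$.

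For a non-Archimedean $v$ I would apply the ultrametric inequality to the $m!$ summands: $|\det A|_v \le \max_{\sigma}\prod_{i=1}^m|a_{i\sigma(i)}|_v \le (\max_{i,j}|a_{ij}|_v)^m$. For an Archimedean $v$ coming from an embedding $\sigma\colon\KK\hookrightarrow\CC$, I would use $|\det A|_v = |\det\sigma(A)|$ together with Hadamard's inequality: $|\det\sigma(A)|$ is at most the product of the Euclidean norms of the rows of $\sigma(A)$, and each such norm is at most $\sqrt m\,\max_j|a_{ij}|_v$ since every row has $m$ entries, giving $|\det A|_v \le m^{m/2}(\max_{i,j}|a_{ij}|_v)^m$. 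Taking $\log^+$ and using the elementary facts $\log^+(xy)\le\log^+x+\log^+y$ and $\log^+(\max_{i,j}|a_{ij}|_v)\le\sum_{i,j}\log^+|a_{ij}|_v$, both cases collapse to
\[
\log^+|\det A|_v \;\le\; \varepsilon_v\,\tfrac{m}{2}\log m \;+\; m\sum_{i,j}\log^+|a_{ij}|_v ,
\]
where $\varepsilon_v=1$ if $v\mid\infty$ and $\varepsilon_v=0$ otherwise.

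To finish I would multiply by $D_v$, sum over $v\in M_\KK$, and divide by $D$. The contribution of $m\sum_{i,j}\log^+|a_{ij}|_v$ reassembles into $m\sum_{i,j}h(a_{ij})$ by the definition of the height. For the remaining $\tfrac{m}{2}\log m$ contribution, I would invoke the standard identity $\sum_{v\mid\infty}D_v = [\KK:\Q] = D$ (the Archimedean local degrees add up to the field degree), so this contribution is exactly $\tfrac{m}{2}\log m$, yielding $h(\det A)\le\frac{m}{2}\log m + m\sum_{i,j}h(a_{ij})$ as claimed.

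I do not expect a genuine obstacle here; the only delicate points are keeping the $\log^+$ bookkeeping honest (everything follows from $\max\{1,xy\}\le\max\{1,x\}\max\{1,y\}$ and $\max\{1,\max_k x_k\}\le\prod_k\max\{1,x_k\}$) and making sure the $m^{m/2}$ factor from Hadamard is only collected at the finitely many Archimedean places, where the local degrees sum to $D$. One could in fact avoid the crude step $\prod_i\max_j|a_{ij}|_v\le(\max_{i,j}|a_{ij}|_v)^m$ and obtain the sharper bound with $\sum_{i,j}h(a_{ij})$ in place of $m\sum_{i,j}h(a_{ij})$, but the stated form is all that is used in the sequel.
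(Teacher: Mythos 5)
Your proof is correct and follows essentially the same route as the paper's: bound $\log^+|\det A|_v$ place by place, using Hadamard's inequality at Archimedean $v$ and the Leibniz formula together with the ultrametric inequality at non-Archimedean $v$, then recombine via $\sum_{v\mid\infty}D_v=[\KK:\Q]$. Your closing remark about the sharper bound $\frac{m}{2}\log m + \sum_{i,j}h(a_{ij})$ is also accurate (the paper in fact keeps the non-Archimedean contribution without the extra factor of $m$ before relaxing at the end), but as you say, the weaker form is all that is needed.
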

The proof is elementary, and follows from Hadamard's inequality and the definition of the height; it may be found in Appendix \ref{sec:appendix}.

\begin{lemma} \label{lem:small_MSTV_height}
Suppose there exists $\bm \beta = (\beta_1, \dots, \beta_k)\in \Alg^k$ such that $\sum_{i=1}^k \beta_i u^{(i)}$ lies in the MSTV class. Then there exists such a $\bm \beta$ with $h(\beta_i) = \poly(\|u\|)$. 
\end{lemma}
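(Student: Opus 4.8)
The plan is to select the linear combination in the MSTV class whose coefficient vector has \emph{minimal support}, and to show that minimality pins it down, up to a scalar, as the solution of an explicit homogeneous linear system with small coefficients; Cramer's rule together with a height bound for determinants then finishes the job.

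Concretely, I would set $C = \spn(\bm u^{(1)}, \dots, \bm u^{(k)}) \subseteq \Alg^d$, where the (linearly independent) $\bm u^{(i)}$ are the coefficient vectors of the exponential-polynomial forms of $u^{(1)}, \dots, u^{(k)}$, and among the nonzero $\bm w \in C$ representing an LRS in the MSTV class (one exists by hypothesis) choose $\bm w$ with $V := \supp(\bm w)$ minimal. By \cref{lem:min_supp}, the subspace $\{\bm w' \in C : \supp(\bm w') \subseteq V\}$ is one-dimensional, spanned by $\bm w$. Parametrising $C$ by $\bm\gamma \mapsto \sum_{i=1}^k \gamma_i \bm u^{(i)}$, the constraint $\supp(\sum_i \gamma_i \bm u^{(i)}) \subseteq V$ becomes the homogeneous system $\sum_{i=1}^k \gamma_i (\bm u^{(i)})_j = 0$ for all $j \notin V$, whose solution space in $\bm\gamma$ is therefore a line; in particular this system has rank $k-1$, so I would keep $k-1$ linearly independent equations, obtaining a matrix $N = (n_{ji}) \in \Alg^{(k-1)\times k}$ with $n_{ji} = (\bm u^{(i)})_j$ and $\ker N$ equal to that line.

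Next, Cramer's rule (cofactor expansion) yields a nonzero generator $\bm\beta$ of $\ker N$ with $\beta_i = \pm \det N^{(i)}$, where $N^{(i)}$ is $N$ with column $i$ deleted; it is nonzero precisely because $N$ has rank $k-1$. Then $\sum_i \beta_i \bm u^{(i)} \neq \bm 0$ by linear independence of the $\bm u^{(i)}$, and it lies in $\{\bm w' \in C : \supp(\bm w') \subseteq V\}$, hence equals $\eta \bm w$ for some $\eta \neq 0$; since rescaling changes neither the characteristic roots nor the support, $\sum_i \beta_i u^{(i)}$ is still in the MSTV class. Finally, each entry $n_{ji} = (\bm u^{(i)})_j$ is a coefficient of the exponential-polynomial form of one of the input LRS, so by \cref{prop:LRS_coeff_bounds} it has height $\poly(\|u\|)$ and lies in a number field of degree at most $d$ over $\Q$; passing to a common number field (a compositum of the $\Q(\lambda_i)$) and applying \cref{lem:h_det_bound} to the $(k-1)\times(k-1)$ matrix $N^{(i)}$ — with $k-1 < d$ and $(k-1)^2$ entries each of height $\poly(\|u\|)$ — gives $h(\beta_i) = h(\det N^{(i)}) = \poly(\|u\|)$, as required.

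The step I expect to carry the weight is the reduction to a one-dimensional kernel via \cref{lem:min_supp}: this is what turns ``there exists an MSTV linear combination'' into ``there exists an MSTV linear combination given by a determinantal formula in the $\bm u^{(i)}$'', and without it the coefficients of the combination need not be controlled at all. The remaining ingredients — extracting $k-1$ independent equations, invoking Cramer's rule, and chaining \cref{prop:LRS_coeff_bounds} with \cref{lem:h_det_bound} — are routine bookkeeping.
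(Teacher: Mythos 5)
Your proposal is correct and follows essentially the same route as the paper: reduce to minimal support via \cref{lem:min_supp} to force a one-dimensional kernel, extract $k-1$ linearly independent rows of the projection matrix, take the cofactor-expansion vector as a generator of the kernel, and bound its height via \cref{lem:h_det_bound} and \cref{prop:LRS_coeff_bounds}. The only differences are cosmetic (you are slightly more explicit about passing to a common number field before applying the determinant height bound, and about why the rescaled combination remains MSTV), so no further comparison is needed.
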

\begin{proof}
If there exists a linear combination in the MSTV class, then certainly there exists 
a linear combination~$\bm v$
with minimal support among linear combinations in the MSTV class. 
By \cref{lem:min_supp}, if $\supp(\bm v) = V$, then $\dim(\ker \pi_{\neg V}|_C) = 1$: 
any vector in $\ker \pi_{\neg V}|_C$ is a scalar multiple of $\bm v$ and lies in the MSTV class. Let
\begin{align*}
    A = \begin{pmatrix}
        \pi_{\neg V}(\bm u^{(1)}) \mid \dots \mid \pi_{\neg V}(\bm u^{(k)})
    \end{pmatrix} \, ,
\end{align*}
then we just need to find $\bm \beta$ such that $A \bm \beta = 0$. 

Since $\dim \ker A = 1$, $\mathrm{rank} \ A = k-1$, so we can pick $k-1$ linearly independent rows of $A$ to form a $(k-1) \times k$ matrix $B$ with the same kernel. Let $\beta_i = (-1)^{i+1} \det B^{(i)}$, where $B^{(i)}$ is $B$ with the $i$th column removed. This is an element of the kernel; indeed, we have by Laplace expansion that for any vector $\bm y = (y_1, \dots, y_k)^\top \in \Alg^k$,
\begin{align*}
    \det \begin{pmatrix}
        \bm y^\top \\ B 
    \end{pmatrix}
    = \sum_{i=1}^k (-1)^{i+1} y_i \det B^{(i)} = \bm y^\top \bm \beta
\end{align*}
and so $\bm y^\top \bm \beta = 0$ if $\bm y^\top$ is any row of $B$, so $B \bm \beta = 0$ and hence $A \bm \beta = 0$. Now by \cref{lem:h_det_bound}, we have $h(\beta_i) = \poly(\|u\|)$, as required.
\end{proof}
%\begin{corollary} 
\begin{restatable}{corollary}{smallmstv}\label{cor:small_MSTV_LRS}
Suppose $\bm \beta = (\beta_1, \dots, \beta_k)\in \Alg^k$ is such that $\sum_{i=1}^k \beta_i u^{(i)} = \sum_{i=1}^R P_i(n) \lambda_i^n$ lies in the MSTV class and is of minimal support. Let $\KK = \Q(\lambda_1, \dots, \lambda_R)$. Then there exists $\bm \gamma \in \KK^k$ such that $\bm \gamma = \eta \bm \beta$ for some $\eta \in \Alg$ with $h(\bm \gamma) = \poly(\|\bm \beta\|)$.
\end{restatable}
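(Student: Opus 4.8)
The plan is a Galois-descent argument: although the line $\Alg\bm\beta\subseteq\Alg^k$ is a priori only visibly defined over the (possibly large) splitting field of the characteristic polynomial of the system, its defining equations will turn out to involve only data fixed by the automorphisms fixing $\KK$, which forces $\Alg\bm\beta$ to be spanned by a vector with entries in $\KK$; bounding the height of a suitable representative is then immediate. Throughout I use that we are in the setting of \cref{sec:complexity}, so each $u^{(i)}$ is a $\Q$-LRS of order $d$.

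First I would fix notation. Let $\LL$ be the splitting field over $\Q$ of the common characteristic polynomial of $u^{(1)},\dots,u^{(k)}$; by \cref{prop:LRS_coeff_bounds} every characteristic root and every coefficient $c^{(i)}_{l,j}$ of the exponential-polynomial forms lies in $\LL$, and $\KK=\Q(\lambda_1,\dots,\lambda_R)\subseteq\LL$. Write $\bm w:=\sum_i\beta_i\bm u^{(i)}$ for the coefficient vector of $\sum_i\beta_i u^{(i)}$, let $V=\supp(\bm w)$, and let $A$ be the $\LL$-matrix whose $i$-th column is $\pi_{\neg V}(\bm u^{(i)})$, so that $A\bm x=\bm 0$ iff $\supp\bigl(\sum_i x_i\bm u^{(i)}\bigr)\subseteq V$; in particular $\bm\beta\in\ker A$. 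Since $\bm w$ has minimal support among MSTV-class vectors of $C=\spn_\Alg(\bm u^{(1)},\dots,\bm u^{(k)})$, \cref{lem:min_supp} gives $\dim(\ker\pi_{\neg V}|_C)=1$; as the $\bm u^{(i)}$ are $\Alg$-linearly independent, via the isomorphism $\Alg^k\cong C$, $\bm x\mapsto\sum_i x_i\bm u^{(i)}$, this translates to $\dim\ker A=1$.

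Next I would establish Galois equivariance and descend. Let $\LL'$ be a number field Galois over $\Q$ containing $\LL$ and $\beta_1,\dots,\beta_k$, and let $\sigma\in\mathrm{Gal}(\LL'/\KK)$. Applying $\sigma$ to $u^{(i)}_n=\sum_{l,j}c^{(i)}_{l,j}n^j\lambda_l^n$ and using $u^{(i)}_n\in\Q$ together with uniqueness of the exponential-polynomial representation gives $\sigma(c^{(i)}_{l,j})=c^{(i)}_{\sigma(l),j}$; that is, entrywise application of $\sigma$ to $\bm u^{(i)}$ equals the permutation $\tau_\sigma$ of coordinates induced by $\lambda_l\mapsto\sigma(\lambda_l)$ (well defined because Galois-conjugate roots have equal multiplicity). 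Since $\sigma$ fixes $\lambda_1,\dots,\lambda_R$, the permutation $\tau_\sigma$ fixes every coordinate of $V$ --- each such coordinate has base among the $\lambda_i$ --- and hence restricts to a permutation of $\{1,\dots,d\}\setminus V$ and commutes with $\pi_{\neg V}$; therefore $\sigma(A)=P_\sigma A$ for a permutation matrix $P_\sigma$, so $\ker\sigma(A)=\ker A$, i.e.\ $\sigma(\ker A)=\ker A$. Now pick $i_0$ with $\beta_{i_0}\ne 0$ and set $\bm\gamma:=\bm\beta/\beta_{i_0}$, $\eta:=\beta_{i_0}^{-1}\in\Alg$, so $\bm\gamma=\eta\bm\beta$ and $\gamma_{i_0}=1$. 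For each $\sigma\in\mathrm{Gal}(\LL'/\KK)$ the vector $\sigma(\bm\gamma)$ lies in $\sigma(\ker A)=\ker A$ and has $i_0$-th coordinate $1$, and the line $\ker A$ contains exactly one such vector, so $\sigma(\bm\gamma)=\bm\gamma$; as $\LL'/\KK$ is Galois, $\bm\gamma\in\KK^k$. Finally $\gamma_i=\beta_i/\beta_{i_0}$, so \cref{prop:h_props} yields $h(\gamma_i)\le h(\beta_i)+h(\beta_{i_0})$ and hence $h(\bm\gamma)=\poly(\|\bm\beta\|)$.

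The step I expect to need the most care is the Galois-equivariance claim --- in particular checking that the induced coordinate permutation $\tau_\sigma$ is well defined and that it fixes precisely the coordinates indexed by $V$ because those correspond to roots lying in $\KK$. Once $\ker A$ is seen to be a $\KK$-rational line, normalising one nonzero coordinate of $\bm\beta$ to $1$ and invoking \cref{prop:h_props} closes the argument with no further computation.
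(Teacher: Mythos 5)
Your proposal is correct, but it takes a genuinely different route from the paper's. The paper's proof of \cref{cor:small_MSTV_LRS} works directly with the trace form: taking $\LL = \Q(\lambda_1,\dots,\lambda_s)$, it considers $\mathrm{Tr}_{\LL/\KK}\bigl(\mu\sum_i\beta_i u^{(i)}_n\bigr) = \sum_{l=1}^R\mathrm{Tr}_{\LL/\KK}(\mu P_l(n))\lambda_l^n$, chooses $\mu$ by non-degeneracy of the trace pairing so that this new combination is non-zero, observes that its support is contained in that of $\sum_i\beta_i u^{(i)}$, and then invokes \cref{lem:min_supp} to conclude that $\mathrm{Tr}_{\LL/\KK}(\mu\bm\beta) = \eta\bm\beta$ for some $\eta\in\Alg$, giving $\eta\beta_i\in\KK$ and so $\beta_i/\beta_1\in\KK$. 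Your proof instead argues by Galois descent on the one-dimensional kernel: you show that $\sigma\in\mathrm{Gal}(\LL'/\KK)$ acts on the coefficient vectors $\bm u^{(i)}$ by a coordinate permutation that fixes the coordinates in $V = \supp(\bm w)$, hence stabilizes $\ker A = \ker\pi_{\neg V}|_C$, so the normalised representative $\bm\gamma = \bm\beta/\beta_{i_0}$ is fixed by every such $\sigma$ and therefore lies in $\KK^k$. Both arguments hinge on \cref{lem:min_supp} to get the one-dimensional kernel and on the same normalisation and height bookkeeping at the end; the trace argument is more compact (it manufactures a $\KK$-rational element of the kernel directly, without needing the coordinate-permutation lemma), while your Galois-equivariance argument makes the descent mechanism more transparent and, as a bonus, makes explicit the structural fact that $\sigma$ permutes coordinates according to its action on the characteristic roots. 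One point worth flagging: you rightly restrict to the setting of \cref{sec:complexity} where the $u^{(i)}$ are $\Q$-LRS, which is what guarantees that $\mathrm{Gal}(\LL'/\Q)$ permutes the roots $\lambda_l$ and that the coordinate permutation $\tau_\sigma$ is well-defined (conjugate roots have equal multiplicity); the paper leaves this implicit but relies on it too.
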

% \begin{proof}
% Let $\LL = \Q(\lambda_1, \dots, \lambda_s)$. Then for any $\mu \in \LL$, consider
% \begin{align*}
%     \mathrm{Tr}_{\LL / \KK} \left( \mu \sum_{i=1}^k \beta_i u^{(i)}_n \right) = \sum_{i=1}^R \mathrm{Tr}_{\LL / \KK} ( \mu P_i(n)) \lambda_i^n \, .
% \end{align*}
% Since the trace is a non-degenerate bilinear form, there exists $\mu \in \LL$ such that $\mathrm{Tr}_{\LL / \KK} (\mu P_i(n))$ are not all identically zero. Thus 
% \begin{align*}
%     \mathrm{Tr}_{\LL / \KK} \left( \mu \sum_{i=1}^k \beta_i u^{(i)}_n \right) = \sum_{i=1}^k \mathrm{Tr}_{\LL / \KK}(\mu \beta_i) u^{(i)}_n
% \end{align*}
% is another non-zero linear combination of the $u^{(i)}$ whose support is a subset of the support of $\sum_{i=1}^k \beta_i u^{(i)}$. 
% By minimality and \cref{lem:min_supp}, we have $\mathrm{Tr}_{\LL / \KK}( \mu \bm \beta) = \eta \bm \beta$ for some $\eta \in \LL$, and $\eta \bm \beta \in \KK^k$. 
% We have $\eta \beta_i \in \KK^k$ for all $1 \leq i \leq k$ and hence 
% $\frac{\beta_i}{\beta_1} \in \KK^k$ for all $1 \leq i \leq k$. 
% Take $\bm \gamma \coloneq \frac{1}{\beta_1} \bm \beta \in \KK^k$, so $h(\gamma_i) = \poly(\|\beta_i\|)$ for all $1 \leq i \leq k$.
% \end{proof}

The proof of this corollary follows from using the trace to construct $\bm \gamma$; it is given in Appendix~\ref{sec:appendix}. There, we also prove
%In Appendix \ref{sec:appendix} we prove 
the following bound on zeros of LRS.
\begin{restatable}{theorem}{MSTVbound} \label{thm:MSTV_zero_bound}
Suppose that $u_n = \sum_{i=1}^s P_i(n) \lambda_i^n$, for polynomials $P_i$ of degree at most $\delta$. 
Let $h_P$ be the largest height of any coefficient of any $P_i$ and let $h_\lambda \coloneq \max_{i=1}^s h(\lambda_i)$.
Suppose that $\LRS{u}$ is non-degenerate and lies in the MSTV class, that is, there is absolute value $|\cdot|_v$ with
\begin{align*}
    |\lambda_1|_v = \dots = |\lambda_r|_v > |\lambda_{r+1}|_v \geq \dots \geq |\lambda_s|_v
\end{align*}
and $r \leq 3$ if $|\cdot|_v$ is Archimedean, and $r \leq 2$ if $|\cdot|_v$ is non-Archimedean. Let $\KK$ be the number field generated by $\lambda_1, \dots, \lambda_s$ and the coefficients of $P_1, \dots, P_s$, and let $D = [\KK:\Q]$. Then $u_n = 0$ implies that
\begin{align*}
    n < \poly(s) \cdot 2^{\poly(D,h_\lambda,h_P,\delta)} \, .
\end{align*}
\end{restatable}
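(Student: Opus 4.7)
The plan is to follow the classical template of Mignotte-Shorey-Tijdeman in the Archimedean case and Vereshchagin's $p$-adic analogue in the non-Archimedean case, making the dependence on every parameter explicit. After ordering the roots so that $|\lambda_1|_v = \dots = |\lambda_r|_v > |\lambda_{r+1}|_v \geq \dots \geq |\lambda_s|_v$ with $r \leq 3$ (resp.\ $r \leq 2$), dividing $u_n = 0$ by $\lambda_1^n$ yields
\begin{equation*}
\sum_{i=1}^{r} P_i(n)\,(\lambda_i/\lambda_1)^n \;=\; -\sum_{i=r+1}^{s} P_i(n)\,(\lambda_i/\lambda_1)^n .
\end{equation*}
Setting $\rho \coloneqq \max_{i > r} |\lambda_i/\lambda_1|_v < 1$, standard Liouville-type estimates (argued separately for the two types of absolute value, using \cref{prop:h_props}) give the quantitative gap $\log(1/\rho) \geq 2^{-\poly(D,\,h_\lambda)}$, while the right-hand side is bounded in $|\cdot|_v$-norm by $\poly(s) \cdot n^\delta \cdot 2^{\poly(D,\,h_\alpha)} \cdot \rho^n$.

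For the left-hand side I would distinguish three cases. If $r = 1$, the left-hand side is just $P_1(n)$, a non-zero polynomial of degree at most $\delta$, whose $v$-adic value at an integer $n$ is bounded below by $n^{-\poly(D,\,\delta,\,h_\alpha)}$ via the product formula. If $r \in \{2,3\}$, further division by the dominant summand on the left converts the equation into
\begin{equation*}
\bigl| 1 + \gamma_2 \mu_2^n + \gamma_3 \mu_3^n \bigr|_v \;\leq\; \poly(n) \cdot 2^{\poly(D,\,h_\alpha)} \cdot \rho^n ,
\end{equation*}
where $|\mu_i|_v = 1$ and the $\gamma_i$ are algebraic numbers (depending rationally on the values of $P_1,P_2,P_3$ at $n$) whose heights remain polynomial in $h_\alpha, h_\lambda, \delta$. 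Applying the Baker-Wüstholz theorem in the Archimedean case and Yu's $p$-adic counterpart in the non-Archimedean case delivers a lower bound of the shape $n^{-\poly(D,\,h_\lambda,\,h_\alpha,\,\delta)}$ on the left-hand side. Matching the two estimates forces $\rho^n \leq n^{\poly(D,\,h_\lambda,\,h_\alpha,\,\delta,\,s)}$, and taking logarithms (using the gap on $\log(1/\rho)$) produces the claimed bound $n \leq \poly(h_\alpha, \delta, s) \cdot 2^{\poly(D,\,h_\lambda)}$.

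The main obstacle is the passage to a clean linear form in logarithms in the presence of polynomial coefficients $P_i(n)$. For simple LRS the reduction is immediate, but when $\deg P_i > 0$ one must first divide by a non-zero $P_1(n)$---which requires arranging that $P_1$ does not vanish at the candidate $n$, handled by removing the (polynomially many) roots of $P_1$ as exceptional indices---and then carefully track the heights of the resulting algebraic rational-function coefficients so that they remain within the polynomial bounds required by the Baker-type estimates. A secondary subtlety is the $r = 3$ Archimedean case, where two genuinely independent non-constant exponentials remain after normalisation; here the two-variable form of the Baker-Wüstholz theorem is needed, and it is this application that dictates the polynomial-in-$D$ shape of the exponent of $2$ in the final bound.
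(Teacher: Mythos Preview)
Your overall outline matches the paper's: split $u_n$ into dominant and subdominant parts, upper-bound the tail by $\rho^n \cdot n^{\poly}$, lower-bound the dominant part, and compare. The handling of $\log(1/\rho)$ and of the prime $p$ (in the non-Archimedean case) via Liouville/height arguments is also in line with the paper.

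However, there is a genuine gap in the $r=3$ Archimedean case. You write that ``the two-variable form of the Baker--W\"ustholz theorem'' yields a lower bound for $|1 + \gamma_2 \mu_2^n + \gamma_3 \mu_3^n|$. Baker-type theorems bound linear forms in \emph{logarithms}, i.e.\ expressions of the shape $|b_1\log\alpha_1 + \dots + b_s\log\alpha_s|$. When $r=2$ the quantity $|1 + \gamma\mu^n|$ is (up to constants) $|\gamma\mu^n - (-1)|$, and taking $\log$ of the ratio produces such a linear form. When $r=3$ no such reduction exists: a three-term exponential sum is \emph{not} a linear form in logarithms, and no version of Baker--W\"ustholz applies to it directly. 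The paper closes this gap with Beukers' two-circles argument (\cref{prop:two_circles}): one writes the three-term sum as $\alpha_1\lambda_1^n(z_0 - z_1)$ with $z_0 \in \mathcal C_0$, $z_1 \in \mathcal C_1$ lying on two fixed circles, and shows that $|z_0 - z_1|$ is controlled by $\min\{|z_0 - \zeta|,|z_0 - \overline\zeta|\}$ where $\zeta,\overline\zeta$ are the intersection points of the circles. This last quantity \emph{is} a two-term expression, to which \cref{prop:2_sum} (hence Baker) applies. Without this reduction your argument does not go through for $r=3$.

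Two smaller points. First, the paper's treatment of polynomial coefficients is simpler than yours: rather than dividing by $P_1(n)$ and tracking rational-function heights, it applies the constant-coefficient Propositions with $\alpha_i := P_i(n)$, using $h(P_i(n)) \leq \poly(h_\alpha,\delta)\log n$; this introduces a harmless extra $\log n$ factor in the exponent. Second, you do not discuss the possibility that the dominant part $\sum_{i\leq r} P_i(n)\lambda_i^n$ itself vanishes at $n$; the paper handles this separately via \cref{prop:3_sum_nonzero}, which gives the required bound on $n$ directly in that case.
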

The proof is essentially the same as the decidability proof for LRS in the MSTV class using Baker's theorem \cite{mignotte_logarithm_1994,vereshchagin_1985,bilu_skolem_2025}, but we carry through the constants more precisely to achieve the desired bound (\cite[Sections C-F]{chonev_complex2016} proves a special case for order $\leq 4$ LRS). 

As Step~3, we apply this to derive upper bounds on the simultaneous zeros of LRS; 
and hence on the solutions to~$\orbprob_\Q(d,t)$.
%This allows us to prove the following.
\begin{corollary} \label{cor:exp_zero_bound}
If $M \in \Q^{d \times d}$, $\bm x \in \Q^d$ and $S \subseteq \Q^d$ of dimension~$t$ define an MSTV-reducible instance of $\orbprob_\Q(d,t)$,
then $M^n \bm x \in S$ implies that $n < 2^{\poly(d^t,\|(M,\bm x , S)\|)}$.
\end{corollary}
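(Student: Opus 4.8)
The plan is to run the three-step programme outlined above Steps 1--3, the one genuinely new ingredient being a bound on the size of the support of an MSTV-witnessing linear combination, which is exactly what keeps the degree of the auxiliary number field under control. Write $I=(M,\bm x,S)$, $t=\dim S$ and $k=d-t$; we may assume $1\le t<d$, the cases $t=0$ (the point problem, where the eventual sequence $w$ below is a single monomial, so $\zero(w)=\varnothing$) and $t=d$ (so $\bm x\in S$, $n=0$) being immediate.

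\emph{Step 1 (reduction to $\simskol$).} First I would apply \cref{prop:Orbit_to_SSP} to replace $I$ by an instance $u^{(1)},\dots,u^{(k)}$ of $\simskol_\Q(d,k)$ with $\|(u^{(1)},\dots,u^{(k)})\|=\poly(\|I\|)$ and $M^n\bm x\in S\iff n\in\zero(u^{(1)},\dots,u^{(k)})$; by \cref{def:MSTV-reducible} this $\simskol$ instance is again MSTV-reducible. Fix a non-zero linear combination $w=\sum_{i=1}^k\beta_iu^{(i)}$ lying in the MSTV class whose coefficient vector $\bm w$ has minimal support $V:=\supp(\bm w)$ among all such combinations (one exists, as there is at least one MSTV combination and $C:=\spn(\bm u^{(1)},\dots,\bm u^{(k)})$ is $k$-dimensional). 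Then $\bm w\neq\bm 0$, so $w\not\equiv 0$; its characteristic roots are eigenvalues of $M$, no two distinct of which have a root-of-unity ratio, so $w$ is non-degenerate; and $\zero(u^{(1)},\dots,u^{(k)})\subseteq\zero(w)$, so it suffices to bound $\zero(w)$.

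\emph{Step 2 (support bound --- the crux).} The heart of the argument is that $|V|\le t+1$. By \cref{lem:min_supp}, minimality of $V$ forces $\dim(\ker\pi_{\neg V}|_C)=1$, hence $\pi_{\neg V}|_C$ has rank $k-1$; its image lies in the $(d-|V|)$-dimensional coordinate subspace indexed by $\{1,\dots,d\}\setminus V$, so $k-1\le d-|V|$, i.e.\ $|V|\le d-k+1=t+1$. Thus the exponential-polynomial form $w_n=\sum_{i=1}^{R}P_i(n)\lambda_i^n$ has at most $t+1$ non-zero terms, so $R\le t+1$ and at most $t+1$ distinct characteristic roots occur. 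Each $\lambda_i$ is a root of the degree-$d$ polynomial $g_M$, so $[\Q(\lambda_i):\Q]\le d$, and therefore $\KK:=\Q(\lambda_1,\dots,\lambda_R)$ satisfies $D:=[\KK:\Q]\le d^{R}\le d^{\,t+1}\le (d^{\,t})^2$ (using $t\ge 1$). By \cref{lem:small_MSTV_height} and \cref{cor:small_MSTV_LRS}, after rescaling $w$ by a scalar (which changes neither $\zero(w)$ nor membership in the MSTV class) I may assume $\beta_i\in\KK$ with $h(\beta_i)=\poly(\|I\|)$; then $w_0,\dots,w_{d-1}\in\KK$ have height $\poly(\|I\|)$ by \cref{prop:h_props}, and inverting the associated generalised-Vandermonde system over $\KK$, exactly as in the proof of \cref{prop:LRS_coeff_bounds}, bounds the heights of the coefficients of the $P_i$ and $\max_ih(\lambda_i)$ by $\poly(\|I\|)$ (absorbing factors of $d$ and $D$, since $d\le\|I\|$), with $\deg P_i\le d-1$.

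\emph{Step 3 (conclusion).} Now apply \cref{thm:MSTV_zero_bound} to the non-degenerate MSTV sequence $w$: in its notation $h_\alpha,h_\lambda=\poly(\|I\|)$, $\delta\le d-1$, $s=R\le t+1$, and $D\le (d^{\,t})^2$, so every $n$ with $w_n=0$ obeys
\[
n<\poly(h_\alpha,\delta,s)\cdot 2^{\poly(D,h_\lambda)}\ \le\ \poly(\|I\|)\cdot 2^{\poly((d^{\,t})^2,\,\|I\|)}\ =\ 2^{\poly(d^{\,t},\,\|I\|)}.
\]
Since $\zero(u^{(1)},\dots,u^{(k)})\subseteq\zero(w)$, this gives $M^n\bm x\in S\Rightarrow n<2^{\poly(d^{\,t},\|I\|)}$, as claimed. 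The only non-routine point is the support bound $|V|\le t+1$ of Step 2: without it, $\KK$ could have degree as large as $d!$ and \cref{thm:MSTV_zero_bound} would yield only a bound of shape $2^{\poly(d^{\,d},\|I\|)}$; everything else is bookkeeping with the height estimates already in hand (\cref{prop:h_props,prop:LRS_coeff_bounds,lem:small_MSTV_height,cor:small_MSTV_LRS,lem:h_det_bound}) plus the quoted bound \cref{thm:MSTV_zero_bound}.
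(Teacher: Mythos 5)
Your proposal is correct and follows essentially the same route as the paper: reduce to $\simskol_\Q$ via \cref{prop:Orbit_to_SSP}, take a minimal-support MSTV linear combination and use \cref{lem:min_supp} together with \cref{lem:small_MSTV_height,cor:small_MSTV_LRS,prop:LRS_coeff_bounds} to bound heights and to deduce the support size $R\le t+1$ (hence $D\le d^{t+1}$), and finish with \cref{thm:MSTV_zero_bound}. Your Step~2 just spells out, via rank--nullity applied to $\pi_{\neg V}|_C$, the justification that the paper compresses into the sentence ``meaning that $R\le t+1$''; this is a useful clarification but not a different argument.
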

\begin{proof}
By combining \cref{prop:Orbit_to_SSP,prop:LRS_coeff_bounds}, \cref{lem:small_MSTV_height,cor:small_MSTV_LRS}, 
there is a non-degenerate LRS lying in the MSTV class defined by
\begin{align*}
    u_n = \sum_{i=1}^R P_i(n) \lambda_i^n = \sum_{i=1}^R \sum_{j=0}^{\deg P_i} c_{i,j} n^j \lambda_i^n
\end{align*}
such that if $M^n \bm x \in S$ then $u_n=0$, and furthermore $h(\lambda_i),h(c_{i,j}) = \poly(\|(M,\bm x , S)\|)$, $\deg(\lambda_i) \leq d$ and $c_{i,j} \in \Q(\lambda_1, \dots, \lambda_R)$ for all $i,j$. 
Moreover, $u$ arises as a linear combination of $d-t$ linearly independent LRS such that $u$ has minimal support, meaning that $R \leq t+1$. 
Therefore, letting $\KK = \Q(\lambda_1, \dots, \lambda_R)$, we have $D = [\KK : \Q] \leq d^{t+1}$. Finally, \cref{thm:MSTV_zero_bound} implies that if $u_n = 0$, then $n < 2^{\poly(d^t,\|(M,\bm x , S)\|)}$.
\end{proof}
Now we may prove our main complexity result. Before stating it, recall that $\EqSLP$ is the complete class for the problem of checking whether a polynomial-size circuit evaluates to zero. It is known that $\EqSLP \subseteq \coRP$ \cite{Schonhage_1979,allender_complexity_2009}.
\begin{theorem} \label{thm:complex-end-result}
For every fixed constant $T \in \N$, when restricting to $t \leq T$, we have that $\orbprob_\Q(d,t)$ for all $(d,t)$ such that $t \leq 2\log_3 d$ is in $\NP^{\EqSLP} \subseteq \NP^\RP$. 
Furthermore, for every fixed constant $D \in \N$, when restricting to $d \leq D$, the problem is in $\coRP$.
\end{theorem}
\begin{proof}
	Let $M \in \Q^{d \times d}$, $\bm x\in \Q^d$, and target subspace $S \subseteq \Q^d$ with $\dim S = t$ be given. Since we assume $t \leq T$ for fixed $T \in \N$, we have $d^t = \poly(d) = \poly(\|M\|)$. 
By \cref{cor:exp_zero_bound}, there is an upper bound
\begin{align*}
    N = 2^{\poly(d^t,\|(M,\bm x , S)\|)} = 2^{\poly(\|(M,\bm x, S)\|)}
\end{align*}
such that if $M^n \bm x \in S$, then $n \leq N$. Also, one may compute a basis $\bm w^{(1)}, \dots, \bm w^{(d-t)}$ of $S^\perp$ in polynomial time. Let $u^{(i)} = (\bm{w}^{(i)})^\top M^n \bm x$. We present an algorithm which then puts $\orbprob_\Q(d,t)$ in~$\NP^\RP$:
\begin{enumerate}
    \item Guess an index $0 \leq n \leq N$;
    \item Check whether $u^{(i)}_n = 0$ for all $1 \leq i \leq d-t$ as follows: 
    build a polynomial-size circuit to compute $(\bm{w}^{(i)})^\top M^n \bm x$ for each $1 \leq i \leq d-t$, by using repeated squaring on $M$. 
\end{enumerate} 
Now suppose that $d$ is bounded. 
Since $N = 2^{\poly(\|(M,\bm x, S)\|)}$, we have $\|N\| = \poly(\|(M,\bm x , S)\|)$. Therefore, $\orbprob_\Q(d,t)$ reduces to the Bounded Skolem Problem (as defined in \cite{bacik_complexity_2026}) with input LRS $u_n = (u^{(1)}_n)^2 + \dots + (u_n^{(d-t)})^2$ and integer $N$. It is shown in \cite{bacik_complexity_2026} that when the order of the input LRS is bounded, the Bounded Skolem Problem lies in $\coRP$. 
Since the order of $u$ is bounded by a function in $d$, we thus have that the $\orbprob_\Q(d,t)$ with bounded ambient dimension $d$ is in $\coRP$.
\end{proof}

\bibliography{main.bib}

@InProceedings{Bilu_2023,
  author =	{Bilu, Yuri and Luca, Florian and Nieuwveld, Joris and Ouaknine, Jo\"{e}l and Purser, David and Worrell, James},
  title =	{{Skolem Meets Schanuel}},
  booktitle =	{47th International Symposium on Mathematical Foundations of Computer Science (MFCS 2022)},
  pages =	{20:1--20:15},
  series =	{Leibniz International Proceedings in Informatics (LIPIcs)},
  ISBN =	{978-3-95977-256-3},
  ISSN =	{1868-8969},
  year =	{2022},
  volume =	{241},
  publisher =	{Schloss Dagstuhl -- Leibniz-Zentrum f{\"u}r Informatik},
  address =	{Dagstuhl, Germany},
  doi =		{10.4230/LIPIcs.MFCS.2022.20},
}

@misc{bacik_p-adic_2025,
	title = {On the p-adic {Skolem} {Problem}},
	url = {http://arxiv.org/abs/2504.14413},
	doi = {10.48550/arXiv.2504.14413},
	publisher = {arXiv},
	author = {Bacik, Piotr and Ouaknine, Joël and Purser, David and Worrell, James},
	month = apr,
	year = {2025},
	note = {arXiv:2504.14413},
}

@article{bacik_completing_2025,
	title = {Completing the {Picture} for the {Skolem} {Problem} on {Order-4} {Linear} {Recurrence} {Sequences}},
	volume = {4},
	issn = {2751-4838},
	url = {https://theoretics.episciences.org/17020},
	doi = {10.46298/theoretics.25.28},
	journal = {TheoretiCS},
	author = {Bacik, Piotr},
	year = {2025},
    numpages = {11}
}

@article{chonev_polyhedra2015,
	author =        {Chonev, Ventsislav and Ouaknine, Joël and
	Worrell, James},
	journal =       {Proceedings of the Annual ACM-SIAM Symposium on
	Discrete Algorithms},
	month =         {07},
	pages =         {940-956},
	title =         {The {P}olyhedron-{H}itting {P}roblem},
	volume =        {2015},
    year = {2015}
}

@inproceedings{chonev_orbit2013,
author = {Chonev, Ventsislav and Ouaknine, Jo\"{e}l and Worrell, James},
title = {The orbit problem in higher dimensions},
year = {2013},
isbn = {9781450320290},
publisher = {Association for Computing Machinery},
address = {New York, NY, USA},
url = {https://doi.org/10.1145/2488608.2488728},
doi = {10.1145/2488608.2488728},
booktitle = {Proceedings of the Forty-Fifth Annual ACM Symposium on Theory of Computing},
pages = {941–950},
numpages = {10},
keywords = {orbit problem, matrix powers, Skolem's problem},
location = {Palo Alto, California, USA},
series = {STOC '13}
}

@article{chonev_complex2016,
	address =       {New York, USA},
	author =        {Chonev, Ventsislav and Ouaknine, Jo\"{e}l and
	Worrell, James},
	journal   = {J. {ACM}},
	volume    = {63},
	number    = {3},
	pages     = {23:1--23:18},
	year      = {2016},
	publisher =     {Association for Computing Machinery},
	title =         {On the {C}omplexity of the {O}rbit {P}roblem}
}

@article{KL_poly1986,
	author = {Kannan, R. and Lipton, R. J.},
	title = {Polynomial-{T}ime {A}lgorithm for the {O}rbit {P}roblem},
	year = {1986},
	issue_date = {Oct. 1986},
	publisher = {Association for Computing Machinery},
	address = {New York, USA},
	volume = {33},
	number = {4},
	issn = {0004-5411},
	doi = {10.1145/6490.6496},
	journal = {J. ACM},
	month = aug,
	pages = {808–821},
	numpages = {14}
}

@inproceedings{KL_decid1980,
	address = {Los Angeles, California, United States},
	title = {The orbit problem is decidable},
	copyright = {https://www.acm.org/publications/policies/copyright\_policy\#Background},
	isbn = {9780897910170},
	url = {http://portal.acm.org/citation.cfm?doid=800141.804673},
	doi = {10.1145/800141.804673},
	language = {en},
	booktitle = {Proceedings of the twelfth annual {ACM} symposium on {Theory} of computing  - {STOC} '80},
	publisher = {ACM Press},
	author = {Kannan, Ravindran and Lipton, Richard J.},
	year = {1980},
	pages = {252--261},
}

@book{harrison_book1969,
	author = {Harrison, Michael A.},
	title = {Lectures on Linear Sequential Machines},
	year = {1969},
	isbn = {0123277507},
	publisher = {Academic Press, Inc.},
	address = {USA}
}

@article{tijdeman_mst1984,
	author = {Tijdeman, R. and Mignotte, M. and Shorey, T.N.},
	journal = {Journal für die reine und angewandte Mathematik},
	keywords = {distance between terms; algebraic recurrence sequence},
	pages = {63-76},
	title = {The distance between terms of an algebraic recurrence sequence.},
	url = {http://eudml.org/doc/152622},
	volume = {349},
	year = {1984},
}

@article{vereshchagin_1985,
	author = {Vereshchagin, N.K.},
	journal = {Mathematical notes of the Academy of Sciences of the USSR},
	pages = {609–615},
	title = {Occurrence of zero in a linear recursive sequence},
	url = {https://link.springer.com/article/10.1007/BF01156238},
	volume = {38},
	year = {1985},
}

@inproceedings{lipton_skolemPC_2022,
	author = {Lipton, Richard and Luca, Florian and Nieuwveld, Joris and Ouaknine, Jo\"{e}l and Purser, David and Worrell, James},
	title = {On the Skolem Problem and the Skolem Conjecture},
	year = {2022},
	isbn = {9781450393515},
	publisher = {Association for Computing Machinery},
	address = {New York, NY, USA},
	url = {https://doi.org/10.1145/3531130.3533328},
	doi = {10.1145/3531130.3533328},
	booktitle = {Proceedings of the 37th Annual ACM/IEEE Symposium on Logic in Computer Science},
	articleno = {5},
	numpages = {9},
	keywords = {decidability, Skolem Problem, Skolem Conjecture, Linear recurrence sequences, Exponential local-global principle},
	location = {Haifa, Israel},
	series = {LICS '22}
}

@InProceedings{Schonhage_1979,
author="Sch{\"o}nhage, Arnold",
editor="Maurer, Hermann A.",
title="On the power of random access machines",
booktitle="Automata, Languages and Programming",
year="1979",
publisher="Springer Berlin Heidelberg",
address="Berlin, Heidelberg",
pages="520--529",
isbn="978-3-540-35168-9"
}

@inproceedings{bacik_complexity_2026,
	author = {Piotr Bacik and Joël Ouaknine and James Worrell},
	title = {On the {Complexity} of the {Skolem} {Problem} at {Low} {Orders}},
	booktitle = {Proceedings of the 2026 Annual ACM-SIAM Symposium on Discrete Algorithms (SODA)},
	year = {2026},
	pages = {5255-5269},
	doi = {10.1137/1.9781611978971.191},
	URL = {https://epubs.siam.org/doi/abs/10.1137/1.9781611978971.191},
}

@misc{bilu_skolem_2025,
	title = {Skolem {Problem} for {Linear} {Recurrence} {Sequences} with 4 {Dominant} {Roots} (after {Mignotte}, {Shorey}, {Tijdeman}, {Vereshchagin} and {Bacik})},
	url = {http://arxiv.org/abs/2501.16290},
	doi = {10.48550/arXiv.2501.16290},
	publisher = {arXiv},
	author = {Bilu, Yuri},
	month = mar,
	year = {2025},
	note = {arXiv:2501.16290},
	keywords = {Mathematics - Number Theory},
}

@article{matveev_explicit_2000,
	title = {An explicit lower bound for a homogeneous rational linear form in the logarithms of algebraic numbers. {II}},
	volume = {64},
	issn = {1064-5632, 1468-4810},
	url = {https://www.mathnet.ru/eng/im314},
	doi = {10.1070/IM2000v064n06ABEH000314},
	number = {6},
	journal = {Izvestiya: Mathematics},
	author = {Matveev, E M},
	month = dec,
	year = {2000},
	pages = {1217--1269},
}

@article{yu_p-adic_1999,
	title = {p-adic logarithmic forms and group varieties {II}},
	volume = {89},
	issn = {0065-1036, 1730-6264},
	url = {http://www.impan.pl/get/doi/10.4064/aa-89-4-337-378},
	doi = {10.4064/aa-89-4-337-378},
	number = {4},
	journal = {Acta Arithmetica},
	author = {Yu, Kunrui},
	year = {1999},
	pages = {337--378},
}

@book{trefethen_numerical_1997,
	added-at = {2010-09-19T02:35:23.000+0200},
	author = {Trefethen, Lloyd N. and Bau, David},
    address = {Philadelphia, PA},
	biburl = {https://www.bibsonomy.org/bibtex/2e45a2ed5ccc6dc12721cde613217c222/ytyoun},
	interhash = {1e7e7a44cbff3092be50a71fe056c8ec},
	intrahash = {e45a2ed5ccc6dc12721cde613217c222},
	isbn = {0898713617},
	keywords = {characteristic eigenvalues linear.algebra matrix numerical numerical.analysis polynomial secular.equation textbook},
	publisher = {SIAM},
	timestamp = {2017-11-25T07:18:16.000+0100},
	title = {Numerical Linear Algebra},
	year = 1997
}

@article{krylov_1931,
	title = {On the numerical solution of the equation which determines
	the frequencies of small vibrations of material systems for technical questions (In Russian)},
	number = {4},
	journal = {Bulletin of the Academy of Sciences of the USSR},
	language = {Russian},
	author = {Krylov, A.N.},
	year = {1931},
	pages = {491-539},
}

@article{Voutier_1996,
author = {Paul Voutier},
journal = {Acta Arithmetica},
keywords = {algebraic number; absolute logarithmic height; Mahler measure},
language = {eng},
number = {1},
pages = {81-95},
title = {An effective lower bound for the height of algebraic numbers},
volume = {74},
year = {1996},
doi = {10.4064/aa-74-1-81-95}
}

@book{Waldschmidt_book,
	address = {Berlin, Heidelberg},
	series = {Grundlehren der mathematischen {Wissenschaften}},
	title = {Diophantine {Approximation} on {Linear} {Algebraic} {Groups}},
	volume = {326},
	copyright = {http://www.springer.com/tdm},
	isbn = {9783642086083},
	publisher = {Springer Berlin Heidelberg},
	author = {Waldschmidt, Michel},
	year = {2000},
	doi = {10.1007/978-3-662-11569-5},
}

@article{mignotte_logarithm_1994,
title = {On Algebraic Numbers of Small Height: Linear Forms in One Logarithm},
journal = {Journal of Number Theory},
volume = {47},
number = {1},
pages = {43-62},
year = {1994},
issn = {0022-314X},
doi = {https://doi.org/10.1006/jnth.1994.1025},
url = {https://www.sciencedirect.com/science/article/pii/S0022314X84710250},
author = {M. Mignotte and M. Waldschmidt},
}

@InProceedings{akshay_2020,
  author =	{Akshay, S. and Balaji, Nikhil and Murhekar, Aniket and Varma, Rohith and Vyas, Nikhil},
  title =	{{Near-Optimal Complexity Bounds for Fragments of the Skolem Problem}},
  booktitle =	{37th International Symposium on Theoretical Aspects of Computer Science (STACS 2020)},
  pages =	{37:1--37:18},
  series =	{Leibniz International Proceedings in Informatics (LIPIcs)},
  ISBN =	{978-3-95977-140-5},
  ISSN =	{1868-8969},
  year =	{2020},
  volume =	{154},
  editor =	{Paul, Christophe and Bl\"{a}ser, Markus},
  publisher =	{Schloss Dagstuhl -- Leibniz-Zentrum f{\"u}r Informatik},
  address =	{Dagstuhl, Germany},
  URL =		{https://drops.dagstuhl.de/entities/document/10.4230/LIPIcs.STACS.2020.37},
  URN =		{urn:nbn:de:0030-drops-118982},
  doi =		{10.4230/LIPIcs.STACS.2020.37},
  annote =	{Keywords: Linear Recurrences, Skolem problem, NP-completeness, Weighted automata}
}

@book{neukirch_algebraic_1999,
	address = {Berlin, Heidelberg},
	series = {Grundlehren der mathematischen {Wissenschaften}},
	title = {Algebraic {Number} {Theory}},
	volume = {322},
	copyright = {http://www.springer.com/tdm},
	isbn = {9783642084737},
	publisher = {Springer Berlin Heidelberg},
	author = {Neukirch, Jürgen},
	year = {1999},
	doi = {10.1007/978-3-662-03983-0},
}

@article{Bombieri1983,
author = {Bombieri, E. and Vaaler, J.},
journal = {Inventiones mathematicae},
keywords = {geometry of numbers over adeles; auxiliary functions in transcendence theory; linear equations; canonical height; cube-slicing inequality},
pages = {11-32},
title = {On {S}iegel's {L}emma.},
url = {http://eudml.org/doc/143033},
volume = {73},
year = {1983},
}

@Misc{Skolem_SML,
 Author = {Skolem, Thoralf},
 Title = {Ein {Verfahren} zur {Behandlung} gewisser exponentialer {Gleichungen} und diophantischer {Gleichungen}},
 Year = {1935},
 Language = {German},
 HowPublished = {8. {Skand}. {Mat}.-{Kongr}., 163-188 (1935).},
 Keywords = {11D61,11D41},
 zbMATH = {3017771},
 Zbl = {0011.39201}
}

@article{lech_note_1953,
	title = {A note on recurring series},
	volume = {2},
	issn = {0004-2080},
	doi = {10.1007/BF02590997},
	language = {en},
	number = {5},
	journal = {Arkiv för Matematik},
	author = {Lech, Christer},
	month = {8},
	year = {1953},
	pages = {417--421},
}

@Article{Mahler_SML,
 Author = {Mahler, Kurt},
 Title = {Eine arithmetische {Eigenschaft} der {Taylor}-{Koeffizienten} rationaler {Funktionen}.},
 FJournal = {Proceedings. Akadamie van Wetenschappen Amsterdam},
 Journal = {Proc. Akad. Wet. Amsterdam},
 ISSN = {0370-0348},
 Volume = {38},
 Pages = {50--60},
 Year = {1935},
 Language = {German},
 zbMATH = {2531698},
 JFM = {61.0176.02}
}

@article{Beukers1984,
author = {Beukers, F. and Tijdeman, R.},
journal = {Compositio Mathematica},
keywords = {binary recurrence; multiplicity of non-degenerate complex sequences; algebraic numbers},
language = {eng},
number = {2},
pages = {193-213},
publisher = {Martinus Nijhoff Publishers},
title = {On the multiplicities of binary complex recurrences},
url = {http://eudml.org/doc/89641},
volume = {51},
year = {1984},
}

@article{Dubickas_1998, 
title={On algebraic numbers close to 1}, 
volume={58}, 
DOI={10.1017/S0004972700032408}, 
number={3}, 
journal={Bulletin of the Australian Mathematical Society}, 
author={Dubickas, Artūras}, 
year={1998},
pages={423–434}
}

@InProceedings{almagor_2019,
  author =	{Almagor, Shaull and Ouaknine, Jo\"{e}l and Worrell, James},
  title =	{{The Semialgebraic Orbit Problem}},
  booktitle =	{36th International Symposium on Theoretical Aspects of Computer Science (STACS 2019)},
  pages =	{6:1--6:15},
  series =	{Leibniz International Proceedings in Informatics (LIPIcs)},
  ISBN =	{978-3-95977-100-9},
  ISSN =	{1868-8969},
  year =	{2019},
  volume =	{126},
  editor =	{Niedermeier, Rolf and Paul, Christophe},
  publisher =	{Schloss Dagstuhl -- Leibniz-Zentrum f{\"u}r Informatik},
  address =	{Dagstuhl, Germany},
  URL =		{https://drops.dagstuhl.de/entities/document/10.4230/LIPIcs.STACS.2019.6},
  URN =		{urn:nbn:de:0030-drops-102450},
  doi =		{10.4230/LIPIcs.STACS.2019.6},
  annote =	{Keywords: linear dynamical systems, Orbit Problem, first order theory of the reals}
}

@InProceedings{karimov_2020,
  author =	{Karimov, Toghrul and Ouaknine, Jo\"{e}l and Worrell, James},
  title =	{{On LTL Model Checking for Low-Dimensional Discrete Linear Dynamical Systems}},
  booktitle =	{45th International Symposium on Mathematical Foundations of Computer Science (MFCS 2020)},
  pages =	{54:1--54:14},
  series =	{Leibniz International Proceedings in Informatics (LIPIcs)},
  ISBN =	{978-3-95977-159-7},
  ISSN =	{1868-8969},
  year =	{2020},
  volume =	{170},
  editor =	{Esparza, Javier and Kr\'{a}l', Daniel},
  publisher =	{Schloss Dagstuhl -- Leibniz-Zentrum f{\"u}r Informatik},
  address =	{Dagstuhl, Germany},
  URL =		{https://drops.dagstuhl.de/entities/document/10.4230/LIPIcs.MFCS.2020.54},
  URN =		{urn:nbn:de:0030-drops-127215},
  doi =		{10.4230/LIPIcs.MFCS.2020.54},
  annote =	{Keywords: Linear dynamical systems, Orbit Problem, LTL model checking}
}

@InProceedings{baier_2021,
  author =	{Baier, Christel and Funke, Florian and Jantsch, Simon and Karimov, Toghrul and Lefaucheux, Engel and Luca, Florian and Ouaknine, Jo\"{e}l and Purser, David and Whiteland, Markus A. and Worrell, James},
  title =	{{The Orbit Problem for Parametric Linear Dynamical Systems}},
  booktitle =	{32nd International Conference on Concurrency Theory (CONCUR 2021)},
  pages =	{28:1--28:17},
  series =	{Leibniz International Proceedings in Informatics (LIPIcs)},
  ISBN =	{978-3-95977-203-7},
  ISSN =	{1868-8969},
  year =	{2021},
  volume =	{203},
  editor =	{Haddad, Serge and Varacca, Daniele},
  publisher =	{Schloss Dagstuhl -- Leibniz-Zentrum f{\"u}r Informatik},
  address =	{Dagstuhl, Germany},
  URL =		{https://drops.dagstuhl.de/entities/document/10.4230/LIPIcs.CONCUR.2021.28},
  URN =		{urn:nbn:de:0030-drops-144053},
  doi =		{10.4230/LIPIcs.CONCUR.2021.28},
  annote =	{Keywords: Orbit problem, parametric, linear dynamical systems}
}

@article{Almagor_2021,
author = {Almagor, Shaull and Karimov, Toghrul and Kelmendi, Edon and Ouaknine, Jo\"{e}l and Worrell, James},
title = {Deciding {$\omega$}-regular properties on linear recurrence sequences},
year = {2021},
issue_date = {January 2021},
publisher = {Association for Computing Machinery},
address = {New York, NY, USA},
volume = {5},
number = {POPL},
url = {https://doi.org/10.1145/3434329},
doi = {10.1145/3434329},
journal = {Proc. ACM Program. Lang.},
month = jan,
articleno = {48},
numpages = {24},
keywords = {almost periodic words, linear loops, linear recurrence sequences, omega-regular properties}
}

@article{Karimov_2022,
author = {Karimov, Toghrul and Lefaucheux, Engel and Ouaknine, Jo\"{e}l and Purser, David and Varonka, Anton and Whiteland, Markus A. and Worrell, James},
title = {What’s decidable about linear loops?},
year = {2022},
issue_date = {January 2022},
publisher = {Association for Computing Machinery},
address = {New York, NY, USA},
volume = {6},
number = {POPL},
url = {https://doi.org/10.1145/3498727},
doi = {10.1145/3498727},
journal = {Proc. ACM Program. Lang.},
month = jan,
articleno = {65},
numpages = {25},
keywords = {MSO model checking, linear dynamical systems, verification}
}

@Inbook{Karimov_2022_lds,
	author="Karimov, Toghrul
	and Kelmendi, Edon
	and Ouaknine, Jo{\"e}l
	and Worrell, James",
	title="What's Decidable About Discrete Linear Dynamical Systems?",
	bookTitle="Principles of Systems Design: Essays Dedicated to Thomas A. Henzinger on the Occasion of His 60th Birthday",
	year="2022",
	publisher="Springer Nature Switzerland",
	address="Cham",
	pages="21--38",
	abstract="We survey the state of the art on the algorithmic analysis of discrete linear dynamical systems, focussing in particular on reachability, model-checking, and invariant-generation questions, both unconditionally as well as relative to oracles for the Skolem Problem.",
	isbn="978-3-031-22337-2",
	doi="10.1007/978-3-031-22337-2_2",
	url="https://doi.org/10.1007/978-3-031-22337-2_2"
}

@inproceedings{dcosta_2022,
	title = {The {Pseudo}-{Reachability} {Problem} for {Diagonalisable} {Linear} {Dynamical} {Systems}},
	copyright = {https://creativecommons.org/licenses/by/4.0/legalcode},
	url = {https://drops.dagstuhl.de/entities/document/10.4230/LIPIcs.MFCS.2022.40},
	doi = {10.4230/LIPIcs.MFCS.2022.40},
    address =	{Dagstuhl, Germany},
	language = {en},
	booktitle = {47th {International} {Symposium} on {Mathematical} {Foundations} of {Computer} {Science} ({MFCS} 2022)},
	publisher = {Schloss Dagstuhl – Leibniz-Zentrum für Informatik},
	author = {D'Costa, Julian and Karimov, Toghrul and Majumdar, Rupak and Ouaknine, Joël and Salamati, Mahmoud and Worrell, James},
	year = {2022},
	pages = {40:1--40:13},
}

@InProceedings{dcosta_2021,
  author =	{D'Costa, Julian and Karimov, Toghrul and Majumdar, Rupak and Ouaknine, Jo\"{e}l and Salamati, Mahmoud and Soudjani, Sadegh and Worrell, James},
  title =	{{The Pseudo-Skolem Problem is Decidable}},
  booktitle =	{46th International Symposium on Mathematical Foundations of Computer Science (MFCS 2021)},
  pages =	{34:1--34:21},
  series =	{Leibniz International Proceedings in Informatics (LIPIcs)},
  ISBN =	{978-3-95977-201-3},
  ISSN =	{1868-8969},
  year =	{2021},
  volume =	{202},
  editor =	{Bonchi, Filippo and Puglisi, Simon J.},
  publisher =	{Schloss Dagstuhl -- Leibniz-Zentrum f{\"u}r Informatik},
  address =	{Dagstuhl, Germany},
  URL =		{https://drops.dagstuhl.de/entities/document/10.4230/LIPIcs.MFCS.2021.34},
  URN =		{urn:nbn:de:0030-drops-144742},
  doi =		{10.4230/LIPIcs.MFCS.2021.34},
  annote =	{Keywords: Pseudo-orbits, Orbit problem, Skolem problem, linear dynamical systems}
}

@book{meyer_matrix_2023,
	address = {Philadelphia, PA},
	edition = {Second edition},
	series = {Other titles in applied mathematics},
	title = {Matrix analysis and applied linear algebra},
	isbn = {9781611977431},
	shorttitle = {Matrix analysis and applied linear algebra},
	language = {eng},
	number = {189},
	publisher = {Siam, Society for Industrial and Applied Mathematics},
	author = {Meyer, Carl D.},
	year = {2023},
}

@phdthesis{karimov_thesis,
	title = {Algorithmic verification of linear dynamical systems},
	url = {https://publikationen.sulb.uni-saarland.de/handle/20.500.11880/37285},
	language = {en},
	school = {Saarländische Universitäts- und Landesbibliothek},
	author = {Karimov, Toghrul},
	year = {2023},
	doi = {10.22028/D291-41630},
}

@inproceedings{karimov_multiple_2025,
  author={Karimov, Toghrul and Kelmendi, Edon and Ouaknine, Joël and Worrell, James},
  booktitle={2025 40th Annual ACM/IEEE Symposium on Logic in Computer Science (LICS)}, 
  title={Multiple Reachability in Linear Dynamical Systems}, 
  year={2025},
  volume={},
  number={},
  pages={651-663},
  keywords={Geometry;Computer science;Orbits;Logic;Dynamical systems;Arithmetic;Linear dynamical systems;linear recurrence sequences;Skolem Problem},
  doi={10.1109/LICS65433.2025.00055}}

@article{ouaknine_termination_2015,
author = {Ouaknine, Jo\"{e}l and Worrell, James},
title = {On linear recurrence sequences and loop termination},
year = {2015},
issue_date = {April 2015},
publisher = {Association for Computing Machinery},
address = {New York, NY, USA},
volume = {2},
number = {2},
url = {https://doi.org/10.1145/2766189.2766191},
doi = {10.1145/2766189.2766191},
abstract = {A sequence of real numbers is said to be positive if all terms are positive, and ultimately positive if all but finitely many terms are positive. In this article we survey recent progress on long-standing open problems concerning deciding the positivity and ultimate positivity of integer linear recurrence sequences. We briefly discuss some of the many contexts in which these problems arise, and relate them to the well-known Skolem Problem, which asks whether a given linear recurrence sequence has a zero term. We also highlight some of the mathematical techniques that have been used to obtain decision procedures for these problems, pointing out obstacles to further progress.In the second half of this survey we move on to closely related questions concerning the termination of linear while loops. This is a well-studied subject in software verification and by now there is rich toolkit of techniques to prove termination in practice. However the decidability of termination for some of the most basic types of loop remains open. Here again we discuss recent progress and remaining open problems.},
journal = {ACM SIGLOG News},
month = apr,
pages = {4–13},
numpages = {10}
}

@article{allender_complexity_2009,
author = {Allender, Eric and B\"{u}rgisser, Peter and Kjeldgaard-Pedersen, Johan and Miltersen, Peter Bro},
title = {On the Complexity of Numerical Analysis},
journal = {SIAM Journal on Computing},
volume = {38},
number = {5},
pages = {1987-2006},
year = {2009},
doi = {10.1137/070697926},
URL = {https://doi.org/10.1137/070697926},
}

\clearpage

\appendix
\section{Missing Proofs} \label{sec:appendix}
\subsection{Bound on height of determinant}
\hdetbound*
\begin{proof}
\begin{align*}
h(\det A) &= \frac{1}{[\KK : \Q]} \sum_{\substack{v \in M_\KK \\ v \mid \infty }} D_v \log^+ |\det A|_v  \\ 
&+ \frac{1}{[\KK : \Q]} \sum_{\substack{v \in M_\KK \\ v \nmid \infty }} D_v \log^+ |\det A|_v \, .
\end{align*}
First we bound the sum over Archimedean absolute values:
\begin{align}
    &\frac{1}{[\KK : \Q]} \sum_{\substack{v \in M_\KK \\ v \mid \infty }} D_v \log^+ |\det A|_v \notag \\
    &\leq \frac{1}{[\KK : \Q]} \sum_{\substack{v \in M_\KK \\ v \mid \infty }} D_v \log^+\left( \left( \max_{i,j} |a_{ij}|_v \right)^{m} m^{\frac{m}{2}}\right) \label{eqn:h_det_bound1} \\
    &\leq \frac{1}{[\KK : \Q]} \sum_{\substack{v \in M_\KK \\ v \mid \infty }} D_v \frac{m}{2}  \log m + \frac{1}{[\KK : \Q]} \sum_{\substack{v \in M_\KK \\ v \mid \infty }} D_v \sum_{1 \leq i,j \leq m} m\log^+ |a_{ij}|_v  \notag \\
    &\leq \frac{m}{2} \log m + \frac{m}{[\KK : \Q]} \sum_{\substack{v \in M_\KK \\ v \mid \infty }} D_v \sum_{1 \leq i,j \leq m} \log^+ |a_{ij}|_v \label{eqn:h_det_bound12} 
\end{align}
where \eqref{eqn:h_det_bound1} arises from Hadamard's inequality applied to each embedding of $A$ into $\CC$. Now, we bound the sum over non-Archimedean~$v$:
\begin{align}
&\frac{1}{[\KK : \Q]} \sum_{\substack{v \in M_\KK \\ v \nmid \infty }} D_v \log^+ |\det A|_v \notag \\ &
\leq \frac{1}{[\KK : \Q]} \sum_{\substack{v \in M_\KK \\ v \nmid \infty }} D_v \log^+ \left( \max_{\sigma \in S_m} \prod_{i=1}^m |a_{i,\sigma(i)}|_v \right) \label{eqn:h_det_bound21}\\
&\leq \frac{1}{[\KK : \Q]} \sum_{\substack{v \in M_\KK \\ v \nmid \infty }} D_v \sum_{1 \leq i,j \leq m} \log^+ |a_{ij}|_v \label{eqn:h_det_bound2}
\end{align}
where \eqref{eqn:h_det_bound21} is due to the Leibniz formula for determinants and the ultrametric inequality. 
With~\eqref{eqn:h_det_bound12} and~\eqref{eqn:h_det_bound2} we get
\begin{align*}
    h(\det A) &\leq \frac{m}{2}\log m + \frac{m}{[\KK : \Q]} \sum_{v \in M_\KK} D_v \sum_{1 \leq i,j \leq m} \log^+ |a_{ij}|_v \\
    &\leq \frac{m}{2}\log m + m \sum_{1 \leq i , j \leq m} h(a_{ij}) \, ,
\end{align*}
as required.
\end{proof}
\subsection{Height bound for a linear combination}
Given a finite field extension $\LL/\KK$ of number fields, we may define a linear map $U_x : \LL \to \LL$ by $U_x(y) = xy$. This is a $\KK$-linear map, so it has a trace $\mathrm{Tr}(U_x) \in \KK$. We thus define the \emph{trace of $x$ relative to the extension $\LL/\KK$} as $\mathrm{Tr}_{\LL/\KK}(x) = \mathrm{Tr}(U_x)$. If $\sigma_1, \dots, \sigma_n$ are all the distinct $\KK$-embeddings of $\LL$ into the algebraic closure of $\KK$, we have %the formula
\begin{align*}
    \mathrm{Tr}_{\LL / \KK}(x) = \sum_i \sigma_i(x) \, .
\end{align*}
The map $(x,y) \mapsto \mathrm{Tr}_{\LL/\KK}(xy)$ defines a non-degenerate bilinear form on the $\KK$-vector space $\LL$ \cite[Prop. 2.8]{neukirch_algebraic_1999}, meaning that if $x \in \LL$ has the property that $Tr_{\LL/\KK}(xy) = 0$ for all $y \in \LL$ then $x = 0$.
\smallmstv*
\begin{proof}
Let $\LL = \Q(\lambda_1, \dots, \lambda_s)$, and assume without loss of generality that $\beta_1 \neq 0$. Then for any $\mu \in \LL$, consider
\begin{align*}
    \mathrm{Tr}_{\LL / \KK} \left( \mu \sum_{i=1}^k \beta_i u^{(i)}_n \right) = \sum_{i=1}^R \mathrm{Tr}_{\LL / \KK} ( \mu P_i(n)) \lambda_i^n \, .
\end{align*}
Since the trace is a non-degenerate bilinear form, there exists $\mu \in \LL$ such that $\mathrm{Tr}_{\LL / \KK} (\mu P_i(n))$ are not all identically zero. Thus, 
\begin{align*}
    \mathrm{Tr}_{\LL / \KK} \left( \mu \sum_{i=1}^k \beta_i u^{(i)}_n \right) = \sum_{i=1}^k \mathrm{Tr}_{\LL / \KK}(\mu \beta_i) u^{(i)}_n
\end{align*}
is another non-zero linear combination of the $u^{(i)}$ whose support is a subset of the support of $\sum_{i=1}^k \beta_i u^{(i)}$. 
By minimality and \cref{lem:min_supp}, there exists $\eta \in \LL$ such that $\mathrm{Tr}_{\LL / \KK}( \mu  \beta_i) = \eta \beta_i$ for all $1 \leq i \leq k$. 
We have $\eta \beta_i \in \KK$ for all $1 \leq i \leq k$, and hence 
$\frac{\beta_i}{\beta_1} \in \KK$ for all $1 \leq i \leq k$. 
Take $\bm \gamma \coloneq \frac{1}{\beta_1} \bm \beta \in \KK^k$, so $h(\gamma_i) = \poly(\|\beta_i\|)$ for all $1 \leq i \leq k$.
\end{proof}
\subsection{Bounds on zeros of LRS}
In this section, we prove upper bounds on the zeros of non-degenerate LRS in the MSTV class. We essentially reprove the results of \cite{tijdeman_mst1984,vereshchagin_1985}, but carry the constants through in more detail so that the dependence on each quantity is clear. The structure of our proof more or less follows the exposition given by \cite{bilu_skolem_2025}, though we provide a little more detail when necessary. 

The proofs will rely on Baker's theorem on linear forms in logarithms, and subsequent improvements and extensions to the $p$-adic case. For the case of the modulus, we have the following formulation of Matveev \cite[Corollary 2.3]{matveev_explicit_2000}. In what follows we shall always take the principal branch of $\log$ on $\CC \setminus \{0\}$, that is we have 
\begin{align*}
- \pi < \arg z = \mathrm{Im} \log z \leq \pi \, .
\end{align*}

\begin{theorem}[Matveev]\label{thm:Matveev} Let $\alpha_1, \dots , \alpha_s$ be non-zero complex algebraic numbers contained in a number field of degree $D$. Let $b_1, \dots , b_s \in \Z$ be such that 
\begin{align*}
    \Lambda = b_1 \log \alpha_1 + \dots + b_s \log \alpha_s \neq 0 \, .
\end{align*}
Further, let $A_1, \dots , A_s, B$ be real numbers such that 
\begin{align*}
    &A_j \geq \max\{Dh(\alpha_j), |\log \alpha_j|, 0.16\} \, \, \, (j = 1, \dots , s) \\
    &B = \max\{|b_1|, \dots , |b_s|\} \, .
\end{align*}
Then 
\begin{align*}
    |\Lambda| \geq \exp\left(-2^{6s+20}D^2 A_1 \dots A_s (1+ \log D)(1+ \log B)\right) \, .
\end{align*}
\end{theorem}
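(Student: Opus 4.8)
This theorem is the effective lower bound for a nonzero homogeneous $\Z$-linear form in logarithms of algebraic numbers, due to Matveev, and in the body of the paper it is invoked purely as a black box; the plan below is the classical Baker-method argument in the sharp form Matveev gives it. Suppose, for contradiction, that $\log|\Lambda|$ is smaller than $-\kappa$, where $\kappa = 2^{6s+20}D^2 A_1\cdots A_s(1+\log D)(1+\log B)$, and let $\KK$ be a number field of degree $D$ containing $\alpha_1,\ldots,\alpha_s$. The proof runs in three stages: (i) an \emph{auxiliary construction} producing a nonzero exponential polynomial vanishing to high order on an initial arithmetic progression; (ii) an \emph{extrapolation} step, exploiting the assumed smallness of $\Lambda$, which propagates that vanishing to far more points; (iii) a \emph{zero (multiplicity) estimate} showing that a nonzero exponential polynomial of the relevant degree cannot vanish that often --- a contradiction. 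The explicit constant is then read off by optimising the integer parameters introduced in (i).

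For stage (i) I would fix auxiliary integers $L_0, L_1, \ldots, L_s$ and $T$ and seek small integers $p(\ell_0,\ell_1,\ldots,\ell_s)\in\Z$, not all zero, with $0\le\ell_j\le L_j$, such that the entire function $\Phi(z)$ formed as the corresponding $\Z$-linear combination of $z^{\ell_0}\prod_{j=1}^s \alpha_j^{(\ell_j + b_j z)}$ (the factor $\prod_j\alpha_j^{b_j z}=e^{z\Lambda}$ being the hook by which $\Lambda$ later enters) vanishes, together with its derivatives of order below $L_0$, at $z = 1,2,\ldots,T$. This is a homogeneous linear system over $\Z$; whenever $\prod_j(L_j+1) > L_0 T$ and the coefficients of the system --- algebraic numbers of height controlled by the $A_j$ --- are not too large, the Thue--Siegel lemma, the same quantitative pigeonhole used in \cref{lem:basis_bound}, yields a nonzero integer solution $p(\cdot)$ of logarithmic size $O\big(T\sum_j L_j A_j\big)$ up to lower-order terms.

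Stage (ii): because $|\Lambda|$ is assumed minuscule, on discs of moderate radius $\Phi$ is numerically indistinguishable from a function in which the $\alpha_j$ are genuinely multiplicatively constrained, so a maximum-modulus argument --- Schwarz' lemma applied to $\Phi(z)\big/\prod_{t\le T}(z-t)^{L_0}$ --- forces $|\Phi^{(\sigma)}(t)|$ to remain extraordinarily small for all integers $t$ up to some $T'\gg T$ and all $\sigma$ below some $L_0'<L_0$. Each such value lies in $\KK$ and has controlled height, so a Liouville inequality (cf.\ the last inequality of \cref{prop:h_props}) upgrades ``extraordinarily small'' to ``equal to zero''. Iterating this ``small $\Rightarrow$ zero'' passage a bounded number of times enlarges the zero set geometrically while only mildly eroding the derivative order. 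Stage (iii): once $\Phi$ vanishes, with multiplicities, at strictly more points than a nonzero exponential polynomial $\sum_{\bm\ell} p(\bm\ell)\prod_j\alpha_j^{\ell_j z}$ of the prescribed multidegree is permitted to --- which follows from a zero estimate that, using the non-degeneracy of the ratios $\alpha_i/\alpha_j$, turns the vanishing conditions into a generalised Vandermonde system forcing $p(\cdot)\equiv 0$ --- we contradict stage (i). Choosing $L_j \asymp A_j^{-1}\cdot(\text{common factor})$, together with $L_0$, $T$, $T'$ so that the Thue--Siegel inequality and the zero estimate are each only just satisfied, produces exactly the exponent $6s+20$ together with the factors $D^2$, $(1+\log D)$, $(1+\log B)$.

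\textbf{Main obstacle.} The genuine difficulty --- and the reason one cites Matveev rather than an older version of Baker's theorem --- is twofold. First, getting the dependence on the number of logarithms $s$ down to a single exponential $2^{O(s)}$, rather than the $s^{O(s)}$ that a careless application of Siegel's lemma and the multiplicity estimate yields, requires Matveev's refined multilinear-algebra estimates in the auxiliary construction, plus a descent argument handling the case where $\Lambda$ is small precisely because the $\log\alpha_j$ are nearly $\Q$-linearly dependent. Second, pinning the constant down to the exact form stated --- not merely ``some effective constant'' --- demands carrying every numerical factor through stages (i)--(iii) without slack. Reproducing that bookkeeping is the bulk of Matveev's paper, which is why, for the present work, the theorem is quoted verbatim rather than proved.
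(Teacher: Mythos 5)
The paper does not prove this statement at all: it is imported verbatim from Matveev's paper (cited as Corollary~2.3 of that work) and used as a black box, exactly as you observe. So there is nothing in the paper to compare your argument against. Your outline of the Baker-method proof --- auxiliary function built by a Thue--Siegel pigeonhole, extrapolation of zeros via a Schwarz-lemma/Liouville alternation driven by the assumed smallness of $\Lambda$, and a terminal zero estimate --- is an accurate description of how such bounds are established, and you correctly identify where the real work lies (the single-exponential dependence $2^{O(s)}$ on the number of logarithms and the exact numerical constants). But as you yourself concede, what you have written is a roadmap, not a proof: the zero estimate, the descent handling near-dependence of the $\log\alpha_j$, and the bookkeeping that yields the precise exponent $6s+20$ are named rather than executed. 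For the purposes of this paper that is the right call --- one should cite Matveev here rather than attempt to reprove him --- but the proposal should be read as a justification for the citation, not as a substitute for it.
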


In the $p$-adic case, we have the following theorem of Yu \cite[Theorem 1]{yu_p-adic_1999}.

\begin{theorem}[Yu]\label{thm:Yu}
Let $\alpha_1, \dots , \alpha_s$ be non-zero algebraic numbers contained in a number field $K$ of degree $D$. Let $\pp \subseteq \O_K$ be a prime ideal lying above integer prime $p$, such that $v_\pp(\alpha_i) = 0$ for $i = 1, \dots, s$. Let $b_1,\dots,b_s \in \Z$ be such that
\begin{align*}
    \Xi = \alpha_1^{b_1} \cdots \alpha_s^{b_s}-1 \neq 0 \, .
\end{align*}
Further, let $A_1, \dots , A_s, B$ be real numbers such that
\begin{align*}
    &A_j \geq \max\{h(\alpha_j), \log p\} \, \, \, (j=1 , \dots , s) \\
    &B \geq \max \{|b_1| , \dots , |b_s|, 3\}.
\end{align*}
Then
\begin{align*}
    v_\pp(\Xi) \leq C(s,D,p)A_1 \cdots A_s \log B
\end{align*}
where we may take
\begin{align*}
C(s,D,p) \leq 4ca^se^s \left( \frac{p}{p-1} \right)^s (s+1)^{s+3}(p^D-1)D^{2s+2}  \max\{D\log p, \log(e^4(s+1)D)\} 
\end{align*}
for $a= 32, c = 712$.
\end{theorem}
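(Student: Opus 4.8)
The plan is to establish Yu's bound by the $\pp$-adic incarnation of the Gel'fond--Baker method — in essence Yu's own strategy of logarithmic forms on the commutative group variety $\mathbb{G}_a \times \mathbb{G}_m^s$, refined by a Kummer-theoretic descent to pin down the dependence on $p$ and $D$. Here $K_\pp$ denotes the $\pp$-adic completion and $\log_\pp, \exp_\pp$ the $\pp$-adic logarithm and exponential. I would begin by normalizing: after permuting indices assume $|b_s| = B$, and since $v_\pp(\alpha_i) = 0$, raise each $\alpha_i$ to a fixed power $w = p\cdot\#(\O_K/\pp)^\times$ so that $\beta_i := \alpha_i^w$ satisfies $v_\pp(\beta_i - 1) > e_\pp/(p-1)$; then $\log_\pp\beta_i := \sum_{k\geq 1}(-1)^{k-1}(\beta_i - 1)^k/k$ converges in $K_\pp$ with $v_\pp(\log_\pp\beta_i) = v_\pp(\beta_i - 1)$. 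Writing $\Xi = \alpha_1^{b_1}\cdots\alpha_s^{b_s}-1$, $\Xi' = \beta_1^{b_1}\cdots\beta_s^{b_s}-1 = (\Xi+1)^w - 1$, and $\Lambda_\pp := \sum_{i=1}^s b_i\log_\pp\beta_i$, one has $v_\pp(\Xi') = v_\pp(\Xi) + v_\pp(w)$ and $v_\pp(\Xi') = v_\pp(\Lambda_\pp)$ whenever $v_\pp(\Lambda_\pp)$ exceeds the radius-of-convergence threshold. So it suffices to bound $v_\pp(\Lambda_\pp)$ from above; the factor contributed by $w$ is reabsorbed into the constants at the end.

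Next comes the core auxiliary construction. Assume for contradiction $v_\pp(\Lambda_\pp) > V$, with $V$ the target bound times a large absolute constant. Choose degree parameters $L_0,\dots,L_s$ and interpolation data $S, T_0$ balanced so that the dimension inequality $\prod_{i=0}^s(L_i+1) > 2DST_0$ holds with $L_0 \sim T_0 \sim V^{1/(s+1)}$ up to polynomial factors, $L_i \sim V^{1/(s+1)}/A_i$ for $i \geq 1$, and $S$ carrying the $\log B$ contribution. By Siegel's lemma over $\O_K$ (the Bombieri--Vaaler form \cref{thm:siegel} already used in the body) there is a nonzero $P \in \O_K[z,X_1,\dots,X_s]$ of these degrees whose coefficients have height $\poly$ in the $A_j, \log B, D, \log p$, such that the $\pp$-adic analytic function
\[ F(z) := P\big(z,\ \exp_\pp(z\log_\pp\beta_1),\ \dots,\ \exp_\pp(z\log_\pp\beta_s)\big) \]
vanishes with all derivatives of order $< T_0$ at $z = 0,1,\dots,S$. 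The point is that since $\log_\pp\beta_s = b_s^{-1}\big(\Lambda_\pp - \sum_{i<s}b_i\log_\pp\beta_i\big)$ and $v_\pp(\Lambda_\pp) > V$, on the relevant disc $\exp_\pp(z\log_\pp\beta_s)$ is $\pp$-adically within $p^{-V+o(V)}$ of $\prod_{i<s}\exp_\pp(-b_iz/b_s\cdot\log_\pp\beta_i)$, so these vanishing conditions are genuinely linear in the coefficients of $P$ over $\O_K$ using only the $\alpha_i$-data.

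The endgame is the extrapolation--zero-estimate dichotomy. A $\pp$-adic Schwarz lemma — a one-variable $\pp$-adic function that is $\pp$-adically close to a polynomial of degree $d$ and vanishes to total multiplicity exceeding $d$ on a disc must have correspondingly large valuation of its norm — combined with the trivial upper estimate on $|F^{(t)}(m)|_\pp$ read off from the definition of $F$, forces $F^{(t)}(m)$ to vanish at a strictly larger set $m = 0,\dots,S'$ with $S' \gg S$, provided $V$ is large. Iterating this bootstrap boundedly many times makes $P$, together with its order-$<T_0$ derivatives, vanish at so many points of $\mathbb{G}_a\times\mathbb{G}_m^s$ that a multiplicity estimate on $\mathbb{G}_a\times\mathbb{G}_m^s$ forces $P \equiv 0$, contradicting $P \neq 0$. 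Hence $v_\pp(\Lambda_\pp) \leq V_0$ for an explicit $V_0$, and unwinding the normalization gives the stated inequality. The precise shape of $C(s,D,p)$ — in particular the factor $(p^D-1)$ rather than a higher power of $p$, and a fixed power of $D$ — is obtained by running the whole argument after a Kummer descent: one first passes to $\kappa$-th roots $\beta_i^{1/\kappa}$ for a well-chosen prime $\kappa$ to shrink the effective $A_i$, works over the resulting unramified extension of $K_\pp$ of degree dividing $\kappa$, and optimizes $\kappa$ against $p$ and $D$.

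The main obstacle is precisely this last layer of bookkeeping: simultaneously choosing $L_0,\dots,L_s, S, T_0$ (and $\kappa$) so that Siegel's lemma applies with small-height output, each extrapolation step genuinely multiplies the number of points by a factor strictly greater than $1$, the multiplicity estimate triggers, and every constant collapses into $C(s,D,p)$ with exactly the displayed dependence on $s$, $D$, and $p$. The $\pp$-adic Schwarz lemma and the $\mathbb{G}_a\times\mathbb{G}_m^s$ multiplicity estimate must themselves be proved with explicit constants, and the radius-of-convergence constraints on $\exp_\pp$ — governed by the threshold $v_\pp(\log_\pp\beta_i) > e_\pp/(p-1)$ — must be respected at every step, which is what ultimately produces the $\big(p/(p-1)\big)^s$ factor in the bound.
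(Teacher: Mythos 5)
The paper does not prove this statement: it is quoted (with a simplified form of the constant) from Yu's work on $p$-adic logarithmic forms and used as an external black box. Your proposal correctly reconstructs the broad architecture of Yu's actual argument --- normalising so that the $\pp$-adic logarithm converges, building an auxiliary polynomial via Siegel's lemma, extrapolating via a $\pp$-adic Schwarz lemma, invoking a multiplicity estimate on $\mathbb{G}_a\times\mathbb{G}_m^s$, and performing a Kummer descent to control the dependence on $p$ and $D$ --- so as a roadmap of the known proof it is accurate.

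As a proof, however, it has a genuine gap, and you name it yourself: every quantitative step is deferred. The theorem is not the qualitative assertion that \emph{some} Baker-type bound on $v_\pp(\Xi)$ exists, but the explicit constant $C(s,D,p)\leq 4ca^se^s\left(\tfrac{p}{p-1}\right)^s(s+1)^{s+3}(p^D-1)D^{2s+2}\cdot\max\{D\log p,\dots\}$ with $a=32$, $c=712$. Obtaining that constant requires (i) an explicit $\pp$-adic Schwarz lemma, (ii) an explicit zero/multiplicity estimate on $\mathbb{G}_a\times\mathbb{G}_m^s$, (iii) a concrete admissible choice of the parameters $L_0,\dots,L_s$, $S$, $T_0$ and of the Kummer prime $\kappa$ satisfying simultaneously the Siegel dimension count and a strict extrapolation gain at each iteration, and (iv) tracking how the convergence threshold $e_\pp/(p-1)$ and the normalising exponent $w$ feed into the final bound; none of this is carried out, and it is precisely where the entire content of Yu's theorem lies. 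A few local claims are also stated too loosely: $v_\pp(\Xi')=v_\pp(\Xi)+v_\pp(w)$ and $v_\pp(\Xi')=v_\pp(\Lambda_\pp)$ hold only once $v_\pp(\Xi)$ already exceeds explicit thresholds (harmless here, but it must be said), and the assertion that the vanishing conditions are ``genuinely linear over $\O_K$ using only the $\alpha_i$-data'' silently absorbs an approximation error that itself needs a quantitative bound. For the purposes of this paper the correct course is the one the authors take: cite Yu's Theorem~1 rather than attempt to reprove it.
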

Note that the form of $C(s,D,p)$ we give is simplified from the original statement --- the important thing for our result is that $C(2,D,p) = \poly(p^D)$.
Our main goals are to prove the following results, from which we shall deduce the required upper bounds on the zeros of non-degenerate LRS in the MSTV class. 

\begin{proposition} \label{prop:2_sum}
Let $\alpha_1,\alpha_2,\lambda_1,\lambda_2 \in \KK$, a number field of degree $D$, such that $\lambda_1/\lambda_2$ is not a root of unity. Let $h_\alpha = \max\{h(\alpha_1),h(\alpha_2)\}, h_\lambda = \max\{h(\lambda_1), h(\lambda_2)\}$. For $n > \poly(D,h_\alpha)$, when $|\cdot|_v$ is Archimedean, we have
\begin{align*}
    |\alpha_1 \lambda_1^n + \alpha_2 \lambda_2^n|_v \geq |\lambda_1|_v^n e^{-C_1 (1+ \log n)}
\end{align*}
and when $|\cdot|_v = |\cdot|_\pp$ is non-Archimedean, we have
\begin{align*}
    |\alpha_1 \lambda_1^n + \alpha_2 \lambda_2^n|_\pp \geq |\lambda_1|_\pp^n e^{-C_2(1+ \log n)}
\end{align*}
for prime ideal $\pp \subseteq \O_\KK$ lying above prime $p$ and some constants $C_1,C_2 > 0$, which we may take to be
\begin{align*}
    C_1 &= \poly(D,h_\alpha,h_\lambda) \\
    C_2 &= \poly(p^D,D,h_\alpha,h_\lambda) \, .
\end{align*}
\end{proposition}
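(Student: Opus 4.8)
The plan is to factor out the leading term and reduce the statement to a Baker-type lower bound for $|1-\beta\rho^n|_v$, where $\beta:=-\alpha_2/\alpha_1$ and $\rho:=\lambda_2/\lambda_1$. Writing $\alpha_1\lambda_1^n+\alpha_2\lambda_2^n=\lambda_1^n\alpha_1(1-\beta\rho^n)$ gives $|\alpha_1\lambda_1^n+\alpha_2\lambda_2^n|_v=|\alpha_1|_v\,|\lambda_1|_v^n\,|1-\beta\rho^n|_v$, and since $\log|\alpha_1|_v\ge -D\,h(\alpha_1)\ge -Dh_\alpha$ by \cref{prop:h_props}, it suffices to establish $|1-\beta\rho^n|_v\ge\exp(-C'(1+\log n))$ with $C'$ of the advertised shape; the additive $Dh_\alpha$ is then absorbed into $C_1$, resp.\ $C_2$. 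Before applying Baker's theorem I would first dispose of the degenerate case $\alpha_1\lambda_1^n+\alpha_2\lambda_2^n=0$, i.e.\ $\beta\rho^n=1$: this can hold for at most one $n$ (two such exponents would force $\rho$ to be a root of unity), and if it holds then $h(\beta)=n\,h(\rho)$, so $n\le h(\beta)/h(\rho)\le 2h_\alpha/h(\rho)=\poly(D,h_\alpha)$ using a lower bound $h(\rho)\ge 1/\poly(D)$ for a non-torsion algebraic number of degree at most $D$ (e.g.\ Dobrowolski's theorem; a cruder elementary bound also suffices). Enlarging the threshold $\poly(D,h_\alpha)$ if necessary, we may therefore assume $\beta\rho^n\neq 1$, which also makes the linear form below and the quantity to which Yu's theorem is applied nonzero.

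For the Archimedean case let $\sigma$ be the embedding with $|x|_v=|\sigma(x)|$ and put $a:=\sigma(\beta)$, $r:=\sigma(\rho)$. If $|1-ar^n|\ge\tfrac12$ we are done once $C_1\ge Dh_\alpha+\log 2$. Otherwise $|ar^n-1|<\tfrac12$, so $\log(ar^n)$ (principal branch) is defined and $|\log(ar^n)|\le 2|ar^n-1|$ by the estimate $|\log(1+u)|\le |u|/(1-|u|)$; hence it suffices to bound $|\log(ar^n)|$ from below. Choose $k\in\Z$ with $|k|\le n+2$ so that $\Lambda:=\log a+n\log r-2\pi i k$ equals $\log(ar^n)$; writing $-2\pi i k=(-2k)\log(-1)$ exhibits $\Lambda\neq 0$ as an integer linear form in $\log a,\log r,\log(-1)$ with coefficients bounded by $6n$. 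Applying Matveev's theorem (\cref{thm:Matveev}) with $s=3$, field degree at most $D$, parameters $A_1=2Dh_\alpha+\pi+1\ \ge\ \max\{D\,h(a),|\log a|,0.16\}$ (using $h(\beta)\le 2h_\alpha$ and $|\log a|\le|\log|\beta|_v|+\pi\le 2Dh_\alpha+\pi$), $A_2=2Dh_\lambda+\pi+1$, $A_3=\pi$, and $B\le 6n$, yields $|\Lambda|\ge\exp(-\poly(D,h_\alpha,h_\lambda)(1+\log n))$; combined with the two displayed inequalities this gives $|1-\beta\rho^n|_v\ge\tfrac12|\Lambda|\ge\exp(-C_1(1+\log n))$ with $C_1=\poly(D,h_\alpha,h_\lambda)$.

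For the non-Archimedean case $|\cdot|_v=|\cdot|_\pp$ above a prime $p$, note $|\alpha_1\lambda_1^n+\alpha_2\lambda_2^n|_\pp=|\lambda_1|_\pp^n\,|\alpha_1+\alpha_2\rho^n|_\pp$, so it suffices to bound $|\alpha_1+\alpha_2\rho^n|_\pp$ below, and I would split by the ultrametric law. If $|\alpha_1|_\pp\neq|\alpha_2|_\pp|\rho|_\pp^n$, then $|\alpha_1+\alpha_2\rho^n|_\pp=\max\{|\alpha_1|_\pp,|\alpha_2|_\pp|\rho|_\pp^n\}\ge|\alpha_1|_\pp\ge e^{-Dh_\alpha}$. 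If $|\alpha_1|_\pp=|\alpha_2|_\pp|\rho|_\pp^n$ with $|\rho|_\pp\neq 1$, then $n=(v_\pp(\alpha_1)-v_\pp(\alpha_2))/v_\pp(\rho)$ is pinned down and $|v_\pp(\alpha_i)|\le e_\pp Dh_\alpha/\log p=\poly(D,h_\alpha)$, so this case does not arise above the threshold. Otherwise $|\rho|_\pp=|\beta|_\pp=1$, so $\beta,\rho$ are $\pp$-adic units and $\alpha_1+\alpha_2\rho^n=\alpha_1(1-\beta\rho^n)$ with $|1-\beta\rho^n|_\pp\le 1$; if this equals $1$ we are done, and if it is $<1$ I would apply Yu's theorem (\cref{thm:Yu}) to $\Xi:=\beta\rho^n-1\neq 0$ with $s=2$, the units $\beta,\rho$ (so $v_\pp=0$ as required), $A_1=2h_\alpha+\log p$, $A_2=2h_\lambda+\log p$, $B\le n$, obtaining $v_\pp(\Xi)\le C(2,D,p)A_1A_2\log n=\poly(p^D,h_\alpha,h_\lambda)\log n$. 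Then $|1-\beta\rho^n|_\pp=p^{-v_\pp(\Xi)/e_\pp}\ge\exp(-\poly(p^D,h_\alpha,h_\lambda)(1+\log n))$, and folding in the factor $|\alpha_1|_\pp\ge e^{-Dh_\alpha}$ gives $C_2=\poly(p^D,D,h_\alpha,h_\lambda)$.

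The main obstacle is entirely the bookkeeping: one must thread the height and degree bounds of \cref{prop:h_props} through the explicit constants in Matveev's and Yu's theorems so that (i) the exceptional threshold depends only on $D$ and $h_\alpha$, with $h_\lambda$ entering only the final constants, and (ii) $C_2$ depends polynomially on $p^D$ (precisely the feature of Yu's estimate quoted after \cref{thm:Yu}). The choice of logarithm branch in the Archimedean argument and the isolation of the unique possible zero of the sequence are the only points needing care, and both are routine; no individual step is deep once the transcendence estimates are in hand.
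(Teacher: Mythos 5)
Your proof is correct and follows essentially the same route as the paper's: factor out $\alpha_1\lambda_1^n$, dispose of the exceptional zero $n$ via a Voutier/Dobrowolski-type height lower bound on $\rho=\lambda_2/\lambda_1$, handle the Archimedean case by bounding $|\log(\beta\rho^n)|$ with Matveev (choosing a branch via $\log(-1)=\pi i$), and handle the non-Archimedean case by the ultrametric trichotomy followed by Yu's theorem when $\beta,\rho$ are $\pp$-adic units. The only cosmetic differences are that you invoke Dobrowolski where the paper uses a corollary of Voutier's bound, and you make the trivial case $|1-\beta\rho^n|_v\geq\tfrac12$ explicit where the paper leaves it implicit; both proofs agree in substance and in how the constants are assembled.
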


\begin{proposition} \label{prop:3_sum_nonzero}
Let $\alpha_1,\alpha_2,\alpha_3,\lambda_1,\lambda_2,\lambda_3$ be non-zero algebraic numbers in a number field $\KK$ of degree $D$, such that no quotient $\lambda_i/\lambda_j$ is a root of unity for $i \neq j$. Let $h_\alpha = \max\{h(\alpha_1),h(\alpha_2),h(\alpha_3)\}$ and $h_\lambda = \max\{h(\lambda_1),h(\lambda_2),h(\lambda_3)\}$. Suppose there is an Archimedean absolute value $|\cdot|_v$ such that
\begin{align*}
    |\lambda_1|_v = |\lambda_2|_v = |\lambda_3|_v \, .
\end{align*}
Suppose also that
\begin{align*}
    \alpha_1 \lambda_1^n + \alpha_2 \lambda_2^n + \alpha_3 \lambda_3^n = 0 \, .
\end{align*}
Then $n \leq \poly(h_\alpha)2^{\poly(D,h_\lambda)}$.
\end{proposition}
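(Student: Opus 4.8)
The plan is to run the classical three‑term argument of Vereshchagin, carried through quantitatively. First I would dispose of the degenerate cases. If two of the $\alpha_i$ vanish, the third must vanish too and the relation is trivial; if exactly one vanishes, say $\alpha_3=0$, then $(\lambda_1/\lambda_2)^n=-\alpha_2/\alpha_1$, and since $\lambda_1/\lambda_2$ is not a root of unity, taking heights gives $n\,h(\lambda_1/\lambda_2)=h(-\alpha_2/\alpha_1)=\poly(h_\alpha)$, which yields the claimed bound once a lower bound on $h(\lambda_1/\lambda_2)$ is inserted. So assume $\alpha_1,\alpha_2,\alpha_3\neq 0$. Let $\sigma$ be an embedding of $\KK$ into $\CC$ realising $|\cdot|_v$. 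If $\sigma$ is real, then $|\sigma(\lambda_1)|=|\sigma(\lambda_3)|$ forces $\sigma(\lambda_1/\lambda_3)=\pm1$, hence $\lambda_1/\lambda_3=\pm1$ is a root of unity, contradicting the hypothesis; so $\sigma$ is complex. Dividing the relation by $\sigma(\lambda_3)^n$ and writing $w_j:=\sigma(\lambda_j/\lambda_3)$ — which lie on the unit circle since $|\sigma(\lambda_j)|=|\sigma(\lambda_3)|$ — I obtain
\[
\sigma(\alpha_1)\,w_1^n+\sigma(\alpha_2)\,w_2^n+\sigma(\alpha_3)=0.
\]

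Next comes the rigidity step. Taking complex conjugates and using $\overline{w_j}=w_j^{-1}$ gives a second identity $\overline{\sigma(\alpha_1)}\,w_1^{-n}+\overline{\sigma(\alpha_2)}\,w_2^{-n}+\overline{\sigma(\alpha_3)}=0$. Solving the first identity for $w_2^n$ (legitimate as $\sigma(\alpha_2)\neq0$), substituting into the second and clearing denominators produces a quadratic equation $A t^2+Bt+C=0$ satisfied by $t=w_1^n$, with
\[
A=\overline{\sigma(\alpha_3)}\sigma(\alpha_1),\qquad B=|\sigma(\alpha_1)|^2-|\sigma(\alpha_2)|^2+|\sigma(\alpha_3)|^2,\qquad C=\overline{\sigma(\alpha_1)}\sigma(\alpha_3).
\]
Since $A\neq0$ this is a genuine quadratic, so $w_1^n$ is one of its at most two roots. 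As $h(\sigma(\alpha_i))=h(\overline{\sigma(\alpha_i)})=h(\alpha_i)\le h_\alpha$, \cref{prop:h_props} bounds the heights of $A,B,C$ by $O(h_\alpha)$, and then the quadratic formula together with \cref{prop:h_props} (including $h(\sqrt{y})=\tfrac12h(y)$) bounds the height of each root by $\poly(h_\alpha)$. Hence $w_1^n=t_0$ for some algebraic $t_0$ with $h(t_0)=\poly(h_\alpha)$.

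Finally I would conclude by heights. We have $w_1=\sigma(\lambda_1/\lambda_3)$, and since $\lambda_1/\lambda_3$ is not a root of unity and $\sigma$ is injective, $w_1$ is not a root of unity either; so $h(\lambda_1/\lambda_3)=h(w_1)>0$, and quantitatively $h(\lambda_1/\lambda_3)\ge 2^{-\poly(D,h_\lambda)}$ by any effective lower bound for heights of non-torsion algebraic numbers (Dobrowolski's theorem even gives $1/h(\lambda_1/\lambda_3)=\poly(D)$, far more than needed). Using invariance of the height under the embedding $\sigma$,
\[
n\,h(\lambda_1/\lambda_3)=h\big((\lambda_1/\lambda_3)^n\big)=h\big(\sigma((\lambda_1/\lambda_3)^n)\big)=h(w_1^n)=h(t_0)=\poly(h_\alpha),
\]
whence $n\le \poly(h_\alpha)\big/h(\lambda_1/\lambda_3)\le \poly(h_\alpha)\,2^{\poly(D,h_\lambda)}$, as claimed. (As a consistency check one notes that $n\mapsto\sum_i\alpha_i\lambda_i^n$ is not identically zero, being non-degenerate with distinct characteristic roots and not all $\alpha_i$ zero, in line with the finiteness guaranteed by \cref{thm:SML}.)

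The main obstacle is conceptual rather than computational: an exact three-term vanishing sum with equal valuations cannot be attacked by Baker's theorem as in \cref{prop:2_sum}, because the relevant linear form in logarithms is exactly zero, not small and nonzero. The device that replaces Baker is the rigidity of three complex numbers of prescribed moduli summing to zero, which pins $w_1^n$ to a fixed finite set; after that, the argument is bookkeeping with \cref{prop:h_props} and a Lehmer-type lower bound. The only place requiring genuine care is tracking the height of the elements of that finite set, so as to land on the shape $\poly(h_\alpha)\cdot 2^{\poly(D,h_\lambda)}$ rather than something weaker.
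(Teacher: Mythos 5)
Your proof is correct, and it takes a genuinely different route from the paper's. The paper attacks this proposition through \cref{thm:dominant}: it finds a second absolute value $|\cdot|_{v'}$ under which the three roots no longer all have the same size, and then shows that for $n$ beyond a threshold the strictly dominant term forces the sum to be nonzero. That threshold has $\log\bigl(|\lambda_1|_{v'}/|\lambda_2|_{v'}\bigr)$ in the denominator, and bounding this quantity below in the Archimedean subcase requires an application of Matveev's theorem, which is precisely where the factor $2^{\poly(D,h_\lambda)}$ enters the paper's bound. You instead work entirely at the given Archimedean place: after dividing by $\lambda_3^n$ and applying the embedding, the two ratios lie on the unit circle, and you exploit the classical rigidity of a vanishing three-term sum there by conjugating, eliminating $w_2^n$, and observing that $w_1^n$ is a root of an explicit quadratic $At^2+Bt+C=0$ with $A=\overline{\sigma(\alpha_3)}\sigma(\alpha_1)\neq 0$ and $h(A),h(B),h(C)=O(h_\alpha)$. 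The identity $n\,h(\lambda_1/\lambda_3)=h(w_1^n)=\poly(h_\alpha)$ combined with a Voutier/Dobrowolski lower bound on $h(\lambda_1/\lambda_3)$ then finishes. Your handling of the degenerate cases ($\sigma$ real forces $\lambda_i/\lambda_j=\pm1$; some $\alpha_i=0$ reduces to a two-term relation) is also correct. A noteworthy consequence of your route is that it avoids Baker entirely for this proposition and yields the stronger conclusion $n\leq\poly(h_\alpha,D)$, polynomial rather than exponential in $D$; this shows that the paper's remark about the exponential dependence on $D,h_\lambda$ being ``unavoidable with current techniques'' is a limitation of the paper's specific method (which needs a lower bound on $\log|\lambda_1/\lambda_2|_{v'}$), not of the statement itself, since your argument needs only a lower bound on the height $h(\lambda_1/\lambda_3)$.
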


\begin{proposition} \label{prop:3_sum}
Let $\alpha_1,\alpha_2,\alpha_3,\lambda_1,\lambda_2,\lambda_3$ be non-zero algebraic numbers in a number field $\KK$ of degree $D$ such that no quotient $\lambda_i/\lambda_j$ is a root of unity for $i \neq j$. Let $h_\alpha = \max\{h(\alpha_1),h(\alpha_2),h(\alpha_3)\}$ and $h_\lambda = \max\{h(\lambda_1),h(\lambda_2),h(\lambda_3)\}$. Let $|\cdot|_v$ be an Archimedean absolute value such that
\begin{align*}
    |\lambda_1|_v= |\lambda_2|_v = |\lambda_3|_v \, .
\end{align*}
Let $n \in \N$ with $n \geq 3$, such that 
\begin{align} \label{eqn:3_term_nonzero}
    \alpha_1 \lambda_1^n + \alpha_2 \lambda_2^n + \alpha_3 \lambda_3^n \neq 0 \, .
\end{align}
Then
\begin{align*}
    |\alpha_1 \lambda_1^n + \alpha_2 \lambda_2^n + \alpha_3 \lambda_3^n|_v \geq |\lambda_1|_v^n e^{-C(1+ \log n) }
\end{align*}
where $C = \poly(D,h_\alpha,h_\lambda)$.
\end{proposition}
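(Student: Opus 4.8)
The plan is to factor out the $|\cdot|_v$–dominant term and reduce the claim to Matveev's theorem (\cref{thm:Matveev}) applied to a linear form in logarithms of algebraic numbers that all have absolute value one.

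\textbf{Setup.} Let $\sigma\colon\KK\hookrightarrow\CC$ be the embedding underlying $|\cdot|_v$, put $a_i=|\sigma(\alpha_i)|$, $\mu_i=\lambda_i/\lambda_1$ (so $|\sigma(\mu_i)|=1$), and $\zeta_i=\sigma(\alpha_i/\alpha_1)\,\sigma(\mu_i)^n$ for $i=2,3$, so $|\zeta_i|=a_i/a_1$. If some $\alpha_i=0$ the claim follows from \cref{prop:2_sum} together with a Liouville estimate for the bounded range of $n$ below its threshold, so assume $\alpha_1,\alpha_2,\alpha_3\neq0$. Since $\sigma$ is injective on $\KK$ and $\sigma(\alpha_1\lambda_1^n+\alpha_2\lambda_2^n+\alpha_3\lambda_3^n)=\sigma(\alpha_1\lambda_1^n)(1+\zeta_2+\zeta_3)$, the quantity $\varepsilon:=|{\alpha_1\lambda_1^n+\alpha_2\lambda_2^n+\alpha_3\lambda_3^n}|_v/|\lambda_1|_v^n=a_1\,|1+\zeta_2+\zeta_3|$ is positive, and it suffices to bound it below by $e^{-C(1+\log n)}$ with $C=\poly(D,h_\alpha,h_\lambda)$. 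Assume towards a contradiction that $\varepsilon<e^{-C(1+\log n)}$; since $a_i\geq e^{-Dh(\alpha_i)}$, taking $C\geq Dh_\alpha$ already forces $\varepsilon<a_i$ for all $i$.

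\textbf{The closing ratios.} Because $x\mapsto\overline{\sigma(x)}$ is again an embedding of $\KK$, the numbers $s:=a_2^2/a_1^2=\sigma(\tfrac{\alpha_2}{\alpha_1})\overline{\sigma(\tfrac{\alpha_2}{\alpha_1})}$ and $t:=(a_3^2-a_1^2-a_2^2)/a_1^2$ are algebraic of degree $\poly(D)$ and height $\poly(h_\alpha)$ (using \cref{prop:h_props}), and so are the roots $\beta^{\pm}=\tfrac12(t\pm\sqrt{t^2-4s})$ of $\zeta^2-t\zeta+s$ and $\beta'^{\pm}:=-1-\beta^{\pm}$ (square roots of algebraic numbers remain algebraic). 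One checks $|\beta^{\pm}|=a_2/a_1$, $|\beta'^{\pm}|=a_3/a_1$, and that these are exactly the values of $\zeta_2$, resp.\ $\zeta_3$, for which $1+\zeta_2+\zeta_3=0$. The discriminant is $t^2-4s=-q/a_1^4$ where $q:=2\sum_{i<j}a_i^2a_j^2-\sum_i a_i^4$ is (an embedding of) an algebraic number of height $\poly(h_\alpha)$. If $q<0$ — equivalently, $a_1,a_2,a_3$ violate the triangle inequality — then $|1+\zeta_2+\zeta_3|\geq a_{i_0}-a_{i_1}-a_{i_2}$ (indices ordered so $a_{i_0}$ is largest), which equals $|q|$ divided by a product of the form $(a_{i_0}+a_{i_1}-a_{i_2})(a_{i_0}-a_{i_1}+a_{i_2})(a_{i_0}+a_{i_1}+a_{i_2})\le (3e^{Dh_\alpha})^3$ and is therefore $\geq e^{-\poly(D,h_\alpha)}$; since $e^{-C(1+\log n)}\leq e^{-C}$ for $n\geq1$, this contradicts the assumption once $C$ exceeds that polynomial. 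So assume $q\geq0$.

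\textbf{Main estimate.} From $\varepsilon<a_3$ and $1+\zeta_2=(1+\zeta_2+\zeta_3)-\zeta_3$ we get $\bigl||1+\zeta_2|^2-|\zeta_3|^2\bigr|\leq 3\,|\zeta_3|\,\varepsilon/a_1$, hence $|2\operatorname{Re}\zeta_2-t|\leq 3a_3\varepsilon/a_1^2$ (using $|\zeta_2|^2=s$, $|\zeta_3|^2=a_3^2/a_1^2$, $1+s-a_3^2/a_1^2=-t$). Now $\operatorname{Re}\beta^{\pm}=t/2$, $|\zeta_2|=|\beta^{\pm}|=a_2/a_1$, and $(\operatorname{Im}\beta^{\pm})^2=q/(4a_1^4)$ is a nonzero algebraic number bounded below by $e^{-\poly(D,h_\alpha)}$ when $q>0$ (when $q=0$ one gets instead a crude $\sqrt\varepsilon$, harmlessly), so $\zeta_2$ lies within $e^{\poly(D,h_\alpha)}\sqrt\varepsilon$ of one of $\beta^{+},\beta^{-}$, say $\beta$. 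Consequently $\sigma(\mu_2)^n\rho^{-1}$ is within $e^{\poly(D,h_\alpha)}\sqrt\varepsilon$ of $1$, where $\rho:=\beta\,\sigma(\alpha_1/\alpha_2)$ is algebraic with $|\rho|=1$ and $h(\rho)=\poly(h_\alpha)$. If $\sigma(\mu_2)^n\rho^{-1}\neq1$, put $\Lambda:=n\log\sigma(\mu_2)-\log\rho+2k\log(-1)\neq0$, where the integer $k$ is chosen so that $\operatorname{Im}\Lambda\in(-\pi,\pi]$; since $\log\sigma(\mu_2)$ and $\log\rho$ are purely imaginary of modulus $\leq\pi$, we have $|k|=O(n)$ and $\operatorname{Re}\Lambda=0$. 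By \cref{thm:Matveev} (with $s=3$ a fixed constant), $|\Lambda|\geq\exp(-\poly(D,h_\alpha,h_\lambda)(1+\log n))$, and as $\operatorname{Re}\Lambda=0$ this gives $|\sigma(\mu_2)^n\rho^{-1}-1|=|e^{\Lambda}-1|\geq\tfrac2\pi|\Lambda|\geq\exp(-\poly(D,h_\alpha,h_\lambda)(1+\log n))$. Comparing with the upper bound and enlarging $C$ yields a contradiction.

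\textbf{The exact–vanishing case, and the obstacle.} It remains to treat $\sigma(\mu_2)^n\rho^{-1}=1$, i.e.\ $\zeta_2=\beta$ exactly. Then $\zeta_3=(1+\zeta_2+\zeta_3)-1-\beta$ differs from $\beta'=-1-\beta$ by $|1+\zeta_2+\zeta_3|=\varepsilon/a_1\leq e^{Dh_\alpha}\varepsilon$, so $\sigma(\mu_3)^n\rho'^{-1}$ — with $\rho':=\beta'\sigma(\alpha_1/\alpha_3)$, $|\rho'|=1$, $h(\rho')=\poly(h_\alpha)$ — is within $e^{\poly(D,h_\alpha)}\varepsilon$ of $1$; \cref{thm:Matveev} then applies exactly as above unless $\sigma(\mu_3)^n\rho'^{-1}=1$, but in that case $\zeta_2=\beta$ and $\zeta_3=\beta'$ with $1+\beta+\beta'=0$, so $\sigma(\alpha_1\lambda_1^n+\alpha_2\lambda_2^n+\alpha_3\lambda_3^n)=\sigma(\alpha_1\lambda_1^n)(1+\beta+\beta')=0$, forcing $\alpha_1\lambda_1^n+\alpha_2\lambda_2^n+\alpha_3\lambda_3^n=0$, contrary to hypothesis. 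I expect the main difficulty to be the second step: certifying that the closing ratios $\beta^{\pm}$ are algebraic with polynomially bounded degree and height — which rests on $a_i^2$ and $q$ being values of algebraic numbers and on square roots preserving algebraicity — and arranging that the quantity fed to Matveev, namely $\sigma(\mu_2)^n\rho^{-1}$, has absolute value exactly one, so that the transcendental absolute–value factors $a_i$ never enter the final bound. The borderline collinear case $q=0$ and the exact–vanishing subcase are technically fussy but leave the polynomial bound intact.
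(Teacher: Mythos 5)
Your proof is correct and follows the same overall strategy as the paper's: factor out the $|\cdot|_v$-dominant term, reduce the question to how close a unit-modulus power sequence can get to the specific ``intersection points'' where exact cancellation would occur, show those points are algebraic of polynomially bounded degree and height, and apply Matveev (cf.~\cref{thm:Matveev}). The differences are in packaging rather than substance. Where the paper modularizes via the explicit two-circle estimate of Beukers (\cref{prop:two_circles}, giving a case split on $\rho_0+\rho_1\lessgtr 1$) and then feeds the distance $|z_0-\zeta|$ through \cref{prop:2_sum}, you inline the geometry: the identity $(\zeta_2-\beta)(\zeta_2-\overline\beta)=\zeta_2(2\operatorname{Re}\zeta_2-t)$ yields the $\min\{|\zeta_2-\beta|,|\zeta_2-\overline\beta|\}\le e^{\poly}\sqrt\varepsilon$ bound in one line, which is cleaner than the two-circle proposition. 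You also arrange the Matveev input as the unit-modulus ratio $\sigma(\mu_2)^n\rho^{-1}$ rather than the affine difference $z_0-\zeta$, which makes the linear form purely imaginary and slightly simplifies the passage from $|\Lambda|$ to $|e^\Lambda-1|$. Finally, you treat the degenerate case $\zeta_2=\beta$ explicitly via a second Matveev application (culminating in a contradiction with \eqref{eqn:3_term_nonzero}); the paper instead excludes this implicitly through the $n>\poly(D,h_\alpha)$ threshold of \cref{prop:2_sum}, whose proof uses Voutier's lower bound on heights (\cref{lem:Voutier_cor}), with a Liouville estimate covering the finite range of $n$ below the threshold. Both routes are valid and produce the same $C=\poly(D,h_\alpha,h_\lambda)$.

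One small slip worth fixing: in the $q<0$ (reverse triangle) branch you write $|1+\zeta_2+\zeta_3|\geq a_{i_0}-a_{i_1}-a_{i_2}$, but the moduli of the three summands are $1, a_2/a_1, a_3/a_1$, so the correct statement after clearing the normalization is $\varepsilon=a_1|1+\zeta_2+\zeta_3|\geq a_{i_0}-a_{i_1}-a_{i_2}$; the missing factor of $a_1\leq e^{Dh_\alpha}$ is of course absorbed into the polynomial bound, so the conclusion is unaffected.
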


First, we prove \cref{prop:2_sum}, but for this, we need a lemma.
\begin{lemma} \label{lem:Voutier_cor}
Suppose $x,y \in \KK$ such that neither $x$ nor $y$ is a root of unity and there is $n \in \N$ with $y = x^n$. Then $n \leq \poly(D) h(y)$.
\end{lemma}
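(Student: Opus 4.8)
The plan is to reduce the bound on $n$ to a lower bound on the (absolute logarithmic) height of an algebraic number that is not a root of unity. First I would observe that $n\ge 1$: indeed, $n=0$ would force $y=x^0=1$, a root of unity, contradicting the hypothesis on $y$; and we may assume $x\neq 0$, hence also $y\neq 0$. Then, applying the identity $h(x^n)=|n|\,h(x)$ from \cref{prop:h_props} to $y=x^n$ gives $h(y)=n\,h(x)$, so that $n=h(y)/h(x)$. It therefore suffices to prove $h(x)\ge 1/\poly(D)$.

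For this, I would use that $x\in\KK$ has $\deg x\le[\KK:\Q]=D$, together with a quantitative version of Kronecker's theorem. If $\deg x=1$, then $x$ is a rational number other than $0,\pm 1$, whence $h(x)\ge\log 2$. If $\deg x\ge 2$, then Dobrowolski's theorem --- in the explicit form due to Voutier (see, e.g., the treatment of Lehmer's problem in \cite{Waldschmidt_book}) --- supplies an absolute constant $c>0$ with
\begin{align*}
h(x)\;\ge\;\frac{c}{\deg x\,\bigl(\log(2\deg x)\bigr)^{3}}\;\ge\;\frac{c}{D\,\bigl(\log 2D\bigr)^{3}}\,,
\end{align*}
the last inequality holding because $\deg x\le D$ and $t\mapsto c/\bigl(t(\log 2t)^{3}\bigr)$ is decreasing. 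In all cases $h(x)\ge 1/\poly(D)$, and substituting back gives $n=h(y)/h(x)\le\poly(D)\,h(y)$, as required.

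The one genuinely nontrivial ingredient is this Lehmer-type lower bound on $h(x)$; the rest is a single line of height arithmetic. I should stress that the full strength of Dobrowolski's estimate is not needed --- any unconditional bound of the form $h(x)\ge D^{-O(1)}$ for non-roots-of-unity $x$ would suffice --- so this is precisely the step I would invoke from the literature rather than reprove.
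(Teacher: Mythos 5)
Your proposal is correct and follows essentially the same route as the paper: write $n = h(y)/h(x)$ using $h(x^n) = |n|\,h(x)$, then lower-bound $h(x)$ via the explicit Lehmer-type estimate (Voutier's Corollary~2, which gives $h(x)\geq 2/(D(\log 3D)^3)$ for $D\geq 2$ and $h(x)\geq\log 2$ for rationals other than $0,\pm1$). The only cosmetic differences are that you phrase the bound in terms of $\deg x$ before relaxing to $D$ and cite the Dobrowolski--Voutier estimate generically rather than the specific reference the paper uses.
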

\begin{proof}
Follows from the fact that $y = x^n$ implies $n = \frac{h(y)}{h(x)}$, and a result of Voutier \cite[Corollary 2]{Voutier_1996} that shows $h(x) \geq \frac{2}{D(\log (3D))^3}$ when $D \geq 2$, and that $h(x) \geq \log 2$ for any $x \in \Q \setminus\{1,-1\}$, that is, when $D = 1$.
\end{proof}

\begin{proof}[Proof of Proposition \ref{prop:2_sum}]
Let $\Xi = \frac{-\alpha_2}{\alpha_1} \left( \frac{\lambda_2}{\lambda_1} \right)^n - 1$. By Lemma \ref{lem:Voutier_cor}, $\Xi \neq 0$ for $n \geq \poly(D) h(\frac{\alpha_2}{\alpha_1}) = \poly(D,h_\alpha)$. 

If $|\cdot|_v = |\cdot|_\sigma$ for some embedding $\sigma : \KK \hookrightarrow \CC$, we may replace $x$ by $\sigma(x)$ for $x = \alpha_1,\alpha_2,\lambda_1, \lambda_2$ and use $h(\sigma(x)) = h(x)$, so without loss of generality we may assume $|\cdot|_v = |\cdot|$. Note now that for any $|z-1| \leq \frac{1}{2}$, we have $|\log z|  \leq (2 \log 2)|z-1|$. Indeed,
\begin{align*}
    \left| \frac{\log z}{z-1} \right| = \left| \sum_{j=0}^\infty (-1)^j \frac{(z-1)^j}{j+1} \right| \leq \sum_{j=0}^\infty \frac{2^{-j}}{j+1} \leq 2 \log 2 \, 
\end{align*}
which gives the claimed inequality. Therefore, we get
\begin{align*}
|\Xi| &\geq \frac{1}{2 \log 2} \left|\log \left(\frac{-\alpha_2}{\alpha_1} \left( \frac{\lambda_2}{\lambda_1} \right)^n \right) \right| \\ &= \frac{1}{2 \log 2} \left|\log \left(\frac{-\alpha_2}{\alpha_1} \right) + n\log \left( \frac{\lambda_2}{\lambda_1} \right) - 2m\pi i\right|
\end{align*}
for some integer $m \in \Z$ with $|2m| \leq n+2$. The right-hand side may now be bounded from below by Theorem \ref{thm:Matveev} using $\pi i = \log(-1)$. This gives 
\begin{align*}
|\Xi| \geq \exp \left(-CD^2 h\left( \frac{\alpha_2}{\alpha_1} \right) h\left( \frac{\lambda_2}{\lambda_1} \right)(1+ \log D)(1+ \log n) \right)
\end{align*}
for some absolute constant $C$. Multiply through by $|\alpha_1 \lambda_1^n|$ to get
\begin{align*}
&|\alpha_1 \lambda_1^n + \alpha_2 \lambda_2^n | \geq |\lambda_1|^n |\alpha_1| |\Xi| \\
&\geq |\lambda_1|^n e^{-Dh(\alpha_1)} \exp \left(-CD^2 h\left( \frac{\alpha_2}{\alpha_1} \right) h\left( \frac{\lambda_2}{\lambda_1} \right)(1+ \log D)(1+ \log n) \right) \, .
\end{align*}

In the non-Archimedean case, note first that if 
\begin{align*}
    \left| \frac{-\alpha_2}{\alpha_1} \left( \frac{\lambda_2}{\lambda_1} \right)^n \right|_\pp \neq 1 
\end{align*}
then $|\Xi|_\pp \geq 1$, so the claimed inequality follows trivially. Therefore, we may assume for the rest of this proof that
\begin{align} \label{eqn:absval_1}
    \left| \frac{-\alpha_2}{\alpha_1} \left( \frac{\lambda_2}{\lambda_1} \right)^n \right|_\pp = 1 \, . 
\end{align}
Suppose that $\left| \frac{\lambda_2}{\lambda_1}\right|_\pp \neq 1$, then by \eqref{eqn:absval_1} we have
\begin{align*}
    n = \frac{\log \left|\alpha_1 / \alpha_2 \right|_\pp}{\log \left| \lambda_2/\lambda_1 \right|_\pp } \, .
\end{align*}
Note that $\left| \frac{\lambda_2}{\lambda_1}\right|_\pp \neq 1$ implies that either $\left| \frac{\lambda_2}{\lambda_1}\right|_\pp \geq p^{1/e_\pp}$ or $\left| \frac{\lambda_2}{\lambda_1}\right|_\pp \leq p^{- 1/e_\pp}$. From this and Proposition \ref{prop:h_props} item \ref{enum:h_props_log} it follows that
\begin{align*}
    n \leq \frac{e_\pp}{\log p}Dh(\alpha_2/\alpha_1) \leq \poly(D,h_\alpha) \, .
\end{align*}
So certainly for $n > \poly(D,h_\alpha)$ we have $|\Xi|_\pp = 1$.

Otherwise, we have $\left| \frac{\lambda_2}{\lambda_1}\right|_\pp = 1$ and so $\left| \frac{\alpha_2}{\alpha_1}\right|_\pp = 1$. Thus, whenever $|\Xi|_\pp \neq 0$, we may apply Theorem \ref{thm:Yu} to get
\begin{align*}
&|\alpha_1 \lambda_1^n + \alpha_2 \lambda_2^n|_\pp \geq |\alpha_1 \lambda_1^n|_\pp |\Xi|_\pp \\
&\geq |\lambda_1|_\pp^n e^{-h(\alpha_1)D} p^{-v_\pp(\Xi)/e_{\pp}} \\
&\ge |\lambda_1|_\pp^n \, e^{-h(\alpha_1)D}\,
   \exp \Bigl(
     -C(2,D,p)\,\max\{h(\lambda_1/\lambda_2),\log p\}
      \max\{h(\alpha_1/\alpha_2),\log p\}\,\log n\Bigr)
\end{align*}
where $C(2,D,p)$ is as in the statement of Theorem \ref{thm:Yu}, which gives the required form as claimed. 
\end{proof}

We may use this to prove Proposition \ref{prop:3_sum_nonzero}.
\begin{proof}[Proof of Proposition \ref{prop:3_sum_nonzero}]
By Theorem \ref{thm:dominant} there exists an absolute value $|\cdot|_{v'}$ such that (WLOG) 
\begin{align*}
    |\lambda_1|_{v'} > |\lambda_2|_{v'},|\lambda_3|_{v'} \, .
\end{align*}
Now we also have that
\begin{align*}
    |\alpha_1 \lambda_1^n + \alpha_2 \lambda_2^n + \alpha_3 \lambda_3^n|_{v'} &\geq |\alpha_1|_{v'}|\lambda_1|_{v'}^n - |\alpha_2 \lambda_2^n + \alpha_3 \lambda_3^n|_{v'} \\
    &\geq |\alpha_1|_{v'}|\lambda_1|_{v'}^n - (|\alpha_2|_{v'}+ |\alpha_3|_{v'})|\lambda_2|_{v'}^n  \\
    &> 0
\end{align*}
for all $n$ such that
\begin{align} \label{eqn:ngreaterlog}
    n > \frac{\log \left( \frac{|\alpha_2|_{v'} + |\alpha_3|_{v'}}{|\alpha_1|_{v'}} \right)}{\log \left(\frac{|\lambda_1|_{v'}}{|\lambda_2|_{v'}}\right)} \, .
\end{align}
Now we have 
\begin{align} \label{eqn:alphasbound}
    \frac{|\alpha_2|_{v'} + |\alpha_3|_{v'}}{|\alpha_1|_{v'}} \leq \frac{2e^{Dh}}{e^{-Dh}} \leq 2e^{2Dh_\alpha} \, .
\end{align}
In the case where $|\cdot|_{v'}$ is non-Archimedean, corresponding to prime ideal $\pp \subseteq \O_\K$ lying above integer prime $p$, we have 
\begin{align*}
    \frac{|\lambda_1|_{v'}}{|\lambda_2|_{v'}} \geq p^{1/e_\pp} \geq p^{1/D}
\end{align*}
where $e_\pp$ is the ramification index. Therefore,
\begin{align} \label{eqn:lambdaratiobound}
    \log \left(\frac{|\lambda_1|_{v'}}{|\lambda_2|_{v'}}\right) \geq \frac{\log p}{D} \, ,
\end{align}
and \eqref{eqn:ngreaterlog} together with \eqref{eqn:alphasbound} and \eqref{eqn:lambdaratiobound} gives the required bound on $n$.

In the case where $|\cdot|_{v'}$ is Archimedean, note that $\log \left(\frac{|\lambda_1|_{v'}}{|\lambda_2|_{v'}}\right) \neq 0$, and that the degree of $|\lambda_i| = \left(\lambda_i \overline{\lambda_i}\right)^{1/2}$ is at most $2D^2$ for all $\lambda_i$. 

Thus, we apply Theorem \ref{thm:Matveev} to get 
\begin{align*}
    \log \left(\frac{|\lambda_1|_{v'}}{|\lambda_2|_{v'}}\right) \geq \exp(-C D^4 h(|\lambda_1|)h(|\lambda_2|) (1 + \log(2D^2)) \, ,
\end{align*}
for some absolute constant $C>0$. Note that 
\begin{align*}
h(|\lambda_i|) = h\left(\left(\lambda_i \overline{\lambda_i}\right)^{1/2}\right) = \frac{1}{2} h(\lambda_i \overline{\lambda_i}) \leq h(\lambda_i) \leq h_\lambda
\end{align*}
using Proposition \ref{prop:h_props} items \ref{enum:h_props_mult} and \ref{enum:h_props_pow}, which gives
\begin{align*}
 \log \left(\frac{|\lambda_1|_{v'}}{|\lambda_2|_{v'}}\right) \geq  \exp(-\tilde C D^4(1+ \log D)h_\lambda^2) \, ,
\end{align*}
for absolute constant $\tilde C > 0$. Together with \eqref{eqn:ngreaterlog} and \eqref{eqn:alphasbound}, this proves the required bound on $n$. 
\end{proof}
\begin{remark}
Even in this restricted setting, a bound on $n$ that is exponential in $D$ and $h_\lambda$ appears unavoidable with current techniques --- in the Archimedean case one needs a lower bound for $\log(|\lambda_1/\lambda_2|)$, and there exist examples where an exponentially small lower bound is attainable \cite{Dubickas_1998}.
\end{remark}

In order to prove Proposition \ref{prop:3_sum}, we need a lemma on the distance between two points on two circles, an idea credited to Beukers \cite{Beukers1984}. We follow \cite[Section 2.4]{bilu_skolem_2025} closely. Let $\rho_0,\rho_1 \in \RR$ satisfy $0 < \rho_0,\rho_1 \leq 1$, and
\begin{align} \label{eqn:circles_def}
    \mathcal C_0 = \{ z \in \CC : |z| = \rho_0 \} \quad , \quad \mathcal C_1 = \{ z \in \CC : |z-1| = \rho_1 \}
\end{align}
then we have
\begin{align*}
    \mathcal C_0 \cap \mathcal C_1 = \begin{cases}
        \emptyset & \text{when } \rho_0 + \rho_1 < 1 , \\
        \{\rho_0\} & \text{when } \rho_0 + \rho_1 = 1 , \\
        \{\zeta,\overline \zeta\} & \text{when } \rho_0 + \rho_1 > 1 , \\
    \end{cases}
\end{align*}
where
\begin{align} \label{eqn:zeta}
    \zeta = \frac{1 + \rho_0^2 - \rho_1^2 + i\sqrt{4\rho_0^2 \rho_1^2 - (1- \rho_0^2 - \rho_1^2)^2}}{2} \, .
\end{align}
\begin{proposition} \label{prop:two_circles}
Let $z_0 \in \mathcal C_0, \, z_1 \in \mathcal C_1$. Then
\begin{align*}
    |z_0 - z_1| \geq  \begin{cases}
        1 - \rho_0 - \rho_1 & \text{if } \rho_0 + \rho_1 < 1, \\
        \frac{1}{2\rho_0} |z_0 - \rho_0|^2 & \text{if } \rho_0 + \rho_1 = 1, \\
        \frac{2}{3} \left( 1 + \frac{4 \rho_0^2}{(\mathrm{Im} \ \zeta)^2} \right)^{-\frac{1}{2}} \min \{|z_0 - \zeta|, |z_0 - \overline \zeta| \} & \text{if } \rho_0 + \rho_1 > 1 
    \end{cases}
\end{align*}
\end{proposition}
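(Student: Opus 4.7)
The plan is to split the argument by the three cases distinguished in the statement, each handled with a different tool.

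For Case 1 ($\rho_0 + \rho_1 < 1$), I would apply the triangle inequality directly along the four-point chain $0, z_0, z_1, 1$: $1 = |0 - 1| \leq |0 - z_0| + |z_0 - z_1| + |z_1 - 1| = \rho_0 + |z_0 - z_1| + \rho_1$, yielding $|z_0 - z_1| \geq 1 - \rho_0 - \rho_1$.

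For Case 2 ($\rho_0 + \rho_1 = 1$), the circles are internally tangent at $\rho_0$, and $\mathcal{C}_0$ lies in the closed exterior of $\mathcal{C}_1$ (by the reverse triangle inequality $|z_0 - 1| \geq 1 - |z_0| = \rho_1$). Hence the point of $\mathcal{C}_1$ closest to $z_0$ lies on the segment from $1$ to $z_0$, giving $|z_0 - z_1| \geq |z_0 - 1| - \rho_1$. I would then parametrise $z_0 = \rho_0 e^{i\theta}$ and set $u := \rho_0(1 - \cos\theta) \in [0, 2\rho_0]$, so that $|z_0 - 1|^2 = \rho_1^2 + 2u$ and $|z_0 - \rho_0|^2 = 2\rho_0 u$. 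The desired inequality becomes $\sqrt{\rho_1^2 + 2u} - \rho_1 \geq u$, which after squaring reduces to $u(u - 2\rho_0) \leq 0$; this holds since $u \leq 2\rho_0$.

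For Case 3 ($\rho_0 + \rho_1 > 1$), the key step is the algebraic identity
\[|z_0 - z_1^*| = \frac{|z_0 - \zeta|\,|z_0 - \overline\zeta|}{\rho_0(|z_0 - 1| + \rho_1)},\]
where $z_1^* \in \mathcal{C}_1$ is nearest to $z_0$, so $|z_0 - z_1^*| = \bigl||z_0 - 1| - \rho_1\bigr|$. I would derive it by using $\overline{z_0} = \rho_0^2/z_0$ for $z_0 \in \mathcal{C}_0$ to rewrite
$z_0(|z_0 - 1|^2 - \rho_1^2) = -z_0^2 + (1 + \rho_0^2 - \rho_1^2) z_0 - \rho_0^2 = -(z_0 - \zeta)(z_0 - \overline\zeta)$,
where the last equality follows from the Vieta relations $\zeta\overline\zeta = |\zeta|^2 = \rho_0^2$ and $\zeta + \overline\zeta = 1 + \rho_0^2 - \rho_1^2$ read off from \eqref{eqn:zeta}; taking moduli and dividing by $|z_0 - 1| + \rho_1$ yields the identity. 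After this, the claim is equivalent to $\max\{|z_0 - \zeta|,\,|z_0 - \overline\zeta|\} \geq c \rho_0(|z_0 - 1| + \rho_1)$ on $\mathcal{C}_0$, which I would establish by assuming WLOG $\mathrm{Im}\,z_0 \geq 0$ (so that the maximum equals $|z_0 - \overline\zeta| \geq \mathrm{Im}\,z_0 + \mathrm{Im}\,\zeta$), combining this with the crude bound $|z_0 - 1| + \rho_1 \leq \rho_0 + 1 + \rho_1 \leq 3$ refined via $(\mathrm{Re}\,z_0)^2 + (\mathrm{Im}\,z_0)^2 = \rho_0^2$, and finishing by an elementary trigonometric optimisation.

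The main obstacle is the quantitative step at the end of Case 3. The factorization itself is easy to discover once one treats $|z_0 - 1|^2 - \rho_1^2$ as a quadratic in $z_0$ on $\mathcal{C}_0$, but the specific constant $\tfrac{2}{3}(1 + 2\rho_0/\mathrm{Im}\,\zeta)^{-1/2}$ is not robust to naïve estimates: using only $\max \geq \mathrm{Im}\,\zeta$ together with $|z_0 - 1| + \rho_1 \leq 3$ is insufficient in the regime $\mathrm{Im}\,\zeta \ll \rho_0$, so the bound on the maximum must be coupled carefully with the bound on $|z_0 - 1| + \rho_1$ through the parametric identity $(\mathrm{Re}\,z_0)^2 + (\mathrm{Im}\,z_0)^2 = \rho_0^2$ in order to extract the stated $c$.
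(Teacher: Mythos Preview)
Your Cases 1 and 2 are fine; Case 2 is essentially the paper's computation in different notation (the paper writes $|z_0-1|^2-\rho_1^2 = \tfrac{1}{\rho_0}|z_0-\rho_0|^2$ directly and divides by $|z_0-1|+\rho_1\le 2$, but this is the same content as your squared inequality).

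In Case 3 there is a real gap, and you have correctly located it yourself. Your factorisation identity $\rho_0\bigl||z_0-1|^2-\rho_1^2\bigr| = |z_0-\zeta|\,|z_0-\overline\zeta|$ is valid and reduces the claim to
\[
\max\{|z_0-\zeta|,|z_0-\overline\zeta|\}\ \ge\ \tfrac{2}{3}\bigl(1+2\rho_0/\mathrm{Im}\,\zeta\bigr)^{-1/2}\rho_0\bigl(|z_0-1|+\rho_1\bigr),
\]
but your proposed completion (bound the maximum below by $\mathrm{Im}\,z_0+\mathrm{Im}\,\zeta$, then ``refine'' $|z_0-1|+\rho_1\le 3$ using the circle equation and optimise) does not work as stated: at $z_0=\rho_0$ with $\rho_0=\rho_1$ just above $\tfrac12$, the left side is $|\rho_0-\overline\zeta|$ of order $\sqrt{\rho_0(2\rho_0-1)}$, while $2\rho_0(1+2\rho_0/\mathrm{Im}\,\zeta)^{-1/2}$ is of order $\sqrt{\rho_0\,\mathrm{Im}\,\zeta}$, and since $\mathrm{Im}\,\zeta\sim\sqrt{2\rho_0-1}\gg 2\rho_0-1$ the crude bound $|z_0-1|+\rho_1\le 3$ is too lossy. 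One genuinely needs the contribution of $|\mathrm{Re}\,z_0-\mathrm{Re}\,\zeta|$ to $|z_0-\overline\zeta|$, not just the imaginary-part lower bound.

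The paper sidesteps this by not going through the product at all. It observes that since both $z_0$ and $\zeta$ lie on $\mathcal C_0$, one has $|z_0-1|^2 = \rho_0^2+1-2\,\mathrm{Re}\,z_0$ and $\rho_1^2 = \rho_0^2+1-2\,\mathrm{Re}\,\zeta$, so $\bigl||z_0-1|^2-\rho_1^2\bigr| = 2|\mathrm{Re}\,z_0-\mathrm{Re}\,\zeta|$. Then a one-line circle lemma (for two points on a circle of radius $\rho$ with $|\mathrm{Im}\,z+\mathrm{Im}\,w|\ge\eta$, the identity $(\mathrm{Im}\,z)^2-(\mathrm{Im}\,w)^2=(\mathrm{Re}\,w)^2-(\mathrm{Re}\,z)^2$ gives $|\mathrm{Im}\,z-\mathrm{Im}\,w|\le \tfrac{2\rho}{\eta}|\mathrm{Re}\,z-\mathrm{Re}\,w|$) converts $|\mathrm{Re}\,z_0-\mathrm{Re}\,\zeta|$ into a lower bound on $|z_0-\zeta|$ with exactly the constant $(1+2\rho_0/\mathrm{Im}\,\zeta)^{-1/2}$. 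Combined with $|z_0-1|+\rho_1\le 3$, this yields the stated bound immediately. In fact, your missing inequality is, via your own identity, \emph{equivalent} to this circle lemma after using $|z_0-1|+\rho_1\le 3$; so the lemma is precisely the ``coupling through $(\mathrm{Re}\,z_0)^2+(\mathrm{Im}\,z_0)^2=\rho_0^2$'' that you allude to but do not carry out.
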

The statement is almost identical to that in \cite{bilu_skolem_2025} and the proof will be almost identical also --- we claim no originality here. The proof will use a short lemma.
\begin{lemma} \label{lem:circle_im}
Let $\eta, \rho > 0$ and let $\mathcal C$ be the circle with centre 0 and radius $\rho$. Then for $z,w \in \mathcal C$ with $|\mathrm{Im}\, z + \mathrm{Im}\, w| \geq \eta$ we have
\begin{align*}
    |z-w| \leq  \left( 1 + \frac{4\rho^2}{\eta^2} \right)^{\frac{1}{2}} |\mathrm{Re}\, z - \mathrm{Re}\, w| \, . 
\end{align*}
\end{lemma}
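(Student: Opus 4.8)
The plan is to prove the estimate by parametrising the two points on the circle and reducing it to an elementary trigonometric computation. Since $\mathcal C$ is the circle of radius $\rho$ about the origin, write $z = \rho e^{i\alpha}$ and $w = \rho e^{i\beta}$ for some $\alpha,\beta \in \RR$. Standard sum-to-product identities then give
\begin{align*}
    |z - w| &= 2\rho\,|\sin\tfrac{\alpha-\beta}{2}|, \\
    |\mathrm{Re}\, z - \mathrm{Re}\, w| &= 2\rho\,|\sin\tfrac{\alpha+\beta}{2}|\,|\sin\tfrac{\alpha-\beta}{2}|, \\
    |\mathrm{Im}\, z + \mathrm{Im}\, w| &= 2\rho\,|\sin\tfrac{\alpha+\beta}{2}|\,|\cos\tfrac{\alpha-\beta}{2}| .
\end{align*}
In particular $|\mathrm{Re}\, z - \mathrm{Re}\, w| = |z - w|\cdot|\sin\tfrac{\alpha+\beta}{2}|$, so the content of the lemma is exactly a lower bound on $p \coloneqq |\sin\tfrac{\alpha+\beta}{2}|$.

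The key observation is that the hypothesis $|\mathrm{Im}\, z + \mathrm{Im}\, w| \ge \eta$ forces $p$ away from zero. First, this hypothesis excludes the antipodal case $z=-w$, for which $\mathrm{Im}\, z + \mathrm{Im}\, w = 0 < \eta$; hence $|z-w| < 2\rho$ and $|\cos\tfrac{\alpha-\beta}{2}| = \sqrt{1 - |z-w|^2/(4\rho^2)} > 0$, so the third identity above rewrites as
\[
    p\,\sqrt{4\rho^2 - |z-w|^2} \;=\; |\mathrm{Im}\, z + \mathrm{Im}\, w| \;\ge\; \eta .
\]
Combining this with $|\mathrm{Re}\, z - \mathrm{Re}\, w| = p\,|z - w|$ and squaring yields the sharp inequality
\[
    |\mathrm{Re}\, z - \mathrm{Re}\, w|^2\bigl(4\rho^2 - |z - w|^2\bigr) \;\ge\; \eta^2\,|z - w|^2 ,
\]
equivalently $|z-w|^2\bigl(\eta^2 + |\mathrm{Re}\, z - \mathrm{Re}\, w|^2\bigr) \le 4\rho^2\,|\mathrm{Re}\, z - \mathrm{Re}\, w|^2$. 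The claimed bound follows from this by rearrangement together with the trivial estimate $|\mathrm{Re}\, z - \mathrm{Re}\, w| \le |z-w| \le 2\rho$.

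I expect the argument to be essentially mechanical; the only points requiring care are (i) observing that the hypothesis rules out $z = -w$, so that dividing by $|\cos\tfrac{\alpha-\beta}{2}|$ (equivalently, by $\sqrt{4\rho^2 - |z-w|^2}$) is legitimate, and (ii) the concluding algebraic step that massages the sharp inequality into the stated closed form with the asserted constant. Neither step uses anything beyond elementary trigonometry and algebra, and no results from elsewhere in the paper are needed.
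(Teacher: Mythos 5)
Your trigonometric parametrization is a genuinely different route from the paper's. The paper's proof works directly with the circle equation, observing that $(\mathrm{Re}\, z)^2 + (\mathrm{Im}\, z)^2 = (\mathrm{Re}\, w)^2 + (\mathrm{Im}\, w)^2 = \rho^2$ implies $|(\mathrm{Im}\, z)^2 - (\mathrm{Im}\, w)^2| = |(\mathrm{Re}\, z)^2 - (\mathrm{Re}\, w)^2|$, then factors both sides and uses $|\mathrm{Re}\, z + \mathrm{Re}\, w| \le 2\rho$ and $|\mathrm{Im}\, z + \mathrm{Im}\, w| \ge \eta$ to get $|\mathrm{Im}\, z - \mathrm{Im}\, w| \le \frac{2\rho}{\eta}\,|\mathrm{Re}\, z - \mathrm{Re}\, w|$, and finally applies $|z-w|^2 = |\mathrm{Re}\, z - \mathrm{Re}\, w|^2 + |\mathrm{Im}\, z - \mathrm{Im}\, w|^2$. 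That delivers $|z-w| \le \bigl(1 + (2\rho/\eta)^2\bigr)^{1/2}\,|\mathrm{Re}\, z - \mathrm{Re}\, w|$. Your sum-to-product identities and the resulting sharp inequality $|\mathrm{Re}\, z - \mathrm{Re}\, w|^2\bigl(4\rho^2 - |z-w|^2\bigr) \ge \eta^2\,|z-w|^2$ are correct, and in fact a little sharper: combined with $|\mathrm{Re}\, z - \mathrm{Re}\, w| \le 2\rho$ it gives $|z-w| \le \frac{2\rho}{\eta}\,|\mathrm{Re}\, z - \mathrm{Re}\, w|$, which is at least as strong and perfectly serviceable for everything downstream.

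However, the final hand-waved step of your proposal does not go through, and cannot, because the lemma as printed is incorrect. With the exponent $-\tfrac{1}{2}$ the right-hand side is strictly less than $|\mathrm{Re}\, z - \mathrm{Re}\, w| \le |z-w|$, so the stated inequality would force $z = w$; the inequality direction (or the sign of the exponent) must be flipped, consistent with how the lemma is actually invoked in the proof of the three-circle proposition. Even after flipping, the constant $\bigl(1 + 2\rho/\eta\bigr)^{1/2}$ does not follow from your sharp inequality, nor from the paper's own argument, and is in fact false for $2\rho/\eta$ large (take $z,w$ close together near the latitude $\mathrm{Im}\, z = \eta/2$ on the circle: then $|z-w|/|\mathrm{Re}\, z - \mathrm{Re}\, w| \to 2\rho/\eta$, which eventually exceeds $\sqrt{1 + 2\rho/\eta}$). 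You should have noticed that the target bound contradicts the trivial estimate $|\mathrm{Re}\, z - \mathrm{Re}\, w| \le |z-w|$ and flagged the statement as having a sign error, rather than asserting that ``rearrangement'' closes the gap. Once the statement is corrected to $|z-w| \le \frac{2\rho}{\eta}\,|\mathrm{Re}\, z - \mathrm{Re}\, w|$ (or to the paper's $\bigl(1 + (2\rho/\eta)^2\bigr)^{1/2}$ constant), your argument does complete cleanly.
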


\begin{proof}
From $(\mathrm{Re}\, z)^2 + (\mathrm{Im}\, z)^2 = (\mathrm{Re}\, w)^2 + (\mathrm{Im}\, w)^2 = \rho^2$, we have
\begin{align*}
|\mathrm{Im}\, z - \mathrm{Im}\, w | &\leq \frac{1}{\eta}|(\mathrm{Im}\, z)^2 - (\mathrm{Im}\, w)^2| \\
&= \frac{1}{\eta}|(\mathrm{Re}\, z)^2 - (\mathrm{Re}\, w)^2| \\ 
&\leq \frac{2\rho}{\eta} |\mathrm{Re} \, z - \mathrm{Re}\, w| \, .
\end{align*}
Therefore, 
\begin{align*}
    |z-w|^2 &= |\mathrm{Re}\, z - \mathrm{Re}\, w|^2 + |\mathrm{Im}\, z - \mathrm{Im}\, w|^2 \\ 
    &\leq \left(1 + \frac{4 \rho^2}{\eta^2} \right) |\mathrm{Re}\, z - \mathrm{Re}\, w|^2 \, ,
\end{align*}
and so
\begin{align*}
    |z-w| \leq \left(1 + \frac{4 \rho^2}{\eta^2} \right)^{\frac{1}{2}} |\mathrm{Re}\, z - \mathrm{Re}\, w|
\end{align*}
as required.
\end{proof}

\begin{proof}[Proof of Proposition \ref{prop:two_circles}]
We consider each case separately. If $\rho_0 + \rho_1 < 1$ then the result is obvious. For the other cases, since
\begin{align} \label{eqn:z0z1rho1}
    |z_0 - z_1| \geq \left| |z_0-1| - |z_1-1| \right|  = \left| |z_0-1| - \rho_1\right| \, ,
\end{align}
it suffices to bound $\left| |z_0-1| - \rho_1\right|$ from below.

Suppose $\rho_0 + \rho_1 = 1$. Then
\begin{align*}
    |z_0-1|^2-\rho_1^2 = 2(\rho_0-\mathrm{Re}\, z_0) = \frac{1}{\rho_0}|z_0-\rho_0|^2 \, .
\end{align*}
Since $|z_0-1|+\rho_1 \leq |z_0|+1 + \rho_1 = 2$, we have
\begin{align*}
    \left| |z_0-1|-\rho_1 \right| \geq \frac{1}{2 \rho_0} |z_0-\rho_0|^2 \, ,
\end{align*}
which together with \eqref{eqn:z0z1rho1} gives what was required in this case.

Finally, suppose $\rho_0 + \rho_1 > 1$. Without loss of generality, assume $\mathrm{Im}\, z_0 \geq 0$ (otherwise consider $\overline{\zeta}$ instead in the subsequent proof). We have
\begin{align*} 
|z_0-1|^2 = \rho_0^2 + 1 - 2 \mathrm{Re}\, z_0, \quad \rho_1^2 = |\zeta-1|^2 = \rho_0^2 + 1 - 2\mathrm{Re}\, \zeta \, ,
\end{align*}
which implies that
\begin{align} \label{eqn:Re1}
    \left| |z_0 - 1|^2- \rho_1^2 \right| = 2 |\mathrm{Re}\, z_0 - \mathrm{Re}\, \zeta | \, .
\end{align}
Now since $\mathrm{Im}\, z_0 \geq 0$ and $\mathrm{Im}\, \zeta > 0$, we may apply Lemma \ref{lem:circle_im} with $\eta := \mathrm{Im}\, \zeta$ to get
\begin{align} \label{eqn:Re2}
    |\mathrm{Re}\, z_0 - \mathrm{Re}\, \zeta| \geq \left(1 + \frac{4 \rho_0^2}{(\mathrm{Im}\, \zeta)^2} \right)^{- \frac{1}{2}} |z_0 - \zeta| \, .
\end{align}
Also, using $\rho_0,\rho_1 \leq 1$, we have $|z_0-1| + \rho_1 \leq |z_0|+1+\rho_1 \leq 3$, which gives
\begin{align} \label{eqn:Re3}
    \left| |z_0-1| - \rho_1 \right| \geq \frac{1}{3} \left| |z_0-1|^2-\rho_1^2 \right| \, .
\end{align}
%Putting \eqref{eqn:Re1},\eqref{eqn:Re2} and \eqref{eqn:Re3} together, we get
Putting \cref{eqn:Re1,eqn:Re2,eqn:Re3} together, we get
\begin{align*}
    \left| |z_0-1| - \rho_1 \right| \geq \frac{2}{3} \left(1 + \frac{
    4\rho_0^2}{(\mathrm{Im}\, \zeta)^2} \right)^{- \frac{1}{2}} |z_0- \zeta| \, ,
\end{align*}
which with \eqref{eqn:z0z1rho1} gives what was to be proven.
\end{proof}

With these geometric results in place, we can prove Proposition \ref{prop:3_sum}.

\begin{proof}[Proof of Proposition \ref{prop:3_sum}]
As in the proof of Proposition \ref{prop:2_sum}, assume without loss of generality that $|\cdot|_v = |\cdot|$. Also, by relabelling if necessary, assume that $|\alpha_1| \geq |\alpha_2|,|\alpha_3|$. Let the left-hand side of \eqref{eqn:3_term_nonzero} be denoted by $A(n)$, then we have
\begin{align*}
    A(n) = \alpha_1 \lambda_1^n(z_1-z_0)
\end{align*}
where
\begin{align*}
    z_0 = - \frac{\alpha_2}{\alpha_1} \left( \frac{\lambda_2}{\lambda_1} \right)^n \quad , \quad z_1 = 1 + \frac{\alpha_3}{\alpha_1} \left( \frac{\lambda_3}{\lambda_1} \right)^n \, .
\end{align*}
Set 
\begin{align*}
    \rho_0 = \left| \frac{\alpha_2}{\alpha_1} \right|  \quad , \quad \rho_1 = \left| \frac{\alpha_3}{\alpha_1} \right| \, ,
\end{align*}
noting that $0 \leq \rho_0, \rho_1 \leq 1$ and let $\mathcal C_0, \mathcal C_1$ be as in \eqref{eqn:circles_def}, and $\zeta$ defined as in \eqref{eqn:zeta}. Then $z_0 \in \mathcal C_0, \, z_1 \in \mathcal C_1$ and Proposition \ref{prop:two_circles} gives 
\begin{align} \label{eqn:lower_bound_cases}
    |z_0 - z_1| \geq \begin{cases}
        1 - \rho_0 - \rho_1 & \text{if } \rho_0 + \rho_1 < 1, \\
        \frac{1}{2\rho_0} |z_0 - \rho_0|^2 & \text{if } \rho_0 + \rho_1 = 1, \\
        \frac{2}{3} \left( 1 + \frac{4 \rho_0^2}{(\mathrm{Im} \ \zeta)^2} \right)^{-1/2} \min \{|z_0 - \zeta|, |z_0 - \overline \zeta| \} & \text{if } \rho_0 + \rho_1 > 1 
    \end{cases}
\end{align}
and thus we just need to give lower bounds on the right-hand sides of \eqref{eqn:lower_bound_cases}. For this, note first that the degrees of $\rho_0, \rho_1$ are at most $2D^2$. Secondly, their heights are bounded as follows, using Proposition \ref{prop:h_props}, items \ref{enum:h_props_mult} and \ref{enum:h_props_pow}. For $\rho_0$ we have
\begin{align*}
    h\left( \left| \frac{\alpha_2}{\alpha_1} \right| \right) = h \left( \left(  \frac{\alpha_2}{\alpha_1} \frac{\overline{\alpha_2}}{\overline{\alpha_1}}\right)^{1/2} \right)  = \frac{1}{2} h \left(  \frac{\alpha_2}{\alpha_1} \frac{\overline{\alpha_2}}{\overline{\alpha_1}} \right) \leq 2h_\alpha \, ,
\end{align*}
and the bound for $\rho_1$ is identical.

\underline{Case 1: $\rho_0 + \rho_1 < 1$.} By Proposition \ref{prop:h_props} item \ref{enum:h_props_log} we have
\begin{align*}
    1-\rho_0 - \rho_1 \geq e^{- 2D^2 h(1-\rho_0-\rho_1)} \geq e^{-2D^2(\log 3 + h(\rho_0) + h(\rho_1)) } \geq e^{-\poly(D,h_\alpha)} \, .
\end{align*}

\underline{Case 2: $\rho_0 + \rho_1 = 1$.} Note that
\begin{align*}
    |z_0 - \rho_0| = \left| -\frac{\alpha_2}{\alpha_1} \left( \frac{\lambda_2}{\lambda_1} \right)^n - \left| \frac{\alpha_2}{\alpha_1} \right| \right| \, .
\end{align*}
By Proposition \ref{prop:2_sum}, for $n > \poly(D,h_\alpha)$, we have 
\begin{align*}
    |z_0  - \rho_0| \geq e^{-C_1(1 + \log n)}
\end{align*}
for $C_1 = \poly(D,h_\alpha,h_\lambda)$. 

\underline{Case 3: $\rho_0 + \rho_1 > 1$.} Using Proposition \ref{prop:h_props} it is easy to see that $h(\zeta),h(\mathrm{Im} \ \zeta) = \poly(h_\alpha)$, and the degree of $\zeta$ is $\poly(D)$. Then one applies again Proposition \ref{prop:2_sum} to $|z_0 - \zeta|, |z_0 - \overline \zeta|$ and use \eqref{eqn:lower_bound_cases} to deduce a lower bound of the form 
\begin{align*}
    |z_0 - z_1| \geq e^{-\poly(D,h_\alpha,h_\lambda) \log n} \, .
\end{align*}
for all $n > \poly(D,h_\alpha)$. 

Thus, in every case we have for all $n > \poly(D,h_\alpha)$ that 
\begin{align*}
    |z_0 - z_1| \geq e^{-\poly(D,h_\alpha,h_\lambda) \log n}
\end{align*}
and so
\begin{align*}
    |\alpha_1 \lambda_1^n + \alpha_2 \lambda_2^n + \alpha_3 \lambda_3^n| &\geq |\lambda_1|^n |\alpha_1| e^{-\poly(D,h_\alpha,h_\lambda) \log n} \\
    &\geq |\lambda_1|^n e^{-\poly(D,h_\alpha,h_\lambda) \log n} \, .
\end{align*}
\end{proof}

Before proving our main theorem bounding the zeros of LRS, we require an elementary bound on the heights of roots of polynomials.
\begin{lemma} \label{lem:poly_root_height}
Suppose $\beta \in \Alg$ is a root of a polynomial $P(x) = \sum_{i=0}^m a_i x^i$, for $a_i \in \Alg$. Then $h(\beta) \leq \log 2 + \sum_{i=0}^{m-1} h(a_i/a_m)$.
\end{lemma}
\begin{proof}
We have $a_m \beta^m = -\sum_{i=0}^{m-1} a_i \beta^i$. Using this we derive Cauchy bounds on $|\beta|_v$; define $A_v = \max_{0 \leq i \leq m-1} \left| \frac{a_i}{a_m} \right|_v$. Then in the Archimedean case we have
\begin{align*}
    |a_m|_v|\beta|_v^m \leq \left| \sum_{i=0}^{m-1} a_i \beta^i \right|_v \leq  \left(\max_{0\leq i\leq m-1} |a_i|_v \right)\left| \sum_{i=0}^{m-1} \beta^i \right|_v \leq \left(\max_{0\leq i\leq m-1} |a_i|_v \right) \frac{|\beta|_v^m }{|\beta|_v-1}
\end{align*}
and so rearranging yields $|\beta|_v \leq 1 + A_v \leq 2 \max\{1,A_v \}$. 

In the non-Archimedean case, if we assume $|\beta|_v \geq 1$ then we have
\begin{align*}
    |a_m||\beta|_v^m \leq \left| \sum_{i=0}^{m-1} a_i \beta^i \right|_v \leq |\beta|_v^{m-1} \max_{0 \leq i \leq m-1} |a_i|_v
\end{align*}
using the ultrametric inequality, and hence for any $\beta$ we have $|\beta|_v \leq \max \{ 1, A_v\}$.

Suppose $\beta, a_0, \dots, a_m$ are contained in a number field $\KK$ of degree $D$, then
\begin{align*}
    h(\beta) &= \frac{1}{D} \sum_{v \in M_\KK} D_v \log^+(|\beta|_v) \\
    &\leq \frac{1}{D} \sum_{\substack{v \in M_\KK \\ v \mid \infty}} D_v \log(2 \max\{1, A_v\}) +  \frac{1}{D} \sum_{\substack{v \in M_\KK \\ v \nmid \infty}} D_v \log(\max\{1, A_v\}) \\
    &\leq \frac{1}{D} \sum_{\substack{v \in M_\KK \\ v \mid \infty}} D_v \log 2 + \frac{1}{D} \sum_{v \in M_\KK} D_v \log \left( \prod_{i=0}^{m-1} \max\left\{1,\left|\frac{a_i}{a_m}\right|_v \right\} \right) \\
    &\leq \log 2 + \sum_{i=0}^{m-1} h(a_i/a_m) \, .
\end{align*}
\end{proof}
We finally turn to our main theorem.
\MSTVbound*
\begin{proof}
By the fact that $\LRS{u}$ lies in the MSTV class, there exists an absolute value $|\cdot|_v$ such that, WLOG, 
\begin{align*}
    |\lambda_1|_v = \dots = |\lambda_r|_v > |\lambda_{r+1}|_v \geq \dots \geq |\lambda_s|_v \, ,
\end{align*}
where $r \leq 2$ if $|\cdot|_v$ is non-Archimedean and $r \leq 3$ if $|\cdot|_v$ is Archimedean. If $|\cdot|_v$ is non-Archimedean, let $p \in \N$ be the prime such that $v \mid p$. We have
\begin{align} \label{eqn:un_lower_bound}
|u_n|_v &\geq \underbrace{\left| \sum_{i=1}^r P_i(n) \lambda_i^n \right|_v}_{(*)} - \underbrace{\left| \sum_{i=r+1}^s P_i(n) \lambda_i^n \right|_v}_{(**)} \, .
\end{align}
First we upper bound $(**)$. Using \cref{prop:h_props} repeatedly we may bound $h(P_i(n)) \leq \poly(h_P,\delta)\log n$ and hence by \cref{prop:h_props} \cref{enum:h_props_log} we get
\begin{align} \label{eqn:double_ast_bound}
     \left| \sum_{i=r+1}^s P_i(n) \lambda_i^n \right|_v \leq |\lambda_{r+1}|^nse^{D \poly(h_P,\delta) \log n} \, .
\end{align}
To lower bound $(*)$, we first note that by \cref{lem:poly_root_height}, and using \cref{prop:h_props}, $P_i(n) = 0$ implies $n \leq e^{\poly(h_P,\delta)}$. So for $n > e^{\poly(h_P,\delta)}$, we have $P_i(n) \neq 0$ for all $i$, and we either apply Proposition \ref{prop:2_sum} or \cref{prop:3_sum} with Proposition \ref{prop:3_sum_nonzero} when $r=2$ or $3$ respectively, to get that for all $n > e^{\poly(D,h_\lambda,h_P,\delta)}$
\begin{align}
\left| \sum_{i=1}^r P_i(n) \lambda_i^n \right|_v &\geq |\lambda_1|_v^n e^{-\poly \left(p^D,D,h_\lambda,\max_{1 \leq i \leq r} \{h(P_i(n))\} \right)\log n} \notag \\
&\geq |\lambda_1|_v^n e^{-\poly \left(p^D,D,h_\lambda ,h_P,\delta\right)(\log n)^2} \, . \label{eqn:ast_bound}
\end{align}
Note now that $|u_n|_v > 0$ and hence $u_n \neq 0$ whenever $(*) > (**)$, which by %\eqref{eqn:un_lower_bound},\eqref{eqn:ast_bound},\eqref{eqn:double_ast_bound} 
\cref{eqn:un_lower_bound,eqn:ast_bound,eqn:double_ast_bound} 
occurs when $n > e^{\poly(D,h_\lambda,h_P,\delta)}$ and
\begin{align} \label{eqn:lambda1_dominates}
|\lambda_1|_v^n e^{-\poly \left(p^D,D,h_\lambda,h_P,\delta \right)(\log n)^2} \geq |\lambda_{r+1}|^nse^{D \poly(h_P,\delta) \log n}
\end{align}
which holds if
\begin{align} \label{eqn:n_logn_bound}
n > \frac{\poly(p^D,D,h_\lambda,h_P,\delta,s) (\log n)^2}{\log \left| \frac{\lambda_1}{\lambda_{r+1}} \right|_v} \, .
\end{align}
% \begin{align*}
% &|u_n|_v \geq \left| \sum_{i=1}^r P_i(n) \lambda_i^n \right|_v - \left| \sum_{i=r+1}^s P_i(n) \lambda_i^n \right|_v \\ 
% &\geq \begin{aligned}[t] |\lambda_1|_v^n e^{-\poly \left(p^D,D, \underset{1 \leq i \leq r}\max \{h(P_i(n)\}, h_\lambda \right) \log n}
% - s e^{D\poly(h_\alpha,\delta)\log n} |\lambda_{r+1}|_v^n
% \end{aligned}
% \end{align*}
% Using Proposition \ref{prop:h_props} repeatedly, we see that $\underset{1 \leq i \leq r } \max \{h(P_i(n))\} \leq \poly(h_\alpha,\delta)\log n $. Thus we have $|u_n|_v > 0$ for
% \begin{align} \label{eqn:n_logn_bound}
%     n> \max \left\{ \frac{\poly(p^D,D,h_\alpha,h_\lambda,\delta,s) (\log n)^2}{\log \left| \frac{\lambda_1}{\lambda_{r+1}} \right|_v} , 1 \right\}
% \end{align}
Note that $\log n \leq 2n^{1/3}$ for all $n \geq 1$, so certainly \eqref{eqn:n_logn_bound} holds for
\begin{align*}
n > \frac{\poly(p^D,D,h_\lambda,h_P,\delta,s)}{\left( \log \left| \frac{\lambda_1}{\lambda_{r+1}} \right|_v \right)^3} \, .
\end{align*}
Also, as in the proof of \cref{prop:3_sum_nonzero} (by applying \cref{thm:Matveev} or using discreteness of non-Archimedean absolute values), $\log \left| \frac{\lambda_1}{\lambda_{r+1}} \right|_v \geq e^{-\poly(D,h_\lambda)}$. So we have $u_n \neq 0$ for $n > \poly(p^D,s) e^{\poly(D,h_\lambda,h_P,\delta)}$. Finally, note that in the non-Archimedean case, $|\cdot|_v = |\cdot|_\pp$ for prime ideal $\pp \subseteq \O_\KK$, and the non-Archimedean bound is only invoked when $v_\pp(\lambda_i) > 0 $ for some $i$. We also have by Proposition \ref{prop:h_props} that
\begin{align*}
    p^{-1/D} \geq p^{-v_\pp(\lambda_i)/e_\pp} = |\lambda_i|_v \geq e^{-Dh_\lambda}  
\end{align*}
so $p \leq e^{D^2h_\lambda}$, so we finally have $u_n \neq 0$ for 
\begin{align*}
    n > \poly(s)e^{\poly(D,h_\lambda,h_P,\delta)} \, .
\end{align*}%
\end{proof}

\newpage
\section{Entry (8,6) from~\texorpdfstring{\cref{table}}{Figure 1}} \label{sec:table_hard}
We justify here that $\orbprob(8,6)$ --- equivalently, $\simskol(8,2)$ --- is $\skolem_\Q(5)$-hard. 
\begin{theorem} \label{thm:skolem5_hardness}
	If $\simskol(8,2)$ is decidable,
	then $\skolem_\Q(5)$ is decidable.
\end{theorem}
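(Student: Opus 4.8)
The plan is to reduce $\skolem_\Z(5)$ to $\simskol(8,2)$, using in addition the known decidability of $\skolem(4)$. Given an integer LRS $u$ of order at most $5$, I would first pass to the arithmetic-progression subsequences $\langle u_{Ln+r}\rangle_{n\in\N}$, $0\le r<L$, for a suitable modulus $L$, so that each subsequence is either identically zero or non-degenerate (as in the discussion following \cref{thm:SML}). An identically-zero subsequence is trivial, a subsequence of order at most $4$ is settled by the decidability of $\skolem(4)$, and $\zero(u)\neq\varnothing$ iff one of these subsequences has a zero; so it remains to decide $\zero(u)\neq\varnothing$ for a non-degenerate $u$ of order exactly $5$. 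If $u$ has at most $3$ distinct characteristic roots (the exponential-polynomial shapes $5,\,4{+}1,\,3{+}2,\,3{+}1{+}1,\,2{+}2{+}1$), then it has at most $3$ dominant roots with respect to \emph{every} absolute value, hence lies in the MSTV class and is decidable outright. The shape $2{+}1{+}1{+}1$ can be discarded too: by the conjugate-symmetry of a $\Z$-LRS the root carrying the non-constant polynomial must be real, and making all four roots share the maximal modulus would force a second real root of that modulus, i.e.\ a pair of distinct real roots whose ratio is $-1$, contradicting non-degeneracy; so $u$ would again be MSTV and decidable. Hence I may assume $u$ has $5$ distinct simple characteristic roots $\lambda_1,\dots,\lambda_5$ and is \emph{not} MSTV-reducible; in particular at least $4$ of the $\lambda_i$ are dominant with respect to the ordinary modulus.

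\textbf{Key structural step.}
For such a $u$ the characteristic roots must satisfy a non-trivial multiplicative relation. The set $D$ of modulus-dominant roots is closed under complex conjugation, has at least $4$ elements, and (as above) contains at most one real element, since two distinct reals of equal modulus would have ratio $-1$. A conjugate-closed set of $4$ or $5$ distinct numbers with at most one real element therefore contains two \emph{distinct} conjugate pairs $\{\lambda,\bar\lambda\}$ and $\{\lambda',\bar\lambda'\}$, both lying in $D$, whence $\lambda\bar\lambda=|\lambda|^2=|\lambda'|^2=\lambda'\bar\lambda'$. Setting $\beta:=\lambda'/\lambda$, we get that $\beta$ is a ratio of distinct characteristic roots of a non-degenerate LRS, hence not a root of unity, and that $\beta\lambda=\lambda'$ and $\beta\bar\lambda'=\bar\lambda$.

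\textbf{The reduction.}
Now I would take $v^{(1)}_n:=u_n$ and $v^{(2)}_n:=\beta^n u_n$. Since $\beta^n\neq 0$, we have $\zero(v^{(1)},v^{(2)})=\zero(u)$. The union of their characteristic roots is $\{\lambda_1,\dots,\lambda_5\}\cup\{\beta\lambda_1,\dots,\beta\lambda_5\}$, which has at most $8$ elements because $\beta\lambda=\lambda'$ and $\beta\bar\lambda'=\bar\lambda$ already belong to the first set; hence $v^{(1)}$ and $v^{(2)}$ satisfy a common recurrence of order at most $8$. They are linearly independent: multiplication by the non-root-of-unity $\beta$ cannot permute the $5$ roots of $u$, so $v^{(2)}$ has a characteristic root outside the root set of $v^{(1)}$. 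Finally, to obtain a bona fide instance of $\simskol(8,2)$, I would if necessary pass once more to subsequences $\langle v^{(i)}_{Ln+r}\rangle_{n\in\N}$, taking $L$ to be the l.c.m.\ of the orders of those ratios among the (at most $8$) roots that are roots of unity; this keeps each $v^{(i)}$ of order $5$, preserves the coincidences $\beta^L\lambda^L=(\lambda')^L$ and $\beta^L(\bar\lambda')^L=\bar\lambda^L$ (so the combined order stays at most $8$), and makes the combined root set non-degenerate. Running the $\simskol(8,2)$ decision procedure on the resulting finitely many pairs (one per residue $r$), and taking the disjunction of the answers together with the order-at-most-$4$ cases, decides whether $\zero(u)\neq\varnothing$. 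Since clearing denominators turns any $\skolem_\Q(5)$ instance into a $\skolem_\Z(5)$ instance, this also re-proves $\skolem_\Q(5)$-hardness.

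\textbf{Main obstacle.}
The crux is the structural step of the second paragraph: showing that a non-degenerate order-$5$ integer LRS which is not MSTV-reducible necessarily carries two conjugate pairs of equal-modulus characteristic roots, and hence a multiplicative relation among the roots. This is the point where the definition of the MSTV class (at most $3$ dominant roots for some Archimedean absolute value) and the conjugate-symmetry forced by integrality interlock; the remaining work — assembling the pair $v^{(1)},v^{(2)}$, checking linear independence, controlling the combined order, and passing to a non-degenerate subsequence — is routine bookkeeping.
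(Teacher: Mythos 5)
Your proposal is correct and achieves the reduction, but it does so by a route that is genuinely different from the paper's in two respects, so the comparison is worth spelling out.

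The paper outsources the classification of the hard order-$5$ cases to the cited result of Bilu et al.\ (\cite{Bilu_2023}), which supplies a canonical form
$u_n = \alpha_1\lambda_1^n + \overline\alpha_1\overline\lambda_1^n + \alpha_2\lambda_2^n + \overline\alpha_2\overline\lambda_2^n + \alpha_3\lambda_3^n$ with $|\lambda_1|=|\lambda_2|>|\lambda_3|$ and, crucially, $|\alpha_1|\neq|\alpha_2|$. It then multiplies $u_n$ by the two never-vanishing factors $\bigl(\frac{\lambda_1^n}{\lambda_2^n}-\frac{\alpha_2}{\alpha_1}\bigr)$ and $\bigl(\frac{\lambda_1^n}{\lambda_2^n}-\frac{\overline\alpha_1}{\overline\alpha_2}\bigr)$; the hypothesis $|\alpha_1|\neq|\alpha_2|$ is what makes these factors non-zero for all $n$ and guarantees the linear independence of the two products. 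You instead re-derive the needed structure from scratch: you peel off the low-order / few-root / shape-$2{+}1{+}1{+}1$ cases as MSTV-decidable, and use conjugate-closure together with the bound of at least $4$ Archimedean-dominant roots to locate two distinct conjugate pairs $\{\lambda,\bar\lambda\},\{\lambda',\bar\lambda'\}$ of equal modulus, giving the relation $\lambda\bar\lambda=\lambda'\bar\lambda'$. Your multipliers are then simply $1$ and $\beta^n$ with $\beta=\lambda'/\lambda$, which are trivially non-vanishing and yield linear independence with no side condition; both constructions use the relation $\lambda\bar\lambda=\lambda'\bar\lambda'$ (equivalently $\beta\lambda=\lambda'$, $\beta\bar\lambda'=\bar\lambda$) to collapse the combined root set to at most $8$ elements. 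What the paper's citation buys is brevity; what yours buys is self-containment and dropping the $|\alpha_1|\neq|\alpha_2|$ requirement. One point you make explicit that the paper leaves implicit is the final pass to non-degenerate subsequences to make the combined order-$\le 8$ recurrence genuinely non-degenerate: the paper's product sequences $u^{(1)},u^{(2)}$ can a priori have root-of-unity ratios among roots such as $\lambda_1^2/\lambda_2$ and $\lambda_3$, and they rely on the general reduction of $\simskol$ to the non-degenerate case stated in the preliminaries, whereas you carry out the required subsequence step and verify that it preserves linear independence (because $\beta^L$ is not a root of unity and hence cannot permute $\{\lambda_i^L\}$) and the order bound (because the coincidences $\beta^L\lambda^L=(\lambda')^L$, $\beta^L(\bar\lambda')^L=\bar\lambda^L$ persist). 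This is sound, and your reduction is a valid Turing reduction from $\skolem_\Z(5)$ to $\simskol(8,2)$ modulo the known decidability of $\skolem(4)$ and of the MSTV class, which is the same background both proofs rely on.
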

\begin{proof}
	By \cite{Bilu_2023} the hard cases of the Skolem Problem on %\emph{integer}
	$\Q$-LRS may be reduced to non-degenerate LRS of the form
	\begin{align*}
		u_n = \alpha_1 \lambda_1^n + \overline{\alpha}_1 \overline{\lambda}_1^n + \alpha_2 \lambda_2^n + \overline{\alpha}_2 \overline{\lambda}_2^n + \alpha_3 \lambda_3^n
	\end{align*}
	where $|\lambda_1| = |\lambda_2| > |\lambda_3|$ and $|\alpha_1| \neq |\alpha_2|$. Note this means that $\lambda_1 \overline{\lambda}_1 = \lambda_2 \overline{\lambda}_2$. 
	
	Now consider two LRS defined as
	\begin{align*}\LRS{u}^{(1)}_n \coloneq&\,
		\left( \frac{\lambda_1^n}{\lambda_2^n} - \frac{\alpha_2}{\alpha_1} \right) \left( \alpha_1 \lambda_1^n + \overline{\alpha}_1 \overline{\lambda}_1^n + \alpha_2 \lambda_2^n + \overline{\alpha}_2 \overline{\lambda}_2^n + \alpha_3 \lambda_3^n \right) \\
		=&\, \alpha_1 \frac{\lambda_1^{2n}}{\lambda_2^n} + \overline{\alpha}_2 \frac{\lambda_1^n \overline{\lambda}_2^n}{\lambda_2^n} + \alpha_3 \frac{\lambda_1^n\lambda_3^n}{\lambda_2^n} + \\
        &\left( \overline{\alpha}_1 - \frac{\alpha_2 \overline{\alpha_2}}{\alpha_1} \right)\overline{\lambda}_2^n 
        - \frac{\alpha_2 \overline{\alpha}_1}{\alpha_1}\overline{\lambda}_1^n - \frac{\alpha_2^2}{\alpha_1} \lambda_2^n - \frac{\alpha_2 \alpha_3}{\alpha_1} \lambda_3^n
	\end{align*}
	and 
	\begin{align*}
		\LRS{u}^{(2)}_n \coloneq&\,
		\left( \frac{\lambda_1^n}{\lambda_2^n} - \frac{\overline{\alpha}_1}{\overline{\alpha}_2} \right) \left( \alpha_1 \lambda_1^n + \overline{\alpha}_1 \overline{\lambda}_1^n + \alpha_2 \lambda_2^n + \overline{\alpha}_2 \overline{\lambda}_2^n + \alpha_3 \lambda_3^n \right)\\
		=&\, \alpha_1 \frac{\lambda_1^{2n}}{\lambda_2^n} + \alpha_3 \frac{\lambda_1^n\lambda_3^n}{\lambda_2^n} +\left(\alpha_2 - \frac{\alpha_1 \overline{\alpha_1}}{\overline{\alpha}_2} \right)\lambda_1^n + \\
        &\overline{\alpha}_2 \frac{\lambda_1^n \overline{\lambda}_2^n}{\lambda_2^n} 
        - \frac{ \overline{\alpha}_1^2}{\overline{\alpha}_2}\overline{\lambda}_1^n - \frac{\overline{\alpha}_1 \alpha_2}{\overline{\alpha}_2} \lambda_2^n - \frac{\overline{\alpha}_1 \alpha_3}{\overline{\alpha}_2} \lambda_3^n \, .
	\end{align*}
	Both $\LRS{u}^{(1)}$ and $\LRS{u}^{(2)}$ are $\Alg$-LRS that satisfy the same recurrence relation~of order~8 
	as can be observed from them sharing the same set of characteristic roots
	\[\left\{\frac{\lambda_1^2}{\lambda_2}, \frac{\lambda_1\overline{\lambda}_2}{\lambda_2},\frac{\lambda_1\lambda_3}{\lambda_2}, \lambda_1,\overline{\lambda}_1,\lambda_2,\overline{\lambda}_2,\lambda_3\right\}.\]
	%with both expressions corresponding to distinct order 7 LRS as $\alpha_1 \overline{\alpha}_1 \neq \alpha_2 \overline{\alpha}_2$.
	The two sequences are linearly independent, as the factors $\frac{\lambda_1^n}{\lambda_2^n} - \frac{\alpha_2}{\alpha_1}$ and $\frac{\lambda_1^n}{\lambda_2^n} - \frac{\overline{\alpha_1}}{\overline{\alpha_2}}$ are linearly independent from the fact $|\alpha_1| \neq |\alpha_2|$.
	Moreover, %$\LRS{u}^{(1)}_n = \LRS{u}^{(2)}_n = 0$ if and only if $u_n = 0$ 
	$\mathcal{Z}(u^{(1)},u^{(2)}) = Z(u)$
	since  $|\alpha_1| \neq |\alpha_2|$. Therefore, if the $\simskol(8,2)$ was decidable, one could determine $\mathcal{Z}(u^{(1)},u^{(2)})$ and thus $\mathcal{Z}(u)$. Since $u$ was chosen arbitrary, this shows the Skolem Problem for $\Q$-LRS of order 5 would be decidable, and hence $\skolem_\Q(5)$ would be.
\end{proof}

Note that using the ideas of \cref{thm:skolem5_hardness} and \cref{sec:hardness}, one can prove $\skolem_\KK(k)$-hardness for $\KK = \Q$ or $\KK = \Alg$ for many other intermediate values of $d$ and $t$, past the obvious reduction when $t = d-1$.

\end{document}